\def\notes{0}
\def\lowerbds{0}
\def\oakland{0} %

\ifnum\oakland=1
    \documentclass[conference,compsoc]{IEEEtran}
\else
    \documentclass{article}
\fi

\usepackage[normalem]{ulem}
\usepackage[utf8]{inputenc}
\usepackage{amsmath,amsthm,amsfonts,amscd,amssymb}
\usepackage{fullpage}
\usepackage{hyperref}
\usepackage[capitalize]{cleveref}
\usepackage{mathtools}
\usepackage{microtype}
\usepackage{setspace}
\usepackage{tabularx}
\usepackage{makecell}
\usepackage{bm}
\usepackage{enumitem}
\usepackage{mdframed}
\usepackage{hhline}
\usepackage{xspace}
\usepackage{stackengine}
\usepackage{multirow}

\ifnum\notes=1 
    \ifnum\oakland=0 
    \usepackage{showlabels}
    \fi
\fi

\usepackage[dvipsnames,table,xcdraw]{xcolor}
\hypersetup{colorlinks = true,linkcolor = Periwinkle, anchorcolor = red, citecolor = Periwinkle, filecolor = orange,urlcolor = cyan}

\usepackage{algorithm}
\usepackage[noend]{algpseudocode}

\usepackage{thmtools}
\usepackage{thm-restate}

\usepackage{tikz}
\ifnum\oakland=1
    \usepackage[style=ieee-alphabetic,maxbibnames=99,maxalphanames = 5]{biblatex}
\else
    \usepackage[style=alphabetic,maxbibnames=99,maxalphanames = 5,backref=true]{biblatex}
\fi
\addbibresource{sample.bib} %

\DefineBibliographyStrings{english}{
  backrefpage  = {cited on p.},     %
  backrefpages = {cited on pp.},    %
}

\DeclareFieldFormat{organization}{\itshape{#1}}

\AtEveryBibitem{%
 \clearlist{address}
 \clearfield{date}
 \clearfield{eprint}
 \clearfield{isbn}
 \clearfield{issn}
 \clearlist{location}
 \clearfield{month}
 \clearfield{series}
 \clearlist{publisher}
 \clearname{editor}
 \clearfield{volume}
 \clearfield{pages}
 \clearfield{number}
}

\newcommand{\msf}{\mathsf}

\newtheorem{theorem}{Theorem}[section]

\newtheorem{lemma}[theorem]{Lemma}
\newtheorem{claim}[theorem]{Claim}

\theoremstyle{definition}
\newtheorem{remark}[theorem]{Remark}
\newtheorem{definition}[theorem]{Definition}
\newtheorem{intuition}[theorem]{Intuition}

\crefname{claim}{Claim}{Claims}

\newcommand{\paren}[1]{{\left( {#1} \right)}}
\newcommand{\set}[1]{{\left\{ {#1} \right\}}}
\newcommand{\floor}[1]{{\lfloor {#1} \rfloor}}
\newcommand{\ceil}[1]{{\left\lceil {#1} \right\rceil}}

\newcommand{\Z}{\mathbb{Z}}
\newcommand{\R}{\mathbb{R}}
\newcommand{\N}{\mathbb{N}}
\newcommand{\mA}{\mathcal{A}}
\newcommand{\mB}{\mathcal{B}}
\newcommand{\mC}{\mathcal{C}}
\newcommand{\mD}{\mathcal{D}}

\newcommand{\mG}{\mathcal{G}}

\newcommand{\mM}{\mathcal{M}}

\newcommand{\mS}{\mathcal{S}}
\newcommand{\mX}{\mathcal{X}}
\newcommand{\mY}{\mathcal{Y}}
\newcommand{\mZ}{\mathcal{Z}}

\newcommand{\Otilde}{\widetilde{O}}
\newcommand{\Thetatilde}{\widetilde{\Theta}}

\newcommand{\zo}{\{0,1\}}

\newcommand{\linf}{\ell_\infty}
\newcommand{\lone}{\ell_1}

\newcommand{\hf}{\widehat{f}}

\newcommand{\gs}{\mathit{S}}

\newcommand{\mGS}{\mathcal{S}}

\newcommand{\dg}{\mathsf{DG}}
\newcommand{\pdg}{\mathsf{PDG}}
\newcommand{\satstep}{\mathsf{SatStage}} %
\newcommand{\projstage}{\mathsf{ProjStage}} %
\newcommand{\projstagetext}{{projection stage}}
\newcommand{\projstagealttext}{{is processed at \projstagetext}}
\newcommand{\projstagealttexttwo}{{is processed at a \projstagetext}}
\newcommand{\projstagealttextthree}{{is processed prior to this \projstagetext}}
\newcommand{\projstagealttextfour}[1]{{at the \projstagetext{} when #1 is processed}}
\newcommand{\satsteptext}{{saturation stage}}
\newcommand{\inccrit}{inclusion criterion}
\newcommand{\Ered}{E_\mathit{red}}
\newcommand{\Eblue}{E_\mathit{blue}}
\newcommand{\ecolor}{\mathsf{color}}
\newcommand{\gdiff}{\Delta}

\newcommand{\cut}{\text{order-induced cut}}

\newcommand{\eps}{\varepsilon}
\newcommand{\del}{\delta}
\renewcommand{\phi}{\varphi}

\newcommand{\ord}[2][th]{\ensuremath{{#2}^{\mathrm{#1}}}}
\newcommand{\simed}{\approx_{\eps,\delta}}

\ifnum\notes = 1
    \newcommand{\mystrikeout}[2]{{\color{#1} \sout{#2}}} %

    \newcommand{\pj}[1]{{\color{blue}[P: #1]}}
    \newcommand{\pjnote}[1]{{\color{blue}\footnote{{\color{blue} {\bf P:} #1}}}}
    \newcommand{\pjtodo}[1]{{\color{violet}[to do]\footnote{{\color{violet} {\bf P:} #1}}}}
    \newcommand{\oldtext}[1]{}
    \newcommand{\adaminline}[1]{{\color{brown} (adam)~#1}}
    \newcommand{\adamnote}[1]{{\color{ForestGreen}\footnote{{\color{brown} {\bf A:} #1}}}}
    
    \newcommand{\connor}[1]{{\color{Mulberry}[Connor] #1}}

    \newcommand{\connornote}[1]{{\color{red}\footnote{{\color{Mulberry} {\bf C:} #1}}}}
    
    \newcommand{\connorq}[1]{{\color{red}\footnote{{\color{red} {\bf C:(question)} #1}}}}
    \newcommand{\connorqd}[1]{{\color{gray}\footnote{{\color{gray} {\bf C:(discussed question)} #1}}}}
    \newcommand{\note}[1]{{\color{blue}(note)~#1}}
    \newcommand{\question}[1]{{\color{red}(question)~#1}}
    \newcommand{\todo}[1]{{\color{red}{(TO-DO ITEM)~#1}}}
    \newcommand{\status}[1]{{\color{blue}{(STATUS)~#1}}}
\else
    \newcommand{\mystrikeout}[2]{}
    \newcommand{\pj}[1]{}
    \newcommand{\pjnote}[1]{}
    
    \newcommand{\pjtodo}[1]{}
    \newcommand{\oldtext}[1]{}
    \newcommand{\adaminline}[1]{}
    \newcommand{\adamnote}[1]{}

    \newcommand{\connor}[1]{}

    \newcommand{\connorq}[1]{}
    \newcommand{\connornote}[1]{}
    \newcommand{\connorqd}[1]{}
    \newcommand{\note}[1]{}
    \newcommand{\question}[1]{}
    \newcommand{\todo}[1]{}
    \newcommand{\status}[1]{}
    \renewcommand{\sout}{}
\fi

\ifnum\oakland=1
    \newcommand{\noakland}[1]{}
    \newcommand{\yoakland}[1]{#1}
\else
    \newcommand{\noakland}[1]{#1}
    \newcommand{\yoakland}[1]{}
\fi

\newcommand{\Ex}{\mathop{\mathbb{E}}}
\newcommand{\pr}[1]{\Pr\left[{#1}\right]}

\newcommand{\err}{\textsf{ERR}}

\newcommand{\margs}{\textsf{Marginals}}

\newcommand{\edges}{\ensuremath{f_{\msf{edges}}}}

\newcommand{\triangles}{\ensuremath{f_{\msf{triangles}}}}
\newcommand{\connectedcomps}{\ensuremath{f_{\msf{CC}}}}
\newcommand{\kstars}{\ensuremath{f_{\msf{k\text{-}stars}}}}
\newcommand{\deghist}{\ensuremath{f_{\msf{degree\text{-}hist}}}}

\newcommand{\subgraphedges}{\mathsf{SubgraphEdges}}
\newcommand{\maxflowedges}{\mathsf{MaxflowEdges}}
\newcommand{\svt}{\mathsf{SVT}}
\newcommand{\DistToGraph}{\mathsf{DistToGraph}}
\newcommand{\dtg}{\DistToGraph}
\newcommand{\style}{\mathsf{c}}
\newcommand{\addedge}{\mathsf{add\_edge}}

\newcommand{\original}{\mathit{original}}
\newcommand{\projection}{\mathit{projected}}

\newcommand{\chist}[2]{{\mathsf{ch}_{#1}\paren{#2}}}

\newcommand{\pluseq}{\mathrel{+}=}

\newcommand{\mech}{\text{\sf RestrictedPrivAlg}}

\newcommand{\bben}{\textsf{BBRestrictedToNodePriv}}

\newcommand{\countsf}{\mathsf{count}}

\newcommand{\true}{\mathit{True}}
\newcommand{\false}{\mathit{False}}

\newcommand{\test}{\mathsf{Test}}
\newcommand{\base}{\mathsf{Base}}
\newcommand{\failtest}{\beta_\test}

\newcommand{\ptr}{\mathsf{PTR}}

\newcommand{\Above}{\bot}
\newcommand{\Below}{\top}
\newcommand{\verdict}{\mathsf{verdict}}

\newcommand{\passed}{\mathsf{passed}}

\newcommand{\nodeel}{\partial V}
\newcommand{\edgeel}{\partial E}

\newcommand{\dedge}{d_\mathit{edge}}
\newcommand{\dnode}{d_\mathit{node}}

\newcommand{\outcolor}{\mathsf{out\text{-}color}}

\newcommand{\nnode}{\simeq_\mathit{node}}
\newcommand{\nedge}{\simeq_\mathit{edge}}


\newcommand{\vfl}{v^\text{fl}}

\newcommand{\lap}{\mathit{Lap}}

\def\polylog{\operatorname{polylog}}
\def\poly{\operatorname{poly}}

\newcommand{\collapse}[1]{\mathsf{flatten}\paren{#1}}
\newcommand{\flatten}[1]{\collapse{#1}}
\newcommand{\flatt}[1]{\mathsf{flatten}(#1)}

\newcommand{\projgen}[1][D]{\Pi_{#1}}
\newcommand{\projo}[1][D]{\Pi^{\text{\tiny BBDS}}_{#1}}
\newcommand{\projp}[1][D]{\Pi^{\text{\tiny DLL}}_{#1}}

\newcommand{\vplus}{v^+}
\newcommand{\eplus}{e^+}

\newcommand{\id}{\mathsf{id}}

\newcommand{\red}{\mathit{red}}
\newcommand{\blue}{\mathit{blue}}

\newcommand{\tildeptr}{\ensuremath{\widetilde{\ptr}}}
\newcommand{\ttildeptr}{\ensuremath{\widetilde{\tildeptr}}}

\algdef{SE}[SUBALG]{Indent}{EndIndent}{}{\algorithmicend\ }%
\algtext*{Indent}
\algtext*{EndIndent}

\newlength\myindent %
\setlength\myindent{3em} %

\newcommand{\commentstyle}[1]{{\footnotesize \textcolor{CadetBlue}{#1}}}

\newcommand{\myparagraph}[1]{\smallskip \noindent {\bf{#1}}}

\makeatletter
\newcommand{\thickhline}{%
    \noalign {\ifnum 0=`}\fi \hrule height 1pt
    \futurelet \reserved@a \@xhline
}
\newcolumntype{"}{@{\hskip\tabcolsep\vrule width 1pt\hskip\tabcolsep}}
\makeatother

\usepackage{fancyhdr}

\yoakland{\pagenumbering{gobble}}

\begin{document}

\title{Time-Aware Projections:\\ Truly Node-Private Graph Statistics under Continual Observation\yoakland{*}
}
\ifnum\oakland=1
    \definecolor{emailcolor}{rgb}{0, 0, 1}
    \author{
        \IEEEauthorblockN{Palak Jain}
        \IEEEauthorblockA{Boston University \\ \href{mailto:palakj@bu.edu}{\color{black}palakj@bu.edu}}
        \and
        \IEEEauthorblockN{Adam Smith}
        \IEEEauthorblockA{Boston University \\ \href{mailto:ads22@bu.edu}{\color{black}ads22@bu.edu}}
        \and
        \IEEEauthorblockN{Connor Wagaman}
        \IEEEauthorblockA{Boston University \\ \href{mailto:wagaman@bu.edu}{\color{black}wagaman@bu.edu}}
    }
\else
    \author{Palak Jain\thanks{Boston University, {\tt \{palakj,ads22,wagaman\}@bu.edu}.
    Supported in part by NSF awards CCF-1763786 and CNS-2120667, Faculty Awards from Google and Apple, and Cooperative Agreement CB16ADR0160001 with the Census Bureau. The views expressed in this paper are those of the authors and not those of the U.S. Census
    Bureau or any other sponsor.
    } 
    	\and
    	Adam Smith\footnotemark[1]
    	\and
    	Connor Wagaman\footnotemark[1]
    }
\fi
\date{ November 5, 2025}

\maketitle

\begin{abstract} 

    Releasing differentially private statistics about social network data is challenging: one individual's data consists of a node and all of its connections, and typical analyses are sensitive to the insertion of a single unusual node in the network. This challenge is further complicated in the \textit{continual release} setting, where the network varies over time and one wants to release information at many time points as the network grows. 
    Previous work addresses node-private continual release by assuming an unenforced promise on the maximum degree in a graph, but leaves open whether such a bound can be verified or enforced privately.

    In this work, we describe the first algorithms that satisfy the standard notion of node-differential privacy in the continual release setting (i.e., without an assumed promise on the input streams). These algorithms are accurate on sparse graphs, for several fundamental graph problems: counting edges, triangles,  other subgraphs,  and connected components; and releasing degree histograms.
    Our unconditionally private algorithms generally have optimal error, up to polylogarithmic factors and lower-order terms.
    
    We provide general transformations that take a base algorithm 
    for the continual release setting, which need only be private  for streams  satisfying a promised degree bound, and produce an 
    algorithm that is unconditionally private 
    yet mimics the base algorithm when the stream meets the degree bound (and adds only linear overhead to the time and space complexity of the base algorithm).
    To do so, we design new projection algorithms for graph streams, based on the batch-model techniques of  \cite{BlockiBDS13,DayLL16}, which modify the stream to limit its degree. 
    Our main technical innovation is to show that   
    the projections are \emph{stable}---meaning that similar input graphs have similar projections---when the input stream satisfies a privately testable safety condition. Our transformation then follows a novel online variant of the Propose-Test-Release framework \cite{DL09}, 
    privately testing the safety condition before releasing output at each step.
\end{abstract}

\ifnum\oakland=0 %
    \newpage  
    {
    \setstretch{1}
    \hypersetup{linkcolor=black}
    \tableofcontents
    }
    \newpage
\fi

\section{Introduction}\label{sec:intro}

\fancyhf{} %
\renewcommand{\headrulewidth}{0pt}

\thispagestyle{fancy}

\yoakland{\lfoot{\footnotesize{*~A full version of this paper is available on arXiv \cite{JainSW24}.}}}

Graphs provide a flexible and powerful way to model and represent relational data, such as social networks, epidemiological contact-tracing data, and employer-employee relationships. 
Counts of substructures---edges, nodes of a given degree, triangles, connected components---are fundamental statistics that shed light on a network's structure and are the focus of extensive algorithmic study.
For example, edge counts can quantify relationships in a social network; a function of triangle and 2-star counts called the ``correlation coefficient'' or ``transitivity'' of a network is of interest to epidemiologists for understanding disease spread \cite{BadhamS10,YoungJMHH13}; and connected component counts have been used for determining the number of classes in a population \cite{Goo49} and estimating fatalities in the Syrian civil war \cite{ChenSS18}.

When the graph %
contains sensitive information about individuals, 
one must balance the accuracy of released statistics 
with those individuals' privacy. 
Differential privacy \cite{DworkMNS16} is a widely studied and deployed framework for quantifying
such a trade-off. It requires that the output of an algorithm reveal little about any single individual's record (even hiding its presence or absence in the data set).

In this work, we study differentially private algorithms that  
continually monitor several fundamental statistics about a graph that evolves over time.
We consider the \emph{continual release} (or \emph{continual observation}) model of differential privacy~\cite{DNPR10,CSS11} 
in which the input data is updated over time and statistics about it must be released continuously.
(In contrast, in the \emph{batch} model, input arrives in one shot and output is produced only once.)

There are two standard notions of differential privacy (DP) for algorithms that operate on graph data: (1) \emph{edge DP} \cite{NRS07}, for which the algorithm must effectively obscure 
the information revealed by any individual edge (including its mere presence or absence), 
and (2) \emph{node DP} \cite{HLMJ09,BlockiBDS13,KasiviswanathanNRS13,ChenZ13}, for which the algorithm must obscure 
the information revealed by a node's entire set of connections (including even the node's presence or absence).

Edge privacy is typically easier to achieve and more widely studied. However, in 
social networks and similar settings,
nodes---rather than edges---correspond to individuals and so
node privacy is more directly relevant.
Indeed, existing attacks infer sensitive information about a person from aggregate information about their neighborhood in the network
(e.g., sexuality can be inferred from an individual's Facebook friends \cite{JerniganM09gaydar}),
showing that privacy at the node level rather than only the edge level is important.

\myparagraph{Background: Node-differential Privacy.}
To understand the challenges of designing node-private algorithms, consider the task of estimating the number of edges in a graph. For every graph $G$ with $n$ vertices, there is a node-neighboring graph $G'$ with $n$ more edges (obtained by adding a new, high-degree node connected to all existing nodes). A node private algorithm, however, must hide 
the difference between $G$ and $G'$. Therefore, every node-private algorithm must have %
additive error $\Omega(n)$
on  either $G$ or $G'$, which means large relative error when $G$ and $G'$ are sparse. %
It is thus impossible to get a useful worst-case accuracy guarantee for %
counting the %
edges in a graph or, for similar reasons, many other basic statistics. 

As a result, node private algorithms are often tailored to specific %
families of inputs. In the batch model, instead of aiming for universal accuracy across all graph types, algorithms are designed to provide privacy for %
all possible graphs while providing accurate estimates for a select subset of ``nice" graphs---for example, graphs that satisfy a degree bound---on which the statistic of interest is well behaved. 
There are now several techniques for achieving this type of guarantee in the batch model, notably projections~\cite{BlockiBDS13,KasiviswanathanNRS13,DayLL16} and Lipschitz extensions \cite{BlockiBDS13,KasiviswanathanNRS13,ChenZ13,RaskhodnikovaS16,RaskhodnikovaS16-E,DayLL16,BorgsCSZ18,CummingsD20,KalemajRST23}. Broadly, these techniques start from an algorithm which is both private and accurate %
when restricted to a set of ``nice" graphs, and find a new algorithm that mimics the base algorithm on the set of ``nice" graphs while providing privacy for all possible graphs; such extensions exist under very general conditions~\cite{BorgsCSZ18ext,BorgsCSZ18}.

The continual release setting complicates these approaches, causing tools for the batch setting to break down.
Projections and Lipschitz extensions are harder to design: in the batch setting, the decision to remove an edge may propagate only across nodes; in the continual release setting, the change can also propagate through time, rendering existing batch-model solutions ineffective. 
Even the general existential result of \cite{BorgsCSZ18ext} applies, at best,  only to the offline version of continual release (in which the algorithm can inspect the entire input stream before starting to produce output). 
In a nutshell, ensuring low sensitivity separately at each point in time does not guarantee the type of stability that is needed to get low error with continual release (e.g., $\ell_1$ stability of difference vectors~\cite{SLMVC18}).
For this reason, the only straightforward way to get node-private algorithms from existing batch model work is to use advanced composition \cite{DworkRV10} and compose over $T$ time steps.
This potentially explains why prior works on node-private continual release of graph statistics assume restrictions on the input graphs to their private algorithms, providing privacy only in the case where the restriction is satisfied.

\subsection{Our Results}
\label{sec:our results}

We consider an insertion-only model of graph streams, where an arbitrary subset of new nodes and edges arrives at each of $T$ time steps (\Cref{defn:graph stream}). We do not assume any relationship between the size of the graph and $T$. The degree of a node $u$ in the stream is the total number of edges adjacent to $u$ in the stream (equivalently, in the final graph). The stream's maximum degree is the largest of its nodes' degrees; if this maximum is at most $D$, we say the stream is \emph{$D$-bounded}.

\myparagraph{Truly Node-private Algorithms.}
For several fundamental graph statistics, we obtain (the first) algorithms that satisfy the usual notion of node-differential privacy in the continual release setting---that is, they require no assumption on the input streams.

In contrast, %
previous work on node-private continual release of graph statistics~\cite{SLMVC18,FHO21} develops
algorithms for basic graph statistics
with strong (sometimes near-optimal) accuracy bounds
but that are only guaranteed to be private when the input graph stream has maximum degree at most a user-specified bound $D$. Unfortunately, their algorithms
can exhibit blatant privacy violations if the input graph stream 
violates the bound.\footnote{For example, suppose edges in the graph denote transmissions of a stigmatized disease like HIV and suppose the analyst knows that all the edges associated with one individual, Bob, arrive at a given time step $t$ (and only those edges arrive). The outputs of the \cite{FHO21} edge-counting algorithm at times $t-1$ and $t$ would together reveal how many disease transmissions Bob is involved in, up to error  $D/\eps$, for  privacy parameter $\eps$. When Bob's degree is much larger than $D$,
this is a clear violation of node privacy.
One can argue using this example that the algorithms of \cite{SLMVC18,FHO21} do not satisfy $(\eps,\delta)$-node differential privacy (\Cref{defn:dp crt}) for any finite $\eps$ with $\delta<1$.
}
This conditional node-privacy is not standard in the literature: prior work on node-privacy in the batch model gives unconditional privacy guarantees (e.g., \cite{BlockiBDS13,KasiviswanathanNRS13,ChenZ13,RaskhodnikovaS16,RaskhodnikovaS16-E,DayLL16,BorgsCSZ18,CummingsD20,KalemajRST23}).
To emphasize the conditional nature of the privacy guarantees for \cite{SLMVC18,FHO21}, 
we say they satisfy \emph{$D$-restricted node-DP}. 
Similarly, algorithms whose edge-privacy depends on such an assumption satisfy \emph{$D$-restricted edge-DP}.

In \cref{sec:related} we describe some ways to modify the algorithms of prior work to offer (unconditional) node privacy in the continual release model. However, these modified algorithms have error guarantees that scale polynomially with either the time horizon $T$ (for the batch-model algorithms of, e.g., \cite{BlockiBDS13,ChenZ13,KasiviswanathanNRS13,DayLL16,KalemajRST23}) or the size of the graph (for the $D$-restricted algorithms of, e.g., \cite{SLMVC18,FHO21}).

Our main contribution is a black-box transformation that can take any $D$-restricted-DP ``base'' algorithm 
and transform it into an algorithm that is private on all graphs while  maintaining the original algorithm's accuracy on $D$-bounded graphs.
We use this transformation to produce specific node-private algorithms for estimating several fundamental graph statistics 
and (in all but one case) show that the error incurred by these private algorithms is near optimal. 
We generally use algorithms of Fichtenberger et al.~\cite{FHO21} as our base, though for connected components the restricted-DP algorithm is new. Importantly for real-world applications, our algorithms are efficient: they add only linear overhead to the time and space complexity of the base algorithm.

 \cref{table:tpdp} summarizes 
the bounds we obtain on additive error---worst-case over $D$-bounded graph streams of length $T$---for releasing
the counts of edges ($\edges$), triangles ($\triangles$),
$k$-stars\footnote{A $k$-star is a set of $k$ nodes, each with an edge to a single common neighbor (which can be thought of as the $k$-star's center).} ($\kstars$),
and connected components ($\connectedcomps$), as well as degree histograms
($\deghist$); it also compares with previous results. 
The parameters $\eps,\delta$ specify the privacy guarantee (\Cref{defn:dp crt}).

\newcommand{\Nsensitivity}{\ensuremath{\Delta_{\msf{N},D}}}
\newcommand{\Esensitivity}{\ensuremath{\Delta_{\msf{E}}}}
\newcommand{\epsDguarantee}{$(\eps,\,$\small{$0,D$}$)$\small{-node-restricted-DP}}
\newcommand{\darktint}{\cellcolor[HTML]{EFEFEF}}
\newcommand{\headtint}{\cellcolor[HTML]{E0E0E0}}
\newcommand{\notint}{\cellcolor[HTML]{FFEDED}}
\newcommand{\darknotint}{\cellcolor[HTML]{EFDDDD}}
\newcommand{\yestint}{\cellcolor[HTML]{EDFFEC}}
\newcommand{\darkyestint}{\cellcolor[HTML]{DEEFDC}}
\renewcommand{\arraystretch}{1.3} %

\begin{table*}[t]
\begin{center}
\begin{tabular}{l||l|l l c|}
\hhline{~||-|- - -|}
& \headtint \begin{tabular}[c]{@{}l@{}} Lower bounds for \\ $\eps \geq \frac{\log T}{T}, \del = O\paren{\frac{1}{T}}$ \end{tabular}
& \headtint Reference & \headtint 
    \begin{tabular}[c]{@{}l@{}}
        Additive $\linf$ error for \\
        $\eps \leq 1$, 
        $\del = \Omega\paren{\frac{1}{\poly(T)}}$
    \end{tabular} 
    &\headtint 
    \begin{tabular}[c]{@{}l@{}}Node-DP\\guarantee\end{tabular} 
\\ \hline \hline
\multicolumn{1}{|l||}{}
      &    &    
   \textsc{bbds/cz/knrs}  &   $\Otilde(D\sqrt{T}/\eps)^*$  & \yestint $(\eps,\del)$ \\ 
\multicolumn{1}{|l||}{}
   &    &    
   \cite{FHO21}  &   $O(D\,\;\log^{5/2} T /\eps)$  & \notint \small{$D$-restricted} $(\eps,0)$ \\ 
\multicolumn{1}{|l||}{\multirow{-3}{*}{$\edges$}}
    & \multirow{-3}{*}{$\Omega(D\,\;\log T/\eps)$} & 
   Our work  &   $O(D\,\;\log^{5/2} T /\eps \;+$ \small{$\log^{7/2}T/\eps^2$}$)$    &  \yestint $(\eps,\delta)$ \\ \hline
\rowcolor[HTML]{EFEFEF}
\multicolumn{1}{|l||}{}
    &    &     
    \textsc{bbds/cz/knrs} & $\Otilde(D^2\sqrt{T}/\eps)^*$    &  \darkyestint $(\eps,\del)$\\ 
\rowcolor[HTML]{EFEFEF}
\multicolumn{1}{|l||}{}
    &    &     
    \cite{FHO21} & $O(D^{ 2}\log^{5/2} T /\eps)$    &  \darknotint \small{$D$-restricted} $(\eps,0)$\\ 
\rowcolor[HTML]{EFEFEF}
\multicolumn{1}{|l||}{\multirow{-3}{*}{$\triangles$}}
    & \multirow{-3}{*}{$\Omega(D^{ 2}\log T/\eps)$} & 
    Our work &  $O(D^{ 2}\log^{5/2} T /\eps \;+$ \small{$\log^{9/2}T / \eps^3$}$)$    & \darkyestint   $(\eps,\delta)$\\ \hline 
\multicolumn{1}{|l||}{}
    &    &     
    \textsc{bbds/cz/knrs} & $\Otilde(D^{ k}\sqrt{T} /\eps)^*$    & \yestint $(\eps,\del)$\\ 
\multicolumn{1}{|l||}{}
    &    &     
    \cite{FHO21} & $O(D^{ k}\log^{5/2} T /\eps)$    & \notint \small{$D$-restricted} $(\eps,0)$\\ 
\multicolumn{1}{|l||}{\multirow{-3}{*}{$\kstars$}}
    & \multirow{-3}{*}{$\Omega(D^{ k}\log T/\eps)$} & 
    Our work &  $O(D^{ k}\log^{5/2} T /\eps \;+$ \small{$\log^{k+5/2}T / \eps^{k+1}$}$)$    &   \yestint $(\eps,\delta)$ \\ \hline 
\rowcolor[HTML]{EFEFEF}
\multicolumn{1}{|l||}{}
    &    &     
    \cite{DayLL16} & $\widetilde{O}(D^{ 2}\sqrt{T} /\eps)^*$    &  \darkyestint $(\eps,\del)$ \\
\rowcolor[HTML]{EFEFEF}
\multicolumn{1}{|l||}{}
    &    &     
    \cite{FHO21} & $\widetilde{O}(D^{ 2}\log^{5/2} T /\eps)$    &  \darknotint \small{$D$-restricted} $(\eps,0)$ \\
\rowcolor[HTML]{EFEFEF}
\multicolumn{1}{|l||}{\multirow{-3}{*}{$\deghist$}}
    & \multirow{-3}{*}{$\Omega(D\,\;\log T/\eps)$} & 
    Our work &  $\widetilde{O}(D^{ 2}\log^{5/2} T /\eps \;+$ \small{$\log^{9/2}T / \eps^{3}$}$)$    & \darkyestint   $(\eps,\delta)$\\ \hline 
\multicolumn{1}{|l||}{}
    &    &     
    \cite{KalemajRST23} & $\Otilde(D \sqrt{T} /\eps)^*$    &  \yestint $(\eps,\del)$ \\
\multicolumn{1}{|l||}{\multirow{-2}{*}{$\connectedcomps$}}
    &  \multirow{-2}{*}{ $\Omega(D\,\;\log T/\eps)$}  &   
    Our work   &  $O(D\,\;\log^{5/2} T /\eps \;+$ \small{$  \log^{7/2}T/\eps^2$}$)$   &  \yestint $(\eps,\delta)$\\ \hline 
\end{tabular}
\end{center}
\caption{Accuracy of our node-private algorithms, previously known \emph{restricted} node-private algorithms, and node-private batch model algorithms on insertion-only, $D$-bounded graph streams of length $T$. ``\textsc{bbds/cz/knrs}'' refers to \cite{BlockiBDS13,ChenZ13,KasiviswanathanNRS13}; the error bounds with $^*$ were computed by applying advanced composition \cite{DworkRV10} to batch model algorithms.
Error lower bounds for these problems are for sufficiently large $T$.
\noakland{See \cref{sec:optimal algs} for more detailed bounds.}}
\label{table:tpdp}
\end{table*}

\myparagraph{Stable, Time-aware Projections.}
Central to our approach is the design of \emph{time-aware projection} algorithms that take as input an arbitrary graph stream and produce a new graph stream, in real time, that satisfies a user-specified degree bound $D$. ``Time-aware'' here refers to the fact that the projection acts on a stream, as opposed to a single graph; we drop this term when the context is clear. ``Projection'' comes from the additional requirement that the output stream be identical to the input on every prefix of the input that is $D$-bounded.
Ideally, we would  simply run 
the restricted-DP algorithm (e.g., from \cite{FHO21}) on the projected graph stream and thus preserve
the original algorithm's accuracy on $D$-bounded
streams.

The challenge is that the resulting process is only private if the projection algorithm is \textit{stable}, meaning that neighboring input streams map to nearby projected streams. 
Specifically, the \emph{node distance} between two streams $\gs$ and $\gs'$ is the minimum number of nodes that must be added to and/or removed from $\gs$ to obtain $\gs'$. \emph{Edge distance} is defined similarly (\Cref{defn:node dist}). \emph{Node-neighboring} streams are at node distance 1.
The \emph{node-to-node stability} of a projection is the largest node distance between the projections of any two node-neighboring streams; the \emph{node-to-edge stability} is the largest edge distance among such pairs.

If we had a projection with good (that is, low) node-to-node stability, then 
running the restricted-DP algorithm on the projected graph stream would satisfy node privacy, and we would be done. Alas, we do not know if such a projection exists. (We show that such a transformation does exist for edge-DP---see the end of this section.) 
Instead, we give two simple, greedy projection algorithms that have good node-to-node and node-to-edge stability
when the input graph stream satisfies a privately testable ``safety'' condition. The safety condition is that the stream has few  large-degree vertices (\Cref{defn:d ell bdd}). Specifically, a graph (or stream) is $(D,\ell)$-bounded if it has at most $\ell$ nodes of degree larger than $D$.\footnote{$(D,\ell)$-boundedness is a computationally efficient proxy for requiring that the stream be close in node-distance to a $D$-bounded stream. Testing the latter condition directly is NP-hard (by reduction from vertex cover); we instead efficiently compute the distance to the nearest \textit{not} $(D,\ell)$-bounded stream---see \Cref{sec:always-private}.}

We obtain a general transformations from $D$-restricted-DP algorithms to truly private ones by testing the safety condition using a novel online variant of the Propose-Test-Release framework of \cite{DL09}.

We explore two natural methods for time-aware projection, each based on a batch-model projection algorithm from the literature.
Both time-aware projections
greedily add edges while maintaining an upper bound on each node's degree. 
One of these methods bases its greedy choices on the degree of the nodes in the original graph stream (``BBDS'', \cite{BlockiBDS13}), while the other bases its choices on the degree of the nodes in the projection that it produces~(``DLL'', \cite{DayLL16}). The results in \Cref{table:tpdp} are obtained using the BBDS-based projection and our general transformation.

We give tight bounds on three measures of stability, summarized  in \cref{tab:projection-stabilities}. The table lists upper bounds; the lower bounds for the BBDS projection are identical up to small additive constants (and the edge-to-edge stability is identical), while the bounds for DLL are tight up to a constant multiplicative factor\noakland{ (see \cref{sec:stab-tight})}.
A graph in the batch model can be represented as a length-$1$ graph stream, so these projections' stability properties also hold for graphs in the batch model.

\begin{table}[t]
\begin{center}
\begin{tabular}{l|c|c|}
\hhline{~|-|-|}
\multicolumn{1}{c|}{ }              &\cellcolor[HTML]{E0E0E0}   $\projo$    & \cellcolor[HTML]{E0E0E0} $\projp$    \\ \hline
\multicolumn{1}{|l|}{edge-to-edge} & 3         & $2\ell+1$ \\ \hline
\rowcolor[HTML]{EFEFEF} 
\multicolumn{1}{|l|}{node-to-edge} & $D+\ell$ & $D + 2\ell\sqrt{\min\{D,\ell \} }$ \\ \hline
\multicolumn{1}{|l|}{ node-to-node} & $2\ell+1$ & $2\ell+1$ \\ \hline
\end{tabular}%
\end{center}
\caption{Stability  of 
$\projo$ and $\projp$ on $(D,\ell)$-bounded input graph streams,  %
from \cref{thrm:combined-stab}.}
\label{tab:projection-stabilities}
\end{table}

The DLL projection preserves more edges than the BBDS projection when the input has some high-degree vertices (the graph returned by BBDS is a subgraph of that returned by DLL), which initially suggests that the DLL projection could be more useful.
Indeed, in the batch setting, the authors of~\cite{DayLL16} show that the projected degree distribution (and number of edges) has low sensitivity. This allows for the DLL projection to provide a better privacy-utility trade-off for these tasks in the batch model.
However, this projection actually has worse stability when we measure node- or edge-distance between output graphs.\footnote{This distinction is crucial in the continual-release setting. For example, even though the degree distribution of the DLL projection has node sensitivity $O(D)$ in the batch setting, the sequence of degree distributions one gets when projecting a stream has unbounded sensitivity.}
Therefore, more noise must be added when using the DLL projection for generic applications, as compared to the BBDS projection.
In our uses, this ultimately means that the projection of~\cite{BlockiBDS13} provides the better privacy-utility trade-off.

\myparagraph{Truly Edge-private Algorithms.} 
Although our focus is on node-privacy, we show along the way that the BBDS-based time-aware projection has edge-sensitivity 3, uniformly over all graphs. (This follows from a batch-model argument of \cite{BlockiBDS13} and a general ``Flattening Lemma'' (\cref{lem:greedy-flatten}) that we establish for greedy, time-aware projections.) As a result, one can make $D$-restricted edge-private algorithms into truly private ones at almost no cost in accuracy. 
Some consequences are summarized in \Cref{thrm:acc of applications edge}.

\myparagraph{Experiments.} We provide experiments on synthetic graphs, which show that our transformation adds little run time overhead and results in truly node-private algorithms that improve considerably over the batch-model baseline.

\subsection{Techniques}
\label{sec:techniques}

\oldtext{Old Section 1.2 here: 

The first key idea in our work is in the definition and analysis of time-aware variants of the batch model projection algorithms from ~\cite{DayLL16} and~\cite{BlockiBDS13} respectively. Even in the batch setting~\cite{BlockiBDS13} only consider edge-to-edge stability while~\cite{DayLL16} analyze only the stability of particular functions on the projected graphs. Our work, on the other hand, considers robust stability guarantees of the overall projection in the continual release model. This allows us to obtain a general transformation in the continual release setting from restricted-DP algorithms to those that provide unconditional privacy. We showcase the wide applicability of this transformation by using it to get nearly optimal algorithms for several problems.

The second key idea is in the identification and use of an efficiently and privately testable ``safety" codition ($(D,\ell)$-boundedness) of a graph stream under which the time-aware projections have robust stability properties. In particular, since this condition corresponds to a low-sensitivity property on node-neighboring graph streams, we are able to privately test this value using an online variant of the \emph{Propose-Test-Release (PTR)} framework of \cite{DL09}. Notice that it is not apriori obvious that the privacy properties of PTR hold in the continual release setting (since we cannot simply reduce the privacy analysis of PTR in the continual release model to that in the batch model.)

Finally, the main technical innovation in our work is in the proof of \cref{thrm:combined-stab} which presents the robust stability guarantees of the time aware projections we consider in this work. There are a few key ideas in the proofs of these results. (1) We observe\cref{lem:greedy-flatten} that due to the greedy nature of the projections, the stability analysis need not carefully analyse the projected stream items over time and can instead restrict its attention to the overall graphs corresponding to the projected stream. }

\myparagraph{Stability Analyses.}
The main technical contribution lies in defining time-aware versions of the two greedy projection algorithms (\Cref{alg:time aware projection}), and leveraging that structure to analyze the sensitivity of the entire projected graph sequence (\Cref{thrm:combined-stab}). Our analyses differ substantially from existing batch-model analyses, both because of the sequential nature of our problem and the stronger notions of stability we consider.

\Cref{sec:stability-thm-overview} contains a detailed overview of the arguments; we highlight here a few simple but useful ideas. 
The time-aware projections share two key features: 
\begin{itemize}
    \item \textit{Shortsightedness:} the algorithm includes all nodes and makes a final decision about each edge at the time it arrives; 
    \item \textit{Opportunism:} if an edge connects vertices with degree at most $D$ in the original graph stream, it will necessarily be included in the projection.
\end{itemize}

Such greedy structure is computationally convenient but also helps us analyze stability. To see why, consider two graph streams $\gs,\gs'$ that differ in the presence of a node $\vplus$ and its edges, and let $\Pi_D(\gs)$ and $\Pi_D(\gs')$ denote the projected streams (where $\Pi_D$ could be either of our two projections). We consider for each time step $t$ the \emph{difference graph} $\gdiff_t$ consisting of edges that  have been added (at or before time $t$) to one projected stream but not the other. 

The first feature, shortsightedness, implies that this difference graph \emph{grows monotonically}. (Such a statement need not hold for arbitrary projections.) This allows us to show a ``Flattening Lemma'' (\Cref{lem:greedy-flatten}), which states that the edge- and node-distance between $\Pi_D(\gs)$ and $\Pi_D(\gs')$ depend only on the final difference graph $\gdiff_T$ (or an intermediate graph $\gdiff_t$ in the case that we are considering only a prefix of the streams). Thus, shortsightedness allows us to ignore the sequential structure and reduce to a batch-model version of $\Pi_D$ in which arrival times affect only the order in which edges are greedily considered.

The second feature, opportunism, allows us to take advantage of $(D,\ell)$-boundedness. If the larger stream has at most $\ell$ vertices of degree more than $D$, we can show that $\gdiff_t$ will have a vertex cover of size at most $\ell+1$ (\Cref{lem:edges dont change}). 

These structural results suffice to bound the node-to-node stability of both projections by $2\ell+1$.

From this point, the analyses of the two projections diverge. Each inclusion rule leads to different structure in the difference graph $\gdiff_T$.
The most involved of these analyses proves a (tight) bound of $D+2\ell\sqrt{\min\set{D,\ell}}$ on the node-to-edge stability of the DLL-based projection. At a high level, that analysis proceeds by orienting the edges of $\gdiff_T$ to show that it is close in edge distance to a large DAG which is covered by at most $\ell$ edge-disjoint paths and then bounding the possible size of such a DAG.

The node-to-edge analysis of the BBDS-based algorithm is also subtle, but different. The key point there is that all of the edges connected to $\vplus$ can potentially cause changes in the projected graph, even the edges which are not selected for inclusion in the projection themselves. We refer to \Cref{sec:BBDS node edge} for further detail.

\myparagraph{Testing Distance to Unsafe Streams.}
A second, less involved insight is that,
although it is NP-hard to compute the node distance to the nearest stream that is not $D$-bounded,
$(D,\ell)$-boundedness gives us a proxy that is much easier to work with. Specifically, we observe that $D$-bounded streams are always distance at least $\ell+1$ from the nearest non-$(D+ \ell,\ell)$-bounded stream; furthermore, this distance can be computed in linear time~(\Cref{lem:dtg-properties}). 
Since the distance to non-$(D+\ell, \ell)$-boundedness at any given time step has low node-sensitivity, we can use a novel (to our knowledge) online variant of the PTR framework~\cite{DL09} based on the sparse vector technique~\cite{DNRRV09,RothR10,HardtR10} to monitor the distance and stop releasing outputs when the distance becomes too small.
The privacy analysis of this part follows the argument of \cite{DL09} but differs
because, rather than making a binary decision to either release or not release an output, the testing process dynamically chooses to release outputs at up to $T$ time steps (see \cref{thrm:ptr crt simple}).
The resulting general transformations are summarized in \cref{thrm:bb priv params}.

\subsection{Related Work}
\label{sec:related}

Our contributions draw most heavily from the literature on batch-model node-differentially private algorithms. Node privacy was first formulated by \cite{HLMJ09}. The first nontrivial node-private algorithms emerged in three concurrent works~\cite{BlockiBDS13,KasiviswanathanNRS13,ChenZ13} that collectively identified two major families of (overlapping) approaches based on Lipschitz extensions \cite{BlockiBDS13,KasiviswanathanNRS13,ChenZ13,RaskhodnikovaS16-E,RaskhodnikovaS16,DayLL16,BorgsCSZ18,CummingsD20,KalemajRST23} on one hand, and projections~\cite{BlockiBDS13,KasiviswanathanNRS13,DayLL16} on the other.  These works provide algorithms with (tight) accuracy guarantees for $D$-bounded graphs for the statistics we consider here as well as families that arise in the estimation of stochastic block models and graph neural networks. They also consider other families of graphs on which their specific statistics are well behaved. Most relevant here is the batch-model projection of BBDS \cite{BlockiBDS13} with low edge-to-edge sensitivity, and the Lipschitz extension for degree distributions of DLL \cite{DayLL16}. This latter extension can be viewed algorithmically as a greedy projection for which the degree histogram is stable. We use their projection idea, and then analyze the stability of the graph as a whole. 

There is also an extensive literature on batch-model edge-private algorithms; we do not attempt to survey it here.

A second major tool we draw on is the $D$-restricted node- and edge-private algorithms of \cite{SLMVC18,FHO21} for continual release of graph statistics. These in turn use the widely-studied tree mechanism, whose use in the continual-release setting (for numerical data) dates back to the model's introduction~\cite{DNPR10,CSS11}. Also relevant are the edge-private streaming algorithms of \cite{Upadhyay13,UpadhyayUA21} for cuts and spectral clustering. (To the best of our understanding, the application of our transformations to their algorithms does not yield non-trivial utility guarantees.)

Finally, our work draws on the Propose-Test-Release framework of \cite{DL09}, combining it with the sparse vector mechanism \cite{RothR10,HardtR10,LSL17} to monitor the distance from the stream to the nearest non-$(D,\ell)$-bounded stream.

\myparagraph{Modifying existing algorithms.} By modifying algorithms from prior work, we can obtain algorithms with unconditional privacy in the continual observation model. However, the error of these modified algorithms scales polynomially with the time horizon $T$ (when starting from batch-model algorithms) or the size of the graph (when starting from $D$-restricted algorithms). This means the relative error of the resulting algorithms is large when run, respectively, for a large number of time steps or on sparse graphs.

First, any batch-model algorithm with unconditional $(\eps,\delta)$-node-DP (e.g., \cite{BlockiBDS13,ChenZ13,KasiviswanathanNRS13,DayLL16,KalemajRST23}) can be extended to the continual-release model in the following way: fix some time horizon $T$, run the algorithm with (roughly) $\eps \approx \eps'/\sqrt{T}$ and $\delta \approx \delta'/T$ at each time step. 
By composition \cite{DworkRV10} over the $T$ time steps, the resulting algorithm is $(\eps',\del')$-DP. The error guarantee of the modified algorithm scales linearly with $\sqrt{T}$ in the typical case that the error of the batch algorithm scales as $1/\eps$.

Second, the $D$-restricted node-DP algorithms of \cite{SLMVC18,FHO21} can be modified to offer privacy for all graphs as follows: given a public estimate $\tilde{n}$ for the number of nodes in the graph stream, one can run a modified form of the original algorithm with $D =\tilde{n}$ that ignores any nodes beyond the first $\tilde{n}$ nodes in the graph stream.\footnote{Restricting the input graph stream to the first $\tilde n$ nodes that arrive will amplify node distances by at most a factor of 2; one can adjust for this by roughly doubling $\eps$ and $\delta$.}
If an estimate $\tilde{n}$ is not known ahead of time, the algorithm can be further adapted based on a standard trick:
Part of the privacy budget can be used to separately maintain a differentially private estimate $\tilde{N}$ of the number of nodes in the graph. We can initialize the bound $\tilde n$ to a default value and then double it and restart the main algorithm whenever $\tilde N$ gets sufficiently large. This process will increase the privacy budget by
the logarithm of the final size of the graph.
The resulting algorithm is truly node-DP, but incurs large relative error on sparse graphs since it uses a degree bound $D$ that is close to the actual size of the graph.

The approach we present in this paper achieves error bounds that are independent of the number of nodes in the stream, and scale only poly-logarithmically with $T$.

\subsection{Organization of This Manuscript}
\label{sec:org}

\Cref{sec:prelims} lays out the model and basic definitions used in the remainder of the paper. 
\Cref{sec:projection} presents the time-aware projections and the results on their stability.
\Cref{sec:always-private} explains the general black-box transformation from $D$-restricted edge (or node) privacy to true node privacy. 
\Cref{sec:opt algs} develops the applications to basic graph statistics. 
\Cref{sec:experiments} presents our experimental results.
\yoakland{Because of space constraints, many proofs are deferred to the full version, which can be found at \cite{JainSW24}.}

\section{Preliminaries}\label{sec:prelims}

\begin{definition}[Graph]
    A \emph{graph} $G = (V,E)$ consists of a set of vertices $V$ (also known as nodes), and a set of edges $E$, where edge $\{v_1, v_2\}\in E$ if and only if there is an edge between nodes $v_1\in V$ and $v_2 \in V$.
\end{definition}

\noakland{We use $V(G)$ and $E(G)$ to denote the vertex set and edge set of graph $G$, respectively. We use $\deg_v(G)$ to denote the degree of a node $v\in V(G)$; we drop $G$ when the graph being referenced is clear.
}

\begin{definition}[Graph stream]
\label{defn:graph stream}
    Given a time horizon $T$, a \emph{graph stream} $\gs\in \mGS^T$ is a $T$-element vector, where each element of the vector contains some set of nodes and edges, or the symbol $\bot$ if no nodes or edges arrive in that time step.
    At each time $t\in[T]$, either $\bot$ arrives or some set of nodes and edges arrives. By convention, an edge's endpoints arrive no later than the edge.
\end{definition}

We denote by $\gs_t$ the set of added nodes and edges which arrive in time step $t$.
We use $\gs_{[t]}$ to denote the sequence $\gs_1,\ldots, \gs_{t}$. 

\begin{definition}[Flattened graph]
\label{defn:flattened graph}
    Let $\gs\in\mGS^T$ be a graph stream of length $T$.
    The \emph{flattened graph} of the first $t$ terms $S_{[t]}$ of a graph stream, denoted $\flatt{\gs_{[t]}}$, is the graph that can be formed by all of the nodes and edges which arrive at or before time $t$.
\end{definition}

When the meaning is clear, we may refer to the graph stream through time $t$ when stating a property of the flattened graph of the graph stream through time $t$. For example, when we say that $\gs_{[t]}$ has maximum degree at most $D$, we mean that $\flatt{\gs_{[t]}}$ has maximum degree at most $D$.

We next define neighboring graphs and graph streams. In privacy-preserving data analysis, the notion of neighboring datasets is important since privacy requires that evaluating a function on similar datasets produces indistinguishable outputs. There are two natural notions of neighboring graphs and graph streams: node neighbors differ on a node (and its associated edges), while edge neighbors differ on one edge. We denote node and edge neighbors with the relations $\nnode$ and $\nedge$ respectively.

\begin{definition}[Neighboring graph streams]
\label{defn:nn graph streams}
    Two graphs (respectively, graph streams) are {\em node-neighbors} if one can be obtained from the other by removing a vertex and all of its adjacent edges. (For graph streams, the adjacent edges for the removed node may have been spread over many time steps.)

    Similarly, two graphs (respectively, graphs streams) are \emph{edge neighbors} if one can be obtained from the other by either removing one edge, removing an isolated node, or removing a node of degree 1 and its adjacent edge.\footnote{Another way to define edge neighbors would be to take the set of nodes as fixed and public, and only consider changes to one edge. We adopt the more general definition since it simplifies our results on node-to-edge stability.} 
\end{definition}

A generalization of node- and edge-neighboring graphs and graph streams is the notion of node and edge distance between graphs and graph streams. We note that node- and edge-neighboring datasets are at node and edge distance $1$, respectively.

\begin{definition}[Node and edge distance]
\label{defn:node dist}
    The {\em node-distance} $\dnode(G, G')$ is defined as the length $d$ of the shortest chain of graphs (respectively, graph streams) $G_0,G_1,\ldots,G_d$ where $G_0=G$, $G_d=G'$, and every adjacent pair in the sequence is node neighboring. 

    The {\em edge-distance} $\dedge(G, G')$ is defined as the length $d$ of the shortest chain of graphs (respectively, graph streams) $G_0,G_1,\ldots,G_d$ where $G_0=G$, $G_d=G'$, and every adjacent pair in the sequence is edge neighboring. 
\end{definition}

Given two  graphs with no isolated vertices $G=(V,E)$ and $G'=(V',E')$ (where $V$ and $V'$ may overlap), the edge distance between $G$ and $G'$ is exactly the size of the set $E \triangle E'$. (Isolated vertices that are not in both graphs add to the distance.)

\myparagraph{Differential Privacy in the Batch Model.}
To define differential privacy in the batch model, we introduce the notion of $(\eps, \delta)$-indistinguishability.

\begin{definition}[$(\eps,\delta)$-indistinguishability]
We say that two random variables $R_1,R_2$ over outcome space $\mY$ are \emph{$(\eps,\delta)$-indistinguishable} (denoted $R_1\simed R_2$) if for all $Y\subseteq \mY$, we have
\begin{align*}
\pr{R_1\in Y} &\leq e^{\eps} \pr{R_2 \in Y} + \delta; \\
\pr{R_2\in Y} &\leq e^{\eps} \pr{R_1 \in Y} + \delta.
\end{align*}
\end{definition}

Informally, a function is differentially private if applying the function to inputs which differ in the data of one individual results in outputs from similar distributions---more specifically, from distributions which are $(\eps, \delta)$-indistinguishable. We use the definitions of node and edge neighbors presented above to formalize the notion of what it means for graphs or graph streams to differ in the data of one individual.

\begin{definition}[Differential privacy (DP) in the batch model \cite{DworkMNS16}]
\label{defn:dp batch}
A randomized algorithm $\mM:\mGS^T\to \mY$ is $(\eps,\del)$-\emph{node-DP} (respectively, \emph{edge-DP}), if for all pairs of node-neighboring (respectively, edge-neighboring) graph streams $\gs$ and $\gs'$, the distributions $\mM(\gs)$ and $\mM(\gs')$  are $(\eps,\del)$-indistinguishable:
\[
\mM(\gs)\simed\mM(\gs').
\]

The term \emph{pure DP} refers to the case where $\delta = 0$, and \emph{approximate DP} refers to the case where $\delta > 0$.
\end{definition}

\myparagraph{Privacy under Continual Observation.}
We now define privacy of graph statistics under continual observation; the general definition is borrowed from \cite{JRSS21}. In the continual release setting, first explored by \cite{CSS11,DNPR10}, an algorithm receives a stream of inputs $S = (S_1,\ldots, S_T) \in \mS^T$. The definition of privacy requires indistinguishability on the distribution of the entire sequence, not just one output.
For simplicity, in this version, we consider only the simpler, non-adaptive concept of differential privacy. We conjecture that all our algorithms and results extend verbatim to the adaptive version~\cite{JRSS21} (since the main components of our algorithm, the tree mechanism and sparse vector technique, are known to be adaptively private).

\begin{definition}[Privacy of a mechanism under continual observation]
\label{defn:dp crt}
    Define $\mA_{\mM}$ as the batch-model algorithm that receives a dataset $x$ as input, runs $\mM$ on stream $x$, and returns the output stream $y$ of $\mM$. We say that $\mM$ is \emph{$(\eps,\del)$-DP in the non-adaptive setting under continual observation} if $\mA_{\mM}$ is $(\eps,\del)$-DP in the batch model.
\end{definition}

We borrow the definition of accuracy from \cite{JRSS21}, which bounds the error of a mechanism with respect to a target function $f$. The definition takes the maximum error over both time steps and the coordinates of the output of $f$. (Although most of the functions we approximate return a single real value, the degree histogram $\deghist$ returns a vector at each step and requires the extra generality.)

\begin{definition}[Accuracy of a mechanism]
\label{defn:accuracy crm}
Given a set of allowable streams $\mS \subseteq \mX^*$, a mechanism $\mM$ is \emph{$(\alpha, T)$-accurate with respect to $\mS$} for a function $f:\mX^*\to \R^k$ if, for all fixed (i.e., non-adaptively chosen) input streams $S = (S_1,\ldots, S_T) \in \mS$, the maximum $\linf$ error over the outputs $a_1, \ldots, a_T$ of mechanism $\mM$ is bounded by $\alpha$ with probability at least $0.99$; that is,
\[
\Pr_{\text{coins of $\mM$}} \left[ \max_{t\in[T]}  \left\Vert f\left(\gs_{[t]}\right) - a_t \right\Vert_\infty \leq \alpha \right] \geq 0.99.
\]
\end{definition}

If the indistinguishability property of differential privacy holds conditioned on the promise that both node-neighbors (respectively, edge-neighbors) lie in the set of graph streams with maximum degree at most $D$, we say that the algorithm offers \emph{$D$-restricted $(\eps,\del)$-node-DP} (respectively, \emph{$D$-restricted $(\eps,\del)$-edge-DP}). This is the notion of privacy explored by all prior work on node-private graph statistics under continual observation \cite{FHO21,SLMVC18}. 

\begin{definition}[$D$-restricted DP]
A randomized algorithm $\mM:\mG\to \mY$ is \emph{$D$-restricted $(\eps,\del)$-node-DP} (respectively, \emph{edge-DP}) if \cref{defn:dp batch} holds when restricted to the set of node-neighboring (respectively, edge-neighboring) graph streams with maximum degree at most $D$.

Likewise, $\mM$ is \emph{$D$-restricted $(\eps,\del)$-node-DP under continual observation} (respectively, \emph{edge-DP}) if \cref{defn:dp crt} holds when restricted to the set of node-neighboring (respectively, edge-neighboring) graph streams with maximum degree at most $D$.
\end{definition}

\noakland{
    \yoakland{\subsection{Properties of Differential Privacy}}
\noakland{\myparagraph{Properties of Differential Privacy.}}
The definition of differential privacy extends to groups of individuals. If an algorithm is DP for one individual, it also offers a (differently parameterized) privacy guarantee for a collection of individuals. We borrow the formulation of this property from \cite{Vadhan17}.

\begin{lemma}[DP offers group privacy \cite{DworkMNS16}]
\label{lemma:group priv}
Let $\mM:\mX\to \mY$ be a randomized algorithm that is $(\eps,\del)$-DP. Then, where $x,x'\in \mX$ differ in the data of $k$ individuals, $\mA(x)$ and $\mA(x')$ are $(k\cdot \eps, k\cdot e^{k\eps} \cdot \del)$-indistinguishable. That is,
\[
    \mA(x) \approx_{k\cdot \eps, k\cdot e^{k\eps} \cdot \del} \mA(x').
\]
\end{lemma}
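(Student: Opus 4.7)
The plan is to prove the lemma by a straightforward inductive chaining argument along a sequence of datasets that interpolates between $x$ and $x'$, using the single-individual DP guarantee at each step.

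First, I would exhibit a chain $x = x^{(0)}, x^{(1)}, \ldots, x^{(k)} = x'$ of datasets in $\mX$ such that each consecutive pair $x^{(i)}, x^{(i+1)}$ differs in the data of exactly one individual. Such a chain exists by the assumption that $x$ and $x'$ differ in the data of $k$ individuals: one obtains $x^{(i)}$ from $x^{(i-1)}$ by updating the $i$-th differing individual's record from its value in $x$ to its value in $x'$. By the $(\eps,\delta)$-DP hypothesis on $\mM$, we have $\mM(x^{(i)}) \simed \mM(x^{(i+1)})$ for every $i \in \{0, 1, \ldots, k-1\}$.

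Next, I would prove a basic composition lemma for $(\eps,\delta)$-indistinguishability: if $R_1 \approx_{\eps_1,\delta_1} R_2$ and $R_2 \approx_{\eps_2,\delta_2} R_3$, then $R_1 \approx_{\eps_1+\eps_2,\, \delta_1 + e^{\eps_1}\delta_2} R_3$. This follows directly from the definition of indistinguishability by chaining the two inequalities: for any measurable $Y \subseteq \mY$,
\[
\Pr[R_1 \in Y] \leq e^{\eps_1}\Pr[R_2 \in Y] + \delta_1 \leq e^{\eps_1}\bigl(e^{\eps_2}\Pr[R_3 \in Y] + \delta_2\bigr) + \delta_1,
\]
and symmetrically in the other direction.

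Applying this composition lemma inductively along the chain $\mM(x^{(0)}), \mM(x^{(1)}), \ldots, \mM(x^{(k)})$, I obtain $\mM(x) \approx_{k\eps,\, \delta'} \mM(x')$ with
\[
\delta' \;=\; \delta \sum_{i=0}^{k-1} e^{i\eps}.
\]
To conclude, I would bound this geometric sum by $k \cdot e^{(k-1)\eps} \leq k \cdot e^{k\eps}$, yielding the claimed $(k\eps,\, k e^{k\eps}\delta)$-indistinguishability. Note the statement refers to $\mA(x)$ and $\mA(x')$; since this notation is not re-defined here (and $\mA$ is presumably just the mechanism $\mM$ in the batch model), the same argument applies directly.

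The proof has no real obstacle. The only mildly subtle point is being careful about the asymmetric $\delta$-cost when chaining indistinguishabilities (the $e^{\eps_1}\delta_2$ factor), and then loosening the resulting geometric sum to obtain the clean bound $k e^{k\eps}\delta$ stated in the lemma.
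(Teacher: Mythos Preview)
Your proposal is correct and matches the paper's approach: the paper simply states that the lemma follows from the ``weak triangle inequality'' for $(\eps,\delta)$-indistinguishability (\cref{lem:weak triangle ineq}), which is exactly the composition lemma you prove and then apply inductively along the chain. Your explicit computation of the geometric sum $\delta\sum_{i=0}^{k-1}e^{i\eps}\le k e^{k\eps}\delta$ fills in the details the paper leaves implicit.
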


This follows from a well-known ``weak triangle inequality'' for $(\eps,\delta)$-indistinguishability: 

\begin{lemma}[Weak triangle inequality]
\label{lem:weak triangle ineq}
    For all $\eps_1, \eps_2, \delta_1 ,\delta_2 \geq 0$: If random variables $A,B,C$ satisfy $A\approx_{\eps_1,\delta_1} B$ and $B \approx_{\eps_2,\delta_2} C$, then $A\approx_{\eps',\delta'} C$ for  $\eps' = \eps_1 + \eps_2$ and $\delta' = \max(\delta_1  + e^{\eps_1}\delta_2,\ \delta_2+ e^{\eps_2}\delta_1) \leq  e^{\eps_2}\delta_1+ e^{\eps_1}\delta_2$.
\end{lemma}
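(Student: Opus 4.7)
The approach is a direct chaining argument: apply the two hypotheses to the same measurable event $Y \subseteq \mathcal{Y}$ in sequence, then symmetrize to obtain the $\max$ expression for $\delta'$.

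The first step is to fix an arbitrary measurable $Y$ and apply the definition of $A \simeq_{\eps_1,\delta_1} B$ to bound $\Pr[A\in Y]$ by $e^{\eps_1}\Pr[B\in Y] + \delta_1$. Then apply the definition of $B\simeq_{\eps_2,\delta_2} C$ to the quantity $\Pr[B\in Y]$ to get $\Pr[B\in Y]\le e^{\eps_2}\Pr[C\in Y]+\delta_2$. Substituting and distributing $e^{\eps_1}$ through the inner bound yields
\[
\Pr[A\in Y]\;\le\; e^{\eps_1+\eps_2}\Pr[C\in Y]\;+\;\bigl(\delta_1 + e^{\eps_1}\delta_2\bigr).
\]
This immediately gives one of the two directions in the definition of $A\simeq_{\eps',\delta'}C$ with $\eps'=\eps_1+\eps_2$.

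The second step is to repeat the chain in the reverse direction. Using $B\simeq_{\eps_2,\delta_2}C$ and $A\simeq_{\eps_1,\delta_1}B$ symmetrically, one obtains
\[
\Pr[C\in Y]\;\le\; e^{\eps_1+\eps_2}\Pr[A\in Y]\;+\;\bigl(\delta_2 + e^{\eps_2}\delta_1\bigr).
\]
Taking $\delta' := \max\{\delta_1 + e^{\eps_1}\delta_2,\ \delta_2+e^{\eps_2}\delta_1\}$ makes both inequalities hold for every $Y$, proving $A\simeq_{\eps',\delta'}C$.

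For the final inequality $\delta'\le e^{\eps_2}\delta_1+e^{\eps_1}\delta_2$, the observation is simply that $e^{\eps_i}\ge 1$ for $\eps_i\ge 0$, so each of the two arguments to the $\max$ is individually bounded by $e^{\eps_2}\delta_1+e^{\eps_1}\delta_2$. There is no real obstacle here: the lemma is a standard ``group composition of two steps'' result and the whole proof is two lines of calculation per direction; the only thing to be slightly careful about is performing the chaining in both directions (rather than just one) in order to justify the symmetric $\max$ form of $\delta'$ that is stated.
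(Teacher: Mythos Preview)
Your proof is correct and is exactly the standard two-line chaining argument. The paper does not actually supply a proof of this lemma; it simply states it as a well-known fact, so there is nothing further to compare.
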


Differential privacy is robust to post-processing.

\begin{lemma}[DP is robust to post-processing \cite{DworkMNS16}]
\label{lemma:post proc}
Let $\mM:\mX\to \mY$ be a randomized algorithm that is $(\eps,\del)$-DP. Let $f:\mY\to \mZ$ be an arbitrary, randomized mapping. Then $f\circ \mM:\mX\to \mZ$ is $(\eps,\del)$-DP.
\end{lemma}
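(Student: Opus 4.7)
The plan is to reduce the claim to the deterministic-map case by isolating the randomness of $f$ into an independent auxiliary random variable, and then invoke the $(\eps,\del)$-DP hypothesis of $\mM$ slicewise on the induced preimages.

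First I would fix a pair of neighboring inputs $x, x' \in \mX$ and an arbitrary measurable event $Z \subseteq \mZ$, with the goal of establishing $\pr{f(\mM(x)) \in Z} \leq e^\eps \pr{f(\mM(x')) \in Z} + \del$ (the reverse direction is identical by symmetry). Next I would represent $f$ in canonical form as $f(y) = g(y, R)$, where $R$ is drawn from some distribution independent of $\mM$'s internal coins and of the input, and $g : \mY \times \mathrm{supp}(R) \to \mZ$ is deterministic. For every fixed value $R = r$, the preimage $Y_r := \{y \in \mY : g(y, r) \in Z\}$ is a (measurable) subset of $\mY$, so the event $\{f(\mM(x)) \in Z\}$ conditioned on $R = r$ coincides exactly with $\{\mM(x) \in Y_r\}$.

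The key step is then to apply the $(\eps,\del)$-DP guarantee of $\mM$ to the test set $Y_r$ for each fixed $r$, yielding $\pr{\mM(x) \in Y_r} \leq e^\eps \pr{\mM(x') \in Y_r} + \del$. Taking expectation over $R$ on both sides (permissible because $R$ is independent of $\mM$ and of the inputs, so the pointwise inequality integrates against the law of $R$ term by term) gives $\pr{f(\mM(x)) \in Z} \leq e^\eps \pr{f(\mM(x')) \in Z} + \del$, which is the required post-processed indistinguishability.

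There is essentially no obstacle here; the only subtle point is the measure-theoretic decoupling of $f$'s internal randomness from $\mM$, which is standard and completely transparent when $\mY$ and $\mZ$ are discrete (as in our applications, where outputs are vectors of reals represented with finite precision). No additional hypothesis on $f$ beyond being a randomized mapping is required, which is exactly why post-processing is such a useful closure property.
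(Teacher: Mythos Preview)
Your proof is correct and follows the standard argument. The paper does not actually prove this lemma; it states it as a known result with a citation to \cite{DworkMNS16} and moves on, so there is nothing to compare against beyond noting that your argument matches the textbook proof.
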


}

\ifnum\lowerbds=1
    \section{Graph Statistics}

\ifnum\lowerbds=1
The lower bounds that we show in \cref{sec:lower bounds} hold for insertion-only graph streams with no solo edges. Note that the lower bounds, then, also hold for the insertion-only setting and hold for the setting in which nodes and edges can be inserted and deleted, although tighter bounds may be possible in these settings.
\fi

We now define notation for releasing graph statistics.

\begin{definition}[Edges in a graph]
    Let $\edges:\mG\to \R$ return to the number of edges in the input graph $G\in\mG$.
\end{definition}

Let $n$ be the number of nodes in a graph. When working with sparse graphs, computing the true number of edges in the graph and releasing a noisy answer with the Laplace mechanism often introduces too much noise for the answer to be useful. If no degree bound is imposed, rewiring one node can result in the addition or deletion of up to $n-1$ edges, so the edge count has sensitivity $\Theta(n)$. Many graphs are sparse and do not have maximum degree $n$; for this reason, we want to be able to release edge counts with noise closer to $O(D/\eps)$, where $D$ is the maximum degree of the graph.

\cite{BlockiBDS13,KasiviswanathanNRS13,ChenZ13} introduced the first accurate, node-private algorithms for graph statistics. \cite{KasiviswanathanNRS13,BlockiBDS13},\connorq{Does \cite{ChenZ13} do projection? I looked at it for a little while, and I was confused. I can look at it more, but I thought I'd ask since you might just know.} among other more recent works like \cite{DayLL16}, present \emph{projection} algorithms which take a graph $G\in\mG$ and a degree bound $D$ as input, and essentially map $G$ to a (not necessarily induced) subgraph with degree bound $D$. Importantly, neighboring graphs are projected to similar graphs. As a result, adding noise with scale $O(D/\eps)$ preserves privacy for all graphs (regardless of maximum degree), and, when working with sparse graphs with maximum degree $\leq D\ll n$, these projection mechanisms which use noise distributions with scale $O(D/\eps)$ are much more accurate than mechanisms which add noise from a distribution with scale $O(n/\eps)$.

Roughly, these papers provide extensions of the true edge count that provide answers close to maximizing the edge count over all (not necessarily induced) subgraphs of the input graph which have degree $\leq D$. \connorq{Is there a source to support this idea? I think it's true, but idk.} For graphs of degree $\leq D$, then, the output is close to the true output of $\edges$. We provide a formal definition of this idea below.

\begin{definition}[Edge count over degree-bounded subgraphs]
\label{defn:degree bound func}
    Let $\mG_D$ denote the set of all graphs with maximum degree at most $D$. Additionally, for $G,G'\in \mG$, we write $G'\preceq G$ if $G'$ is a subgraph of $G$.

    We define the edge count over degree-bounded subgraphs $\subgraphedges_D:\mG\to \R$ as the edge count, maximized over all degree $D$ bounded subgraphs of the input graph. That is,
    \[
        \subgraphedges_D(G) = 
        \max_{
        \substack{
        G'\in \mG_D  \\
        \text{s.t. } G'\preceq G
        }
        }
        \edges(G').
    \]
\end{definition}

In addition to showing a lower bound on the error needed to privately approximate $\subgraphedges_D$, we also show a lower bound on the error needed to privately approximate the extension for the number of edges described in \cite[Alg. 1]{KasiviswanathanNRS13} and defined below in \cref{defn:KasiviswanathanNRS13 alg}.

\begin{definition}[Flow graph \cite{KasiviswanathanNRS13}]
\label{defn:flow-graph}
    Given an undirected graph $G = (V,E)$, let $V_\ell = \{v | v\in V\}$ and $V_r = \{v | v\in V\}$ be two copies of $V$, called the \emph{left} and \emph{right} copies, respectively. Let $D\in \N$ be less than $n$. The \emph{flow graph} of $G$ with parameter $D$, a source $s$, and sink $t$ is a directed graph on nodes $V_\ell \cup V_r \cup \{s,t\}$ with the following capacitated edges: edges of capacity $D$ from the source to all nodes in $V_\ell$ and from all nodes in $V_r$ to the sink $t$, and unit-capacity edges $(u_\ell, v_r)$, where $u_\ell\in V_\ell$ and $v_r\in V_r$, for all edges $\{u,v\}$ of $G$. Let $\vfl_D(G)$ denote the value of the maximimum flow in the flow graph of $G$ with parameter $D$.
\end{definition}

\begin{definition}[Flow-based edge count from \cite{KasiviswanathanNRS13}]
\label{defn:KasiviswanathanNRS13 alg}
    Let $\maxflowedges_D : \mG\to \R$ be defined as $\maxflowedges_D:G \mapsto \vfl_D(G)/2$, where $\vfl_D(G)$ denotes the value of the maximum flow in the flow graph of $G$ with parameter $D$ as defined in \cref{defn:flow-graph}.
\end{definition}

\section{Lower bounds for graph statistics}
\label{sec:lower bounds}

In this section, we use a reduction from \emph{one-way marginals} (OWMs) --- defined in \cref{sec:owms} --- to prove the following lower bound on the error needed to privately approximate $\subgraphedges_D$ and $\maxflowedges_D$. The reduction is presented in \cref{sec:reduc from owms}, and the proof of \cref{thrm:lower bound edge count} is presented in \cref{sec:pf of lower bounds}.

\begin{theorem}[Lower bounds for edge counts]
\label{thrm:lower bound edge count} 
Let $\eps \in (0,1]$, $\del\in [0,1]$, $D\in \N$, and $T\in \N$. Let $\mS$ be the set of all insertion-only graph streams with no solo edges (\cref{defn:graph stream no solo}). Let $\mM$ be a mechanism for approximating $\subgraphedges_D$ or $\maxflowedges_D$. If $\mM$ is $(\eps, \del)$-DP and $(\alpha, T)$-accurate with respect to $\mS$, the following statements hold.
\begin{enumerate}
    \item If $\del = 0$, then $\alpha = \Omega\left( \dfrac{DT}{\eps} \right)$.
    \item If $\del > 0$ and $\del = o\left( \dfrac{\eps \log T}{\sqrt{T}} \right)$, then $\alpha = \Omega\left( \dfrac{D\sqrt{T}}{\eps\log T} \right).$
\end{enumerate}
    \connorq{I feel pretty good about item (1), since it matches the bound I got through the packing argument. I'm fairly confident the math goes through for the approx-DP bound. The proof of \cref{thrm:lower bound edge count} is at the end of \cref{sec:reduc from owms}. I've identified one question about the proof of (1) and two questions about the proof of (2) that are provided in-line there. Could you take a look at those questions?}
    
    \connorqd{The corresponding theorem in \cite{JRSS21} says, ``... and sufficiently large $T\in \N$.'' I think this theorem goes through for all $T\in\N$ for me, but I'm not sure. Does it go through for all $T\in\N$, or should I also have the ``sufficiently large'' modifier?}\connor{Discussed 5/31/2023: Adam says ``it depends on what I prove''. Makes sense. Recall discussion on asymptotic bounds for expressions with several variables. $f(x,y) = x/ \log y$ is only $O(x)$ for sufficiently large $y$; for $y\in(1,2)$, the function gets arbitrarily large. By saying ``arbitrarily large'', we express the idea that there may exist a value of $y$ for which the asymptotic expression doesn't hold, but that once $y$ exceeds some constant value (which we don't need to specify), then the asymptotic bound holds.}
\end{theorem}

Note that the bounds in \cref{thrm:lower bound edge count} also hold when $\mS$ is defined as the set of all insertion-only graph streams (\cref{defn:graph stream}), or when $\mS$ is defined as the set of all graphs streams with insertions and deletions, though stronger lower bounds may also hold in these settings.

Put the main lower bound theorem(s) here. I'm thinking separate ones for maxflow-edge-count and subgraph-edge-count. We can jointly state one or both of the reduction results in the next section though if it makes the proof easier.

probably use restatable theorems so that they can be mentioned in the introduction as well as in the main body of the paper. necessary packages are in the comment below.

Mention the structure of the proof here

\subsection{One-way marginals (OWMs)}
\label{sec:owms}

Lower bounds on the error needed to privately release OWMs are well studied \cite{HT10,Vadhan17,BUV18}, and reducing from one-way marginals gives useful lower bounds for both pure and approximate DP. Before proceeding with the reduction, we provide additional information on one-way marginals.

\begin{definition}[One-way marginals]
    Let $\mX = \{0,1\}^k$. We define $\margs_k:\mX^*\to [0,1]^k$ as the function which, for each column $i\in[k]$, computes the fraction of rows in the dataset which have property $i$.
    
    Let $x_i$ denote the $\ord{i}$ row of vector $x$, and let $y[j]$ denote the $\ord{j}$ entry in $y\in\mX$. More formally, $\margs_k$ outputs the vector $(\phi_1(x),\ldots,\phi_k(x))$, where
    \begin{equation}
    \label{eq:jth marginal}
        \phi_j(x) = \frac{1}{n}\sum_{i\in[n]}x_i[j].
    \end{equation}
\end{definition}

To discuss lower bounds on accuracy, we need a notion of accuracy for the batch model. We borrow the formulation of accuracy from \cite{JRSS21}.

\begin{definition}[Accuracy in the batch model]
\label{defn:accuracy batch}
    Recall the definition of $\err$ from \cref{defn:error}, which measures the accuracy of an approximation to a function by returning the $\linf$ distance between the outputs of the function and the answers produced by the approximation. Additionally, let $\gamma \in [0,1]$ and $n\in\N$. 

    An algorithm $\mA$ is \emph{$(\gamma, n)$-accurate} for a function $f:\mX^n \to \R^k$ if, for all input datasets $x\in\mX^n$, the $\linf$ distance between $f(x)$ and the vector $a$ of outputs from $\mA$ is bounded by $\gamma$ with probability $\geq 0.99$; that is,
    \[
        \Pr_{\text{coins of $\mA$}} \left[ \err_f(x, a) \leq \gamma \right] \geq 0.99.
    \]
\end{definition}

Note that this definition applies to the batch model but is otherwise similar to \cref{defn:accuracy crm}. One-way marginals have the following lower bounds. The presentation of the lower bounds in \cref{lemma:owm lower bounds} is borrowed from \cite{JRSS21}.

\begin{lemma}[Lower bounds for one-way marginals]
\label{lemma:owm lower bounds}
For all $\eps\in(0,1]$, $\del\in [0,1]$, $\gamma\in(0,1)$, and $k\in\N$, if algorithm $\mA$ is $(\eps,\del)$-DP and $(\gamma, n)$-accurate for $\margs_k$, we have the following lower bounds.
\begin{enumerate}
    \item If $\del = 0$, then $n = \Omega\left( \dfrac{k}{\gamma \eps} \right)$ \cite{HT10}.
    \item If $\del > 0$ and $\del = o(1/n)$, then $n = \Omega\left(\dfrac{\sqrt{k}}{\gamma \eps \log k}\right)$ \cite{BUV18}.
\end{enumerate}
\end{lemma}

\subsection{Reduction from one-way marginals}
\label{sec:reduc from owms}

We now present our reduction from one-ways marginals in \cref{alg:reduction from owms}. The accuracy and privacy bounds of the reduction are provided in \cref{thrm:reduction priv acc}; the proof of \cref{thrm:reduction priv acc} is provided at the end of \cref{sec:reduc from owms}.

\begin{algorithm}[ht!]
    \caption{Algorithm $\mA_G$ for constructing the graph stream}
    \label{alg:make graph}
\connor{I'll make this nice later. For now, I only have the algorithm typed out in the ``enumerate'' environment. I plan to type it as an ``algorithmic'' once it's finalized.}

\begin{itemize}
    \item Input: a dataset $x = (x_1, \ldots, x_n)\in \mX^n$, where $\mX = \zo^k$.
    \item Output: a graph stream $\gs \in \mGS^{2k}$%
\end{itemize}
\begin{enumerate}
    \item %
    \sout{The first element of the graph stream contains no edges, and $n$ nodes $v_1,\ldots, v_n$; each individual $x_i$ in the input dataset corresponds to node $v_i$.}
    \item \sout{Let $\gs$ be a graph stream, represented as a vector with $2k$ entries.}
    \item For all time steps $t\in [k]$:
    \begin{enumerate}
        \item If $t < k$: \connor{Only need this to keep stream of length $2k$ instead of $2k+1$}
        \begin{enumerate}
            \item Add $D$ copies of $s^{(D,2)}$ to $\gs_{2t+1}$, as defined in \cref{defn:stream elts reduction}.
        \end{enumerate}
        \item For all  \sout{$i\in [n]$:}
        \begin{enumerate}
            \item If $x_i[t] = 1$, add $D$ copies of $s^{(D,1,1)}_i$ to $\gs_{2t}$, as defined in \cref{defn:stream elts reduction}.%
            \item Otherwise, add $D$ copies of $s^{(D,1,0)}_i$ to $\gs_{2t}$, as defined in \cref{defn:stream elts reduction}.
        \end{enumerate}
    \end{enumerate}
    \item Return $\gs$.
\end{enumerate}
\end{algorithm}

\begin{algorithm}[ht!]
    \caption{Algorithm $\mA$ for one-way marginals}
    \label{alg:reduction from owms}
    \connor{I'll make this nice later. For now, I only have the algorithm typed out in the ``enumerate'' environment. I plan to type it as an ``algorithmic'' once it's finalized.}
    
    \begin{itemize}
    \item Input: a dataset $x = (x_1, \ldots, x_n)\in \mX^n$, where $\mX = \zo^k$, and black-box access to mechanism $\mM$.
    \item Output: $a = (a_1,\ldots, a_k)\in [0,1]^k$.

\end{itemize}
\begin{enumerate}
    \item Set $\gs = \mA_G(x)$ (where $\mA_G$ is defined in \cref{alg:make graph}).
    \item Compute $a' = \mM(\gs)$. 
    \item For all $i\in[k]$, set $a_i = 1 - \dfrac{a'_{2i+1} - a'_{2i}}{Dn}$.
    \item Return $a$.
\end{enumerate}
\end{algorithm}

\begin{theorem}
\label{thrm:reduction priv acc}
    Let $\mA$ be \cref{alg:reduction from owms}, let $\mS$ be the set of all insertion-only graph streams with no solo edges (\cref{defn:graph stream no solo}). 
    For $\eps>0$, $\del\in[0,1]$, $\alpha\in \R^+$, and $D,k,n,T\in \N$ where $T \geq 2k$, 
     if the continual release mechanism $\mM$ is $(\eps,\del)$-DP and $(\alpha, T)$-accurate for $\maxflowedges_D$ (or $\subgraphedges_D$) with respect to $\mS$, then batch algorithm $\mA$ is $(\eps,\del)$-DP and $\left(\dfrac{\alpha}{Dn},n\right)$-accurate for $\margs_k$.
\end{theorem}

\connorq{\cref{thrm:reduction priv acc} used to appear at the end of \cref{sec:reduc from owms}, but the discussion on 5/31/2023 indicated that it should be moved to the start of \cref{sec:reduc from owms}. Is this is a good location, or should it be presented somewhere else?}

For this reduction, we describe a way to encode an input dataset as an insertion-only graph stream with no solo edges (\cref{defn:graph stream no solo}), in a manner such that neighboring datasets map to node-neighboring graph streams. We will see in the reduction that releasing the vector which approximates the edge count at every time step is tantamount to releasing an approximation to the one-way marginals of the input dataset.

Before proceeding with the formal reduction, we describe the reduction at a high level.

\begin{intuition}[Intuition for the reduction from one-way marginals]
\connor{Skip this if you're reading. It will change as the proofs change, and the helpfulness of this note can be improved.}
    Recall that each row of the input dataset corresponds to one individual. In the reduction, the graph will begin with $n$ nodes $v_1,\ldots,v_n$, where $v_i$ corresponds to row $x_i$; and the graph will begin with no edges. The algorithm will release outputs at $2k$ time steps. For all $t\in[k]$, $D$ elements will arrive for each node $v_i$ in time step $2t$. The elements which arrive in time $2t$ will encode whether $x_i[t] = 1$ or $x_i[t] = 0$: namely, each of the $D$ elements associated with $v_i$ which arrives in time $2t$ will have an edge to $v_i$ if and only if $x_i[t] = 1$.

    The rough motivation for this encoding is that the number of edges will increase by $D\cdot \phi_t(x)$ in time step $2t$. However, then any node $v_i$ for which $x_i[t] = 1$ will have $D$ edges, which would seem to indicate that $v_i$ cannot admit additional edges in subsequent time steps. This is rectified by the $D$ stream elements per node $v_i$ which arrive at time step $2t+1$, each of which is associated with, and has an edge to, a (distinct) stream element which arrived in time step $2t$. The arrival of these elements in time $2t+1$ can be thought of as enabling a max flow where none of the flow is going to or coming from $v_i$ (or enabling an edge-maximal, $D$-bounded subgraph where none of the edges from $v_i$ are included in the subgraph). In time step $2t+2$, then, the arrival of nodes with edges to $v_i$ will again increase the edge count, because $v_i$ will effectively have degree 0 and can therefore admit new edges.

    \connor{There are a few things to fix here, as discussed on 5/31/2023. I think Adam indicated that he may have more suggestions on how to present this intuition after he goes through the actual theorems and proofs, so for now I'm holding off on making any changes for now. Basically, I should pretend that I'm a teacher explaining the idea for the proof, which makes sense. Two major changes: (1) use the term ``gadgets'' instead of ``elements'' (and maybe define them prior to the note on intuition); and (2) mention early on that time steps are paired up, and that the gadgets which arrive in the even time steps result in edge counts that give information about one of the marginals, and that the gadgets which arrive in odd time steps result in edge counts that are the same regardless of the value of the one-way marginals.}
    
\end{intuition}

\begin{remark}
We remark that we do not need to bring in $D^2$ nodes for each $v_i$ in every time step but, from the perspective of exposition and analysis, this technique is simpler.

\connor{I'm pretty sure we could kind of share nodes in a clever way, which would give the same lower bound but would require fewer nodes. I think it's (1) harder to explain and (2) I'm not 100\% sure whether the max flow stuff works out (I think it will?), though I think the edge-maximal, degree-$D$ subgraph will work. Is there a benefit to doing this analysis too, or should I just go with what I have?}
\end{remark}

We formally define the two types of stream elements that we use to encode a dataset as a graph stream. The reduction, which uses these stream elements, is provided in \cref{alg:reduction from owms}.

we should probably put the following definition in the relevant algorithm unless we use it in other places as well.

\begin{definition}[Graph stream elements for reduction]
\label{defn:stream elts reduction}
    We use two types of graph stream elements in our reduction. Let $s^{(D,1)}$ be a $(D-1)$-star, and let $s^{(D,2)}$ be a node with a single edge to the center node of $s^{(D,1)}$.

    There are two variants of $s^{(D,1)}$. We use $s^{(D,1,1)}_i$ to denote the $(D-1)$-star which also has an edge to its associated node $v_i$ (so $s^{(D,1,1)}_i$ is actually a $D$-star). We use $s^{(D,1,0)}_i$ to denote the $(D-1)$-star which does not have an edge to its associated node $v_i$ (so $s^{(D,1,0)}_i$ is in fact a $(D-1)$-star).
    
    \connor{An explanation of the notation choice: $s$ because they're stream elements, $D$ is a parameter for the degree bound, and $1,2$ to distinguish stream element 1 from stream element 2. Please let me know if you think of a better notation.}
\end{definition}

To prove \cref{thrm:reduction priv acc}, which describes the privacy and accuracy properties of this reduction, we first need the following lemma about the reduction.

\begin{lemma}
\label{lemma:graph behavior}
    In \cref{alg:make graph}, the graph stream $\gs$ returned by $\mA_G$ has the property that for $j\in[k]$ and $D\in\N$, if $f\in \{\hf^e_D, f^e_D \}$, then
    \[
    f(\gs_{[2j+1]}) - f(\gs_{[2j]})
    \]
    is equal to the number of occurrences of $s^{(D,1,0)}$ in time step $2j$.
\end{lemma}

\begin{proof}[Proof of \cref{lemma:graph behavior}]

\question{This proof feels verbose, but I don't have any concrete ideas for where it should be tightened. Are there any ``obvious'' adjustments in style that I should make to shorten things up? (I recognize that this question is fairly open-ended, but I don't know how to make it more specific since I'm unsure how to shorten the proof --- sorry.)}

Note that $f(\gs_{[2j+1]}) - f(\gs_{[2j]})$ is the number of additional edges counted for $\gs_{[2j+1]}$ (the graph at time step $2j+1$) as compared to $\gs_{[2j]}$ (the graph at time step $2j$). By construction, each node $v_i$ receives $D$ new edges in time step $2j+1$. However, not all of these edges will necessarily be counted by $f$.

We first consider the case where we have $f = \hf^e_D$. We claim that, prior to the arrival of $s^{(D,2)}$, the flow into and out of the center node of an element $s^{(D,1,1)}$ is already $D$. We now show why the center node of $s^{(D,1,1)}$ has an in- and out-flow of $D$. We can see that the max flow gives the center node of $s^{(D,1,1)}$ an in- and out-flow of at least $D-1$, since, aside from $v_i$, the nodes which are adjacent to the center node of $s^{(D,1,1)}$ are not adjacent to any other nodes, and the center node of $s^{(D,1,1)}$ has degree $\leq D$, so edges to each of these nodes will contribute a total of $D-1$ units of flow. We note that there is also one unit of flow between $v_i$ and the center of $s^{(D,1,1)}$ in each direction. This is because each copy of $s^{(D,1)}$ which arrived in a previous time step can have in- and out-flow between itself and (1) the $D-1$ other nodes in the star and (2) the node in $s^{(D,2)}$, which frees up $v_i$ to send and receive a unit of flow between itself and the center of $s^{(D,1,1)}$. The flow into and out of the center node of $s^{(D,1,1)}$ is therefore $D$ before the arrival of $s^{(D,2)}$.

When $s^{(D,2)}$ is added, only one node and one edge are added. For the flow to increase, there must be flow on this edge, and on all of the edges that already had flow. However, the node in $s^{(D,2)}$ is only adjacent to the center node of $s^{(D,1,1)}$, and this node already has an in- and out-flow of $D$. This means that any additional unit of in- or out-flow between the node in $s^{(D,2)}$ and the center node of $s^{(D,1,1)}$ must come at the cost of losing a unit of in- or out-flow between some other node and the center node of $s^{(D,1,1)}$, so the overall flow (and therefore the overall edge count returned by $f = \hf^e_D$) will remain the same.

By contrast, for the nodes $v_i$ where $x_i[j] = 0$, the nodes at the centers of the $D$ copies of $s^{(D,1,0)}$ only have degree $D-1$ prior to the addition of $s^{(D,2)}$, so the max flow-based edge count increases by 1 for each of the $D$ copies of $s^{(D,2)}$ since there will now be one unit of in- and out-flow between the center node of $s^{(D,1,0)}$ and the node in $s^{(D,2)}$. 

We see, then, that the change in the output of $f = \hf^e_D$ between time step $2j$ and $2j+1$ will be equal to the number of occurrences of $s^{(D,1,0)}$ in time step $2j$, which is what we wanted to show for $f = \hf^e_D$.

Now consider the case where we have $f = f^e_D$. We claim that when $s^{(D,1,1)}$ arrives, the center node of $s^{(D,1,1)}$ will have degree $D$ in the edge-maximal, degree-$D$ subgraph. We now prove this claim. Each copy of $s^{(D,1)}$ which arrived in a previous time step can have an edge between itself and (1) the $D-1$ other nodes in its star and (2) the node in $s^{(D,2)}$, so each of these centers of $s^{(D,1)}$ elements has $D$ edges, which is edge-count maximizing. None of these edges are to $v_i$, so there can be an edge between $v_i$ and the center node for each of the $D$ new copies of $s^{(D,1,1)}$. Therefore, because there are $D-1$ edges to the center node of $s^{(D,1,1)}$ from the other nodes in the star, and there is 1 edge from $v_i$, a total of $D$ edges to the center node of $s^{(D,1,1)}$ are being counted before the arrival of $s^{(D,2)}$.

We now consider the change in edge count when $s^{(D,2)}$ arrives. The node in $s^{(D,2)}$ only has an edge to the center of $s^{(D,1,1)}$, so the edge count of the edge-maximal, degree-$D$ subgraph can only increase by having an additional edge to the center of $s^{(D,1,1)}$. However, $s^{(D,1,1)}$ already has $D$ edges, so it cannot admit a new edge, and the edge count returned by $f = f^e_D$ will not change.

On the other hand, for the nodes $v_i$ where $x_i[j] = 0$, the nodes at the center of the $D$ copies of $s^{(D,1,0)}$ only have degree $D-1$. Therefore, the edge count returned by $f = f^e_D$ increases by 1 for each of the $D$ copies of $s^{(D,2)}$ because the edge between the center node of $s^{(D,1,0)}$ and the node in $s^{(D,2)}$ will be included in the degree-$D$, edge-maximal subgraph.

We see, then, that the change in the output of $f^e_D$ between time step $2j$ and $2j+1$ will be equal to the number of occurrences of $s^{(D,1,0)}$ in time step $2j$, which is what we wanted to show for $f = f^e_D$.

\end{proof}

\begin{proof}[Proof of \cref{thrm:reduction priv acc}]

We first evaluate the privacy of $\mA$. Consider any two neighboring datasets $x\simeq x'\in\mX^n$, where $\mX = \zo^k$, which are provided as input to $\mA$. By definition, there exists at most one index $i\in[n]$ such that $x_i \neq x'_i$. Let $\gs$ and $\gs'$ be the graph streams produced by $\mA_G$ on inputs $x$ and $x'$ respectively. We see, by construction, that for all $t\in [2k]$, $\gs_{[t]}$ and $\gs'_{[t]}$ are node neighbors. This is because $\gs_{[t]}$ and $\gs'_{[t]}$ only differ in the edges to nodes $v_i$ and $v'_i$; with the exception of these edges, the stream elements are all identical at all time steps.

After applying $\mM$ to the graph stream output by $\mA_G$, the algorithm $\mA$ only post-processes the output stream of $\mM$. Since $\mM$ is $(\eps,\del)$-DP and $\mA$ only post-processes the output stream of $\mM$, by \cref{lemma:post proc} we see that $\mA$ is also $(\eps, \del)$-DP, which completes the proof of item (1).

We next prove item (2) by evaluating the accuracy of $\mA$. We do this by showing a relationship between $F$ and $\margs_k$. Recall from \cref{eq:jth marginal} that, for all $j\in[k]$, the $\ord{j}$ marginal is defined as
\[
\phi_j(x) = \frac{1}{n}\cdot \sum_{i\in[n]} x_i[j].
\]
Additionally, \cref{lemma:graph behavior} tells us that $f(\gs_{[2j+1]}) - f(\gs_{[2j]})$ is equal to the number of occurrences of $s^{(D,1,0)}$ in time step $2j$. By the construction of the graph stream returned by \cref{alg:make graph}, the number of copies of $s^{(D,1,0)}$ in time step $2j$ is equal to $D\cdot \sum_{i\in[n]}\left(1 - x_i[j] \right)$, so 
\[
f(\gs_{[2j+1]}) - f(\gs_{[2j]}) = D\cdot \sum_{i\in[n]}\left(1 - x_i[j] \right).
\]

Therefore, for all $j\in[k]$ we have 
\begin{equation}
\label{eq:margs and edges}
    \phi_j(x) = \frac{1}{n}\cdot \sum_{i\in[n]} x_i[j] = 1 - \dfrac{D\cdot \sum_{i\in[n]}\left(1 - x_i[j] \right)}{D}\cdot \frac{1}{n} = 1 - \frac{f(\gs_{[2j+1]}) - f(\gs_{[2j]})}{D\cdot n}.
\end{equation}
The relation shown in \cref{eq:margs and edges} is precisely the transformation that \cref{alg:reduction from owms} does on the output stream from $\mM$ to obtain its estimates of the one-way marginals. The mechanism $\mM$ returns a randomized output that is $(\alpha,T)$-accurate for $F$, so the output stream from $\mM$ is not necessarily equal to $F(\gs)$, which means we need to consider the accuracy of $\mM$ to evaluate the accuracy of $\mA$. The post-processing which the algorithm $\mA$ does on the output of $\mM$ is deterministic, so the coins used by $\mA$ are the same as the coins used by $\mM$. As defined in \cref{alg:reduction from owms}, let $\gs = \mA_g(x)$, $a' = \mM(\gs)$, and $a = \mA(x)$. By \cref{eq:margs and edges} and the transformation in \cref{alg:reduction from owms} to compute estimates for one-way marginals from the output of $\mM$, we have
\begin{align*}
    \MoveEqLeft \Pr_{\text{coins of $\mA$}}\left[ \err_{\margs_k}\left( x, \mA(x) \right)  \leq \frac{\alpha}{Dn} \right] \\
    &= \Pr_{\text{coins of $\mA$}}\left[ \max_{j\in[k]} \left|\phi_j(x) - a_j \right| \leq \frac{\alpha}{Dn} \right] \\
    &= \Pr_{\text{coins of $\mM$}}\left[ \max_{t\in[k]} \left| \left(f(\gs_{[2t+1]}) - f(\gs_{[2t]})\right) - (a'_{2t+1} - a'_{2t}) \right| \leq 2\alpha \right] \\
    &\geq \Pr_{\text{coins of $\mM$}}\left[ \max_{t\in[T/2]} \left| \left(f(\gs_{[2t+1]}) - f(\gs_{[2t]})\right) - (a'_{2t+1} - a'_{2t}) \right| \leq 2\alpha \right] \tag{note that we now take $\max_{t\in[T/2]}$} \\
    &\geq \Pr_{\text{coins of $\mM$}}\left[ \max_{t\in[T]} \err_F\left( \gs_{[t]}, a'_t \right) \leq \alpha \right] \tag{triangle inequality} \\
    &\geq \frac{2}{3},
\end{align*}
with the final inequality following from the fact that $\mM$ is $(\alpha, T)$-accurate for $F$. Therefore, by \cref{defn:accuracy batch} we see that $\mA$ is $\left(\dfrac{\alpha}{Dn},n\right)$-accurate for $\margs_k$, which is what we wanted to show.
\end{proof}

\subsection{Proof of \texorpdfstring{\cref{thrm:lower bound edge count}}{lower bounds for edge counts}}
\label{sec:pf of lower bounds}

We now use \cref{thrm:reduction priv acc} to prove the lower bounds in \cref{thrm:lower bound edge count}.

\begin{proof}[Proof of \cref{thrm:lower bound edge count}]

By \cref{thrm:reduction priv acc}, we see that if $\mM$ is $(\eps,\del)$-DP and $(\alpha, T)$-accurate for $F$, then $\mA$ as defined in \cref{alg:reduction from owms} is $(\eps,\del)$-DP and $\left(\dfrac{\alpha}{Dn},n\right)$-accurate for $\margs_k$, for all $k\leq T/2$. 

We begin by proving item (1). By item (1) of \cref{lemma:owm lower bounds}, if $\del = 0$, then (where we set $\gamma = \frac{\alpha}{Dn}$) we must have
\[
n = \Omega\left( \dfrac{k}{\frac{\alpha}{Dn} \eps} \right),
\]
so equivalently we have $\alpha = \Omega\left( kD/\eps \right)$.
\cref{thrm:reduction priv acc} holds for all $k\leq T/2$, so by maximizing $k$, we see that we must have $\alpha = \Omega\left(DT/\eps \right)$, which is what we wanted to show. Note that, although \cref{lemma:owm lower bounds} only applies for $\frac{\alpha}{Dn} = \gamma \in (0,1)$, the value of $n$ in our reduction is completely independent of all other parameters, so we can set $n$ as large as necessary to satisfy the requirement in \cref{lemma:owm lower bounds} that $\frac{\alpha}{Dn}\in (0,1)$, \connorq{Is it fine to ``set $n$ as large as necessary to satisfy the requirement in \cref{lemma:owm lower bounds} that $\frac{\alpha}{Dn}\in (0,1)$''? I think yes, but I feel uncertain and I'd like a second opinion.} so the lower bound of $\alpha = \Omega\left(DT/\eps \right)$ holds for all settings of parameters permitted by \cref{thrm:lower bound edge count}.

We now move to a proof of item (2). We first prove a lower bound on $\alpha$, and then we show an upper bound on the values of $\del$ for which this bound on $\alpha$ holds. By item (2) of \cref{lemma:owm lower bounds}, if $\del > 0$ and $\del = o(1/n)$, then (where we set $\gamma = \frac{\alpha}{Dn}$) we must have
\begin{equation}
\label{eq:bound on n for delta}
    n = \Omega\left(\dfrac{\sqrt{k}}{\frac{\alpha}{Dn} \eps \log k}\right),
\end{equation}
so equivalently we have
\[
\alpha = \Omega\left( \dfrac{D\sqrt{k}}{\eps\log k} \right).
\]
Note that \cref{thrm:reduction priv acc} holds for all $k\leq T/2$, so by maximizing $k$, we see that we must have
\[
\alpha = \Omega\left( \dfrac{D\sqrt{T}}{\eps\log T} \right),
\]
which is what we wanted to show. As in the proof of item (1), note that, although \cref{lemma:owm lower bounds} only applies for $\frac{\alpha}{Dn} = \gamma \in (0,1)$, the value of $n$ in our reduction is completely independent of all other parameters, though we need to be more careful about selecting $n$ here than we were in proving item (1). Because $\delta$ is related inversely to $n$ and we want to show that the theorem statement holds for $\del$ as large as possible, for fixed $D$ and $\alpha$, we minimize $n$ such that we still have $\frac{\alpha}{Dn} \in (0,1)$. \connorq{This is related to the question above, so the answer there may affect the answer here. Are we allowed to ``we minimize $n$ such that we still have $\frac{\alpha}{Dn} \in (0,1)$''?} 

We now use our lower bound on $n$ from \cref{eq:bound on n for delta} to write a more usable bound on $\delta$. Substituting our bound from \cref{eq:bound on n for delta} into the requirement that $\del = o(1/n)$, we have
\[
\del = o\left( \dfrac{\alpha \eps \log k}{Dn\sqrt{k}} \right).
\]
Since we used the minimum $n$ such that $\frac{\alpha}{Dn} = \gamma \in (0,1)$, we can ignore the term $\frac{\alpha}{Dn}$ since $\frac{\alpha}{Dn}\approx 1$. \connorq{Can we in fact ``ignore the term $\frac{\alpha}{Dn}$ since $\frac{\alpha}{Dn}\approx 1$''? I think yes, but I'd like a second opinion.} This gives
\[
\del = o\left( \dfrac{\eps \log k}{\sqrt{k}} \right).
\]
Recall that, for our analysis of $\alpha$ above, we maximized $k\in\Z$ s.t. $k\leq T/2$ (so $k\approx T/2$), which gives us 
\[
\del = o\left( \dfrac{\eps \log T}{\sqrt{T}} \right).
\]
This is the bound on $\alpha$ and $\del$ that we wanted to show, which completes the proof of item (2).
    
\end{proof}

Note that, in the proof of item (2), we could potentially tune parameters more carefully and get a statement for even larger $\del$, but the value of $\del$ above is sufficiently large (i.e., shrinks with roughly $\sqrt{T}$) and the lower bound on $\alpha$ is also sufficiently large as to offer a compelling lower bound on accuracy.

\fi

\section{Stable and Time-Aware Projections}
\label{sec:projection}
\label{sec:tap alg}

\connornote{For a future version: improve the presentation of proofs in this section.}
In this section we present two \emph{time-aware projection} algorithms $\projo$ and $\projp$, and prove \cref{thrm:combined-stab} that presents their robust stability guarantees when run on $(D,\ell)$-bounded graph streams.
The two projection algorithms follow very similar strategies at a high level and are presented together in \cref{alg:time aware projection}. 
Both 
take as input a graph stream $\gs$ of length $T\in \N$ and some user-specified value $D\in\N$ and 
return a projected graph stream with maximum degree at most $D$. If the graph stream $\gs$ is already $D$-bounded, then both algorithms output it unchanged. At each time step, both algorithms 
greedily choose and output 
some subset of the arriving edges to include in the projection.

\begin{algorithm}[ht!]
    \caption{$\Pi_D$ for time-aware graph projection by edge addition.}
    \label{alg:time aware projection}
    
    \begin{algorithmic}[1]
        \Statex \textbf{Input:} Graph stream $(\gs_1, \ldots, \gs_T) = \gs \in \mGS^T$, time horizon $T\in\N$, degree bound $D\in \N$, and inclusion criterion $\style \in \{ \original, \projection\}$.
        \yoakland{\Statex \Comment{\commentstyle{$\style =\original$ yields BBDS-based projection  $\projo$}}
        \Statex \Comment{\commentstyle{$\style =\projection$ yields DLL-based projection  $\projp$}}}
        \noakland{\Statex \Comment{\commentstyle{$\style =\original$ yields BBDS-based projection  $\projo$; $\style =\projection$ yields DLL-based projection  $\projp$}}}

        \Statex \textbf{Output:} Graph stream $(\gs^*_1, \ldots, \gs^*_T) = \gs^* \in \mGS^T$

        \For{$t = 1$ to $T$}
            \State Parse $\gs_t$ as $(\nodeel_t,\edgeel_t)$ 
            \For{$v$ in $\nodeel_t$} $d(v) = 0$ \label{line:init counters} \EndFor
            \State $\edgeel^\mathit{proj}_t = \emptyset$
            \For{$e = \{u,v\}$ in $\edgeel_t$, in 
            consistent
            order,}
            \label{line:for loop proj}
                \State $\addedge \gets (d(u) < D) \land (d(v) < D)$\label{line:add edge condition}
                \If{$\addedge$}
                    \State set $\edgeel^\mathit{proj}_t = \edgeel^\mathit{proj}_t \cup \{e\}$ \label{line:check edge}
                    \yoakland{\State} \Comment{\commentstyle{add edge $e = \{u,v\}$ to the projection}}\label{line:add edge to proj}
                \Else{\ ignore $e=\{u,v\}$}
                \EndIf
                \If{$\style = \original$ or $\addedge$}
                    \State $d(u) \mathrel{+}= 1$, $d(v) \mathrel{+}= 1$ \label{line:increment counter} 
                    \yoakland{\State} \Comment{\commentstyle{ increment  degree counters for $u,v$}}
                \EndIf
            \EndFor
            \State Output $S^*_t = (\nodeel_t, \edgeel^\mathit{proj}_t)$ 
        \EndFor
    \end{algorithmic}
\end{algorithm}

\cref{alg:time aware projection} takes parameter $\style$, called the \emph{\inccrit{}}, that determines which of the two projections it executes. 
Let $\projo$ denote \cref{alg:time aware projection} with \inccrit{} $\style = \original$ and let $\projp$ denote the version with $\style = \projection$.
(We use the author initials of \cite{BlockiBDS13,DayLL16} to denote the algorithms inspired by their respective projections.)

The two algorithms differ from each other in terms of how they decide whether an edge should be added to the projection so far. The first algorithm $\projo$
adds edge $e = \{u,v\}$ if the degree of both end points $u$ and $v$ is less than $D$ in the \emph{original} graph stream so far (i.e., $\gs$ restricted to all edges considered before $e$). The second algorithm $\projp$
adds edge $e = \{u,v\}$ if nodes $u$ and $v$ both have degree less than $D$ in the \emph{projection} so far (i.e. $\Pi_D(\gs)$ restricted to all edges considered before $e$).

\myparagraph{Consistent Ordering.}
When multiple edges arrive in a time step, the projections must decide on the order in which to consider these edges. While the exact ordering does not matter, we assume a
\emph{consistent ordering} of the edges in the input graph stream.
Consistency means that any pair of edges in neighboring graph streams should be considered for addition to the projection in the same relative order. 
A similar ordering assumption is made by \cite{BlockiBDS13,DayLL16}.

A simple implementation of such an ordering assumes that each node $u$ has a unique string identifier $\id_u$---a user name, for example---and orders edges  according to their endpoints (so $(u,v)$ gets mapped to $(\id_u,\id_v)$, where $\id_u<\id_v$ and pairs are ordered lexicographically). 

Since both projections process edges at the time they arrive, they end up considering edges according to a \emph{time-aware version} of the ordering: edges end up being considered in the lexicographic order given by the triples $(t,\id_u,\id_v)$, where $t$ is the edge's arrival time.

Ordering the edges uniformly randomly within each time step would also suffice since one can couple the random orderings on two neighboring streams so they are consistent with each other. We omit a proof of this, and assume lexicographic ordering in the rest of this manuscript.

\begin{remark}[Running \cref{alg:time aware projection} on static graphs]
\label{rmk:tap-graph-input}
    \cref{alg:time aware projection} can also take a (static, not streamed) graph as input by interpreting the graph as a length-$1$ graph stream, where the first element of the graph stream is equal to the graph itself.
\end{remark}

\subsection{Stability of the Time-Aware Projection Algorithms}
\label{sec:stability-thm-overview}

Our analysis of the projection algorithms
differs significantly from the batch-model analyses of \cite{BlockiBDS13,DayLL16}.
First, we consider the stability of the entire projected sequence, and not a single graph. Second,
Blocki et al. \cite{BlockiBDS13} consider only the edge-to-edge stability of their projection algorithm, 
while Day et al. \cite{DayLL16} only analyze the stability with respect to a particular function of the projected graph (namely, its degree distribution).
We analyze several stronger notions of stability for the entire sequence produced by our projections. All but one of these stability guarantees hold for streams that are $(D,\ell)$-bounded.

\oldtext{--- but are especially useful in the continual release model since they allow us to leverage known restricted edge-DP and restricted node-DP algorithms such as those of \cite{FHO21}. (An alternative explanation is that, when looking at functions of the projected graph 
stream, edge stability translates to the type of stability that is sufficient to apply the tree mechanism of \cite{DNPR10,CSS11}, a workhorse of accurate continual-release algorithms.) While \cref{alg:time aware projection} with $\style = \projection$ requires a private test for edge-to-edge stability, the edge-to-edge stability property of \cref{alg:time aware projection} with $\style = \original$ admits edge-private algorithms, and it does not require any condition on the input graph streams, so this property can be leveraged to make pure-DP edge-private algorithms.

As described above, some of the stability properties rely on the input stream satisfying a condition that the input is \emph{$(D,\ell)$-bounded}.}

\begin{definition}[$(D,\ell)$-bounded]
\label{defn:d ell bdd}
    We say that a graph $G$ is \emph{$(D,\ell)$-bounded} if it has at most $\ell$ nodes of degree greater than $D$. 
    Similarly, a graph stream $\gs$ of length $T$ is \emph{$(D,\ell)$-bounded through time $t\in[T]$} if the flattened graph $\collapse{\gs_{[t]}}$ is $(D,\ell)$-bounded (i.e., has at most $\ell$ nodes of degree greater than $D$).
\end{definition}

We now present our theorem on the stability of \cref{alg:time aware projection}. These stabilities are summarized in \cref{tab:projection-stabilities}.

\begin{theorem}[Stability of projections]
\label{thrm:combined-stab} 
    Let $T\in\N$, $D\in \N$, $\ell\in \N\cup\{0\}$, and let $\projo,\projp$ be \cref{alg:time aware projection} with \inccrit{} $\style = \original$ and $\style=\projection$, respectively.
    \begin{enumerate}[leftmargin=*]
        \item \textbf{(Edge-to-edge stability.)} If $\gs\nedge \gs'$ are edge-neighboring graph streams of length $T$, then for all time steps $t\in[T]$, the edge distances between the projections through time $t$ satisfy the following:
        \begin{enumerate}
            \item
            \( 
                \dedge\Big(\projo(\gs)_{[t]}\ ,\ 
                \projo(\gs')_{[t]} \Big)
                \leq 3.
            \)
        
            \item If $\gs,\gs'$ are $(D,\ell)$-bounded through time $t$, then 
            \(
                \dedge\Big(\projp(\gs)_{[t]}\ ,\ 
                \projp(\gs')_{[t]} \Big) \leq 2 \ell + 1.
            \)
        \end{enumerate}
        \item \textbf{(Node-to-edge stability.)} If $\gs\nnode \gs'$ are node-neighboring graph streams of length $T$, then for all time steps $t\in[T]$, the edge distances between the projections through time $t$ satisfy the following:
        \begin{enumerate}
            \item If $\gs,\gs'$ are $(D,\ell)$-bounded through time $t$, then
            \( 
                \dedge\Big(\projo(\gs)_{[t]}\ ,\ 
                \projo(\gs')_{[t]} \Big)
                \leq
                D + \ell.
            \)
                
            \item If $\gs,\gs'$ are $(D,\ell)$-bounded through time $t$, then
            \ifnum\oakland=1
            \end{enumerate}
            \vspace{-6pt}
            \begin{eqnarray*}
                \dedge\Big(\projp(\gs)_{[t]},
                \projp(\gs')_{[t]} \Big)
                \leq
                D + 2\ell\sqrt{\min\{D,\ell\}}.
            \end{eqnarray*}
            \fi
            \ifnum\oakland=0
            \(
                \dedge\Big(\projp(\gs)_{[t]}\ ,\ 
                \projp(\gs')_{[t]} \Big)
                \leq 
                \begin{cases}
                    D + 2\ell^{3/2} & \text{if } D\geq \ell,\ \text{and}\\
                    D + 2\ell \sqrt{D} & \text{if } D< \ell.\\
                \end{cases}
            \)
            \end{enumerate}
            \fi

        \item \textbf{(Node-to-node stability.)} If $\gs\nnode \gs'$ are node-neighboring graph streams of length $T$, then for all time steps $t\in[T]$, the node distances between the projections through time $t$ satisfy the following:
        \begin{enumerate}
            \item If $\gs,\gs'$ are $(D,\ell)$-bounded through time $t$, then
            \( 
                \dnode\Big(\projo(\gs)_{[t]}\ ,\ 
                \projo(\gs')_{[t]} \Big)
                \leq
                2\ell + 1.
            \)
            
            \item If $\gs,\gs'$ are $(D,\ell)$-bounded through time $t$, then
            \(
                \dnode\Big(\projp(\gs)_{[t]}\ ,\ 
                \projp(\gs')_{[t]} \Big)
                \leq
                2\ell + 1.
            \)
        \end{enumerate}
    \end{enumerate} 
    Furthermore, the bounds above are all tight in the worst case, either exactly (bound 1(a)), up to an additive constant of 2 (bounds 2(a) and 3(a)), or up to multiplicative constants.
\end{theorem}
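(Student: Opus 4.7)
The plan is to reduce all six stability bounds to structural statements about a single \emph{difference graph} $\gdiff_t$, consisting of the edges present in the projection of one stream through time $t$ but not the other (together with any vertices incident only to such edges). The enabling observation is that both projections are \emph{shortsighted}: the decision of whether to include an arriving edge is final and depends only on counters accumulated from edges that arrived earlier. Consequently, once an edge enters $\gdiff_t$, it remains in $\gdiff_{t'}$ for every $t' \geq t$, so $\gdiff_t$ grows monotonically. This yields a Flattening Lemma in which the edge distance between $\projgen(\gs)_{[t]}$ and $\projgen(\gs')_{[t]}$ equals the edge count of $\gdiff_t$, and the node distance is counted similarly (with the differing node $\vplus$ included in the node-neighboring case). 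The upshot is that I may replace the sequential analysis by a static analysis of the final difference graph while applying the greedy rule in the time-aware lexicographic order.

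Building on this, I would exploit a second feature, \emph{opportunism}: if both endpoints of an arriving edge have degree at most $D$ in the flattened original stream, then both projections include the edge. In the node-neighboring case this forces every edge of $\gdiff_t$ to be incident either to $\vplus$ or to a vertex of original-stream degree exceeding $D$. Combined with $(D,\ell)$-boundedness, this gives $\gdiff_t$ a vertex cover of size at most $\ell+1$ (the $\ell$ high-degree vertices plus $\vplus$), which immediately yields the $2\ell+1$ node-to-node bounds 3(a) and 3(b), since the non-$\vplus$ changed vertices must be high-degree on one of the two sides. Bound 1(a) follows from a direct case analysis (the BBDS batch argument lifted via the Flattening Lemma): a single edge change shifts the original-degree counter by at most $1$ at two vertices, flipping the inclusion status of at most $3$ edges, with no boundedness hypothesis needed. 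For 1(b) I use that $\projp$ caps each vertex's projected degree at $D$, so each of the $\ell+1$ cover vertices supports at most a bounded number of differing incident edges; careful accounting gives $2\ell+1$. For 2(a), each of the $\leq D$ edges of $\vplus$ surviving in its own projection contributes to the difference, and each remaining edge of $\gdiff_t$ is incident to one of the $\ell$ high-degree cover vertices whose original-stream counter was shifted by at most $1$; hence $D + \ell$.

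The main obstacle is bound 2(b), the node-to-edge stability of $\projp$. Because $\projp$'s inclusion rule consults \emph{projected} rather than original degrees, changes can cascade: flipping one edge at a high-degree vertex $v$ frees capacity at $v$, which admits a later edge at $v$, which in turn shifts its partner's projected degree, and so on. My plan, following the overview in \cref{sec:stability-thm-overview}, is to orient the non-$\vplus$ edges of $\gdiff_t$ according to the relative order in which the two projections spent capacity at their endpoints, producing a DAG whose edges can be covered by at most $\ell$ edge-disjoint paths (one rooted at each high-degree cover vertex). A pigeonhole/exchange argument then shows that each such path has length at most $2\sqrt{\min\{D,\ell\}}$: propagating further along a path would require either exhausting the capacity $D$ at some cover vertex or reusing one of the $\ell$ cover vertices too many times. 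Combining the $\leq D$ edges at $\vplus$ with $\ell$ paths of length $O(\sqrt{\min\{D,\ell\}})$ yields $D + 2\ell\sqrt{\min\{D,\ell\}}$. This orientation-and-DAG argument is the technically delicate step; the rest of the proof is structural bookkeeping driven by the Flattening Lemma and opportunism.

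Finally, tightness will be exhibited by explicit constructions realizing each bound: a triangle-swap example for 1(a); a bipartite gadget with $\ell$ saturated cover vertices for 1(b), 3(a), 3(b); a star at $\vplus$ together with a saturated high-degree set for 2(a); and a layered bipartite construction of depth $\Theta(\sqrt{\min\{D,\ell\}})$ and width $\Theta(\ell)$ to match 2(b) up to constants. The upper bounds and matching constructions together establish the ``furthermore'' tightness claim.
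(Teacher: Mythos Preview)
Your overall plan matches the paper closely: the Flattening Lemma, the opportunism observation (edges between low-degree vertices survive in both projections), the vertex cover of size $\ell+1$ for the difference graph, and the resulting proofs of 1(a), 2(a), 3(a), 3(b) are essentially what the paper does. The sketch for 1(b) is vaguer than the paper's argument (the paper shows the differing edges form two simple \emph{paths} through the high-degree vertices, each of length at most $\ell$, plus $\eplus$), but is in the right spirit.

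The gap is in 2(b). You correctly identify that one should orient the difference graph into a DAG and prune at most $D$ edges, but your extraction of the $2\ell\sqrt{\min\{D,\ell\}}$ bound is off. You propose covering the DAG by ``at most $\ell$ edge-disjoint paths, one rooted at each high-degree cover vertex'' and then bounding \emph{each path's length} by $2\sqrt{\min\{D,\ell\}}$ via pigeonhole/exchange. This decomposition does not hold in general (a single Hamiltonian path on the $\ell+1$ cover vertices already has length $\ell$, not $2\sqrt{\min\{D,\ell\}}$), and the paper does not argue this way. Instead, the paper shows two structural facts about the pruned DAG: (i) it has at most $\ell+1$ vertices, and (ii) every \emph{order-induced cut} (with respect to a topological order) contains at most $k=\min\{D,\ell\}$ edges; the cut bound comes from a monotonicity argument (\cref{claim:out edges at most in edges,claim:cut set cards}) showing that each non-$\vplus$ vertex has no more out-edges in the original difference graph than in-edges in the pruned one. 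The $2\ell\sqrt{k}$ edge bound then follows from a separate counting lemma (\cref{lem:dag ell 32}) that splits edges into ``short'' (hop length $\leq\sqrt{k}$, at most $\ell\sqrt{k}$ of them) and ``long'' (hop length $>\sqrt{k}$, at most $\ell\sqrt{k}$ of them by an averaging argument over cuts). So the $\sqrt{k}$ factor arises from this short/long split, not from any per-path length bound; your proposed mechanism would need to be replaced by the cut-size argument to make the proof go through.
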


\noakland{The proofs that the bounds are tight are collected in \Cref{sec:stab-tight} (see \Cref{thrm:tightness-stab}).} \yoakland{The proofs that the bounds are tight appear in the full version.} Here, we focus on proving the upper bounds. 

\myparagraph{Proof Sketch for Stability of \cref{alg:time aware projection} (\cref{thrm:combined-stab}).}
To analyze stability, we consider a pair of graphs streams $\gs,\gs'$ that are either edge (part (1)) or node neighbors (parts (2) and (3)). Assume without loss of generality that 
$\gs'$ is the larger of the two streams. 
When $\gs'$ is larger by virtue of including an additional node, we use $\vplus$ to denote the additional node in $\gs'$.

Our proofs of stability rely heavily on two observations. First, the greedy nature of \Cref{alg:time aware projection} ensures that once an edge is added to the projection of a graph stream, that edge will not be removed from the projection at any future time step. Moreover, an edge may only be added to the projection at the time it arrives---it will not be added to the projection at any later time. 
This greedy behavior simplifies the analysis dramatically. It allows us to reason only about the distances between the flattened projected graphs at time $t$, rather than about the entire projected sequence: an edge that appears at some time $t'<t$ in one projected sequence but not the other will still differ between the flattened graphs at time $t$. This fact is captured in the ``Flattening Lemma'' (\cref{lem:greedy-flatten}), which we use
throughout the remainder of the argument.
(A different projection algorithm, for example one that recomputes a projection from scratch at each time step, would require us to more explicitly analyze the entire projected sequence.)

The other important observation for our analysis is the following. Consider two neighboring graphs, where one graph contains (at most) one additional node $\vplus$. In the larger graph, if an edge is between two nodes of degree at most $D$ and is not incident to the added node $\vplus$, then it will be in both projections. In other words, only edges that do not satisfy this condition may differ between projections. This idea is captured in \cref{lem:edges dont change}.

\myparagraph{(Stability of \boldmath{$\projo$}.)} To prove edge-to-edge stability, we apply \cref{lem:greedy-flatten} and largely borrow the analysis of \cite[Proof of Claim 13]{BlockiBDS13}.
To prove node-to-node stability, we only need to consider the projections through times $t\in[T]$ for which the graph streams $\gs$ and $\gs'$ are both $(D,\ell)$-bounded. We then use \cref{lem:greedy-flatten}, in addition to the fact that only an edge with an endpoint node of degree greater than $D$ in one of the original graphs may differ between projections of neighboring streams (\cref{lem:edges dont change}) to see that all nodes and edges that differ between graph streams belong to a vertex cover of size at most $\ell + 1$. Therefore, we can obtain $\gs_{[t]}$ from $\gs'_{[t]}$ by removing $\vplus$ and changing the remaining $\ell$ nodes in the vertex cover.

The node-to-edge stability also applies only through times $t\in[T]$ for which the graph streams $\gs$ and $\gs'$ are both $(D,\ell)$-bounded. Its proof requires more careful analysis of exactly how many edges incident to nodes of degree greater than $D$ may change, in addition to using \cref{lem:greedy-flatten,lem:edges dont change}. We first show that at most $D$ edges incident to the added node $\vplus$ may appear in the projection of $\gs'$; these edges cannot appear in $\gs$ or its projection.

We next consider whether any of the other edges differ between projections.
First, consider an added edge $\eplus$ incident to $\vplus$ and some ``high-degree'' node $u$ (i.e., with degree greater than $D$ in $\gs'$). The presence of $\eplus$ may mean exactly one edge incident to $u$ that was included in the projection of $\gs$ will now be dropped, since $u$ may already have degree $D$ when that edge is considered for addition. Now, suppose instead that $\eplus$ is incident to $\vplus$ and some ``low-degree'' node (i.e., with degree greater than $D$ in $\gs'$). None of the edges incident to $u$ will be dropped from the projection due to the inclusion of $\eplus$, because although the degree of $u$ is now larger, it is still safely at or below the threshold of $D$. Of the remaining edges, then, only edges incident to high-degree nodes may change.
Since there are at most $\ell$ nodes with degree greater than $D$, there are at most $\ell$ additional edges that differ between projections.
Therefore, by combining this with the above observation that at most $D$ edges incident to $\vplus$ appear in the projection of $\gs'$, we see that at most $D+\ell$ edges differ between projections through time $t$ for node-neighboring graph streams.

\myparagraph{(Stability of \boldmath{$\projp$}.)} The analysis of this projection, especially its node-to-edge stability, is generally more complex. One exception is the proof of node-to-node stability, which follows from the same argument used to prove node-to-node stability for $\projo$. All bounds on the stability of this projection only apply through times $t\in[T]$ for which the graph streams $\gs$ and $\gs'$ are both $(D,\ell)$-bounded (note that the edge-to-edge stability for $\projo$ does not rely on this assumption).

The node-to-edge stability analysis of this algorithm is more involved, though it also relies on \cref{lem:greedy-flatten}.
We consider a pair of arbitrary node neighbors and leverage the greedy nature of our algorithm to iteratively construct a \emph{difference graph} that tracks which edges differ between the projections of each graph. We show that the edge distance between the projections of neighboring graphs is exactly the number of edges in the difference graph\noakland{~(\cref{lem:dg has all})}.
If the difference graph on any two node neighbors were to form a DAG on $\ell+1$ nodes with at most $k$ paths of length at most $\ell$, then we would be able to bound its edge count by $2\ell\sqrt{k}$ and prove, by setting $k = \min\{D,\ell\}$, that the projections are at edge distance at most $2\ell\sqrt{\min\set{D,\ell}}$\noakland{ (\cref{lem:dag ell 32})}\yoakland{ (in the full version, we prove this upper bound on the number of edges in such a DAG)}.

In reality, we show that the difference graph is edge distance at most $D$ from a DAG with this special structure\noakland{ (\cref{lemma:pdg is dag etc})}.
This introduces an additive $D$ term in the edge distance between the projections (which is to be expected since the projections of two node neighbors could differ in $D$ edges that are incident to the differing node).
The ``pruning'' argument, which shows how to remove edges from the difference graph to obtain the special DAG, and the construction of the resulting DAG form the most involved part of our analysis. \noakland{In \cref{sec:node edge projected} we formally define a difference graph and prove \cref{lem:dag ell 32,lem:dg has all,lemma:pdg is dag etc}.}

The edge-to-edge stability of $\projp$ also uses \cref{lem:greedy-flatten}. If both graphs have maximum degree at most $D$, \cref{alg:time aware projection} acts as the identity, and the projections differ in at most one edge. For graphs with at most $\ell$ nodes of degree greater than $D$, we observe that all of the edges in the corresponding difference graph form two paths of length at most $\ell$, where all edges in each path are incident either to nodes with degree greater than $D$ in $\gs'$ or to the node with the added edge. Since there are at most $\ell$ of these high-degree nodes, the path has length at most $2 \ell + 1$.

\myparagraph{Useful Lemmas for Proving Stability.}
As described in the proof sketch, \cref{lem:greedy-flatten,lem:edges dont change} are used for proving many of the stability statements in \cref{thrm:combined-stab}. Before presenting the proofs of stability, we present those lemmas, along with definitions for terms that are used in the lemmas and their proofs.

The first definition is motivated by the observation that, if we run the algorithm on edge- or node-neighboring graph streams, edges are considered for inclusion in the output stream $\gs^*$ in the same relative order for both graph streams (i.e., edge $e_1$ is considered before edge $e_2$ in stream $\gs$ if and only if $e_1$ is considered before $e_2$ in stream $\gs'$).

\begin{definition}[Projection stage of an edge]
\label{defn:arrstep}
Let $T\in\N$ be a time horizon, $D\in \N$ be a degree bound, $\gs$ be a graph stream of length $T$, and let $\Pi_D\in \set{\projo,\projp}$ denote one of the variants of \cref{alg:time aware projection}.
An edge $e$ in $\gs$ \emph{\projstagealttext}{} $i$ of algorithm $\Pi_D(\gs)$ (denoted $\projstage(e)=i$) if $e$ is the $\ord{i}$ edge to be considered by $\Pi_D(\gs)$. %
\end{definition}

In the proofs that follow, we are often interested in the value of a counter $d(\cdot)$ (see Line~\ref{line:init counters} of \cref{alg:time aware projection}) in relation to a projection stage; we say that a counter $d(u)$ has value $j$ at projection stage $i$ if, when the $\ord{j}$ edge is considered for inclusion in the projection, $d(u) = j$ on Line~\ref{line:add edge condition}.

Multiple edges may arrive in a time step, so the \emph{\projstagetext} of an edge is related to but distinct from the \emph{arrival time} of the edge.
\noakland{Knowing the \projstagetext{} of an edge is useful since it captures the order in which \cref{alg:time aware projection} considers edges for inclusion in the output stream, which is a function of both an edge's arrival time and its placement in the consistent ordering on edges.}

The second definition comes from the following observation. Consider a node $u$ and counter $d(u)$ in \cref{alg:time aware projection}. If an edge $e$ incident to $u$ \projstagealttexttwo{} where \cref{alg:time aware projection} has $d(u) \geq D$, then $e$ will not be included in the output stream.
Since no more edges incident to $u$ will be included, node $u$ can be thought of as being \emph{saturated}. \cref{defn:satstep} allows us to talk about the order in which this saturation occurs.

\begin{definition}[Saturation stage of a node]
\label{defn:satstep}
    Let $T\in\N$ be a time horizon, $D\in \N$ be a degree bound, $\gs$ be a graph stream of length $T$, and let $\Pi_D\in \set{\projo,\projp}$ denote one of the variants of \cref{alg:time aware projection}.
    A node $u$ in $\gs$ has \emph{\satsteptext{} $b$} (denoted $\satstep_{\gs}(u)=b$) if $b$ is the first \projstagetext{} such that $d(u) \geq D$ by the end of Line~\ref{line:increment counter} in $\Pi_D(\gs)$. We define $\satstep_{\gs}(u) = \infty$ if there is no $b$ for which the described condition holds.
\end{definition}

\noakland{
\begin{definition}[Shortsighted algorithm]
\label{defn:node preserv greedy}
    An algorithm $\Pi:\mGS^T\to \mGS^T$ on graph streams of length $T$ is a \emph{shortsighted} algorithm if it
    \begin{enumerate}
        \item adds all input nodes to the output stream in the time when they first arrive and
        \item adds some subset of input edges arriving in that time to the output stream (and does not add any other edges).
    \end{enumerate}
\end{definition}

Observe that \cref{alg:time aware projection} is a shortsighted algorithm for \inccrit{} $\style = \original$ and $\style = \projection$.
}
\noakland{%
    \begin{definition}[Differing nodes and edges]
    \label{defn:differ}
        An edge $e$  (respectively, node $v$) \emph{differs} between a pair of graphs $G$ and $G'$ if $e$ (resp., $v$) appears in one graph but not the other.
        \noakland{We similarly say that an edge $e$ (resp., node $v$) differs between a pair of graph streams $\gs$ and $\gs'$ through time $t$ if $e$ (resp., $v$) arrives at time step $i\in[t]$ in one graph stream, but either
        \begin{enumerate}[label=(\alph*)]
            \item fails to appear in the other graph stream through time $t$ or
            \item appears at time step $j\neq i$ (where $j\in[t]$) in the other graph stream.
        \end{enumerate}}
    \end{definition}
}

\begin{lemma}[Flattening Lemma]
\label{lem:greedy-flatten}
    \yoakland{Let $\Pi_D\in \set{\projo,\projp}$ denote one of the variants of \cref{alg:time aware projection}.}
    \noakland{Let $\Pi_D\in \set{\projo,\projp}$ be a shortsighted algorithm.}
    For every edge- or node-neighboring pair of graph streams $\gs$ and $\gs'$ of length $T$, the edge- and node-distance between the \emph{projected streams} through time $t$ is the same as the edge- and node-distance between the \emph{flattened graphs} through time $t\in[T]$:
    \begin{eqnarray}
    \label{eq:edge dist}
        \lefteqn{\dedge\paren{\Pi_D(\gs)_{[t]},\Pi_D(\gs')_{[t]}}} \nonumber &&\\
        && = \dedge\paren{\flatten{\Pi_D(\gs)_{[t]}}, \flatten{\Pi_D(\gs')_{[t]}}}
    \end{eqnarray}
    and 
    \begin{eqnarray}
    \label{eq:node dist}
        \lefteqn{\dnode\paren{\Pi_D(\gs)_{[t]},\Pi_D(\gs')_{[t]}}} \nonumber &&\\
        && = \dnode\paren{\flatten{\Pi_D(\gs)_{[t]}}, \flatten{\Pi_D(\gs')_{[t]}}} .
    \end{eqnarray}
\end{lemma}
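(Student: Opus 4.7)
The plan is to sandwich each distance equality between two inequalities. The contractive direction holds because flattening simply forgets arrival times: given any edge-neighboring chain of streams of length $d$ realizing $\dedge(\Pi_D(\gs)_{[t]}, \Pi_D(\gs')_{[t]})$, applying $\flatten{\cdot}$ entrywise yields a sequence in which each step either performs the same micro-edit on the flattened graph (delete the single edge, delete the isolated node, or delete the degree-$1$ node together with its edge) or leaves it unchanged. Collapsing consecutive duplicates gives an edge-neighboring chain of graphs of length at most $d$, so $\dedge(\Pi_D(\gs)_{[t]}, \Pi_D(\gs')_{[t]}) \geq \dedge(\flatten{\Pi_D(\gs)_{[t]}}, \flatten{\Pi_D(\gs')_{[t]}})$. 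The node-distance inequality follows by the identical argument applied to node-neighboring chains.

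For the reverse direction, I would first extract a time-consistency property from shortsightedness: every node or edge that $\Pi_D$ outputs arrives in the projected stream at the very time step it arrives in the input, because shortsightedness forces the final inclusion decision for each edge at its arrival time and adds every node when it first appears. Since $\gs$ and $\gs'$ are neighboring, every node or edge they both contain has the same input arrival time in both streams, so every element common to $\flatten{\Pi_D(\gs)_{[t]}}$ and $\flatten{\Pi_D(\gs')_{[t]}}$ has a single, unambiguous arrival time on which both projected streams agree. Given any minimum-length chain of edge-neighboring graphs $G_0 = \flatten{\Pi_D(\gs)_{[t]}}, \ldots, G_d = \flatten{\Pi_D(\gs')_{[t]}}$, I may assume the chain is monotone---first delete every element of $G_0 \setminus G_d$, then insert every element of $G_d \setminus G_0$---so every edge or node the chain ever touches lies in $G_0 \cup G_d$ and therefore has the common arrival time supplied by the previous observation. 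Lifting each graph edit into the analogous stream edit performed at that arrival time yields a stream chain of length exactly $d$, giving the matching upper bound on $\dedge$; the node-distance case is analogous.

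The main subtlety I expect is verifying that the lifted chain is a valid sequence of graph streams in the sense of \cref{defn:graph stream}, i.e., every edge's endpoints appear no later than the edge itself, and that each step corresponds to a single stream edit as required by \cref{defn:nn graph streams}. This is handled by scheduling insertions so that any missing endpoints are added before their incident edges, and deletions so that edges are removed before any endpoint that would be left isolated; since $\Pi_D(\gs)_{[t]}$ and $\Pi_D(\gs')_{[t]}$ are themselves valid streams, every such local reordering is available within the chain and does not inflate its length, which completes both identities.
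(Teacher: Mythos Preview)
Your proof is correct and follows essentially the same approach as the paper. Both arguments hinge on the same key observation: because $\Pi_D$ is shortsighted and the inputs $\gs,\gs'$ are neighbors, every node or edge appearing in either projected stream has a single, well-defined arrival time that is the same in both streams whenever the element appears in both. The paper uses this directly to conclude that elements of the projected streams can differ only by presence or absence (case (a) of \cref{defn:differ}), never by timing (case (b)), which immediately collapses the stream distance to the flattened-graph distance; you instead make the argument constructive by lifting a monotone chain of graph edits back to stream edits using those arrival times. Your assertion that a minimum-length chain may be taken monotone is left unproven, but it is routine and the paper's own proof is at a comparable level of detail.
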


\ifnum\oakland=0
\begin{proof}[Proof of \cref{lem:greedy-flatten}]
    Let $\gs$ and $\gs'$ be graph streams of length $T$, and let $\Pi_D$ be a shortsighted algorithm. Let $P_{[t]},P'_{[t]}$ denote projected graph streams $\Pi(\gs)_{[t]}$ and $\Pi(\gs')_{[t]}$. Also, let $F_{[t]},F'_{[t]}$ denote the flattened graphs $\flatten{\Pi(\gs)_{[t]}}$ and $\flatten{\Pi(\gs')_{[t]}}$. 

    \textbf{(Proof of Expression~\ref{eq:edge dist}.)} We first prove Expression~\ref{eq:edge dist}, where $\gs$ and $\gs'$ are node-neighbors (and let $\gs'$ be the larger graph stream---it contains an additional node $\vplus$ and associated edges). We see $\dedge(F_{[t]},F'_{[t]}) \leq \dedge(P_{[t]},P'_{[t]})$ because if an edge differs between flattened graphs, then it must differ between the projections that were flattened to obtain the graphs (this same logic applies for isolated nodes that differ between flattened graphs). Note that this inequality holds for arbitrary (e.g., non-neighboring) graph streams.

    We now need to show $\dedge(P_{[t]},P'_{[t]}) \leq \dedge(F_{[t]},F'_{[t]})$. This expression does not necessarily hold for arbitrary graph streams. For node-neighbors, though, the edges in the corresponding projections $P_{[t]}$ or $P'_{[t]}$ can only differ according to (a) and not (b) in \cref{defn:differ} (and similarly for isolated nodes that differ between projections).
    This holds for two reasons. First, all edges that arrive in $\gs$ at time $i$ must also arrive in $\gs'$ at time $i$. 
    Second, $\projgen$ is a shortsighted algorithm, so if an edge arrives in one projected stream at time $i$, it either also appears in the other projected stream at time $i$ or it never appears at all. Therefore, all edges that differ between graph streams must differ according to (a). If an edge differs between graph streams through time $t$ according to (a), then it will also differ between the flattened versions of those graphs. This gives us $\dedge(P_{[t]},P'_{[t]}) \leq \dedge(F_{[t]},F'_{[t]})$, which yields the desired equality.

    If $\gs$ and $\gs'$ are edge-neighbors (and let $\gs'$ be the larger graph stream---it contains an additional edge), edges in the corresponding projections $P_{[t]}$ or $P'_{[t]}$ can also only differ according to (a) and not (b) since all edges that arrive in $\gs$ at time $i$ must also arrive in $\gs'$ at time $i$, and since $\projgen$ is a shortsighted algorithm (a similar statement applies to isolated nodes that differ between projections). This gives us $\dedge(P_{[t]},P'_{[t]}) \leq \dedge(F_{[t]},F'_{[t]})$. The node-neighboring argument for $\dedge(F_{[t]},F'_{[t]}) \leq \dedge(P_{[t]},P'_{[t]})$ also applies to edge neighbors, completing the proof.

    \textbf{(Proof of Expression~\ref{eq:node dist}.)} To show $\dnode(F_{[t]},F'_{[t]}) \leq \dnode(P_{[t]},P'_{[t]})$, the same argument as above applies (with the additional observation that nodes differing between graphs also differ between graph streams).
    We now show $\dnode(P_{[t]},P'_{[t]}) \leq \dnode(F_{[t]},F'_{[t]})$. Let $V_\mathit{remove}$ and $V_\mathit{add}$ be the sets of nodes (and edges incident to these nodes) that are removed from and added to $F$ to obtain $F'$, such that $|V_\mathit{remove}| + |V_\mathit{add}|$ is minimized.
    
    Recall from the argument for Expression~\ref{eq:edge dist} that nodes and edges differ between edge- and node-neighboring graph streams according only to (a) and not (b) in \cref{defn:differ}.
    Therefore, removing $V_\mathit{remove}$ from $P$ and adding $V_\mathit{add}$ to $P$ at the appropriate times (and also adding the edges at the appropriate times) gives us $P'$. This gives us $\dnode(P_{[t]},P'_{[t]}) \leq |V_\mathit{remove}| + |V_\mathit{add}|$, so we have $\dnode(P_{[t]},P'_{[t]}) \leq \dnode(F_{[t]},F'_{[t]})$, which completes the proof of Expression~\ref{eq:node dist}.
\end{proof}
\fi

\begin{lemma}[Edges between low-degree nodes remain in both projections]
\label{lem:edges dont change}
    Let $\Pi_D\in \set{\projo,\projp}$ denote one of the variants of \cref{alg:time aware projection}.
    Consider a pair of edge- or node-neighboring graph streams $\gs,\gs'$ of length $T$, where $\gs'$ contains (at most) one additional node as compared to $\gs$.
    For all edges $e = \{u,v\}$ that arrive in $\gs$ at (or before) time step $t\in[T]$, if $u$ and $v$ have degree at most $D$ in $\gs'_{[t]}$, then $e$ is in both $\Pi_D(\gs)$ and $\Pi_D(\gs')$.
\end{lemma}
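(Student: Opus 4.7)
The plan is to argue that when $e = \{u,v\}$ is considered by \cref{alg:time aware projection} on either stream, the degree counters maintained by the algorithm satisfy $d(u) < D$ and $d(v) < D$. Once this is established, the condition on Line~\ref{line:add edge condition} evaluates to true in both runs $\Pi_D(\gs)$ and $\Pi_D(\gs')$, so $e$ is added to each projection on Line~\ref{line:add edge to proj}.

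First, I would observe that because $\gs'$ differs from $\gs$ by at most one additional node (and its incident edges) and $e$ arrives in $\gs_{[t]}$, both endpoints $u$ and $v$ are present in both streams and $\deg_u(\flatten{\gs_{[t]}}) \leq \deg_u(\flatten{\gs'_{[t]}}) \leq D$, with the analogous bound for $v$. By the consistent ordering assumption (edges are considered in the lexicographic order of $(t,\id_u,\id_v)$), at the \projstagetext{} when $e$ is considered in $\Pi_D(\gs)$ or in $\Pi_D(\gs')$, every edge incident to $u$ already processed is an edge of $\flatten{\gs'_{[t]}}$. Since $e$ itself is one of those (at most $D$) incident edges but has not yet been counted, the number of edges incident to $u$ processed strictly before $e$ is at most $D - 1$ in either run; the same bound holds for $v$.

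Finally, I would translate this edge count into counter values, treating the two variants separately. For $\projo$ (\inccrit{} $\style = \original$), the counter $d(u)$ is incremented on Line~\ref{line:increment counter} every time an edge incident to $u$ is processed, so $d(u) \leq D - 1 < D$ when $e$ is evaluated. For $\projp$ (\inccrit{} $\style = \projection$), $d(u)$ is incremented only when an edge is \emph{added} to the projection, which happens for a subset of the edges processed, so again $d(u) \leq D - 1 < D$. The identical argument yields $d(v) < D$ in both runs, and the lemma follows. I do not anticipate a significant obstacle; the only subtle point to flag is that the inequality must be strict, which is guaranteed because the counter for $e$ itself is incremented only after the addedge check on Line~\ref{line:add edge condition} has been performed.
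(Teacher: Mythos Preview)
Your proposal is correct and follows essentially the same approach as the paper: both argue that, because $u$ and $v$ have at most $D$ incident edges in $\gs'_{[t]}$ (and hence in $\gs_{[t]}$), the counters $d(u)$ and $d(v)$ can have been incremented at most $D-1$ times before $e$ is considered, so the inclusion test on Line~\ref{line:add edge condition} passes in both runs. The paper handles $\projo$ and $\projp$ uniformly by noting that $d(w)$ is incremented only when an edge incident to $w$ is considered (an upper bound valid for both variants), whereas you treat the two inclusion criteria separately; this is a purely cosmetic difference.
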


\noakland{\begin{proof}[Proof of \cref{lem:edges dont change}]
    Without loss of generality let $\gs'$ be the larger graph stream (i.e., for node-neighbors, $\gs'$ can be formed by adding a node $\vplus$ and associated edges to $\gs$; and for edge-neighbors, $\gs'$ contains an additional edge as compared to $\gs$). Before proceeding with the proof, we note that all nodes and edges in $\gs_{[t]}$ are in $\gs'_{[t]}$. Additionally, we see that for all nodes $v$ in $\gs'_{[t]}$, the degree of $v$ is at least as large in $\gs'_{[t]}$ as in $\gs_{[t]}$---that is, $\deg_v(\gs'_{[t]}) \geq \deg_v(\gs_{[t]})$. 

    Let $e = \{u,v\}$ be an edge that arrives in $\gs$ at time $t$, and let $u$ and $v$ both have degree at most $D$ in $\gs'_{[t]}$.
    We now consider \cref{alg:time aware projection}. We see from Lines~\ref{line:add edge condition}--\ref{line:add edge to proj} that, if the counters $d(u)$ and $d(v)$ are both less than $D$ when we consider adding edge $e$ to the projection, then $e$ will be added. The counters are initialized to $0$, and we see on Line~\ref{line:increment counter} that, for all nodes $w$ in the input, $d(w)$ is incremented only if \cref{alg:time aware projection} considers adding an edge incident to $w$ to the projection.

    This means that, prior to considering $e=\{u,v\}$ in the for loop (Line~\ref{line:for loop proj}), the counters $d(u)$ and $d(v)$ will have each been incremented fewer than $D$ times when running $\Pi_D$ on $\gs$ or on $\gs'$. Therefore, \cref{alg:time aware projection} will have $d(u)<D$ and $d(v)<D$ when the for loop considers $e = \{u,v\}$ for addition to the projection, so $e$ will be added to the projection. We see that this argument also holds if $e$ arrives before time $t$.
\end{proof}}

\cref{lem:edges dont change} has an important consequence: if $U$ is the set of nodes that have degree more than $D$ in $\gs'_{[t]}$ (where $\gs'$ is the larger of two neighboring graph streams), then $U\cup \{v_+\}$ forms a vertex cover for the edges that differ between the projections $\Pi_D(\gs)_{[t]}$ and $\Pi_D(\gs')_{[t]}$. 
We use this in the proof of \Cref{thrm:combined-stab} in a few ways. Most directly, if we have an upper bound of $\ell+1$ on the size of $U$, then we immediately obtain an upper bound of $2\ell+1$ on the node distance between the projections $\Pi_D(\gs)_{[t]}$ and $\Pi_D(\gs')_{[t]}$, which gives us the proof of node-to-node stability. It is also the first step in the proofs of other stability statements. 

\ifnum\oakland=0

\subsection{Proof of Edge-to-Edge Stability for \texorpdfstring{\boldmath{$\projo$}}{BBDS}}

We begin by proving item (1a) of \cref{thrm:combined-stab}, which we repeat below for convenience. The proof follows immediately from \cref{lem:greedy-flatten} and the ideas from the proof of \cite[Claim 13]{BlockiBDS13}.

\begin{theorem}[Item (1a) of \cref{thrm:combined-stab}]
\label{thrm:edge-edge-stab}
Let $T\in\N$, $D\in \N$, $\ell\in \N\cup\{0\}$, and let $\projo$ be \cref{alg:time aware projection} with \inccrit{} $\style = \original$. If $\gs\nedge \gs'$ are edge-neighboring graph streams of length $T$, then for all time steps $t\in[T]$, the edge distances between the projections through time $t$ satisfy
\[
    \dedge\Big(\projo(\gs)_{[t]}\ ,\ 
    \projo(\gs')_{[t]} \Big)
    \leq 3.
\] 
\end{theorem}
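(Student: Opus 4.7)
The plan is to invoke the Flattening Lemma (\cref{lem:greedy-flatten}) so that we only need to bound the edge distance between the two flattened projected graphs $\flatten{\projo(\gs)_{[t]}}$ and $\flatten{\projo(\gs')_{[t]}}$, rather than reason about the sequential behavior. Without loss of generality, assume $\gs'$ is the larger stream. By \cref{defn:nn graph streams}, either (a) $\gs'$ has exactly one extra isolated node, (b) $\gs'$ has exactly one extra edge $\eplus = \{\uplus, \vplus\}$ (possibly along with one extra endpoint node of degree $1$), or (c) $\gs'$ has exactly one extra edge where both endpoints were already present in $\gs$. Case (a) is immediate: an isolated node contributes $1$ to edge distance and nothing else changes. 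Cases (b) and (c) are handled identically, so I focus on them together and denote the extra edge by $\eplus = \{\uplus, \vplus\}$.

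The key observation is that the \inccrit{} for $\projo$ is based on degrees in the \emph{original} stream: the counter $d(w)$ maintained by \cref{alg:time aware projection} (with $\style=\original$) is incremented whenever an edge incident to $w$ is \emph{considered}, regardless of whether it is added. Because the consistent ordering is the same on $\gs$ and $\gs'$ (with $\eplus$ inserted into the order in $\gs'$), the counters satisfy the following relationship: at every \projstagetext{} strictly after the one at which $\eplus$ is processed, we have $d(\uplus)$ in the run on $\gs'$ equal to $d(\uplus)$ in the run on $\gs$ plus one, and similarly for $\vplus$; all other counters agree between the two runs. Edges not incident to $\uplus$ or $\vplus$ therefore face identical inclusion tests in the two runs and behave identically in both projections.

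It remains to bound the number of edges incident to $\uplus$ or $\vplus$ that can differ. By the counter-shift observation, the first edge $e_u$ incident to $\uplus$ (other than $\eplus$) that is considered when $d(\uplus)=D-1$ in the run on $\gs$ will be considered when $d(\uplus)=D$ in the run on $\gs'$; so $e_u$ may be added in $\projo(\gs)$ but rejected in $\projo(\gs')$. After $e_u$, the counter $d(\uplus)$ has reached $D$ in both runs, so every subsequent edge incident to $\uplus$ is rejected in both projections and contributes nothing. (The inclusion test on $e_u$ in $\gs$ also requires $d(v) < D$ for its other endpoint $v$, but since counters on all nodes other than $\uplus, \vplus$ agree, this causes no additional discrepancy.) By the symmetric argument, at most one further edge $e_v$ incident to $\vplus$ can differ. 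Combined with the extra edge $\eplus$ itself, we get a total of at most $3$ edges that differ between $\flatten{\projo(\gs)_{[t]}}$ and $\flatten{\projo(\gs')_{[t]}}$, and by the Flattening Lemma this upper-bounds $\dedge\!\paren{\projo(\gs)_{[t]}, \projo(\gs')_{[t]}}$. The main subtlety, and the step to handle carefully, is ensuring that discrepancies do not cascade: once $\uplus$ (respectively $\vplus$) is saturated in both runs, later edges incident to it can no longer diverge, so the ``counter offset'' of $+1$ induced by $\eplus$ is spent on exactly one edge per endpoint.
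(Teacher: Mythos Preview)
Your proposal is correct and follows essentially the same approach as the paper's proof: invoke the Flattening Lemma to reduce to the flattened graphs, observe that counters for nodes other than the endpoints of $\eplus$ are unaffected, and argue that the $+1$ counter offset at each endpoint can flip the inclusion decision for at most one later edge. Your treatment is slightly more explicit than the paper's about the edge-neighbor case split (isolated node, new degree-$1$ node, pure extra edge) and about why the discrepancy cannot cascade past the first saturating edge, but the substance is identical.
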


\begin{proof}[Proof of \cref{thrm:edge-edge-stab}] 
    Let $\gs,\gs'$ be edge-neighboring graph streams (wlog, let $\gs'$ be formed by adding one edge $e^+=\{u,v\}$ to $\gs$). To simplify notation, let $F_{[t]}, F'_{[t]}$ denote $\flatten{\projo(\gs)_{[t]}}$ and $\flatten{\projo(\gs')_{[t]}}$ respectively. By \cref{lem:greedy-flatten}, we only need to consider how $F$ and $F'$ differ. 

    We consider three types of edges: edge $e^+ = \{u,v\}$, edges $e$ incident to neither $u$ nor $v$, and edges $e_u$ and $e_v$ incident to either $u$ or $v$ respectively. We first note that $e^+$ may appear in $F'_{[t]}$ and will not appear in $F_{[t]}$.
    
    Next consider edges $e=\{w,x\}$ incident to neither $u$ nor $v$. An edge of this form will either appear in both $F_{[t]}$ and $F'_{[t]}$, or it will appear in neither. This is because $e^+$ will not affect the value of $d(w)$ or $d(x)$, so the decision to add $e$ to the output stream will be the same, regardless of whether $\projo$ is running on $\gs$ or $\gs'$.

    Now consider edges incident to either $u$ or $v$. If $e^+$ \projstagealttexttwo{} where $d(u)\geq D$ and $d(v) \geq D$ in $\projo$ on $\gs'$, then the projections of both streams will be identical. However, if $e^+$ \projstagealttextthree, then there may be one edge $e_u$ incident to $u$ that appears in $F_{[t]}$ but does not appear in $F'_{[t]}$, due to having $d(u) = D$ \projstagealttextfour{$e_u$} for the case of running $\projo$ on $\gs'$, instead of having $d(u) = D-1 < D$ as is the case on $\gs$.
    Note that, if an edge incident to $u$ \projstagealttexttwo{} following that of $e_u$, then that edge will appear in neither $F_{[t]}$ nor $F'_{[t]}$. This is because $u$ is already saturated in $\gs$ and $\gs'$ when those edges are considered---that is, $d(u) \geq D$. Likewise, an edge incident to $u$ that \projstagealttexttwo{} prior to $e_u$ will either appear in both $F_{[t]}$ and $F'_{[t]}$ or will appear in neither. Similarly, there may be one edge $e_v$ incident to $v$ that appears in $F_{[t]}$ but does not appear in $F'_{[t]}$. Graphs $F'_{[t]}$ and $F_{[t]}$ differ on at most three edges---namely $e^+$, $e_u$, and $e_v$---so $\dedge(F_{[t]},F'_{[t]})\leq 3.$
\end{proof}

\fi

\ifnum\oakland=0
\subsection{Proof of Node-to-Node Stability for \texorpdfstring{\boldmath{$\projo$ and $\projp$}}{BBDS and DLL}}

We next prove item (3) of \cref{thrm:combined-stab}, which we repeat below for convenience. The proof follows from \cref{lem:greedy-flatten,lem:edges dont change}, with the same approach applying for \inccrit{} $\style = \original$ and $\style = \projection$.

\begin{theorem}[Item (3) of \cref{thrm:combined-stab}]
\label{thrm:node-node stab}
    Let $T\in\N$, $D\in \N$, $\ell\in \N\cup\{0\}$, and let $\projo,\projp$ be \cref{alg:time aware projection} with \inccrit{} $\style = \original$ and $\style=\projection$, respectively. If $\gs\nnode \gs'$ are node-neighboring graph streams of length $T$, then for all time steps $t\in[T]$, the node distances between the projections through time $t$ satisfy the following:
    \begin{enumerate}[label=(\alph*)]
        \item If $\gs,\gs'$ are $(D,\ell)$-bounded through time $t$, then
        \( 
            \dnode\Big(\projo(\gs)_{[t]}\ ,\ 
            \projo(\gs')_{[t]} \Big)
            \leq
            2\ell + 1.
        \)
        
        \item If $\gs,\gs'$ are $(D,\ell)$-bounded through time $t$, then
        \(
            \dnode\Big(\projp(\gs)_{[t]}\ ,\ 
            \projp(\gs')_{[t]} \Big)
            \leq
            2\ell + 1.
        \)
    \end{enumerate}
\end{theorem}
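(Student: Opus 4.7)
The plan is to handle both $\projo$ and $\projp$ uniformly, since the only properties of the projection we use are the ones captured by \cref{lem:greedy-flatten} (the Flattening Lemma) and \cref{lem:edges dont change} (edges between low-degree nodes are preserved), both of which hold for $\projo$ and $\projp$. Write $\projgen$ for either of the two variants.

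First, I would reduce to reasoning about flattened graphs. Let $\gs \nnode \gs'$ be node-neighbors and assume without loss of generality that $\gs'$ is the larger stream, obtained from $\gs$ by inserting an extra node $\vplus$ together with (some subset of) its incident edges spread across the time steps up through $t$. By \cref{lem:greedy-flatten},
\[
    \dnode\!\left(\projgen(\gs)_{[t]},\projgen(\gs')_{[t]}\right)
    = \dnode\!\left(\flatten{\projgen(\gs)_{[t]}},\flatten{\projgen(\gs')_{[t]}}\right),
\]
so it suffices to bound the right-hand side.

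Next, I would use $(D,\ell)$-boundedness to identify a small vertex cover for the symmetric difference. Let $U$ be the set of nodes that have degree greater than $D$ in $\flatten{\gs'_{[t]}}$; by assumption $|U| \le \ell$. By \cref{lem:edges dont change}, every edge $e = \{u,v\}$ that appears in $\gs_{[t]}$ with both $u, v \notin U$ (and hence not incident to $\vplus$) lies in both $\flatten{\projgen(\gs)_{[t]}}$ and $\flatten{\projgen(\gs')_{[t]}}$. Consequently, every edge and every non-isolated node appearing in the symmetric difference of these two flattened graphs is incident to (or equal to) a vertex in $U \cup \{\vplus\}$, a set of size at most $\ell + 1$.

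To finish, I would exhibit an explicit chain of node-neighbor modifications of length $2\ell+1$ realizing the distance bound. Starting from $\flatten{\projgen(\gs')_{[t]}}$, remove one at a time the (at most) $\ell+1$ nodes of $U \cup \{\vplus\}$, together with their incident edges; this costs at most $\ell+1$ node-neighbor steps and yields a graph whose remaining edges, by \cref{lem:edges dont change}, agree exactly with $\flatten{\projgen(\gs)_{[t]}}$ restricted to vertices outside $U$. Then re-insert the (at most) $\ell$ nodes of $U$ that appear in $\flatten{\projgen(\gs)_{[t]}}$, each together with their incident edges in that graph, using one node-neighbor step per node. After these at most $(\ell+1) + \ell = 2\ell+1$ steps we arrive at $\flatten{\projgen(\gs)_{[t]}}$, giving the desired bound. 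The main subtlety, which is precisely what \cref{lem:edges dont change} was designed to handle, is the claim that after deleting $U \cup \{\vplus\}$ the two projected flattened graphs coincide on the remaining vertex set; no extra care is needed for the edge-inclusion logic of either $\projo$ or $\projp$, so the same argument applies to both.
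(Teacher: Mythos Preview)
Your proposal is correct and follows essentially the same approach as the paper: reduce to flattened graphs via \cref{lem:greedy-flatten}, use \cref{lem:edges dont change} to show that $U\cup\{\vplus\}$ (with $|U|\le\ell$) is a vertex cover for the differing edges, then remove these $\ell+1$ nodes and re-add the $\ell$ nodes of $U$ with their edges from the other projection. The paper's proof is written out only for $\projo$ but notes the same argument applies verbatim to $\projp$, exactly as you handle it.
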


\begin{proof}[Proof of \cref{thrm:node-node stab}]
    Without loss of generality, let $\gs'$ be the larger graph stream---that is, it contains an additional node $\vplus$ and associated edges.
    To simplify notation, let $F_{[t]}, F'_{[t]}$ denote $\flatten{\projo(\gs)_{[t]}}$ and $\flatten{\projo(\gs')_{[t]}}$ respectively. Note that we only care to bound the node distance between projections for times $t\in[T]$ where $\gs_{[t]},\gs'_{[t]}$ are $(D,\ell)$-bounded. By \cref{lem:greedy-flatten}, we only need to bound $\dnode(F_{[t]}, F'_{[t]})$.

    Let $U$ be the set of nodes with degree greater than $D$ in $\gs'_{[t]}$. By \cref{lem:edges dont change}, only edges incident to $\vplus$ or to nodes with degree greater than $D$ in $\gs'_{[t]}$ will differ between flattened graphs $F_{[t]}$ and $F'_{[t]}$, so $U\cup \{ \vplus \}$ forms a vertex cover for these differing edges. This means we can use the following process to obtain $F_{[t]}$ from $F'_{[t]}$: first, delete $\vplus$ and all nodes in $U$ from $F'_{[t]}$; then, add all of the nodes in $U$ to this graph, but instead of adding the edges incident to these nodes in $F'_{[t]}$, add the edges incident to these nodes in $F_{[t]}$. We removed $\ell + 1$ nodes and added $\ell$ nodes, which gives us the desired bound of $\dnode(F_{[t]}, F'_{[t]})\leq 2\ell + 1$.    
\end{proof}

\fi

\subsection{Proof of Node-to-Edge Stability for \texorpdfstring{\boldmath{$\projo$}}{BBDS}}
\label{sec:BBDS node edge}

Here we prove item (2a) of \cref{thrm:combined-stab}, which we repeat below for convenience. The proof uses \cref{lem:greedy-flatten,lem:edges dont change}, though it requires a more careful analysis of the flattened graphs than\yoakland{ the proofs of items (1a) and (3) of \cref{thrm:combined-stab}.}\noakland{ the other stability proofs we have provided so far.}

\begin{theorem}[Item (2a) of \cref{thrm:combined-stab}]
\label{thrm:node-edge stab orig}
    Let $T\in\N$, $D\in \N$, $\ell\in \N\cup\{0\}$, and let $\projo$ be \cref{alg:time aware projection} with \inccrit{} $\style = \original$. If $\gs\nnode \gs'$ are node-neighboring graph streams of length $T$, then for all time steps $t\in[T]$ such that $S$ and $S'$ are $(D, \ell)$-bounded through time $t$, the edge distances between the projections through time $t$ satisfy
    \[
                \dedge\Big(\projo(\gs)_{[t]}\ ,\ 
                \projo(\gs')_{[t]} \Big)
                \leq
                D + \ell.
    \]   
\end{theorem}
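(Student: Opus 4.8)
The plan is to reduce to the flattened (batch) setting with the Flattening Lemma and then bound the edge symmetric difference by tracking, edge by edge, how the greedy inclusion decisions of $\projo$ can differ between the two runs. Assume without loss of generality that $\gs'$ is the larger stream, obtained from $\gs$ by adding a node $\vplus$ together with its incident edges. By \cref{lem:greedy-flatten} it suffices to bound $\dedge$ between the flattened projected graphs $F := \flatten{\projo(\gs)_{[t]}}$ and $F' := \flatten{\projo(\gs')_{[t]}}$. Since $\projo$ keeps every input node, the only vertex that differs between $F$ and $F'$ is $\vplus$, so $\dedge(F,F')$ is at most $|E(F)\triangle E(F')|$ plus one more (the extra one needed only if $\vplus$ happens to be isolated in $F'$). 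Thus it is enough to show $|E(F)\triangle E(F')|\le D+\ell$, and to dispose of the isolated-$\vplus$ corner case separately: there the $\vplus$-incident edges contribute $0$, the rest of the bound gives at most $\ell$, and the $+1$ is absorbed since $D\ge 1$.

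First I would split the symmetric-difference edges into those incident to $\vplus$ and those not. Every edge incident to $\vplus$ lies in $F'$ only (it cannot occur in $\gs$), and there are at most $D$ of them because $\projo(\gs')$ has maximum degree at most $D$ by construction (\cref{alg:time aware projection}). For the remaining edges, \cref{lem:edges dont change} (in the vertex-cover form stated just after it) says each one has an endpoint in $U\cup\{\vplus\}$, where $U$ is the set of nodes of degree greater than $D$ in $\gs'_{[t]}$; since these edges avoid $\vplus$, each has an endpoint in $U$. I would then sharpen this with a counter-tracking argument. Fix $a\in U$ with $a\neq\vplus$ and list the edges incident to $a$ in the order $\projo$ processes them. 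Because $\style=\original$, the counter $d(a)$ at any point equals the number of edges incident to $a$ processed so far; and passing from $\gs$ to $\gs'$ inserts into this list at most one new edge, namely $\{a,\vplus\}$, which shifts every subsequent value of $d(a)$ up by exactly one. Hence the predicate ``$d(a)<D$'' takes different values in the two runs for at most one edge incident to $a$: the $D$-th edge incident to $a$ processed in the run on $\gs$, and only when $\{a,\vplus\}$ is processed before it. Call this edge, when it exists, the \emph{flip-edge} of $a$. Now an edge $e=\{a,b\}$ common to both streams is included by $\projo$ exactly when $(d(a)<D)\land(d(b)<D)$ holds at the time $e$ is processed; if $e$ is neither the flip-edge of $a$ nor of $b$, both conjuncts agree across the two runs, so $e$ is in both projections or in neither, hence not in the symmetric difference. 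Therefore every symmetric-difference edge avoiding $\vplus$ is the flip-edge of one of its endpoints; since the existence of the flip-edge of $a$ forces $a$ to be adjacent to $\vplus$ and to have at least $D$ incident edges in $\gs_{[t]}$ (hence degree greater than $D$ in $\gs'_{[t]}$, so $a\in U$), and since each node has at most one flip-edge, there are at most $|U\setminus\{\vplus\}|$ such edges.

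To conclude I would invoke $(D,\ell)$-boundedness of $\gs'_{[t]}$: $|U|\le\ell$, so the symmetric-difference edges avoiding $\vplus$ number at most $|U\setminus\{\vplus\}|\le\ell$, and together with the at most $D$ edges incident to $\vplus$ we get $|E(F)\triangle E(F')|\le D+\ell$; combined with the first-paragraph reduction (and the corner case) this yields $\dedge(\projo(\gs)_{[t]},\projo(\gs')_{[t]})\le D+\ell$. The main obstacle is the middle step: showing that inserting the single edge $\{a,\vplus\}$ perturbs the inclusion status of at most one \emph{other} edge at $a$. This is precisely the subtlety flagged in the overview for the BBDS projection --- unlike the DLL projection, the counter $d(a)$ is incremented on \emph{every} edge $\projo$ considers, not only the ones it includes, so even edges that are rejected from the projection propagate the shift; one has to check that the shift of $d(a)$ by exactly one, from the \projstagetext{} of $\{a,\vplus\}$ onward, produces precisely one new crossing of the threshold $D$ and hence flips the decision for at most one downstream edge.
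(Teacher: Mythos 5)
Your proof is correct and follows essentially the same route as the paper's: reduce to flattened graphs via \cref{lem:greedy-flatten}, bound the at most $D$ edges incident to $\vplus$ separately, and then argue via \cref{lem:edges dont change} that each remaining differing edge is attributable to a single high-degree neighbor of $\vplus$, of which there are at most $\ell$. Your ``flip-edge'' bookkeeping is just a more explicit rendering of the paper's observation that inserting $\{a,\vplus\}$ shifts the counter $d(a)$ by exactly one and hence changes the inclusion decision for at most one other edge at $a$, and your handling of the isolated-$\vplus$ corner case in the edge-distance accounting is a small (correct) refinement the paper glosses over.
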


\begin{proof}[Proof of \cref{thrm:node-edge stab orig}]
    Without loss of generality, let $\gs'$ be the larger graph stream---that is, it contains an additional node $\vplus$ and associated edges, which we represent with the set $E^+$.
    To simplify notation, let $F_{[t]}, F'_{[t]}$ denote $\flatten{\projo(\gs)_{[t]}}$ and $\flatten{\projo(\gs')_{[t]}}$ respectively. Note that we only care to bound the node distance between projections for times $t\in[T]$ where $\gs_{[t]},\gs'_{[t]}$ are $(D,\ell)$-bounded. By \cref{lem:greedy-flatten}, we only need to bound $\dedge(F_{[t]}, F'_{[t]})$.

    By \cref{lem:edges dont change}, only edges incident (1) to $\vplus$ or (2) to nodes with degree greater than $D$ in $\gs'_{[t]}$ will differ between flattened graphs $F_{[t]}$ and $F'_{[t]}$. We now count the number of edges in these categories that differ between $F_{[t]}$ and $F'_{[t]}$.

    We first bound the number of edges in (1). There are at most $D$ edges incident to $\vplus$ in $F'_{[t]}$ because $F'_{[t]}$ has maximum degree at most $D$, and none of these edges show up in $F_{[t]}$ since $\vplus$ is not in $F_{[t]}$. Therefore, there are at most $D$ edges in category (1).

    We next bound the number of edges in (2). Each flattened graph $F_{[t]}$ and $F'_{[t]}$ contains at most $D\cdot \ell$ edges incident to nodes with degree greater than $D$ in $\gs'_{[t]}$. However, we show that many of these edges will be the same in both flattened graphs. Let $U$ denote the set of nodes, excluding $\vplus$, that both have degree greater than $D$ in $S'_{[t]}$ and are incident to edges in $E^+$. Consider an edge $e=\{u,v\}$, where neither $u$ nor $v$ is in $U$. The added edges in $E^+$ will not affect the values of $d(u)$ or $d(v)$, so either $e$ will be in both $F_{[t]}$ and $F'_{[t]}$, or it will be in neither flattened graph.

    Now consider edges $e'$ incident to nodes in $U$. All edges in $E^+$ are of the form $e^+ = \{\vplus, w\}$. If $e^+$ \projstagealttexttwo{} where $d(w)\geq D$ in $\projo$ on $\gs'$, then $e^+$ will not cause any edges $e'$ to differ between $F_{[t]}$ and $F'_{[t]}$.
    However, if $e^+$ \projstagealttextthree, then there may be one edge $e_w$ incident to $w$ that appears in $F_{[t]}$ but does not appear in $F'_{[t]}$, due to having $d(w) = D$ \projstagealttextfour{$e_w$} instead of $d(w) = D-1 < D$ (as is the case when running $\projo$ on $\gs$).
    Note that $e^+$ will not affect the inclusion of other edges in the output stream: any edge incident to $w$ that \projstagealttexttwo{} after $e_w$ will appear in neither flattened graph, and any edge that \projstagealttexttwo{} prior to $e_w$ will have $d(w) < D$ when considered for inclusion in both output streams.

    We see that each edge $e=\{\vplus,w\}$ in $E^+$ causes at most one edge incident to $w$ to differ between projections (in particular, to appear in $F_{[t]}$ but not appear in $F'_{[t]}$). Therefore, if we bound $|E^+|$, we can bound the number of edges in category (2). Since all edges in $E^+$ are incident to $\vplus$ and a node in $U$, there are at most $|U|$ edges in $E^+$. By the fact that the streams $S_{[t]}$ and $S'_{[t]}$ are $(D,\ell)$-bounded, there are at most $\ell$ nodes in $U$. Therefore, the number of edges in category (2) is at most $\ell$. 

    Categories (1) and (2) contain a total of at most $D+\ell$ edges, so at most $D+\ell$ edges differ between $F_{[t]}$ and $F'_{[t]}$, which is what we wanted to show.
\end{proof}

\ifnum\oakland=0
\subsection{Proof of Node-to-Edge Stability for \texorpdfstring{\boldmath{$\projp$}}{DLL}}
\label{sec:node edge projected}

In this section, we prove item (2b) of \cref{thrm:combined-stab}, which we repeat below for convenience. This proof of stability requires a more involved analysis than other proofs, so we describe our proof approach below before moving to the formal proof.

\begin{theorem}[Item (2b) of \cref{thrm:combined-stab}]
\label{thrm:node-edge stab proj}
    Let $T\in\N$, $D\in \N$, $\ell\in \N\cup\{0\}$, and let $\projp$ be \cref{alg:time aware projection} with \inccrit{} $\style = \projection$. If $\gs\nnode \gs'$ are node-neighboring graph streams of length $T$, then for all time steps $t\in[T]$ such that $S$ and $S'$ are $(D, \ell)$-bounded through time $t$, the edge distances between the projections through time $t$ satisfy
    \[
                \dedge\Big(\projp(\gs)_{[t]}\ ,\ 
                \projp(\gs')_{[t]} \Big)
                \leq 
                \begin{cases}
                    D + 2\ell^{3/2} & \text{if } D\geq \ell,\ \text{and}\\
                    D + 2\ell \sqrt{D} & \text{if } D< \ell.\\
                \end{cases}
    \]
\end{theorem}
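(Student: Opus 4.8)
The plan is to follow the same high-level template as the proof of \cref{thrm:node-edge stab orig} (item (2a)) --- reduce to flattened graphs via \cref{lem:greedy-flatten}, then use \cref{lem:edges dont change} to confine the changes to a small vertex cover --- but to replace the one-line counting argument of the $\projo$ case with a careful, cascade-based analysis of how inserting a single node perturbs the counters $d(\cdot)$ inside $\projp$. Throughout, take $\gs'$ to be the larger stream, obtained from $\gs$ by adding a node $\vplus$ with incident edge set $E^+$, and write $F_{[t]}=\flatten{\projp(\gs)_{[t]}}$ and $F'_{[t]}=\flatten{\projp(\gs')_{[t]}}$. By \cref{lem:greedy-flatten} it suffices to bound $\dedge(F_{[t]},F'_{[t]})$, and by \cref{lem:edges dont change} every edge and every non-shared isolated vertex of the symmetric difference $F_{[t]}\triangle F'_{[t]}$ is covered by $U\cup\{\vplus\}$, where $U$ is the set of nodes of degree more than $D$ in $\gs'_{[t]}$ and $\lvert U\rvert\le\ell$ by $(D,\ell)$-boundedness (\cref{defn:d ell bdd}). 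I would then define the \emph{difference graph} $\gdiff$ to have edge set $E(F_{[t]})\mathbin{\triangle}E(F'_{[t]})$ and reduce the theorem to bounding $\lvert E(\gdiff)\rvert$.

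First I would construct $\gdiff$ incrementally and orient it. Run $\projp(\gs)$ and $\projp(\gs')$ in lock-step over the common projection-stage order (\cref{defn:arrstep}), tracking for each node the discrepancy between its counter $d(\cdot)$ in the two runs. I would show: (i) a non-$\vplus$ edge can enter $\gdiff$ only at the stage it is considered, and only because one of its endpoints already carries a nonzero counter discrepancy at that stage; and (ii) each such discrepancy is ``caused'' by a unique earlier edge of $\gdiff$ incident to that endpoint, the recursion bottoming out at an edge of $E^+$. Orienting every edge of $\gdiff$ away from the endpoint whose discrepancy triggered its inclusion makes $\gdiff$ a directed graph, and since causation only moves forward in projection-stage order, it is acyclic. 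A short calculation --- as in the $\projo$ analysis, where one extra incident change at a node bumps exactly one later edge --- shows that each ``cause'' produces exactly one new edge, so the edges of $\gdiff$ decompose into \emph{edge-disjoint directed paths}, each beginning with a distinct $\vplus$-incident edge that lies in $\projp(\gs')$.

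Next is the pruning step, which is the technical heart. At most $D$ edges of $E^+$ lie in $\projp(\gs')$ (since $\vplus$ has projected degree at most $D$), and these are exactly the $\vplus$-incident edges of $\gdiff$; delete them. Each deleted edge is the head of one cascade path, and I would argue that a cascade extending beyond its $\vplus$-edge thereafter only ``turns'' at vertices of $U$ (a node where an edge is bumped must, by the counter argument, have degree more than $D$ in $\gs'_{[t]}$), that at most one cascade starts at each vertex of $U$ (only one $\vplus$-edge reaches each node), and that each cascade has length at most $\ell$ (after its first step it stays inside $U$). After suitably contracting the internal non-$U$ vertices of the cascades, the pruned difference graph becomes a DAG on at most $\ell+1$ vertices whose edge set is covered by at most $k\coloneqq\min\{D,\ell\}$ directed paths of length at most $\ell$. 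It then remains to prove the purely combinatorial lemma that such a DAG has at most $2\ell\sqrt{k}$ edges --- intuitively, $k$ long edge-disjoint directed paths crammed onto $\ell+1$ vertices are forced to overlap heavily, and in an acyclic orientation a vertex shared by many paths cannot let all of them be long, so the total length is $O(\ell\sqrt{k})$ rather than the naive $k\ell$. Combining this with the $\le D$ pruned $\vplus$-edges yields $\dedge(F_{[t]},F'_{[t]})\le D+2\ell\sqrt{\min\{D,\ell\}}$, which gives the two stated cases according to whether $D\ge\ell$ or $D<\ell$.

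I expect the main obstacle to be the combinatorial edge-count lemma together with the pruning argument that feeds it: one must pin down exactly which structural features of $\gdiff$ survive contraction (the path-count bound $\min\{D,\ell\}$, the per-path length bound $\ell$, acyclicity, and confinement to $\ell+1$ vertices) and then identify the extremal DAG achieving $\Theta(\ell\sqrt{k})$ edges, since cruder counting gives only $k\ell$ or $\ell^2$. A secondary but genuine subtlety is the bookkeeping in the incremental construction --- in particular, showing the causal orientation is well defined and acyclic even when a node of $U$ lies on several cascades (absorbing several ``blue'' edges and bumping several ``red'' ones), so that $\gdiff$ really is the clean DAG-covered-by-short-paths object the combinatorial lemma needs. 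The node-to-node and edge-to-edge bounds for $\projp$ do not assist here; this argument is essentially self-contained modulo \cref{lem:greedy-flatten} and \cref{lem:edges dont change}.
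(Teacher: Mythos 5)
Your plan reproduces the paper's architecture: flatten via \cref{lem:greedy-flatten}, build a directed \emph{difference graph}, prune it to a structured DAG on at most $\ell+1$ vertices, and bound the edges of that DAG by $2\ell\sqrt{k}$ for $k=\min\{D,\ell\}$. Your combinatorial lemma is fine as stated and is in fact a corollary of the paper's (\cref{lem:dag ell 32}): a cover by $k$ edge-disjoint directed paths forces every order-induced cut set to have at most $k$ edges, and the paper bounds the edges of such a DAG by splitting them into those of topological length at most $\sqrt{k}$ (at most $\ell\sqrt{k}$ of them) and longer ones, which an averaging argument over the $\ell$ cuts also bounds by $\ell\sqrt{k}$.

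The genuine gap is in the pruning step, in exactly the two places you flag as "subtleties." First, acyclicity: "causation moves forward in projection-stage order" only shows labels increase along a \emph{single} cascade; the union of cascades can contain a directed cycle whose consecutive edges lie on different cascades, and nothing you have written forces labels to increase around such a cycle. The paper's device is to additionally prune every in-edge to $v$ whose color equals $\outcolor(v)$; only then does the saturation-stage argument show that in-labels precede out-labels at every vertex, which is what yields a DAG (the unpruned $\gdiff$ need not be one). Second, the accounting: the paper does \emph{not} delete all $v^+$-edges. It keeps the $\ell^*\le\min\{D,\ell\}$ edges from $\vplus$ to high-degree nodes as the first cut set, and instead deletes the $v^+$-edges to low-degree nodes (at most $D-\ell^*$), all other in-edges to low-degree nodes (your cascades' \emph{terminal} edges---these nodes are sinks, not internal, so "contracting internal non-$U$ vertices" does not dispose of them, and leaving them in blows the vertex count past $\ell+1$), and the same-color in-edges; a telescoping argument on the cut sizes (\cref{claim:out edges at most in edges,claim:bounding out edges}) shows the latter two families together contain at most $\ell^*$ edges, so the total removed is at most $D$. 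Deleting all $\le D$ edges at $\vplus$ and then separately charging the terminal and same-color edges costs an extra additive $\min\{D,\ell\}$ and misses the stated constant, so the careful split of the removed edges is not optional.
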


The proof of stability begins from the observation---as in our other stability proofs---that we only need to show stability of the flattened graphs (\cref{lem:greedy-flatten}). To show this stability of the flattened graphs, we construct a graph that contains exactly one edge for every edge that differs between projections, and we carefully label and direct each of these edges to create a \emph{difference graph} (see \cref{alg:diff graph}).

If the resulting difference graph were a DAG that satisfied the conditions of \cref{lem:dag ell 32}, then the difference graph would have at most $2\ell\sqrt{k}$ edges (where $\ell$ and $k$ are as defined in \cref{lem:dag ell 32}, and we set $k=\min\{D,\ell\}$). Unfortunately, the difference graph does not satisfy these conditions---but we can show that the difference graph is close to a graph that satisfies these conditions. More specifically, if we carefully remove at most $D$ edges from the difference graph, we can obtain a \emph{pruned difference graph} that has the special, highly structured form described in the conditions of \cref{lem:dag ell 32}. Therefore, the difference graph must contain at most $D + 2\ell \sqrt{\min\{D,\ell\}}$ edges, so the edge distance between projected graph streams is at most $D + 2\ell \sqrt{\min\{D,\ell\}}$.

Before presenting \cref{lem:dag ell 32} we define an \emph{\cut}, a method for creating cuts for a DAG.

\begin{definition}[Order-induced cut]
\label{defn:order induced cut}
    Let $G$ be a DAG on $k$ nodes, and let $v_1,\ldots,v_k$ be a topological ordering of its nodes from left to right (i.e., when the nodes are lined up horizontally in this order, all edges in $G$ go from left to right).

    Given this ordering, the \emph{$\ord{i}$ $\cut$ of $G$} is defined as the cut made by partitioning the nodes into the sets $\{v_1,\ldots,v_i\}$ and $\{v_{i+1},\ldots,v_k\}$. The \emph{$\ord{i}$ $\cut$ set of $G$} is defined as the edges in the cut set of the resulting cut. The cuts and cut sets are depicted in \cref{fig:natural-cuts}.
\end{definition}

In the proof below, we will begin by proving the result in \cref{lem:dag ell 32} about the number of edges in a highly structured DAG. We will then go about creating the difference graph, removing edges to obtain a pruned difference graph, and showing that the pruned difference graph shares the structure of the DAG whose edge count we upper bound. We conclude by upper bounding the number of edges that were removed to create the pruned difference graph.

\begin{figure*}
    \centering
    \scalebox{0.6}{\tikzset{every picture/.style={line width=0.75pt}} %

\begin{tikzpicture}[x=0.75pt,y=0.75pt,yscale=-1,xscale=1]

\draw [color={rgb, 255:red, 155; green, 155; blue, 155 }  ,draw opacity=1 ][fill={rgb, 255:red, 0; green, 0; blue, 0 }  ,fill opacity=0.09 ]   (201.07,2.02) -- (200.74,166.18) ;
\draw [color={rgb, 255:red, 155; green, 155; blue, 155 }  ,draw opacity=1 ][fill={rgb, 255:red, 0; green, 0; blue, 0 }  ,fill opacity=0.09 ]   (200.74,166.18) -- (200.41,330.34) ;

\draw [color={rgb, 255:red, 155; green, 155; blue, 155 }  ,draw opacity=1 ][fill={rgb, 255:red, 0; green, 0; blue, 0 }  ,fill opacity=0.09 ]   (100.73,1) -- (100.4,165.17) ;
\draw [color={rgb, 255:red, 155; green, 155; blue, 155 }  ,draw opacity=1 ][fill={rgb, 255:red, 0; green, 0; blue, 0 }  ,fill opacity=0.09 ]   (100.4,165.17) -- (100.07,329.33) ;

\draw [color={rgb, 255:red, 155; green, 155; blue, 155 }  ,draw opacity=1 ][fill={rgb, 255:red, 0; green, 0; blue, 0 }  ,fill opacity=0.09 ]   (301.42,2.02) -- (301.09,166.18) ;
\draw [color={rgb, 255:red, 155; green, 155; blue, 155 }  ,draw opacity=1 ][fill={rgb, 255:red, 0; green, 0; blue, 0 }  ,fill opacity=0.09 ]   (301.09,166.18) -- (300.76,330.34) ;

\draw [color={rgb, 255:red, 155; green, 155; blue, 155 }  ,draw opacity=1 ][fill={rgb, 255:red, 0; green, 0; blue, 0 }  ,fill opacity=0.09 ]   (400.8,1) -- (400.47,165.17) ;
\draw [color={rgb, 255:red, 155; green, 155; blue, 155 }  ,draw opacity=1 ][fill={rgb, 255:red, 0; green, 0; blue, 0 }  ,fill opacity=0.09 ]   (400.47,165.17) -- (400.14,329.33) ;

\draw [color={rgb, 255:red, 155; green, 155; blue, 155 }  ,draw opacity=1 ][fill={rgb, 255:red, 0; green, 0; blue, 0 }  ,fill opacity=0.09 ]   (501.14,2.02) -- (500.81,166.18) ;
\draw [color={rgb, 255:red, 155; green, 155; blue, 155 }  ,draw opacity=1 ][fill={rgb, 255:red, 0; green, 0; blue, 0 }  ,fill opacity=0.09 ]   (500.81,166.18) -- (500.48,330.34) ;

\draw [color={rgb, 255:red, 155; green, 155; blue, 155 }  ,draw opacity=1 ][fill={rgb, 255:red, 155; green, 155; blue, 155 }  ,fill opacity=0.62 ][line width=0.75]    (600.52,2.02) -- (600.19,166.18) ;
\draw [color={rgb, 255:red, 155; green, 155; blue, 155 }  ,draw opacity=1 ][fill={rgb, 255:red, 155; green, 155; blue, 155 }  ,fill opacity=0.62 ][line width=0.75]    (600.19,166.18) -- (599.86,330.34) ;

\draw  [line width=2.25]  (54.36,129.53) .. controls (68.17,129.53) and (79.36,140.73) .. (79.36,154.54) .. controls (79.36,168.35) and (68.17,179.54) .. (54.36,179.54) .. controls (40.55,179.54) and (29.35,168.35) .. (29.35,154.54) .. controls (29.35,140.73) and (40.55,129.53) .. (54.36,129.53) -- cycle ;
\draw  [line width=2.25]  (154.38,129.53) .. controls (168.19,129.53) and (179.39,140.73) .. (179.39,154.54) .. controls (179.39,168.35) and (168.19,179.54) .. (154.38,179.54) .. controls (140.57,179.54) and (129.38,168.35) .. (129.38,154.54) .. controls (129.38,140.73) and (140.57,129.53) .. (154.38,129.53) -- cycle ;
\draw  [line width=2.25]  (254.4,129.53) .. controls (268.21,129.53) and (279.41,140.73) .. (279.41,154.54) .. controls (279.41,168.35) and (268.21,179.54) .. (254.4,179.54) .. controls (240.59,179.54) and (229.4,168.35) .. (229.4,154.54) .. controls (229.4,140.73) and (240.59,129.53) .. (254.4,129.53) -- cycle ;
\draw  [line width=2.25]  (354.42,129.53) .. controls (368.23,129.53) and (379.43,140.73) .. (379.43,154.54) .. controls (379.43,168.35) and (368.23,179.54) .. (354.42,179.54) .. controls (340.61,179.54) and (329.42,168.35) .. (329.42,154.54) .. controls (329.42,140.73) and (340.61,129.53) .. (354.42,129.53) -- cycle ;
\draw  [line width=2.25]  (454.45,129.53) .. controls (468.26,129.53) and (479.45,140.73) .. (479.45,154.54) .. controls (479.45,168.35) and (468.26,179.54) .. (454.45,179.54) .. controls (440.64,179.54) and (429.44,168.35) .. (429.44,154.54) .. controls (429.44,140.73) and (440.64,129.53) .. (454.45,129.53) -- cycle ;
\draw  [line width=2.25]  (554.47,129.53) .. controls (568.28,129.53) and (579.47,140.73) .. (579.47,154.54) .. controls (579.47,168.35) and (568.28,179.54) .. (554.47,179.54) .. controls (540.66,179.54) and (529.46,168.35) .. (529.46,154.54) .. controls (529.46,140.73) and (540.66,129.53) .. (554.47,129.53) -- cycle ;
\draw  [line width=2.25]  (654.49,129.53) .. controls (668.3,129.53) and (679.5,140.73) .. (679.5,154.54) .. controls (679.5,168.35) and (668.3,179.54) .. (654.49,179.54) .. controls (640.68,179.54) and (629.48,168.35) .. (629.48,154.54) .. controls (629.48,140.73) and (640.68,129.53) .. (654.49,129.53) -- cycle ;

\draw [line width=2.25]    (154.38,179.54) .. controls (168.66,287.43) and (456.6,333.62) .. (553.03,181.85) ;
\draw [shift={(554.47,179.54)}, rotate = 121.57] [fill={rgb, 255:red, 0; green, 0; blue, 0 }  ][line width=0.08]  [draw opacity=0] (14.29,-6.86) -- (0,0) -- (14.29,6.86) -- cycle    ;
\draw [line width=2.25]    (173.92,138.14) .. controls (184.69,119.3) and (288.26,54.05) .. (351.56,126.14) ;
\draw [shift={(354.42,129.53)}, rotate = 230.89] [fill={rgb, 255:red, 0; green, 0; blue, 0 }  ][line width=0.08]  [draw opacity=0] (14.29,-6.86) -- (0,0) -- (14.29,6.86) -- cycle    ;
\draw [line width=2.25]    (154.38,129.53) .. controls (154.79,106.83) and (321.18,-25.86) .. (452.47,127.2) ;
\draw [shift={(454.45,129.53)}, rotate = 229.98] [fill={rgb, 255:red, 0; green, 0; blue, 0 }  ][line width=0.08]  [draw opacity=0] (14.29,-6.86) -- (0,0) -- (14.29,6.86) -- cycle    ;
\draw [line width=2.25]    (79.36,154.54) -- (124.38,154.54) ;
\draw [shift={(129.38,154.54)}, rotate = 180] [fill={rgb, 255:red, 0; green, 0; blue, 0 }  ][line width=0.08]  [draw opacity=0] (14.29,-6.86) -- (0,0) -- (14.29,6.86) -- cycle    ;
\draw [line width=2.25]    (279.41,154.54) -- (324.42,154.54) ;
\draw [shift={(329.42,154.54)}, rotate = 180] [fill={rgb, 255:red, 0; green, 0; blue, 0 }  ][line width=0.08]  [draw opacity=0] (14.29,-6.86) -- (0,0) -- (14.29,6.86) -- cycle    ;
\draw [line width=2.25]    (354.42,179.54) .. controls (356.98,208.75) and (447.17,241.01) .. (530.84,174.06) ;
\draw [shift={(534.65,170.94)}, rotate = 140.01] [fill={rgb, 255:red, 0; green, 0; blue, 0 }  ][line width=0.08]  [draw opacity=0] (14.29,-6.86) -- (0,0) -- (14.29,6.86) -- cycle    ;
\draw [line width=2.25]    (479.45,154.54) -- (524.46,154.54) ;
\draw [shift={(529.46,154.54)}, rotate = 180] [fill={rgb, 255:red, 0; green, 0; blue, 0 }  ][line width=0.08]  [draw opacity=0] (14.29,-6.86) -- (0,0) -- (14.29,6.86) -- cycle    ;
\draw [line width=2.25]    (579.47,154.54) -- (624.48,154.54) ;
\draw [shift={(629.48,154.54)}, rotate = 180] [fill={rgb, 255:red, 0; green, 0; blue, 0 }  ][line width=0.08]  [draw opacity=0] (14.29,-6.86) -- (0,0) -- (14.29,6.86) -- cycle    ;
\end{tikzpicture}}
    \caption{Order-induced cuts. Let the nodes in the figure represent a left-to-right topological ordering of nodes. Each vertical line defines a distinct order-induced cut, where nodes to the left are in one part of a cut, nodes to the right are in the other part of the cut, and the cut-set for the cut defined by a vertical line is the set of edges intersected by that vertical line.}
    \label{fig:natural-cuts}
\end{figure*}

\begin{lemma}[Number of edges in a structured DAG]
\label{lem:dag ell 32}
    Let $k,\ell\in\N\cup \{0\}$. A DAG $G$ with both of the following properties has at most $2\ell\sqrt{k}$ edges:
    \begin{enumerate}
        \item $G$ has at most $\ell + 1$ nodes, and
        \item every order-induced cut set of $G$ has at most $k$ edges.
    \end{enumerate}    
\end{lemma}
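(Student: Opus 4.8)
The plan is to fix a topological ordering $v_1, \dots, v_n$ of $G$ witnessing Property~2 (so $n \le \ell + 1$ and every order-induced cut set relative to this ordering has at most $k$ edges), and to assign to each edge $e = (v_a, v_b)$ with $a < b$ its \emph{length} $\mathrm{len}(e) := b - a \ge 1$. Let $C_i$ denote the $\ord{i}$ order-induced cut set. The first step is a double-counting identity: the edge $(v_a, v_b)$ lies in $C_i$ exactly when $a \le i \le b - 1$, hence in precisely $\mathrm{len}(e)$ of the $n - 1$ cut sets. Summing over all edges,
\[
\sum_{e \in E(G)} \mathrm{len}(e) \;=\; \sum_{i=1}^{n-1} |C_i| \;\le\; (n-1)\,k \;\le\; \ell k .
\]
The second, elementary step is that for each integer $j \ge 1$ there are at most $n - j \le \ell$ edges of length exactly $j$, since such an edge must have the form $(v_a, v_{a+j})$.

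Next I would split the edge set at the threshold $\tau := \floor{\sqrt{k}}$ (first disposing of the trivial case $k = 0$, in which $G$ has no edges at all because every edge crosses some order-induced cut). The ``short'' edges, those of length at most $\tau$, number at most $\tau \cdot \ell \le \sqrt{k}\,\ell$ by the per-length bound. The ``long'' edges, those of length at least $\tau + 1$, each contribute at least $\tau + 1 > \sqrt{k}$ to the length sum, so by the double-counting inequality there are fewer than $\ell k / \sqrt{k} = \ell\sqrt{k}$ of them. Adding the two counts gives $|E(G)| < \sqrt{k}\,\ell + \ell\sqrt{k} = 2\ell\sqrt{k}$, proving the lemma (the case $\ell = 0$, where $G$ has at most one node and hence no edges, is immediate as well). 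A reference to \cref{fig:natural-cuts} makes the double-counting step transparent.

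There is no genuinely hard step here; the one point that needs care is the interaction between integrality and the optimization over $\tau$. Taking the real-valued optimum $\sqrt{k}$ directly would leave an extra additive $O(\ell)$ slack after rounding, so I would instead use $\tau = \floor{\sqrt{k}}$ and verify that the loss $\tau \le \sqrt{k}$ in the short count is exactly compensated by the gain $\tau + 1 > \sqrt{k}$ in the long count, so that the two halves still sum to $2\ell\sqrt{k}$. I would also confirm the boundary cases $k = 0$, $\ell = 0$, and $\tau + 1 > n$ (no long edges exist), each of which is immediate.
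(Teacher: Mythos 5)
Your proposal is correct and follows essentially the same route as the paper: both split the edges at length threshold $\approx\sqrt{k}$, count the short edges directly (at most $\ell$ per length, at most $\sqrt{k}$ lengths), and bound the long edges via the fact that an edge of length $\mathrm{len}(e)$ lies in exactly $\mathrm{len}(e)$ of the at-most-$\ell$ order-induced cut sets, each of size at most $k$. The paper phrases this last step as an expectation over a uniformly random cut rather than your explicit double-counting identity, and works with the real threshold $\sqrt{k}$ rather than $\floor{\sqrt{k}}$, but these are presentational variants of the same argument; your integrality bookkeeping is sound.
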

\begin{proof}[Proof of \cref{lem:dag ell 32}]
    Let $G$ be a DAG with the properties described in the statement of \cref{lem:dag ell 32} above, and let $k^*\leq k$ and $\ell^* \leq \ell$ be an upper bound on the number of edges in $\cut$ sets in the DAG and one less than the number of nodes in the DAG, respectively. Because $G$ is a DAG, it must have a topological ordering---that is, there is a way to order its nodes horizontally from left to right such that all edges in the graph go from left to right. We enumerate these sorted nodes from left to right as $v_1,\ldots,v_{\ell^* + 1}$.
    
    Throughout this proof,
    we define the \emph{length of edge $e = (v_j,v_{j'})$} in the DAG as $j'-j$.
    To bound the number of edges in this DAG, we bound the number of ``short'' edges and the number of ``long'' edges. More precisely, let $F$ be the set of edges in the DAG, which we separate into the following two sets according to edge length: $F_0$, the set of edges with length at most $\sqrt{k^*}$; and $F_1$, the set of edges with length greater than $\sqrt{k^*}$.

    We first bound the cardinality of $F_0$. All of the edges in the DAG go from left to right and there are $\ell^*+1$ nodes, so there are $\ell^*$ starting points for edges, and there are at most $\sqrt{k^*}$ choices of endpoints for edges of length at most $\sqrt{k^*}$ from each starting point. This gives us
    \[
        |F_0| \leq \ell^* \sqrt{k^*}.
    \]

    We next bound the cardinality of $F_1$. We use a probabilistic argument. Let $F_1^s$ denote the set of edges in $F_1$ that are in the $\ord{s}$ $\cut$ set. In other words, $F_1^s$ denotes the number of edges in the $\ord{s}$ $\cut$ set that have length greater than $\sqrt{k^*}$. We now compute the expected cardinality of $F_1^s$ for a uniformly random choice of $s\in[\ell^*]$ (since there are $\ell^*$ locations at which we could make a cut), and we relate it to the cardinality of $F_1$ to get a bound for $|F_1|$. Note that we have $|F_1^s| \leq k^*$ by condition (2) of \cref{lem:dag ell 32}. This gives us the following series of inequalities:
    \begin{align*}
        k^*
        &\geq \Ex_s \left[ |F_1^s| \right] \\
        &= \sum_{e\in F_1} \Pr_s\left[ \text{$e\in F_1^s$} \right] \tag{law of total expectation} \\
        &> |F_1| \cdot \frac{\sqrt{k^*}}{\ell^*} \tag{$|F_1|$ edges, each in the cut set w.p. greater than $\frac{\sqrt{k^*}}{\ell^*}$}.
    \end{align*}
    Solving for $|F_1|$ gives us
    \(
        |F_1| < \ell^*\sqrt{k^*}.
    \)

    We now combine our bounds on $|F_0|$ and $|F_1|$ to get
    \[
        |F| = |F_0| + |F_1| \leq 2\ell^*\sqrt{k^*} \leq 2\ell \sqrt{k},
    \]
    which is what we wanted to show.
\end{proof}

\subsubsection{(Pruned) Difference Graphs and Their Properties}
\label{sec:difference graphs}

We next show that the pruned difference graph has the structure described in \cref{lem:dag ell 32}. We begin by defining a difference graph. The algorithm for creating a difference graph is provided in \cref{alg:diff graph}. On an input of two node-neighboring graph streams $\gs\nnode \gs'$, it returns a graph that encodes the edges that differ between the projections of $\gs$ and $\gs'$. The resulting \emph{difference graph} will consist of directed edges, each colored red or blue and labeled with an integer.

\begin{figure*}[ht!]
\begin{mdframed}
    \centering
    \tikzset{every picture/.style={line width=0.75pt}}        

\begin{tikzpicture}[x=0.75pt,y=0.75pt,yscale=-1,xscale=1]
\draw [color={rgb,255:red,0;green,0;blue,0}, draw opacity=1] 
  (71.26,17.57) .. controls (75.4,17.57) and (78.76,20.93) .. (78.76,25.07) ..
  controls (78.76,29.21) and (75.4,32.57) .. (71.26,32.57) ..
  controls (67.11,32.57) and (63.76,29.21) .. (63.76,25.07) ..
  controls (63.76,20.93) and (67.11,17.57) .. (71.26,17.57) -- cycle ;

\draw [color={rgb,255:red,0;green,0;blue,0}, draw opacity=1] 
  (121.03,17.35) .. controls (125.18,17.35) and (128.53,20.71) .. (128.53,24.85) ..
  controls (128.53,28.99) and (125.18,32.35) .. (121.03,32.35) ..
  controls (116.89,32.35) and (113.53,28.99) .. (113.53,24.85) ..
  controls (113.53,20.71) and (116.89,17.35) .. (121.03,17.35) -- cycle ;

\draw [color={rgb,255:red,252;green,2;blue,2}, draw opacity=1]   
  (78.76,25.07) -- node[above,midway,yshift=2pt,color=black]{$i$} (113.53,24.85);
\draw [shift={(113.53,24.85)}, rotate=179.63] 
  [color={rgb,255:red,252;green,2;blue,2}, draw opacity=1, line width=0.75] 
  (10.93,-3.29) .. controls (6.95,-1.4) and (3.31,-0.3) .. (0,0) ..
  controls (3.31,0.3) and (6.95,1.4) .. (10.93,3.29);

\draw [color={rgb,255:red,0;green,0;blue,0}, draw opacity=1] 
  (71.26,70.57) .. controls (75.4,70.57) and (78.76,73.92) .. (78.76,78.07) ..
  controls (78.76,82.21) and (75.4,85.57) .. (71.26,85.57) ..
  controls (67.11,85.57) and (63.76,82.21) .. (63.76,78.07) ..
  controls (63.76,73.92) and (67.11,70.57) .. (71.26,70.57) -- cycle ;

\draw [color={rgb,255:red,0;green,0;blue,0}, draw opacity=1] 
  (121.03,70.34) .. controls (125.18,70.34) and (128.53,73.7) .. (128.53,77.84) ..
  controls (128.53,81.99) and (125.18,85.34) .. (121.03,85.34) ..
  controls (116.89,85.34) and (113.53,81.99) .. (113.53,77.84) ..
  controls (113.53,73.7) and (116.89,70.34) .. (121.03,70.34) -- cycle ;

\draw [color={rgb,255:red,18;green,3;blue,253}, draw opacity=1]   
  (78.76,78.07) -- node[above,midway,yshift=2pt,color=black]{$i$} (113.53,77.84);
\draw [shift={(113.53,77.84)}, rotate=179.63] 
  [color={rgb,255:red,18;green,3;blue,253}, draw opacity=1, line width=0.75] 
  (10.93,-3.29) .. controls (6.95,-1.4) and (3.31,-0.3) .. (0,0) ..
  controls (3.31,0.3) and (6.95,1.4) .. (10.93,3.29);

\draw (311,145) node[text width=400pt, align=justify] 
{The direction of $(u,v)$ indicates that, in the stream whose projection is missing the edge, $u$ saturated before $v$ (and, since $\{u,v\}$ was not included, before $\{u,v\}$ was considered).};

\draw (65.8,73.3) node[anchor=north west, inner sep=0.75pt]{$u$};
\draw (116.13,73.3) node[anchor=north west, inner sep=0.75pt]{$v$};
\draw (66,20.3) node[anchor=north west, inner sep=0.75pt]{$u$};
\draw (116.13,20.3) node[anchor=north west, inner sep=0.75pt]{$v$};

\draw (137.7,69.79) node[anchor=north west, inner sep=0.75pt, text width=250pt, align=justify]
{an edge not in the projection of $\gs$, considered at projection stage $i$, that is \textcolor[rgb]{0.05,0.08,0.96}{\textit{added (blue)}} in the projection of $\gs'$. \\$(\{u,v\}\notin\projp(\gs), \{u,v\}\in\projp(\gs'))$};

\draw (137.7,16.24) node[anchor=north west, inner sep=0.75pt, text width=250pt, align=justify]
{an edge in the projection of $\gs$, considered at projection stage $i$, that is \textcolor[rgb]{1,0,0}{\textit{deleted (red)}} in the projection of $\gs'$. \\$(\{u,v\}\in\projp(\gs), \{u,v\}\notin\projp(\gs'))$};

\end{tikzpicture}
\end{mdframed}
    \caption{Edges in a difference graph}
    \label{fig:difference-graph-edges}
\end{figure*}

As shown in \cref{fig:difference-graph-edges}, red edges in the difference graph represent edges that appear in $\Pi_D(\gs)$ and not $\Pi_D(\gs')$, while blue edges represent edges that appear in $\Pi_D(\gs')$ and not $\Pi_D(\gs)$.
The label on the edge corresponds to when the projection algorithm considers it for addition.
An edge $\{u,v\}$ is directed as $(u,v)$ if $u$ is saturated before $v$ in the projection missing the edge, and is otherwise directed as $(v,u)$; the details on directing edges are provided in \cref{alg:diff graph}.
All edges that differ between projections are contained in the difference graph (\cref{lem:dg has all}).

\begin{algorithm}[ht!]
    \caption{Algorithm $\dg$ for difference graphs.}
    \label{alg:diff graph}
    \begin{algorithmic}[1]
        \Statex \textbf{Input:} Node-neighboring graph streams $\gs\simeq\gs'$ of length $T$, degree bound $D\in\N$, time $t\in [T]$. Let $\projp$ be \cref{alg:time aware projection} with $\style = \projection$. (Without loss of generality, assume that $\gs'$ is the ``larger'' graph stream with an additional node $\vplus$.)
        \Statex \textbf{Output:} Colored, directed graph $\gdiff$ with labeled edges, where $V(\gdiff) = V(\gs')$.
        \State $\Eblue, \Ered \gets \emptyset$
        \State $F \gets \flatten{\projp(\gs)_{[t]}}$, $F' \gets \flatten{\projp(\gs')_{[t]}}$
        \State $\satstep_{\gs}(\vplus)\gets -1$ \label{line:satstep vplus}
        \yoakland{\State} \Comment{\commentstyle{Redefine $\satstep_{\gs}(\vplus)$ to evaluate to $-1$}}

        \For{projection stage $i$ of $\projp(S')$}
            \State Consider edge $e = \{u,v\}$ processed during projection stage $i$
            \If{$e\in F \cap F'$} ignore $e$\label{line:ignore edge}
            \ElsIf{$e\in F$}
            \yoakland{\State} \Comment{\commentstyle{Directing red edges}}\label{line:add edge F}
                \If{$\satstep_{\gs'}(u) < \satstep_{\gs'}(v)$}\label{line:direct red} 
                    \State add $(u,v)$ with label $i$ to $\Ered$ 
                \Else
                    \State add $(v,u)$ with label $i$ to $\Ered$ 
                \EndIf
            \ElsIf{$e\in F'$}
            \yoakland{\State} \Comment{\commentstyle{Directing blue edges}}\label{line:add edge F prime}
                \If{$\satstep_{\gs}(u) < \satstep_{\gs}(v)$} \label{line:direct blue} 
                    \State add $(u,v)$ with label $i$ to $\Eblue$
                \Else
                    \State add $(v,u)$ with label $i$ to $\Eblue$
                \EndIf
            \EndIf
        \EndFor
        \State \Return $V(F'), (\Eblue, \Ered)$
    \end{algorithmic}
\end{algorithm}

\begin{lemma}[The difference graph contains all differing edges]
\label{lem:dg has all}
    Let $\gs\nnode \gs'$ be a pair of node-neighboring graph streams of length $T$, let $t\in[T]$, and let $\projp$ be \cref{alg:time aware projection} with \inccrit{} $\style = \projection$. An edge differs between $\projp(\gs)_{[t]}$ and $\projp(\gs')_{[t]}$ if and only if the edge is in $\dg((\gs,\gs'),D,t)$. 
\end{lemma}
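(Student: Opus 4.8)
The plan is to first reduce the stream-level statement to a statement about the flattened graphs $F = \flatten{\projp(\gs)_{[t]}}$ and $F' = \flatten{\projp(\gs')_{[t]}}$, and then verify the biconditional by a direct case analysis of the branching in \cref{alg:diff graph}. Throughout, assume without loss of generality that $\gs'$ is the larger stream, with extra node $\vplus$. The first step I would carry out is to show that an edge differs between the projected streams $\projp(\gs)_{[t]}$ and $\projp(\gs')_{[t]}$ if and only if it lies in exactly one of $F$ and $F'$: since $\gs \nnode \gs'$, every edge of $\gs$ arrives at the same time step in $\gs'$, and since $\projp$ is shortsighted it commits to or discards each edge at its arrival time, so no edge can appear at two different times across the two projected streams. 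Hence the ``appears at a different time step'' case in the definition of a differing edge (\cref{defn:differ}) never arises, and ``differs between the projected streams'' is the same as ``appears in exactly one of $F$ and $F'$.'' This is the same principle that underlies the Flattening Lemma (\cref{lem:greedy-flatten}), and I expect it to be the most delicate part of the write-up. After this reduction, it suffices to prove that $e$ lies in exactly one of $F$ and $F'$ if and only if $e \in \dg((\gs,\gs'),D,t)$.

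For the forward direction, I would take $e$ lying in exactly one of $F$ and $F'$ and note that, because a projected stream's edge set is contained in its input stream's edge set (again by shortsightedness) and $\flatten{\gs_{[t]}} \subseteq \flatten{\gs'_{[t]}}$, in either case $e$ is an edge of $\gs'$ arriving at a time $\le t$; hence $e$ is examined at some projection stage $i$ of $\projp(\gs')$ in the main loop of \cref{alg:diff graph}. When $e$ is examined, one of exactly two branches fires: if $e \in F \setminus F'$ then $e \notin F \cap F'$ and $e \in F$, so $e$ enters $\Ered$; if $e \in F' \setminus F$ then $e \notin F \cap F'$ and $e \notin F$ while $e \in F'$, so $e$ enters $\Eblue$. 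Either way $e \in \dg((\gs,\gs'),D,t)$.

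For the reverse direction, suppose $e \in \dg((\gs,\gs'),D,t)$, so $e$ was placed into $\Ered$ or $\Eblue$. Inspecting the guards in \cref{alg:diff graph}: the red branch is reached only when $e \notin F \cap F'$ and $e \in F$, i.e.\ $e \in F \setminus F'$; the blue branch only when $e \notin F \cap F'$, $e \notin F$, and $e \in F'$, i.e.\ $e \in F' \setminus F$. In both cases $e$ lies in exactly one of $F$ and $F'$, which closes the equivalence.

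The bulk of the argument is mechanical, so the main obstacle is really just the opening reduction — that ``differing between the projected streams'' and ``differing between the flattened projections'' coincide — together with the (easy but necessary) observation that every edge of $F$ or $F'$ is in fact considered by the loop of \cref{alg:diff graph}, which relies on a projection being an edge-subgraph of its input and on $\gs \subseteq \gs'$.
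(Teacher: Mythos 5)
Your proposal is correct and follows essentially the same route as the paper's proof: reduce to the flattened graphs (the paper cites \cref{lem:greedy-flatten} where you re-derive the relevant shortsightedness argument inline) and then read off the biconditional from the branch guards in \cref{alg:diff graph}. Your added observation that every edge of $F$ or $F'$ is actually reached by the loop is a detail the paper leaves implicit, but it is the same argument.
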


\begin{proof}[Proof of \cref{lem:dg has all}]
     By \cref{lem:greedy-flatten}, we only need to consider how the flattened graphs differ.

     We first show the $\Rightarrow$ direction. By Line~\ref{line:ignore edge} of \cref{alg:diff graph}, we only ignore an edge (i.e., don't add it to the difference graph) if it appears in both flattened graphs. Therefore, if an edge differs between projections of node neighbors, then it appears in the difference graph, which is what we wanted to show.
     
     We next show the $\Leftarrow$ direction. By Line~\ref{line:ignore edge} edges in both flattened graphs are ignored, and by Lines~\ref{line:add edge F} and \ref{line:add edge F prime} edges in one flattened graph but not the other are added to the difference graph.
\end{proof}

\begin{remark}[Interpreting a difference graph]
    \cref{alg:diff graph} returns three sets: a set $V(\gdiff)$ of vertices; a set $\Eblue$ of labelled, directed, blue edges; and a set $\Ered$ of labelled, directed, red edges. A difference graph is the directed graph formed on the vertices in $V(\gdiff)$, where edges in $\Eblue$ are colored blue and labelled according to their labels, and edges in $\Ered$ are colored red and labelled according to their labels.
\end{remark}

An additional property, which we prove in \cref{prop:outgoing same color}, is that all out-edges from a node $v$ in a difference graph have the same color. For a node $v$, we let $\outcolor(v)$ denote this color of its out-edges.

\begin{definition}[Out color of a node]
\label{defn:outcolor}
    Let $G$ be a difference graph, and for edge $e$ in the difference graph, let $\ecolor(e)$ denote its color. All out-edges $e$ from a node in a difference graph have the same color $\ecolor(e)$ (\cref{prop:outgoing same color}). The \emph{out color} of a node $v$ in $G$, denoted $\outcolor(v)$, is the color of the out-edges from $v$.
\end{definition}

Now that we have defined a difference graph, we can present the definition of a \emph{pruned difference graph} (\cref{alg:pruned diff graph}), which we will later show has the same structure as the DAG described in \cref{lem:dag ell 32}. A pruned difference graph removes two types of edges from a difference graph: every edge that (1) is an in-edge to a node that, in the larger graph stream $\gs'_{[t]}$, has degree at most $D$ or (2) is an in-edge to a node $v$ and itself has color $\outcolor(v)$ (all out-edges from a node in a difference graph have the same color---see \cref{prop:outgoing same color}). We also remove all nodes in the difference graph that, after removing edges to create the pruned difference graph, have degree 0. 

\begin{algorithm}[ht!]
    \caption{Algorithm $\pdg$ for pruned difference graphs.}
    \label{alg:pruned diff graph}
    \begin{algorithmic}[1]
        \Statex \textbf{Input:} Difference graph $\gdiff$ returned by \cref{alg:diff graph}, time $t$ and graph stream $\gs'$ input to \cref{alg:diff graph}.
        \Statex \textbf{Output:} Colored, directed graph with labeled edges.
        \State Parse difference graph $\gdiff$ as $V,(\Eblue,\Ered)$
        \State $F' \gets \flatten{\gs'_{[t]}}$
        \For{edge $e = (u,v)$ in $\Eblue \cup \Ered$}
            \If{$\ecolor(e) = \outcolor(v)$} \Comment{see \cref{defn:outcolor}.}
            \label{line:remove same color edge}
                \State Remove $e$ from $E_{\ecolor(e)}$ 
            \EndIf
            \If{$\deg_{v}(F')\leq D$}\label{line:remove low deg}
                \State Remove $e$ from $E_{\ecolor(e)}$
            \EndIf
        \EndFor
        \State $F_\mathit{prune} = \flatten{V, \Eblue \cup \Ered}$
        \For{$v$ in $V$}
            \If{$\deg_v(F_\mathit{prune}) = 0$} remove $v$ from $V$\label{line:remove nodes}
            \EndIf
        \EndFor
        \State Return $V, (\Eblue, \Ered)$
    \end{algorithmic}
\end{algorithm}

To show that \cref{lem:dag ell 32} applies to a pruned difference graph produced by \cref{alg:pruned diff graph}, we need to prove the first three items of \cref{lemma:pdg is dag etc} below; the fourth item shows that the pruned difference graph differs in few edges from the original difference graph.

\begin{lemma}
\label{lemma:pdg is dag etc}
    Let $T,D\in\N$. Let $\gs\nnode \gs'$ be a pair of node-neighboring graph streams of length $T$, where without loss of generality $\gs'$ has one additional node $v^+$ (and its associated edges) as compared to $\gs$, and let be $(D,\ell)$-bounded through time $t\in[T]$. Let $\gdiff$ be the difference graph output by \cref{alg:diff graph} when run with degree bound $D$ and time $t$ on graph streams $\gs,\gs'$, and let $\gdiff^*$ be the pruned difference graph returned by \cref{alg:pruned diff graph} when run on $\gdiff, \gs'$, and $t$. 
    
    The pruned difference graph $\gdiff^*$ has the following properties:
    \begin{enumerate}
        \item $\gdiff^*$ is a DAG.\label{lem-item:pdg is a dag}
        \item $\gdiff^*$ has at most $\ell + 1$ nodes.
        \item Every $\cut$ in $\gdiff^*$ contains at most $\min\{D,\ell\}$ edges.
        \item At most $D$ edges were removed from $\gdiff$ to make $\gdiff^*$.
    \end{enumerate}
\end{lemma}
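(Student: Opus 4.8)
The plan is to work throughout with the \emph{flattened} projected graphs, which is legitimate by the Flattening Lemma (\cref{lem:greedy-flatten}), and to fix notation: assume without loss of generality that $\gs'$ is the larger stream, with extra node $\vplus$; write $F=\flatten{\projp(\gs)_{[t]}}$, $F'=\flatten{\projp(\gs')_{[t]}}$ and $\hat F=\flatten{\gs'_{[t]}}$; and set $H=\set{w:\deg_w(\hat F)>D}$, so $|H|\le\ell$ by $(D,\ell)$-boundedness. By \cref{lem:dg has all}, $E(\gdiff)$ is exactly the set of edges differing between $F$ and $F'$, and by \cref{lem:edges dont change} every such edge has an endpoint in $H\cup\set{\vplus}$. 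The first step is a structural claim about the orientation chosen by \cref{alg:diff graph}: \emph{every edge of $\gdiff$ has its tail in $H\cup\set{\vplus}$, every $\gdiff$-edge incident to $\vplus$ is oriented out of $\vplus$ and is blue, and hence $\vplus$ has in-degree $0$ and out-degree at most $D$ in $\gdiff$}. The statement about $\vplus$ is immediate from the convention $\satstep_{\gs}(\vplus)=-1$ (\cref{line:satstep vplus}), the fact that red edges lie in $\projp(\gs)\subseteq\gs_{[t]}$ (so cannot touch $\vplus$), and the degree bound on $\projp(\gs')$. For the remaining (non-$\vplus$) edges: a red edge $(u,v)$ lies in $\projp(\gs)\setminus\projp(\gs')$ and is oriented so $u$ saturates first in $\projp(\gs')$; since some endpoint is saturated in $\projp(\gs')$ when the edge is processed, $u$ is, so $\deg_u(\projp(\gs'))=D$; but $\set{u,v}$ is an edge of $\gs'_{[t]}$ incident to $u$ that is missing from $\projp(\gs')$, forcing $\deg_u(\hat F)>D$, i.e. $u\in H$. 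A symmetric argument (with $\gs$ in place of $\gs'$, using that a non-$\vplus$ blue edge is an edge of $\gs_{[t]}$) handles blue edges.

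Given this, \textbf{item 2} follows quickly. After the low-degree-head pruning (\cref{line:remove low deg}), every surviving edge has head $v$ with $\deg_v(\hat F)>D$, i.e. $v\in H$; its tail lies in $H\cup\set{\vplus}$ by the structural claim; so every endpoint of a surviving edge lies in $H\cup\set{\vplus}$, and after deleting degree-$0$ nodes (\cref{line:remove nodes}), $\gdiff^*$ has at most $|H|+1\le\ell+1$ nodes.

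For \textbf{items 1 and 3} I would exploit the ``cascade'' structure of $\gdiff$. Using \cref{prop:outgoing same color}, the same-colored-in-edge pruning (\cref{line:remove same color edge}) leaves in $\gdiff^*$ only edges joining a node whose out-edges are blue to a node whose out-edges are red (a sink matching neither colour). Tracking how an \emph{added} out-edge of $\vplus$ saturates some node one projection stage early, which forces a \emph{dropped} (red) edge, whose head is then saturated one stage late, which forces an \emph{added} (blue) edge, and so on, one shows that $E(\gdiff)$ decomposes into at most $D$ directed, colour-alternating paths --- at most $D$ because each path is initiated by a distinct out-edge of $\vplus$ --- and that, after the head-degree pruning, the portion of each path surviving in $\gdiff^*$ runs among the $\le\ell$ nodes of $H$, so there are also at most $\ell$ of them, hence at most $\min\set{D,\ell}$. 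Since the edges of a directed path in a DAG are monotone in every topological order, each such path meets each order-induced cut set at most once, which gives \textbf{item 3}; and the absence of directed cycles --- combining the colour alternation with the monotonicity of red edges in the $\satstep_{\gs'}$-order and of blue edges in the $\satstep_{\gs}$-order along any hypothetical cycle --- gives \textbf{item 1}. Finally, \textbf{item 4} identifies the removed edges (in-edges to non-$H$ nodes, and in-edges to $H$ nodes matching the head's out-colour) with the initial segments of these cascade paths, equivalently exhibiting an injection from removed edges into the out-edges of $\vplus$, of which there are at most $D$.

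The main obstacle is the cascade analysis underlying items 1, 3, and 4: making rigorous that $\gdiff$ decomposes into at most $D$ colour-alternating directed paths anchored at $\vplus$'s edges, that the pruning deletes exactly the initial segments of these paths, and that the surviving segments live on $H$. This requires carefully relating, for each processed edge and its label, the saturation stages of its two endpoints in $\projp(\gs)$ versus $\projp(\gs')$; it is where essentially all the work lies, whereas item 2 and the DAG property fall out of the orientation and out-colour observations above. The lemma then feeds \cref{lem:dag ell 32} (items 2 and 3 bound $|E(\gdiff^*)|\le 2\ell\sqrt{\min\set{D,\ell}}$) plus item 4 (adding $D$) to yield the edge-distance bound.
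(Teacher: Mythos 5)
Your overall architecture is sound and, modulo reformulation, tracks the paper's: your ``structural claim'' (every tail lies in $H\cup\set{\vplus}$, $\vplus$ is a blue source of out-degree at most $D$) is exactly the paper's \cref{property:no out edge if small degree} plus the $\vplus$ case of \cref{prop:outgoing same color}, and your derivations of it and of item~2 are correct and essentially identical to the paper's. Your acyclicity argument via ``red edges are monotone in $\satstep_{\gs'}$, blue edges in $\satstep_{\gs}$'' is also the paper's item~1 proof in disguise (the paper phrases it as: in $\gdiff^*$ the labels of all in-edges to a node precede the labels of all its out-edges, using that after \cref{line:remove same color edge} every in-edge has the opposite colour of the out-edges).

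The genuine gap is the cascade claim carrying items~1, 3 and~4: you assert, but do not prove, that $E(\gdiff)$ decomposes into at most $\min\set{D,\ell}$ edge-disjoint colour-alternating directed paths anchored at $\vplus$, and that the pruning removes exactly initial segments of these paths. As stated this is not obviously true --- nothing you have shown rules out a node of $\gdiff^*$ whose out-degree exceeds its in-degree, which would make a path decomposition rooted at $\vplus$ impossible and would break the ``each path meets each order-induced cut at most once'' bound. The missing ingredient is a local degree-balance fact, which is how the paper closes the argument (\cref{claim:out edges at most in edges}): if $v\neq\vplus$ has $k$ red out-edges in $\gdiff$ with smallest label $b$, then just before stage $b$ the degree of $v$ is $D$ in $\projp(\gs')$ but at most $D-k$ in $\projp(\gs)$, so $v$ already has at least $k$ blue in-edges, all of which survive pruning because $v\in H$ and they have colour opposite to $\outcolor(v)$; hence $\mathsf{out}\text{-}\mathsf{deg}_{\gdiff}(v)\le\mathsf{in}\text{-}\mathsf{deg}_{\gdiff^*}(v)$. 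From this the paper gets item~3 directly (order-induced cut sets are non-increasing from $|E_1|\le\min\set{D,\ell}$, \cref{claim:cut set cards}) and item~4 by charging the at most $\ell^*$ non-trivially pruned out-edges against the cut-set decrements plus the at most $D-\ell^*$ edges $\vplus\to v$ with $\deg_v(\gs'_{[t]})\le D$ (\cref{claim:bounding out edges}). If you prove the degree-balance inequality, your path decomposition does follow (by a standard flow-decomposition argument) and your route becomes equivalent to the paper's; without it, items~1(partially), 3 and~4 rest on an unproven structural assertion, as you yourself flag.
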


Before proving \cref{lemma:pdg is dag etc}, we show how its result can be used to complete the proof of \cref{thrm:node-edge stab proj}. We then use the remainder of \cref{sec:node edge projected} to prove \cref{lemma:pdg is dag etc}.

\begin{proof}[Proof of \cref{thrm:node-edge stab proj}]
    By items (1), (2), and (3) of \cref{lemma:pdg is dag etc}, we see that the pruned difference graph satisfies the conditions of \cref{lem:dag ell 32}. Therefore, the pruned difference graph has at most $2\ell \sqrt{\min\set{D,\ell}}$ edges. Item (4) of \cref{lemma:pdg is dag etc} tells us that the pruned difference graph differs from the difference graph on at most $D$ edges. By \cref{lem:dg has all} the difference graph contains all differing edges, which means that at most $D + 2\ell \sqrt{\min\set{D,\ell}}$ edges differ between the projections, through time $t$, of node-neighboring graph streams that are $(D,\ell)$-bounded through time $t$. This is what we wanted to show.
\end{proof}

\begin{proof}[Proof of \cref{lemma:pdg is dag etc}] We prove each part of \cref{lemma:pdg is dag etc} below.

\textbf{Proof of item (1).} We will show that, in the pruned difference graph $\gdiff^*$, the labels on all in-edges to nodes precede the labels on all out-edges from nodes. This means that the graph is acyclic, which will complete the proof of item (1). We first show that all out-edges from a node $v$ have the same color.

\begin{claim}[Out-edges have the same color]
\label{prop:outgoing same color}
    Let $\gdiff$ be as defined in \cref{lemma:pdg is dag etc}. For every node $v$ in $\gdiff$, all of its out-edges must have the same color.
\end{claim}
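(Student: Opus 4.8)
The plan is a proof by contradiction. Suppose some node $v$ of $\gdiff$ has an outgoing red edge $(v,w)$ and an outgoing blue edge $(v,x)$; I will derive two incompatible statements about the order in which $\projp$ considers $\{v,w\}$ and $\{v,x\}$. Write $F = \flatten{\projp(\gs)_{[t]}}$ and $F' = \flatten{\projp(\gs')_{[t]}}$, as in \cref{alg:diff graph}. First I would dispose of the case $v = \vplus$: since $\vplus \notin V(F)$, no edge of $F$ touches $\vplus$, so $\vplus$ has no red edges at all, and every blue edge $\{\vplus,y\}$ is oriented out of $\vplus$ because Line~\ref{line:satstep vplus} sets $\satstep_{\gs}(\vplus) = -1 < \satstep_{\gs}(y)$; hence the claim is trivial for $\vplus$. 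For $v \neq \vplus$ the same facts force $w \neq \vplus$ (as $\{v,w\} \in F$, which is $\vplus$-free) and $x \neq \vplus$ (a blue edge incident to $\vplus$ points away from $\vplus$), so both $\{v,w\}$ and $\{v,x\}$ are edges present in \emph{both} $\gs$ and $\gs'$.

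Next I would unpack what each out-edge says about when $v$ saturates. The red out-edge $(v,w)$ means $\{v,w\} \in F \setminus F'$ and $\satstep_{\gs'}(v) < \satstep_{\gs'}(w)$. Since $\{v,w\}$ is added by $\projp(\gs)$, node $v$ is not yet saturated in $\gs$ when $\{v,w\}$ is processed, so the projection stage of $\{v,w\}$ in $\gs$ is at most $\satstep_{\gs}(v)$. Since $\{v,w\}$ is \emph{not} added by $\projp(\gs')$, at least one of its endpoints is already saturated there when $\{v,w\}$ is processed; as $\satstep_{\gs'}(v) < \satstep_{\gs'}(w)$, this forces $v$ to be saturated in $\gs'$ at that point, so the stage of $\{v,w\}$ in $\gs'$ is strictly larger than $\satstep_{\gs'}(v)$. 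Symmetrically, the blue out-edge $(v,x)$ (meaning $\{v,x\} \in F' \setminus F$ and $\satstep_{\gs}(v) < \satstep_{\gs}(x)$) yields that the stage of $\{v,x\}$ in $\gs'$ is at most $\satstep_{\gs'}(v)$, while its stage in $\gs$ is strictly larger than $\satstep_{\gs}(v)$. In particular $\satstep_{\gs}(v), \satstep_{\gs'}(v) < \infty$. Combining these four inequalities: in $\gs$ the edge $\{v,w\}$ is considered strictly before $\{v,x\}$, whereas in $\gs'$ the edge $\{v,x\}$ is considered strictly before $\{v,w\}$.

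Finally I would invoke consistency of the ordering. The edges $\{v,w\}$ and $\{v,x\}$ occur in $\gs$ and $\gs'$ with the same arrival times and are broken by the same identifier rule, so $\projp$ considers them in the same relative order on both streams---interleaving the $\vplus$-edges in $\gs'$ inserts new edges but cannot reverse the order of two pre-existing ones. This contradicts the conclusion of the previous paragraph, so no node of $\gdiff$ can have out-edges of both colors, which proves the claim.

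The step I expect to demand the most care is the saturation bookkeeping in the middle paragraph: arguing rigorously, from the definition of $\satstep$ as the \emph{first} stage at which a degree counter reaches $D$, that an edge blocked in the stream where $v$ saturates before its other endpoint is blocked \emph{by $v$}, and keeping straight that red edges are governed by saturation stages in $\gs'$ while blue edges are governed by saturation stages in $\gs$. Once each ``$v$ is already full when this edge is processed'' statement is pinned down, the contradiction is immediate.
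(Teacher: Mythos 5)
Your proof is correct and follows essentially the same route as the paper's: extract from the red out-edge that $v$ is still unsaturated in $\gs$ but already saturated in $\gs'$ when that edge is considered, extract the mirror-image facts from the blue out-edge, and derive a contradiction (the $\vplus$ case is also handled identically). The only difference is the last step---the paper concludes directly that $\satstep_{\gs'}(v) < \satstep_{\gs}(v)$ and $\satstep_{\gs}(v) < \satstep_{\gs'}(v)$, whereas you convert the same four facts into a reversal of the relative order of the two edges, contradicting the consistent ordering; your packaging is, if anything, slightly more careful about the fact that projection stages are indexed separately in the two streams.
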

\begin{proof}[Proof of \cref{prop:outgoing same color}]
Let $\gs$ and $\gs'$ be the node-neighboring graph streams described in \cref{lemma:pdg is dag etc}.
Assume, for the sake of contradiction, that a node in $\gdiff$ has out-edges with different colors. More precisely, assume that node $u$ in $\gdiff$ has red edge $e_\red = (u,v)$ and $e_\blue = (u,w)$.

There are two cases to consider: $u = \vplus$ and $u\neq \vplus$. If $u = \vplus$, then the claim follows immediately since all edges incident to $\vplus$ are in $S'$ and not in $S$, and since $\satstep_{\gs}(\vplus)$ is less than $\satstep_{\gs}$ for all other nodes by Line~\ref{line:satstep vplus}, so all edges incident to $\vplus$ in $\gdiff$ must be blue out-edges.

Now consider the case where $u\neq \vplus$. The existence of $e_\red$ means the edge $\{u,v\}$ is in $\gs$ and not in $\gs'$, so we know that either $\satstep_{\gs'}(u) < \satstep_{\gs}(u)$ or $\satstep_{\gs'}(v) < \satstep_{\gs}(u)$. By the direction of $e_\red$ and Line~\ref{line:direct red}, we have that $\satstep_{\gs'}(u) < \satstep_{\gs'}(v)$, which tells us $\satstep_{\gs'}(u) < \satstep_{\gs}(u)$.

By a similar argument, the existence of $e_\blue$ tells us $\satstep_{\gs}(u) < \satstep_{\gs'}(u)$. However, this pair of inequalities is a contradiction, so our assumption must be false, which completes the proof.
\end{proof}

By \cref{prop:outgoing same color}, we know that all out-edges have the same color. Additionally, by construction of the pruned difference graph (in particular, Line~\ref{line:remove same color edge} or \cref{alg:pruned diff graph}), we see that if a node has out-edges, then all its in-edges have the opposite color. This leaves us with two cases for showing that the labels on all in-edges to every node $v$ in $\gdiff^*$ precede the labels on all out-edges from $v$. We only consider the case where $v\neq \vplus$ since the claim follows trivially for $\vplus$ as there are no in-edges to $\vplus$.
\begin{enumerate}[label=(\alph*)]
    \item \textbf{All out-edges from \bm{$v$} are red.} Let $b$ be the smallest label on red out-edges from $v$, which means $\satstep_{\gs'}(v) < b$. Assume for contradiction that a blue in-edge to $v$ has label $b' > b$ (we are working with the pruned difference graph and $\outcolor(v) = \red$, so all in-edges to $v$ must be blue). The label on this blue edge tells us $\satstep_{\gs'}(v) \geq b'$, which is a contradiction since $b'>b$ and $\satstep_{\gs'}(v) < b$.

    \item \textbf{All out-edges from \bm{$v$} are blue.} The proof of this case follows very similarly, except we look at $\satstep_{\gs}(v)$. Let $b$ be the smallest label on blue out-edges from $v$, which means $\satstep_{\gs}(v) < b$. Assume for contradiction that a red in-edge to $v$ has label $b' > b$. The label on this red edge tells us $\satstep_{\gs}(v) \geq b'$, which is a contradiction since $b'>b$ and $\satstep_{\gs}(v) < b$.
\end{enumerate}
For both cases, we showed that all labels on in-edges to a node $v$ must precede labels on all out-edges from $v$, which completes the proof of item (1).

\textbf{Proof of item (2).} Item (2) follows readily from previously used ideas. To prove this item, we show that the only nodes appearing in the pruned difference graph $\gdiff^*$ are $\vplus$ and nodes $v$ such that $\deg_v(\flatt{\gs'_{[t]}}) > D$. More precisely, we let $V_\mathit{low}$ be the set of nodes $v$ in $\gs'$ that are not $\vplus$ and have $\deg_v(\flatt{\gs'_{[t]}}) \leq D$, and we show that no nodes $v\in V_\mathit{low}$ are in the pruned difference graph. There are at most $\ell$ nodes $v$ such that $\deg_v(\flatt{\gs'_{[t]}}) > D$, so the set of nodes not in $V_\mathit{low}$ has size at most $\ell + 1$. This proof relies on \cref{property:no out edge if small degree}.

\begin{claim}
\label{property:no out edge if small degree}
    For all $v\neq v^+$ in $\gdiff$, $v$ cannot have an out-edge in $\gdiff$ if $\deg_v(\flatt{\gs'_{[t]}}) \leq D$.
\end{claim}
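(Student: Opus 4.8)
The plan is to argue by contradiction. Suppose some $v\neq v^+$ with $\deg_v(\flatt{\gs'_{[t]}})\leq D$ nevertheless has an out-edge $e=(v,w)$ in $\gdiff$. I will show that this forces $v$ to become \emph{saturated} during the run of $\projp$ on whichever of $\gs,\gs'$ is missing the edge $\{v,w\}$, which in turn forces at least $D$ edges incident to $v$ into that projection and hence into $\flatt{\gs'_{[t]}}$; together with the degree hypothesis this pins $\deg_v(\flatt{\gs'_{[t]}})$ at exactly $D$ and forces \emph{every} edge incident to $v$ into the projection — contradicting the fact that $\{v,w\}$ is incident to $v$ and lies outside that projection.

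First I would split into two cases according to the color of $e$. By \cref{alg:diff graph}, a red $e=(v,w)$ means $\{v,w\}\in\projp(\gs)_{[t]}\setminus\projp(\gs')_{[t]}$ and $\satstep_{\gs'}(v)<\satstep_{\gs'}(w)$, while a blue $e=(v,w)$ means $\{v,w\}\in\projp(\gs')_{[t]}\setminus\projp(\gs)_{[t]}$ and $\satstep_{\gs}(v)<\satstep_{\gs}(w)$. A quick preliminary observation deals with $v^+$: since \cref{alg:diff graph} sets $\satstep_{\gs}(v^+)=-1$, every difference-graph edge incident to $v^+$ is oriented out of $v^+$, so in the blue case $w\neq v^+$; in the red case $w\neq v^+$ is immediate because $\{v,w\}$ is an edge of $\gs$. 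Hence in both cases $\{v,w\}$ is an edge of both $\gs$ and $\gs'$ arriving at the same time $\leq t$.

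The core step is the same in both cases; describe it for the red case (the blue case swaps the roles of $\gs$ and $\gs'$). In the run $\projp(\gs')$, when $\{v,w\}$ is processed at its projection stage $i$, the inclusion test on Line~\ref{line:add edge condition} fails, so $d(v)\geq D$ or $d(w)\geq D$ at that stage. If $d(v)\geq D$ then $\satstep_{\gs'}(v)\leq i$; if instead $d(w)\geq D$ then $\satstep_{\gs'}(w)\leq i$ and hence $\satstep_{\gs'}(v)<\satstep_{\gs'}(w)\leq i$. Either way $\satstep_{\gs'}(v)\leq i<\infty$. Because the $\style=\projection$ variant increments $d(v)$ only when an edge incident to $v$ is actually added to the projection, a finite saturation stage $\leq i$ means at least $D$ edges incident to $v$ have been added to $\projp(\gs')$ by stage $i$; since projection stages are ordered by arrival time (then by endpoint identifiers) and $\{v,w\}$ itself arrives by time $t$, all of those $D$ edges arrive by time $t$. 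Thus $\deg_v(\flatt{\projp(\gs')_{[t]}})\geq D$, so $\deg_v(\flatt{\gs'_{[t]}})\geq D$; combined with the hypothesis this is an equality, and it forces $\flatt{\projp(\gs')_{[t]}}$ to contain every edge of $\flatt{\gs'_{[t]}}$ incident to $v$ — in particular $\{v,w\}$, contradicting $\{v,w\}\notin\projp(\gs')_{[t]}$.

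I expect the main obstacle to be the ``through time $t$'' bookkeeping: one must check that the projection stage at which $\{v,w\}$ is processed, and hence the saturation stage of $v$, involves only edges that have already arrived by time $t$, so that the degree count transfers legitimately to the flattened graphs $\flatt{\gs'_{[t]}}$ and $\flatt{\projp(\gs')_{[t]}}$ rather than to the full-length streams. Once the ordering of projection stages is made explicit this is routine, as is the symmetric blue case (which additionally uses $\deg_v(\flatt{\gs_{[t]}})\leq\deg_v(\flatt{\gs'_{[t]}})$, valid since $v\neq v^+$).
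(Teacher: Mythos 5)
Your proof is correct and takes essentially the same route as the paper's: both use the edge-direction rule together with the failed inclusion test to conclude that the tail $v$ has a finite (early) saturation stage in the stream whose projection is missing $\{v,w\}$, and both rely on the fact that in the $\style=\projection$ variant the counter $d(v)$ counts only edges actually added, so a node with at most $D$ incident edges through time $t$ cannot saturate before its last incident edge is processed. The only difference is cosmetic: the paper derives the contradiction by comparing saturation stages against the stage $b_{\mathit{last}}$ of $v$'s final edge, while you derive it by counting degrees and forcing every edge incident to $v$ (including $\{v,w\}$) into the projection.
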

\begin{proof}[Proof of \cref{property:no out edge if small degree}]
    Let $v\in V_\mathit{low}$, and let $b_\mathit{last}$ be the \projstagetext{} of the final edge in $S'_{[t]}$ incident to $v$. Because $\deg_v(\flatt{\gs'_{[t]}}) \leq D$, we know that the \satsteptext{} of $v$ in $\gs$ and $\gs'$ must be at least $b_\mathit{last}$, so $\satstep_{\gs}(v) \geq b_\mathit{last}$ and $\satstep_{\gs'}(v) \geq b_\mathit{last}$.
    
    Assume for contradiction that there exists an out-edge $e = (v,w)$ with label $b$ in $\gdiff$. Note that every edge incident to $v$ in $\gdiff$ must have label at most $b_\mathit{last}$, so $b\leq b_\mathit{last}$. If this edge is red, then $\satstep_{\gs'}(v) < b\leq b_\mathit{last}$. Similarly, if this edge is blue, then $\satstep_{\gs}(v) < b\leq b_\mathit{last}$. However, this contradicts the fact that $\satstep_{\gs}(v) \geq b_\mathit{last}$ and $\satstep_{\gs'}(v) \geq b_\mathit{last}$. 
\end{proof}

By \cref{property:no out edge if small degree}, every $v\in V_\mathit{low}$ has no out-edges in $\gdiff$. We next note that, in the pruned difference graph $\gdiff^*$, all in-edges to nodes in $V_\mathit{low}$ are removed by Line~\ref{line:remove low deg} of \cref{alg:pruned diff graph}. Since all nodes with in-degree and out-degree 0 in the pruned difference graph are removed in Line~\ref{line:remove nodes}, only the nodes in the set of size $\ell + 1$ (i.e., the node $\vplus$ and the high-degree nodes $v$ such that $\deg_v(\flatt{\gs'_{[t]}}) > D$) are in the pruned difference graph.

\textbf{Proof of item (3).} To prove this item, we use \cref{claim:out edges at most in edges}, which relates the in-degree of a node in the original difference graph to its out-degree in the pruned difference graph. 

\begin{claim}
\label{claim:out edges at most in edges}
    Consider a node $v\neq v^+$ in the pruned difference graph $\gdiff^*$. The number of out-edges from $v$ in the \emph{original} difference graph $\gdiff$ is at most the number of in-edges to $v$ in the \emph{pruned} difference graph $\gdiff^*$ (if $v$ does not appear in $\gdiff^*$, we say $v$ has zero in-edges). This also means that the number of out-edges from $v$ in $\gdiff^*$ is at most the number of in-edges to $v$ in $\gdiff^*$. In summary,
    \newcommand{\outdegree}{\msf{out}\text{-}\msf{degree}}
    \newcommand{\indegree}{\msf{in}\text{-}\msf{degree}}
    \[
        \Centerstack{\text{$\bm{\outdegree}$ of } \text{$v$ in $\bm{\gdiff^*}$ }}\; 
        \text{\large $\bm\leq$ }\;
        \Centerstack{\text{$\bm{\outdegree}$ of } \text{$v$ in $\bm{\gdiff}$ }}\; 
        \text{\large $\bm\leq$ }\;
        \Centerstack{\text{$\bm{\indegree}$ of } \text{$v$ in $\bm{\gdiff^*}$.}}
    \]
\end{claim}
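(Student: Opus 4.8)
I will establish the two inequalities separately. The first, that the out-degree of $v$ in $\gdiff^*$ is at most its out-degree in $\gdiff$, is immediate: by construction $\gdiff^*$ is obtained from $\gdiff$ only by deleting edges (Lines~\ref{line:remove same color edge} and~\ref{line:remove low deg} of \cref{alg:pruned diff graph}) and then deleting resulting isolated vertices (Line~\ref{line:remove nodes}), so $\gdiff^*$ is a subgraph of $\gdiff$ and out-degrees can only decrease. For the second inequality I argue directly about $\gdiff$ and the pruning rules. By \cref{prop:outgoing same color} all out-edges of $v$ share the color $\outcolor(v)$; I will carry out the argument assuming $\outcolor(v) = \red$, the case $\outcolor(v) = \blue$ being identical after swapping the roles of $\gs$ and $\gs'$ and of the two colors (the swap is legitimate because $v\neq\vplus$, so both runs of $\projp$ see the edges incident to $v$). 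If $v$ has no out-edges in $\gdiff$ there is nothing to prove; otherwise let its out-degree be $p\geq 1$ with out-edges $(v,w_1),\dots,(v,w_p)$, all red.

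The crux is the following structural fact: for each red out-edge $(v,w_i)$, at the moment $\projp$ on $\gs'$ considers the edge $\{v,w_i\}$ (Line~\ref{line:add edge condition} of \cref{alg:time aware projection}) the counter already satisfies $d(v)\geq D$. To see this, note $\{v,w_i\}$ lies in $\flatten{\projp(\gs)_{[t]}}$ and hence is a genuine edge of $\gs$ not incident to $\vplus$, so it is an edge of $\gs'$ as well and is considered in both runs; since it is absent from $\flatten{\projp(\gs')_{[t]}}$ it was not added in $\projp(\gs')$, which forces $d(v)\geq D$ or $d(w_i)\geq D$ at that moment. If instead $d(w_i)\geq D$ while $d(v)<D$ there, then tracking projection stages gives $\satstep_{\gs'}(w_i) < \projstage(\{v,w_i\}) < \satstep_{\gs'}(v)$, contradicting the fact that \cref{alg:diff graph} (Line~\ref{line:direct red}) oriented the edge as $(v,w_i)$, which requires $\satstep_{\gs'}(v) \leq \satstep_{\gs'}(w_i)$. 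This fact has two consequences: $\satstep_{\gs'}(v)$ is finite, and --- since for $\projp$ the counter $d(v)$ equals the current degree of $v$ in the projection and never changes once it reaches $D$ --- the graph $\flatten{\projp(\gs')_{[t]}}$ has exactly $D$ edges incident to $v$.

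Now a short count finishes the second inequality. Write $a$ for the number of edges incident to $v$ that lie in both $\flatten{\projp(\gs)_{[t]}}$ and $\flatten{\projp(\gs')_{[t]}}$, write $r$ for the number of red edges incident to $v$ in $\gdiff$, and write $b$ for the number of blue edges incident to $v$ in $\gdiff$; by definition of the difference graph, $\deg_v(\flatten{\projp(\gs)_{[t]}}) = a+r$ and $\deg_v(\flatten{\projp(\gs')_{[t]}}) = a+b$. The previous paragraph gives $a+b = D$, while the degree bound of $\projp$ gives $a+r \leq D$, hence $r \leq b$. Since $\outcolor(v) = \red$, every out-edge of $v$ in $\gdiff$ is red and every blue edge incident to $v$ is an in-edge, so the out-degree $p$ of $v$ in $\gdiff$ is at most $r\leq b$, which is the number of blue in-edges of $v$ in $\gdiff$. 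Finally these $b$ blue in-edges survive pruning: they are not removed by Line~\ref{line:remove same color edge} because their color $\blue$ differs from $\outcolor(v) = \red$, and they are not removed by Line~\ref{line:remove low deg} because $v\neq\vplus$ has an out-edge and hence $\deg_v(\flatten{\gs'_{[t]}}) > D$ by \cref{property:no out edge if small degree}; consequently $v$ is not deleted in Line~\ref{line:remove nodes} either. Therefore the in-degree of $v$ in $\gdiff^*$ is at least $b\geq p$, which equals the out-degree of $v$ in $\gdiff$, as required.

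\textbf{Main obstacle.} The genuinely delicate step is the structural fact in the second paragraph: pinning down that it is $d(v)$, and not $d(w_i)$, that is saturated when the blocked red out-edge $\{v,w_i\}$ is considered in $\projp(\gs')$. This requires carefully relating the projection stage of an edge to the saturation stages of its two endpoints (\cref{defn:arrstep,defn:satstep}), together with a little care about ties between saturation stages --- which \cref{alg:diff graph} resolves through a fixed tie-break in the orientation rule. Everything downstream of that fact --- the degree count and the verification that the relevant in-edges are not pruned --- is routine bookkeeping with the degree bound.
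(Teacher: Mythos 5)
Your proof is correct and follows essentially the same route as the paper's: both reduce to the key fact that a node with red out-edges must already be saturated in $\projp(\gs')$ (pinned down via the orientation rule and saturation stages), deduce that $v$ therefore has at least as many blue in-edges as red out-edges, and check that those in-edges survive both pruning rules. The only difference is bookkeeping --- you count degrees in the final flattened projections ($a+b=D$ versus $a+r\leq D$), whereas the paper compares degrees just before the first red out-edge's projection stage --- which changes nothing substantive.
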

    
\begin{proof}[Proof of \cref{claim:out edges at most in edges}]
    Consider a node $v\neq v^+$ in the graphs $\gdiff,\gdiff^*$. Let $k$ be the number of out-edges from $v$ in $\gdiff$, and let $b$ be their smallest label. Let $\outcolor(v) = \red$ (a symmetric argument follows when $\outcolor(v) = \blue$).
    
    By the construction of $\gdiff$, and since $\outcolor(v) = \red$, the out-edges of $v$ are in $\projp(\gs)_{[t]}$ and not in $\projp(\gs')_{[t]}$. Additionally, since the smallest of their labels is $b$, then after \projstagetext{} $b-1$ the following is true: (1) the degree of $v$ in the projection of $\gs'$ is $D$, and (2) the degree of $v$ in the projection of $\gs$ is at most $D-k$ (because if the degree were larger, then $v$ could not have $k$ red out-edges). This means that after \projstagetext{} $b-1$, the vertex $v$ has at least $k$ incident blue edges in the difference graph $\gdiff$.
    
    Finally, all out-edges from $v$ are red and must all have the same color (\cref{prop:outgoing same color}), so these incident blue edges must be in-edges to $v$. Additionally, because $\deg_v(\flatt{\gs'_{[t]}}) > D$ and because $\outcolor(v) = \red$, these blue edges will also be present in the pruned difference graph $\gdiff^*$.
    Therefore, we see that the number of in-edges to $v$ in $\gdiff^*$ is at least the number of out-edges from $v$ in $\gdiff$, which concludes the proof.
\end{proof}

By item (1), we know that $\gdiff^*$ is a DAG, so we can topologically order its nodes. We enumerate the nodes in the topological order as $v_1,\ldots,v_{\ell^* + 1}$, where $\ell^*\leq \ell$. By \cref{claim:out edges at most in edges} and by construction, $\vplus$ is the only node in $\gdiff^*$ with no in-edges, so we will necessary have $v_1 = \vplus$. For this ordering, let $E_i$ denote the $\ord{i}$ $\cut$ set.

To complete the proof of item (3), we need to show \cref{claim:cut set cards}.

\begin{claim}
\label{claim:cut set cards}
    Let $E_i$ denote the set of edges in the $\ord{i}$ $\cut$ set, and let $\ell^* \leq \ell$ be defined as above (i.e., one less than the number of nodes in $\gdiff^*$). We have $|E_1| \leq \min\{D,\ell\}$, and $|E_{i-1}|\geq |E_{i}|$ for all $i\in\{2,\ldots,\ell^*\}$.
\end{claim}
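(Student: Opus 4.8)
The plan is to read both parts of the claim directly off the structure of $\gdiff^*$ established in items (1) and (2) of \cref{lemma:pdg is dag etc}, together with \cref{claim:out edges at most in edges}. Throughout I will use that the topological ordering is chosen with $v_1 = \vplus$, and the elementary fact that an edge $(v_j,v_k)$ with $j<k$ lies in the $\ord{m}$ $\cut$ set $E_m$ exactly when $j \le m < k$.

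\textbf{Bounding $|E_1|$.} The first $\cut$ separates $\{v_1\} = \{\vplus\}$ from the remaining nodes, so $E_1$ is precisely the set of out-edges of $\vplus$ in $\gdiff^*$. Since $\gdiff^*$ has at most $\ell+1$ nodes by item (2), $\vplus$ has at most $\ell$ out-edges, giving $|E_1|\le\ell$. For the bound $|E_1|\le D$: every edge of $\gdiff$ incident to $\vplus$ is blue and directed out of $\vplus$ — blue because $\vplus\notin V(\gs)$, so such an edge lies in $F' = \flatten{\projp(\gs')_{[t]}}$ but not in $F = \flatten{\projp(\gs)_{[t]}}$; directed out of $\vplus$ because $\satstep_{\gs}(\vplus)=-1$ is smaller than every other saturation stage (Line~\ref{line:satstep vplus} of \cref{alg:diff graph}, cf.\ the $u=\vplus$ case in the proof of \cref{prop:outgoing same color}). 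All such edges lie in $\flatten{\projp(\gs')_{[t]}}$, whose maximum degree is at most $D$, so there are at most $D$ of them; pruning (\cref{alg:pruned diff graph}) only removes edges, so the out-edges of $\vplus$ in $\gdiff^*$ are a subset of these. Hence $|E_1| \le \min\{D,\ell\}$.

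\textbf{Monotonicity of the cut sizes.} Fix $i\in\{2,\ldots,\ell^*\}$ and compare $E_{i-1}$ with $E_i$ inside $\gdiff^*$. Using the membership criterion above: an edge lies in $E_{i-1}$ but not $E_i$ iff it is an in-edge of $v_i$; it lies in $E_i$ but not $E_{i-1}$ iff it is an out-edge of $v_i$; every other edge of $\gdiff^*$ lies in both or in neither. Therefore
\[
    |E_i| - |E_{i-1}| = \big(\text{out-degree of $v_i$ in } \gdiff^*\big) - \big(\text{in-degree of $v_i$ in } \gdiff^*\big).
\]
Since $i\ge 2$ and $v_1 = \vplus$, we have $v_i \ne \vplus$, so \cref{claim:out edges at most in edges} gives that the out-degree of $v_i$ in $\gdiff^*$ is at most its in-degree in $\gdiff^*$. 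Hence $|E_i| - |E_{i-1}| \le 0$, i.e.\ $|E_{i-1}| \ge |E_i|$, which completes the proof.

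\textbf{Expected main obstacle.} There is little genuine difficulty here: the whole argument is bookkeeping about which edges sit in which $\cut$ set and in which of the two graphs $\gdiff,\gdiff^*$ a given degree is measured. The one point to keep straight is that the in-/out-degree comparison must be taken in the \emph{pruned} graph $\gdiff^*$ while the $|E_1|\le D$ estimate comes from incidences of $\vplus$ in the \emph{unpruned} graph $\gdiff$ (bounded through $\projp(\gs')_{[t]}$) — exactly the quantities that \cref{claim:out edges at most in edges} and \cref{alg:pruned diff graph} are designed to control, so no new ideas should be needed.
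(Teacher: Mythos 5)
Your proof is correct and takes essentially the same route as the paper's: the bound $|E_1|\le D$ comes from the degree of $\vplus$ in $\flatten{\projp(\gs')_{[t]}}$, the bound $|E_1|\le\ell$ from the fact that only $\vplus$ and the high-degree nodes survive pruning, and monotonicity from \cref{claim:out edges at most in edges} applied to $v_i$. Your phrasing of the cut-size change as (out-degree minus in-degree of $v_i$ in $\gdiff^*$) is a slightly cleaner bookkeeping of the same accounting the paper does with the parameter $j$.
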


\begin{proof}[Proof of \cref{claim:cut set cards}]
We begin by proving $|E_1| \leq \min\{D,\ell\}$ (recall that $v_1 = \vplus$). Consider the case where $D\leq \ell$. Then $\vplus$ has at most $D$ incident edges in the projection, so it has at most $D$ out-edges in the difference graph. Now consider the case where $D > \ell$. All in-edges to nodes with degree at most $D$ in $\gs'_{[t]}$ are removed when creating the pruned difference graph (Line~\ref{line:remove low deg} of \cref{alg:pruned diff graph}), so since there are at most $\ell$ nodes with degree greater than $D$ in $\gs'_{[t]}$, we see that $\vplus$ has at most $\ell$ out-edges in the pruned difference graph. Therefore, $\vplus$ has at most $\min\{D,\ell\}$ edges, so $|E_1| \leq \min\{D,\ell\}$.

We next show $|E_{i-1}|\geq |E_{i}|$ for all $i\in\{2,\ldots,\ell^*\}$. If an edge is in $E_{i-1}$, then either it is in $E_{i}$ or it is an in-edge to $v_{i}$. Let $j$ be the number of in-edges to $v_{i}$. By \cref{claim:out edges at most in edges}, there are at most $j$ out-edges from $\vplus$. This gives us $|E_{i}|\leq |E_{i-1}| - j + j = |E_{i-1}|$.
\end{proof}

\cref{claim:cut set cards} tells us that we have $|E_{i}|\leq \min\{D,\ell\}$ for all $i\in\{1,\ldots,\ell^*\}$, which is what we wanted to prove for item (3).

\textbf{Proof of item (4).} To prove this item, we categorize the edges that were removed from the difference graph to create the pruned difference graph, and we count the number of edges in each category. We consider the following categories: (a) edges from $\vplus$ that are incident to nodes $v$ with $\deg_v(\gs'_{[t]}) \leq D$ and (b) other edges that were removed when creating $\gdiff^*$ from $\gdiff$.
Let $\ell^* \leq \ell$ be the number of out-edges from $\vplus$ to nodes $v$ such that $\deg_v(\gs'_{[t]}) > D$. Therefore, category~(a) has at most $D-\ell^*$ edges.

We next show that category (b) has at most $\ell^*$ edges.
By \cref{claim:out edges at most in edges}, all edges removed from $\gdiff$ to make the pruned difference graph $\gdiff^*$ are out-edges from nodes that appear in the pruned difference graph. (This is because, if an edge is an out-edge from a node that does not appear in the pruned difference graph $\gdiff^*$, then this node's in-degree in the pruned difference graph is zero, so its out-degree in the original difference graph must be zero by \cref{claim:out edges at most in edges}.) We use \cref{claim:bounding out edges} to establish another property about the edges that were removed to make the pruned difference graph, which will allow us to bound the number of edges that are in category (b).

\begin{claim}
\label{claim:bounding out edges}
    Let $E_i$ be defined as in the proof of item (3)---that is, $E_i$ is the set of edges in the $\ord{i}$ $\cut$. For all $i\in \{2,\ldots,\ell^*\}$, if $|E_{i-1}| = |E_i| + k_i$, then at most $k_i$ out-edges from $v_i$ were removed by \cref{alg:pruned diff graph}.
\end{claim}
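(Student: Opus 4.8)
The plan is to reduce \cref{claim:bounding out edges} directly to \cref{claim:out edges at most in edges} by writing both quantities in play---the value $k_i$ and the number of out-edges of $v_i$ deleted during pruning---in terms of the in- and out-degrees of $v_i$ in the difference graph $\gdiff$ and the pruned difference graph $\gdiff^*$. Throughout I use the topological order $v_1,\dots,v_{\ell^*+1}$ of $\gdiff^*$ fixed in the proof of item~(3), with $v_1 = \vplus$, so that for $i\in\{2,\dots,\ell^*\}$ the node $v_i \ne \vplus$ and \cref{claim:out edges at most in edges} applies to it.

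First I would compute $k_i = |E_{i-1}| - |E_i|$ exactly. Since $\gdiff^*$ is a DAG (item~(1) of \cref{lemma:pdg is dag etc}) and $v_1,\dots,v_{\ell^*+1}$ is a topological ordering, every edge $(v_a,v_b)$ of $\gdiff^*$ has $a<b$, and it lies in the $\ord{i}$ cut set $E_i$ exactly when $a\le i<b$. Comparing the $(i-1)$-cut with the $i$-cut: an edge lies in $E_{i-1}\setminus E_i$ iff $b=i$ (all such edges originate from nodes of smaller index, so these are exactly the in-edges of $v_i$ in $\gdiff^*$), and an edge lies in $E_i\setminus E_{i-1}$ iff $a=i$ (exactly the out-edges of $v_i$ in $\gdiff^*$). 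Hence
\[
k_i \;=\; |E_{i-1}| - |E_i| \;=\; \bigl(\text{in-degree of }v_i\text{ in }\gdiff^*\bigr) - \bigl(\text{out-degree of }v_i\text{ in }\gdiff^*\bigr).
\]

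Next I would count the out-edges of $v_i$ removed by \cref{alg:pruned diff graph}. Since $\gdiff^*$ is obtained from $\gdiff$ only by deleting edges and then deleting the vertices that become isolated (which removes no further edges), and since $v_i$ survives into $\gdiff^*$, the number of out-edges of $v_i$ removed is precisely $\bigl(\text{out-degree of }v_i\text{ in }\gdiff\bigr) - \bigl(\text{out-degree of }v_i\text{ in }\gdiff^*\bigr)$. Now \cref{claim:out edges at most in edges} gives $\bigl(\text{out-degree of }v_i\text{ in }\gdiff\bigr)\le\bigl(\text{in-degree of }v_i\text{ in }\gdiff^*\bigr)$; subtracting the out-degree of $v_i$ in $\gdiff^*$ from both sides shows that the number of removed out-edges is at most $k_i$, as desired. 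The only delicate point is the cut-set bookkeeping in the middle step---correctly identifying $E_{i-1}\setminus E_i$ and $E_i\setminus E_{i-1}$ from the topological order, in particular that every in-edge of $v_i$ crosses the $(i-1)$-cut but not the $i$-cut---and this is where I would be most careful; everything after that is an immediate appeal to \cref{claim:out edges at most in edges}.
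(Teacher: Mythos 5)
Your proof is correct and follows essentially the same route as the paper's: both rest on the cut-set identity relating $|E_{i-1}|-|E_i|$ to the in- and out-degrees of $v_i$ in $\gdiff^*$, combined with \cref{claim:out edges at most in edges} to bound the out-degree of $v_i$ in $\gdiff$ by its in-degree in $\gdiff^*$. The only difference is presentational---you argue directly via the exact identity $k_i = \text{in-deg}(v_i,\gdiff^*) - \text{out-deg}(v_i,\gdiff^*)$, whereas the paper phrases the same bookkeeping as a proof by contradiction.
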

\begin{proof}[Proof of \cref{claim:bounding out edges}]
    Suppose for contradiction that $v_i$ has $k_i' > k_i$ out-edges that were removed by \cref{alg:pruned diff graph} (that is, there are $k_i'$ out-edges from $v_i$ that appear in $\gdiff$ and do not appear in $\gdiff^*$). Let $j$ be the number of in-edges to $v_i$ in $\gdiff^*$.
    
    By definition, $|E_i|$ is at most $|E_{i-1}|$ minus the number of in-edges to $v_i$ in $\gdiff^*$, plus the number of out-edges from $v_i$ in $\gdiff^*$. We next bound each of these edge counts. The number of in-edges to $v_i$ in $\gdiff^*$ is $j$. The number of out-edges from $v_i$ in $\gdiff$, by \cref{claim:out edges at most in edges}, is at most the in-degree of $v_i$ in $\gdiff^*$---that is, $j$.
    Therefore, there are at most $j-k_i'$ out-edges from $v_i$ in $\gdiff^*$.
    
    Substituting these quantities into our upper bound for $|E_i|$ gives us $|E_i| \leq |E_{i-1}| - j + (j-k_i') = |E_{i-1}| - k_i '$. This simplifies to $|E_i| + k_i' \leq |E_{i-1}|$ which, since $k_i' > k_i$, contradicts the fact that $|E_i| + k_i = |E_{i-1}|$.
\end{proof}

We now observe, since $\vplus$ has $\ell^*$ edges to high degree nodes $v$ such that $\deg_v(\gs'_{[t]}) \leq D$, that $|E_1| \leq \ell^*$. Where we define $k_i$ as the number of out-edges from $v_i$ that appear in $\gdiff$ and do not appear in $\gdiff^*$, combining this observation with \cref{claim:bounding out edges,claim:cut set cards} gives us $\sum_{i\in[\ell^*]}k_i \leq \ell^*$. Therefore, there are at most $\ell^*$ out-edges from nodes in the pruned difference graph that were removed from the original graph to obtain the pruned difference graph---that is, there are at most $\ell^*$ edges in category (b). Since there are at most $D-\ell^*$ edges in category (a) and at most $\ell^*$ edges in category (b), we see that at most $D$ edges were removed to obtain the pruned difference graph, which is what we wanted to show.
\end{proof}

\fi

\ifnum\oakland=0
\subsection{Proof of Edge-to-Edge Stability for \texorpdfstring{\boldmath{$\projp$}}{DLL}}

In this section, we prove item (1b) of \cref{thrm:combined-stab}, which we repeat below for convenience. The proof uses \cref{lem:greedy-flatten}, along with some new ideas. At a high level, it follows from the observation that changing one edge in a graph (i.e., adding or removing it) can result in a path of edges that differ between projections. However---with the exception of the edge that differs between the edge-neighboring graph streams---the edges that differ between projections form a simple path. Moreover, all edges in the simple path must be incident to (at least) one node with degree greater than $D$. Because we only evaluate the stability statement through time steps for which the graph streams are $(D,\ell)$-bounded, we know there are $\ell$ such high-degree nodes, which allows us to bound the length of the resulting simple path.

\begin{theorem}[Item (1b) of \cref{thrm:combined-stab}]
\label{thrm:edge-edge stab proj}
    Let $T\in\N$, $D\in \N$, $\ell\in \N\cup\{0\}$, and let $\projp$ be \cref{alg:time aware projection} with \inccrit{} $\style = \projection$. If $\gs\nedge \gs'$ are edge-neighboring graph streams of length $T$, then for all time steps $t\in[T]$ such that $S$ and $S'$ are $(D, \ell)$-bounded through time $t$, the edge distances between the projections through time $t$ satisfy
    \[
                \dedge\Big(\projp(\gs)_{[t]}\ ,\ 
                \projp(\gs')_{[t]} \Big)
                \leq 2\ell + 1.
    \]
\end{theorem}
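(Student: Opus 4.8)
The plan is to follow the same template as the other $\projp$ stability proofs: first reduce to the flattened graphs, then understand exactly which edges the two projections disagree on. By \cref{lem:greedy-flatten} it suffices to bound $\dedge(F,F')$ where $F=\flatten{\projp(\gs)_{[t]}}$ and $F'=\flatten{\projp(\gs')_{[t]}}$. Assume without loss of generality that $\gs'$ is the larger stream, obtained from $\gs$ by inserting a single edge $\eplus=\{u_0,v_0\}$ together with at most one new endpoint (or just an isolated vertex). Two easy cases come first. If $\flatten{\gs'_{[t]}}$ already has maximum degree at most $D$, then \cref{alg:time aware projection} with $\style=\projection$ is the identity on both streams, so $F$ and $F'$ differ in at most the edge $\eplus$ (plus possibly one isolated vertex). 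If $\eplus$ is \emph{not} added to $\projp(\gs')$, then processing it changes no degree counter, so $\projp(\gs')$ selects exactly the same edges as $\projp(\gs)$ and $F,F'$ differ in at most one isolated vertex. Hence assume $\flatten{\gs'_{[t]}}$ has a vertex of degree $>D$ and $\eplus\in\projp(\gs')$.

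Next I would run the two executions of \cref{alg:time aware projection} (with $\style=\projection$) in lockstep over the common time-aware edge ordering, tracking the projected-degree counters $d(\cdot)$ for $\gs$ and $d'(\cdot)$ for $\gs'$, and maintain the invariant that after each processed edge the only vertices whose counters disagree are (at most) two ``frontier'' vertices, each disagreeing by exactly one, and that the edges decided differently so far together with $\eplus$ form the edge set of two simple paths emanating from $u_0$ and from $v_0$. Inserting $\eplus$ raises $d'(u_0)$ and $d'(v_0)$ by one, giving the base case with frontier $\{u_0,v_0\}$. For the inductive step a short case analysis on the next edge $e=\{a,b\}$ suffices: if neither endpoint is a frontier vertex, both executions decide $e$ identically; if a frontier vertex $p$ (say with $d'(p)=d(p)+1$) is an endpoint, then either both executions still decide $e$ the same way (frontier unchanged), or $e$ is added in exactly one execution, which happens precisely when $p$ becomes saturated in that execution while already being saturated in the other --- at which point $p$'s disagreement is resolved, a magnitude-one disagreement appears at the other endpoint of $e$, the path is extended by $e$, and $p$ is saturated in both executions from then on, which in particular forces $\deg_p(\flatten{\gs'_{[t]}})>D$ and guarantees $p$ is never touched again. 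The property ``every differing edge other than $\eplus$ is incident to a vertex of degree $>D$ in $\flatten{\gs'_{[t]}}$'' is moreover immediate from \cref{lem:edges dont change}, whose hypothesis (``$\gs'$ has at most one additional node'') is met by edge-neighbors: any edge of $\gs$ whose two endpoints have degree at most $D$ in $\flatten{\gs'_{[t]}}$ lies in both projections.

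Given the invariant, the count is short. The difference graph --- the set of all edges on which $F$ and $F'$ disagree --- has maximum degree at most $2$ (a disjoint union of simple paths), and by $(D,\ell)$-boundedness of $\gs'$ through time $t$ its $\le\ell$ vertices of degree $>D$ form a vertex cover of all its edges except possibly $\eplus$; in a graph of maximum degree $\le 2$ each of those vertices covers at most two edges, so there are at most $2\ell$ such edges, and adding $\eplus$ gives $\dedge(F,F')\le 2\ell+1$. Applying \cref{lem:greedy-flatten} finishes the proof. I expect the main obstacle to be the second step: setting up and preserving the ``two frontier vertices, each off by one'' invariant, in particular ruling out that the cascade from one endpoint of $\eplus$ collides with the other endpoint or with an already-passed vertex and thereby branches the difference graph or pushes some counter discrepancy past one, and carefully handling the edge whose two endpoints are simultaneously frontier vertices. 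The observations above --- a passed vertex is saturated in both executions, and the inclusion rule forbids adding an edge at a saturated vertex --- are exactly what rule out such collisions, but turning them into a clean, fully rigorous invariant is the delicate part.
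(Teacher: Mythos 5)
Your reduction via \cref{lem:greedy-flatten}, the easy cases, and the appeal to \cref{lem:edges dont change} are all fine, but the core of your argument --- the invariant that at every stage only two frontier vertices disagree, \emph{each by exactly one}, so that the difference graph is a disjoint union of two simple paths of maximum degree $2$ --- is false, and your final count ($\le\ell$ cover vertices $\times$ degree $2$, plus $\eplus$) depends on it. The problem is exactly the interaction you flag as delicate, and the observations you invoke do not rule it out: they only handle \emph{resolved} vertices (saturated in both executions), not a collision between two \emph{active} frontiers of opposite parity. Concretely, write $d,d'$ for the counters in the runs on $\gs,\gs'$. After $\eplus=\{u_0,v_0\}$ is added in $\gs'$ only, each cascade's frontier alternates between ``$d'=d+1$'' and ``$d=d'+1$'' as it advances. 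So one can reach a stage where the $u_0$-cascade's frontier $b$ has $d'(b)=d(b)+1$ with $d'(b)<D$, while the $v_0$-cascade's frontier $c$ has $d(c)=D$ and $d'(c)=D-1$. An edge $\{c,b\}$ processed now is rejected in $\gs$ (since $d(c)=D$) but accepted in $\gs'$ (since $d'(c),d'(b)<D$); this resolves $c$ but pushes $d'(b)$ to $d(b)+2$. A vertex with discrepancy $2$ can later emit \emph{two} differing out-edges (two edges accepted in $\gs$ but rejected in $\gs'$ once $d'(b)=D$, $d(b)=D-2$), so the difference graph branches and $b$ attains degree $4$; your counting then only yields $4\ell+1$, not $2\ell+1$.

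The paper sidesteps this by never analyzing the two perturbations simultaneously: it introduces a hybrid projection $F^u$ in which only $d(u_0)$ (not $d(v_0)$) is incremented at $\eplus$'s projection stage, proves that $F$ vs.\ $F^u$ and $F^u$ vs.\ $F'$ each differ in a single simple alternating path of at most $\ell+1$ (resp.\ $\ell$) edges --- for a single counter perturbation your ``one frontier, off by exactly one'' invariant \emph{does} hold --- and concludes $\dedge(F,F')\le 2\ell+1$ by the triangle inequality. If you want to salvage your direct two-frontier analysis, you would need a potential-function argument (e.g., tracking $\sum_v|d(v)-d'(v)|$ and charging each differing edge to a unit of discrepancy incident to a high-degree vertex) rather than a structural max-degree-$2$ claim; the hybrid decomposition is the cleaner route.
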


\begin{proof}[Proof of \cref{thrm:edge-edge stab proj}]
    To simplify notation, let $F, F'$ denote $\flatten{\projp(\gs)_{[t]}}$ and $\flatten{\projp(\gs')_{[t]}}$ respectively. Note that we only care to bound the edge distance between projections for times $t\in[T]$ where $\gs_{[t]},\gs'_{[t]}$ are $(D,\ell)$-bounded. By \cref{lem:greedy-flatten}, we only need to bound $\dedge(F, F')$.
    Without loss of generality, let $\gs'$ be the larger graph stream---that is, it contains an additional edge $\eplus = \{u,v\}$ that arrives at  \projstagetext{} $b$. This edge will increment both $d(u)$ and $d(v)$ by 1 at its  \projstagetext{}.
    
    In this proof of stability, we do a thought experiment where we first consider how the graph changes if only $d(u)$ is incremented at the \projstagetext{} of $\eplus$ (and $d(v)$ is not incremented); we call this graph $F^u$. We then consider the distance between $F^u$ and $F'$ (the graph where $d(v)$ also changes), which allows us to bound the distance between $F$ and $F'$ using the triangle inequality. Our proof builds on the proof technique used by \cite{DayLL16} to show stability of their algorithm for degree histograms: we will show that the set of edges that differ between $F$ and $F^u$ forms a path. More specifically, we show that each edge in the path must be incident to a node with degree greater than $D$ in $\gs_{[t]}$, and we show that (with the exception of the edge in the path that has the earliest \projstagetext{}) these edges form a simple path.

    We first see how many edges differ between $F$ and $F^u$. If $d(u)\geq D$ when $\eplus$ \projstagealttext{} $b$ when running $\projp$ on $\gs'$, then $F$ and $F^u$ will be identical since $\eplus$ will not appear in the output stream and it was already the case that $d(u)\geq D$ for edges that are considered after $\eplus$.
    However, if $d(u)< D$ before $\eplus$ is considered, then there may be an edge $e_1 = \{u,1\}$ incident to $u$ with \projstagetext{} $b_1 > b$ that appears in $F$ but not $F^u$ due to having $d(u) = D$ at \projstagetext{} $b_1$ in $\gs'$ instead of $d(u) = D-1 < D$.
    This removal of $e_1 = \{u,1\}$ can then result in having $d(1) = D - 1 < D$, which then allows for the addition of some edge $e_2 = \{1,2\}$ with \projstagetext{} $b_2 > b_1$. The presence of $e_2$ will then cause $d(2)$ to be incremented by 1, which can in turn cause an edge incident to node $2$, with a \projstagetext{} greater than $b_2$, to not appear in $F^u$.
    
    We now consider the size of this sequence $\mD$ of differing edges, where we order the edges in this sequence according to increasing \projstagetext{} in $\gs'$. \cref{claim:edge edge path stops} describes conditions under which this sequence will end, and we show that these conditions mean the sequence $\mD$ of differences will have size at most $\ell + 1$.

    \begin{claim}
    \label{claim:edge edge path stops}
        Consider a node $v$ that appears in two edges that occur consecutively in the sequence of differing edges. Let these edges be $e_x = \{v,x\}$ with \projstagetext{} $b_x$ in $\gs'$, and $e_y = \{v,y\}$ with \projstagetext{} $b_y$ in $\gs'$. Additionally, let $e_x$ appear in $F^u$ and not in $F$, and let $e_y$ appear in $F$ and not in $F^u$. Node $v$ has the following properties:
        \begin{enumerate}
            \item The degree of $v$ in $\gs'_{[t]}$ is bigger than $D$---that is, $\deg_v(\gs'_{[t]}) > D$.
            \item The \satsteptext{} of $v$ is as follows:
            \begin{itemize}
                \item If $b_x < b_y$, then $\satstep_{\gs'}(v) < b_y$ and $\satstep_{\gs}(v) = b_y$.
                \item If $b_x > b_y$, then $\satstep_{\gs'}(v) = b_x$ and $\satstep_{\gs}(v) < b_x$.
            \end{itemize}
        \end{enumerate}
    \end{claim}

    Note that $b_x \neq b_y$, so we do not need to consider that case.
    
    \begin{proof}[Proof of \cref{claim:edge edge path stops}]
        We first prove item (1). Assume for contradiction that there exists some $v$ incident to the described edges $e_x$ and $e_y$, such that $\deg_v(\gs'_{[t]})\leq D$. If $b_x < b_y$, the inclusion of $e_x$ will increase $d(v)$ by 1, but we will still have $d(v) < D$ for both graph streams, so this will not cause $e_y$ to be removed. Now consider the case where $b_y < b_x$. For both graph streams, we will have $d(v) < D$ for the \projstagetext{} of $e_x$, so the decremented value of $d(v)$ will not newly permit $e_x$ to be added if it couldn't be added in $F$. This contradicts our assumption and proves item (1).

        We next prove item (2). Let $e_x$ and $e_y$ be described as above, and consider the case where $b_x < b_y$. Since $e_y$ appears in $F$ but not in $F^u$, this means that the increment by 1 of $d(v)$ caused $v$  to go from non-saturated prior to the \projstagetext{} of $b_y$ in $S$ to saturated prior to the \projstagetext{} of $b_y$ in $S'$, so it must be the case that $\satstep_{\gs}(v) = b_y$; likewise, $\satstep_{\gs'}(v) < b_y$.
        Now, consider the case where $b_x > b_y$. The proof follows by a symmetric argument: the fact that $d(v)$ wasn't incremented at step $b_y$ in $\gs'$ caused $v$ to go from saturated to non-saturated prior to the \projstagetext{} of $e_x$, so it must be the case that $\satstep_{\gs'}(v) = b_x$; likewise, $\satstep_{\gs}(v) < b_x$.
    \end{proof}

    We now explain how \cref{claim:edge edge path stops} allows us to bound the number of edges that differ between $F$ and $F^u$. Recall that the added edge is $\eplus = \{u,v\}$. By item (1), either $\deg_u(\gs'_{[t]}) > D$ or $\eplus$ is the only edge that differs between $F$ and $F^u$. We now analyze the case where $\deg_u(\gs'_{[t]}) > D$. Item (1) tells us that, if the sequence $\mD$ of differing edges includes a node $w$ with $\deg_w(\gs'_{[t]}) \leq D$, then the edge containing this node will be the final node in the sequence. Consider the modified sequence $\mD'$ where we remove $\eplus$ from $\mD$. By (1), if the first edge in $\mD'$ is incident on only one node with degree greater than $D$ in $\gs'_{[t]}$, then it is the only node in $\mD'$. We focus on the case where it is incident on two nodes with degree greater than $D$ in $\gs'_{[t]}$. By item (2), we see that, if a node appears in two consecutive edges in the sequence $\mD$, it cannot appear again in the sequence because it will already be saturated for both projections at or before the \projstagetext{} of the second edge. Therefore, the edges in $\mD'$ form a simple path on the nodes in $\gs'_{[t]}$, and the first edge is between high-degree nodes with degree greater than $D$ in $\gs'_{[t]}$, and all of the remaining nodes are incident on at least one high-degree nodes with degree greater than $D$ in $\gs'_{[t]}$. Since there are at most $\ell$ such nodes, this simple path contains at most $\ell$ edges. When we also include $\eplus$, we see that at most $\ell + 1$ edges differ between $F$ and $F^u$.

    A symmetric argument can be used to show that at most $\ell$ edges differ between $F^u$ and $F'$ (with the value being $\ell$ instead of $\ell + 1$ since either $\eplus$ appears in both $F^u$ and $F'$ or appears in neither). By the triangle inequality, then, the flattened graphs $F$ and $F'$ differ on at most $2 \ell + 1$ edges, which is what we wanted to show.    
\end{proof}

\fi

\section{{\boldmath From \texorpdfstring{$D$}{D}-restricted Privacy to Node Privacy}}
\label{sec:always-private}

In this section, we present our general transformation from restricted edge- and node-DP algorithms to node-DP algorithms (\cref{alg:bb}). As input, \cref{alg:bb} takes several user-specified parameters ($\eps_\test,\failtest,\beta,D,T$), a length-$T$ graph stream $\gs$ of arbitrary degree, and black-box access to a base algorithm.
At every time step, \cref{alg:bb} returns either the result of running one more step of the base algorithm on the projection of the graph stream, or a special symbol $\bot$ denoting failure. 

The following theorem encapsulates its privacy and accuracy properties, and its efficiency. Roughly, the overall algorithm is private for all graph streams as long as the base algorithm satisfies either edge or node variants of $D$-restricted DP. It is accurate on graph streams through all time steps that satisfy the assumed degree bound, and adds only linear overhead to the runtime of the base algorithm.

\begin{restatable}[Privacy for all graph streams and restricted accuracy]{theorem}{ParamsForBB}
\label{thrm:bb priv params}
    Consider running \cref{alg:bb-new} with parameters $\eps_\test, \failtest, \beta, D, T$, and let $\ell = \ceil{8\ln\paren{\frac{T}{\beta\failtest}}/\eps_\test}$ and $D' = D + \ell$ as in lines~\ref{line:set ell}~and~\ref{line:set d-prime}.
    \begin{enumerate}[leftmargin=*]
    \item \textbf{(Node privacy from restricted edge privacy.)} Suppose $\mech_{D'}$ satisfies $D'$-restricted $(\eps',\delta')$-\textbf{edge}-DP under continual observation for graph streams of length $T$. Then \cref{alg:bb-new} satisfies (unrestricted) $(\eps,\delta)$-\textbf{node}-DP under continual observation for graph streams of length $T$, with
    \begin{eqnarray*}
        \eps &=& \eps_\test + \eps' \cdot \paren{D' + \ell} 
        \quad \text{and} \\
        \delta &=& (1 + e^{\eps_\test})e^{\eps}\failtest + \delta' \cdot e^{\eps' \cdot (D' + \ell) } \cdot\paren{D' + \ell } 
        \noakland{\quad (\text{which is}\ O\paren{\failtest + \delta' (D + \ell)} \text{\ when }\ \eps \leq 1)}.
    \end{eqnarray*}
    In particular, for $\eps\leq 1 $ and $T\geq 2$, it suffices to set $\eps_\test = \eps/2$ and $\failtest = \del/30$, and
    \begin{eqnarray*}
        \eps' = \Theta \left( \frac{\eps}{B} \right)
        \quad \text{and} \quad
        \del' = \Theta \left( \frac{\del}{B} \right),&& \\
        \quad \text{where} \quad 
        B = D + \frac{\log(T/(\beta\del))}{\eps}.&&
    \end{eqnarray*}
    
    \item \textbf{(Node privacy from restricted node privacy.)} Suppose $\mech_{D'}$ satisfies $D'$-restricted $(\eps',\delta')$-\textbf{node}-DP under continual observation for graph streams of length $T$. Then \cref{alg:bb-new} satisfies (unrestricted) $(\eps,\delta)$-\textbf{node}-DP under continual observation for graph streams of length $T$, with
    \begin{eqnarray*}
        \eps &=& \eps_\test + \eps' \cdot \paren{2\ell + 1} 
        \quad \text{and} \\
        \delta &=& (1 + e^{\eps_\test})e^{\eps}\failtest + \delta' \cdot e^{\eps' \cdot (2 \ell + 1)} \cdot\paren{2\ell + 1} 
        \noakland{\quad (\text{which is}\ O\paren{\failtest + \delta' \ell} \text{\ when }\ \eps \leq 1)}. 
    \end{eqnarray*}
    In particular, for $\eps\leq 1 $ and $T\geq 2$, it suffices to set $\eps_\test = \eps/2$ and $\failtest = \del/30$, and
    \begin{eqnarray*}
        \eps' = \Theta \left( \frac{\eps}{B} \right)
        \quad \text{and} \quad
        \del' = \Theta \left( \frac{\del}{B} \right),&& \\
        \quad \text{where} \quad 
        B = \frac{\log(T/(\beta\del))}{\eps}.&&
    \end{eqnarray*}

    \item \textbf{(Accuracy.)} If the input graph stream $\gs$ is $(D,0)$-bounded through time step $t'$, then the output from $\mech_{D'}$ is released at all time steps $t\leq t'$ with probability at least $1-\beta$.

    \item \textbf{(Time and space complexity.)} \cref{alg:bb-new} adds linear overhead to the time and space complexity of $\mech_{D'}$. More formally, let $R_{[t]}$ and $S_{[t]}$ be the runtime and space complexity of $\mech_{D'}$ through $t$ time steps, and let $n_{[t]}$ and $m_{[t]}$ be the number of nodes and edges in the graph stream through time $t$. The total time complexity of \cref{alg:bb} through time $t\in[T]$ is $R_{[t]} + O(n_{[t]} + m_{[t]} + t)$, and the total space complexity through time $t$ is $S_{[t]} + O(n_{[t]})$.

    \end{enumerate}
\end{restatable}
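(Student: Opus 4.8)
The plan is to decompose the output of \cref{alg:bb-new} into the \emph{verdict pattern} produced by its online Propose--Test--Release (PTR) subroutine and the sequence of \emph{base outputs} it actually releases, to bound the privacy of the two components, and then to compose them. Write the algorithm's coins as $(r_\test,r_\base)$, where $r_\test$ drives the sparse-vector--based test and $r_\base$ is the internal randomness of $\mech_{D'}$. Given $r_\test$, the test reads the sequence of queries $q_t(\gs)=\dtg\paren{\flatten{\gs_{[t]}}}$ --- the node distance from the flattened graph through time $t$ to the nearest non-$(D',\ell)$-bounded graph --- and outputs a verdict in $\set{\passed,\bot}$ at each step; at a $\passed$ step $t$ the algorithm outputs the $t$-th coordinate of $\mech_{D'}$ run on the projected stream $\projo[D'](\gs)$, and at a $\bot$ step it outputs $\bot$. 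Two facts are used repeatedly: (i) by \cref{lem:dtg-properties}, each $q_t$ has node-sensitivity $1$, is monotone non-increasing in $t$, and satisfies $q_t(\gs)\ge\ell+1$ whenever $\flatten{\gs_{[t]}}$ is $D$-bounded; and (ii) $\projo[D']$ always returns a $D'$-bounded stream, so the $D'$-restricted guarantee of $\mech_{D'}$ applies to $\projo[D'](\gs)$ and to $\projo[D'](\gs')$ for any inputs.

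\textbf{Privacy (parts (1) and (2)).} Fix node-neighbors $\gs\nnode\gs'$. The verdict pattern, as a function of $\gs$, depends on $\gs$ only through $(q_1(\gs),\dots,q_T(\gs))$, a sequence of sensitivity-$1$ queries, so by the online-PTR guarantee \cref{thrm:ptr crt simple} with budget $\eps_\test$ it is $(\eps_\test,0)$-node-DP; and, since $\ell=\ceil{8\ln(T/(\beta\failtest))/\eps_\test}$ makes the test's noise smaller than (say) $\ell/2$ except with probability $\failtest$, the ``safety event'' holds with probability $\ge 1-\failtest$: every $\passed$ step $t$ has $q_t(\gs)\ge 2$. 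On this event, by sensitivity and monotonicity of $q$, both $\gs$ and $\gs'$ are $(D',\ell)$-bounded through every released step $t$, so \cref{thrm:combined-stab} gives $\dedge\paren{\projo[D'](\gs)_{[t]},\projo[D'](\gs')_{[t]}}\le D'+\ell$ in part (1) (node-to-edge stability) and $\dnode\paren{\projo[D'](\gs)_{[t]},\projo[D'](\gs')_{[t]}}\le 2\ell+1$ in part (2) (node-to-node stability). Since both projected streams are $D'$-bounded, the chain of neighbors realizing this distance can be taken entirely through $D'$-bounded streams (delete the symmetric-difference edges and isolated nodes of one projection, then add those of the other), so group privacy (\cref{lemma:group priv}) applied to the $D'$-restricted $(\eps',\delta')$ guarantee of $\mech_{D'}$ makes the two base-output streams $(k\eps',ke^{k\eps'}\delta')$-indistinguishable, with $k=D'+\ell$ (part (1)) or $k=2\ell+1$ (part (2)). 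Composing the $(\eps_\test,0)$ verdict guarantee with this conditional $(k\eps',ke^{k\eps'}\delta')$ base guarantee via the weak triangle inequality (\cref{lem:weak triangle ineq}), and charging the $\le\failtest$ failure probability of the safety event on each of $\gs$ and $\gs'$ to the $\delta$ term (scaled by the $e^{\eps_\test},e^{\eps}$ factors incurred when transferring between the streams), yields $\eps=\eps_\test+k\eps'$ and $\delta=(1+e^{\eps_\test})e^{\eps}\failtest+\delta'e^{k\eps'}k$. Substituting $\eps_\test=\eps/2$, $\failtest=\del/30$, and noting $k=\Theta(B)$ for the stated $B$, gives $\eps'=\Theta(\eps/B)$, $\del'=\Theta(\del/B)$, and the ``in particular'' bounds.

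\textbf{The main obstacle} is the first step: establishing and correctly applying the online variant of PTR. A one-shot PTR releases a single bit and its privacy reduces to batch arguments, whereas here the test adaptively decides at up to $T$ steps whether to release, and the \emph{entire} verdict pattern must be shown $(\eps_\test,0)$-DP --- this needs an ``above-threshold''-style sparse-vector analysis that tracks a release pattern rather than a lone comparison (\cref{thrm:ptr crt simple}). The care required is to prove pure DP of the pattern \emph{unconditionally}, and to invoke the safety event only to transport stability (and hence privacy of the base outputs) from $\gs$ to $\gs'$, paying $\failtest$ in $\delta$ per stream. Everything else rests on results already in hand: the stability bounds of \cref{thrm:combined-stab}, group privacy, and the weak triangle inequality.

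\textbf{Accuracy and complexity (parts (3) and (4)).} Part (3) is immediate from the safety analysis: if $\gs$ is $(D,0)$-bounded through $t'$ then $q_t(\gs)\ge\ell+1$ for all $t\le t'$ by \cref{lem:dtg-properties}, so on the event that the test's noise is below $\ell/2$ --- which, on the accuracy side, fails with probability at most $\beta$, the role $\beta$ plays in the choice of $\ell$ --- the test's threshold is cleared at every $t\le t'$ and the base output is released. For part (4), the overhead of \cref{alg:bb-new} beyond invoking $\mech_{D'}$ consists of: maintaining the greedy projection $\projo[D']$ (degree counters with $O(1)$ amortized work per arriving node or edge, $O(n_{[t]})$ space); incrementally maintaining $\dtg\paren{\flatten{\gs_{[t]}}}$, which by \cref{lem:dtg-properties} takes time linear in the stream size, i.e.\ $O(n_{[t]}+m_{[t]})$ in total; and running the sparse vector test, which draws and compares $O(1)$ Laplace values per step, i.e.\ $O(t)$ time and $O(1)$ extra space through time $t$. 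Summing, the overhead through time $t$ is $O(n_{[t]}+m_{[t]}+t)$ time and $O(n_{[t]})$ space, giving total time $R_{[t]}+O(n_{[t]}+m_{[t]}+t)$ and total space $S_{[t]}+O(n_{[t]})$, as claimed.
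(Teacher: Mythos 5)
Your proposal follows essentially the same route as the paper: an SVT-based test on the sensitivity-$1$ queries derived from $\dtg_{D',\ell}$, soundness of the test to restrict attention to $(D',\ell)$-bounded prefixes, the stability bounds of \cref{thrm:combined-stab} combined with group privacy for the base outputs, and composition via the weak triangle inequality --- exactly the paper's instantiation of \cref{thrm:ptr crt simple} through \cref{claim:base in bb} and \cref{claim:test in bb}. The one imprecision is your claim that the safety event gives $q_t(\gs)\ge 2$ at every released step with probability $1-\failtest$: the test's calibration only guarantees stopping by the first time the distance reaches $0$ on the stream it is run on, so one must instead apply soundness to whichever neighbor first leaves the safe set and transfer to the other via the test's $\eps_\test$-DP (paying the $e^{\eps_\test}$ factor), which is the paper's mechanism and the one you in fact allude to when charging the failure probability to both streams.
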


The basic idea of the algorithm follows the Propose-Test-Release (PTR) framework of \cite{DL09}: we use the sparse vector technique \cite{DNRRV09,RothR10,HardtR10} to continually check that the conditions of \cref{thrm:combined-stab} are met. As long as they are, we can safely run the base algorithm with parameters scaled according to the edge (or node) sensitivity of the projection so that, by group privacy, its outputs are $(\eps,\del)$-indistinguishable on all pairs of node-neighboring inputs, satisfying $(\eps,\del)$-node-DP.

\noakland{
In \cref{sec:test for bad graphs}, we present a query with node-sensitivity 1 to test
that a graph is $(D,\ell)$-bounded, the condition required in \cref{thrm:combined-stab}. In \cref{sec:ptr crt}, we show how a novel, online variant of the PTR framework can be applied to the continual release setting. In \cref{sec:black box acc priv}, we present the black-box framework (\cref{alg:bb}) for transforming restricted edge- or node-private algorithms to node-private algorithms\noakland{, and analyze its privacy, accuracy, and efficiency (proof of \Cref{thrm:bb priv params})}.
}

\subsection{Testing for Bad Graphs}
\label{sec:test for bad graphs}

To use the sparse vector technique, we need a stream of queries with (node) sensitivity 1. Below, we define a function $\DistToGraph$ with node-sensitivity 1 that returns the minimum, over all graphs, of the node distance between the input graph and a graph with at least $\ell$ nodes of degree greater than $D$. In other words, it tells how close the input graph is to a graph that is \emph{not} $(D,\ell-1)$-bounded. This function can be used to make such a stream of node-sensitivity 1 queries for continually checking whether the conditions of \cref{thrm:combined-stab} are satisfied.

\begin{definition}[$\DistToGraph_{D,\ell}$]
\label{defn:dist to graph}
    Let $\DistToGraph_{D,\ell} : \mG \to \N\cup \{0\}$ return the minimum node distance between the input graph and a graph with at least $\ell$ nodes of degree greater than $D$.
\end{definition}

We now present some properties of $\DistToGraph$, which we will use in the proof of privacy for the black-box framework described in \cref{alg:bb}.

\begin{lemma}[Properties of $\DistToGraph$]
\label{lem:dtg-properties}
\label{cor:sens of dtg}
\label{thrm:dist poly time}
\label{cor:dtg meaning}
    $\DistToGraph_{D,\ell}$ has the following properties:
    \begin{enumerate}[leftmargin=*]
        \item $\DistToGraph_{D,\ell}$ has node-sensitivity 1.
        \item  For an input graph $G$ with $|V|$ nodes and $|E|$ edges, $\DistToGraph_{D,\ell}$ can be computed in time $O(|V| + |E|)$.
        
        Furthermore, given a graph stream $\gs$, one can determine the sequence of distances for all prefixes of the stream, online. With each node or edge arrival, the distance can be updated in constant time.

        \item Let $G\in \mG$ be a $(D,\ell)$-bounded graph (i.e., with at most $\ell$ nodes of degree greater than $D$). Then  
        $\DistToGraph_{D+k,\ell+k}(G) \geq k$ for all $k\in\N\cup \{0\}$.
    \end{enumerate} 
\end{lemma}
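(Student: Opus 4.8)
\textbf{Proof proposal for \Cref{lem:dtg-properties}.}

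The plan is to handle the three items in order, since the first two are close to routine once the combinatorial structure of $\DistToGraph_{D,\ell}$ is made explicit, and the third follows from a short monotonicity argument.

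\emph{Item 1 (node-sensitivity 1).} First I would observe that $\DistToGraph_{D,\ell}(G)$ is by definition the node-distance from $G$ to a fixed target set $\mathcal{B}_{D,\ell} = \{H : H \text{ has at least } \ell \text{ nodes of degree} > D\}$. For any set of graphs $\mathcal{B}$, the function $G \mapsto \dnode(G, \mathcal{B}) = \min_{H \in \mathcal{B}} \dnode(G,H)$ is $1$-Lipschitz with respect to $\dnode$, by the triangle inequality: if $\gs \nnode \gs'$ then $\dnode(\gs,\gs') = 1$, so $\dnode(\gs,\mathcal{B}) \le \dnode(\gs',\mathcal{B}) + 1$ and symmetrically. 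Hence $|\DistToGraph_{D,\ell}(\gs) - \DistToGraph_{D,\ell}(\gs')| \le 1$. This is a one-line argument and I expect no difficulty.

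\emph{Item 2 (linear-time / online computation).} The main task here is to give a closed-form expression for $\DistToGraph_{D,\ell}(G)$ that is clearly computable in $O(|V|+|E|)$ time and updatable in $O(1)$ per arrival. The key structural claim is: to reach a graph in $\mathcal{B}_{D,\ell}$ from $G$ as cheaply as possible, one never needs to delete anything — one only adds nodes and edges. Concretely, let $h$ be the current number of nodes of degree $>D$ in $G$. If $h \ge \ell$ the distance is $0$. Otherwise I would argue that the cheapest way to create one additional high-degree node is to add a single fresh node and connect it to $D+1$ existing (or fresh) vertices; this costs exactly $1$ node-operation (the new node together with its incident edges counts as one step in $\dnode$, since a node arrives with all its edges). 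Repeating $\ell - h$ times gives an upper bound of $\ell - h$; and any chain of node-neighboring graphs from $G$ to $\mathcal{B}_{D,\ell}$ must include at least $\ell - h$ steps because each node-neighboring step changes the vertex set by exactly one node and can increase the count of degree-$>D$ vertices by at most $1$ (adding a node adds at most one such vertex; removing a node removes at most one, but can also \emph{decrease} the count of its neighbors' degrees — however removing a node can only help reach $\mathcal{B}$ by raising the count, which a removal never does, so an optimal chain uses only additions). Thus
\[
    \DistToGraph_{D,\ell}(G) = \max\{0,\ \ell - h\},
    \quad\text{where } h = |\{v \in V : \deg_v(G) > D\}|.
\]
Given this, computing $h$ is a single pass over the degree sequence, i.e.\ $O(|V|+|E|)$; and when a node or edge arrives one maintains $h$ by checking whether the affected endpoints cross the threshold $D$, which is $O(1)$ amortized per arrival (each endpoint crosses the threshold at most once). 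The honest part of the write-up is justifying the "only additions are needed" claim carefully — that each step of an optimal chain changes the high-degree count by at most $+1$, and a decreasing step is never useful — but this is a short exchange argument.

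\emph{Item 3 (distance lower bound for $(D,\ell)$-bounded $G$).} Using the formula from item 2, $\DistToGraph_{D+k,\ell+k}(G) = \max\{0, (\ell+k) - h_k\}$ where $h_k = |\{v : \deg_v(G) > D+k\}|$. Since $G$ is $(D,\ell)$-bounded, $|\{v : \deg_v(G) > D\}| \le \ell$, and because $\deg_v(G) > D+k$ implies $\deg_v(G) > D$, we get $h_k \le \ell$. Therefore $(\ell + k) - h_k \ge (\ell + k) - \ell = k$, giving $\DistToGraph_{D+k,\ell+k}(G) \ge k$ as claimed. This is immediate once the formula is in hand.

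The main obstacle is really just item 2: pinning down the exact combinatorial formula for $\DistToGraph$ and proving the "additions suffice / each step changes the count by at most one" lemma rigorously, including the subtlety that a node arrives together with all its incident edges and counts as a single step of $\dnode$. Everything else (items 1 and 3, and the complexity bookkeeping) follows quickly from that formula.
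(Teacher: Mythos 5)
Your item 1 is correct and matches the paper's (one-line) argument: $\DistToGraph_{D,\ell}$ is a node-distance to a fixed set of graphs, hence $1$-Lipschitz under $\dnode$. Item 2, however, contains a genuine error that propagates into your item 3. The false step is the claim that a single node-neighboring step ``can increase the count of degree-$>D$ vertices by at most $1$,'' and the resulting closed form $\DistToGraph_{D,\ell}(G)=\max\{0,\ell-h\}$ with $h=|\{v:\deg_v(G)>D\}|$. Adding one new vertex raises the degree of \emph{every} vertex it attaches to, so a single addition can push many existing vertices over the threshold simultaneously. Concretely, let $G$ be the disjoint union of $\ell$ stars $K_{1,D}$, so every star center has degree exactly $D$ and $h=0$; your formula gives distance $\ell$. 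But adding one node adjacent to all $\ell$ centers makes each center have degree $D+1$, producing $\ell$ vertices of degree $>D$ in a single step, so the true distance is $1$. Both your upper bound (``repeat $\ell-h$ times'') and your matching lower bound are therefore wrong.

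The paper's argument instead shows that an optimal strategy is to repeatedly add a \emph{universal} node (adjacent to all current vertices); after $j$ such additions every original vertex's degree has increased by $j$, which leads to the correct characterization: $\DistToGraph_{D,\ell}(G)$ is the smallest $j\ge\max\{D-|V|+2,\,0\}$ such that $j+|\{v:\deg_v(G)\ge D-j+1\}|\ \ge\ \ell$. (The proof that deletions never help goes through a subgraph/monotonicity exchange, as you anticipated, but the quantity being tracked is this shifted degree count, not $h$.) This also changes the online bookkeeping: the paper maintains a cumulative degree histogram and uses the fact that the distance is non-increasing and has node-sensitivity $1$ (so an edge arrival changes it by at most $2$) to update in $O(1)$ time; your ``each endpoint crosses the threshold at most once'' amortization is for the wrong statistic. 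Finally, your item 3 is derived from the incorrect formula, so as written it does not stand; it can be recovered either from the correct formula (for $j<k$ one has $j+|\{v:\deg_v(G)\ge D+k-j+1\}|\le j+\ell<k+\ell$) or, as the paper does, by the direct induction that adding one universal node to a $(D,\ell)$-bounded graph yields a $(D+1,\ell+1)$-bounded graph.
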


\yoakland{We omit a full proof; however, the following algorithm computes $\DistToGraph$ and can be implemented in linear time. Given a graph $G\in \mG$, 
first check if $G$ is $(D,\ell)$-bounded. While that condition is not satisfied,  add a node to $G$ with edges incident to all existing nodes (and check $(D,\ell)$-boundedness again).
$\DistToGraph_{D,\ell}(G)$ equals the number of nodes that have been added to $G$ when the check fails.}

\ifnum\oakland=0
\begin{proof}[Proof of \cref{cor:sens of dtg}]
    \textbf{Proof of item (1).} $\DistToGraph_{D,\ell}$ is defined in terms of the minimum number of nodes to add, so changing the graph by one node will change $\DistToGraph_{D,\ell}$ by at most 1.

    \textbf{Proof of item (2).} We first describe a method for computing $\dtg_{D,\ell}$ and prove this method's correctness, and then describe an efficient implementation.
    Given a graph $G\in \mG$, $\dtg_{D,\ell}(G)$ can be computed as follows: first, check if $G$ is $(D,\ell-1)$-bounded. While that condition is  satisfied,  add a node to $G$ with edges incident to all existing nodes and check $(D,\ell-1)$-boundedness again.
    Return the number of nodes that have been added to $G$ when the check first fails.

    We  prove that the quantity returned by this method is equal to $\dtg_{D,\ell}(G)$. First, observe that the method cannot underestimate the distance since, for the number $\hat k$ it outputs, there is actually a graph within node distance $\hat k$ of $G$ that is not $(D,\ell)$-bounded. To see why the method  never overestimates the distance,
    suppose $\dtg_{D,\ell}(G) = k$, and let $H$ be a graph with at least $\ell$ nodes of degree greater than $D$ that is closest to $G$ in node distance. $H$ can be formed by removing some set of nodes $B$ (and associated edges) from $G$ and adding some set of nodes $A$ (and associated edges) such that $|A| + |B| = k$.
    Now consider the graph $H'$ which is formed by taking $G$, adding all nodes in $A$ (along with the associated edges), and adding edges from every node in $A$ to all other nodes in the graph that do not already have an edge from that node. We do not delete the nodes and edges that are in $B$. $H$ is a subgraph of $H'$, so if $H$ has at least $\ell$ nodes of degree greater than $D$, then $H'$ also has at least $\ell$ nodes of degree greater than $D$. 
    Additionally,  $H'$ is exactly the graph that will be formed by the method described above after at most $k$ steps. Finally,  $H'$ is node distance $|A|\leq k$ from $G$, and $\hat k = |A|$ is the output returned by this method.

    Naively, an algorithm for the above method can be implemented by storing a graph in memory and manipulating this graph with every pass through the while loop. However,
    storing the entire graph is not necessary.
    There is a linear-time algorithm that computes the same quantity as the method described above. 
    To see why that is, 
    let $|V|$ be the number of nodes in $G$. After adding $j$ nodes that have edges to all other nodes, the number of nodes with degree greater than $D$ will be 0 if $D\geq |V| + j - 1$ (a graph with $|V| + j$ nodes has maximum degree $|V|+j-1$); and will otherwise be the number of nodes in $G$ with degree greater than $D-j$, plus $j$.
    Therefore, it suffices to maintain a \emph{cumulative degree histogram} $\chist{G}{\cdot}$, which we define below, and compute a value based on that.
    
    Let $\chist{G}{j}$ hold the number of nodes in $G$ with degree at least $j$, and let $\chist{G}{\cdot}$ have buckets for $j\in\set{0,\ldots, D+1}$. Given this cumulative degree histogram, we return \emph{the smallest value of $j\geq \max\set{D - |V| + 2,0}$ such that}
    \begin{equation}
        j + \chist{G}{D-j+1} \geq \ell \, . 
        \label{eq:chist-degree-distance}
    \end{equation}
    
    Call this smallest such value $k$. We now show that, when a new node or edge arrives, we can update the value of $k$ in constant time. To do this, we maintain the following data structures:
    a hash table $H$ for node degrees that takes a node name as input and returns its degree (we assume unassigned values in the hash table are set to 0);
    a hash table $C$ for the cumulative histogram that takes an integer $i$ as input and returns the number of nodes with degree at least $i$;
    a counter $n$ that tracks the number of nodes in the graph, initialized to 0;
    and the current value $k$ of $\dtg_{D,\ell}(G)$, which we initialize to $k = \max\set{D+2,\ell}$ when the graph is empty (since every node in a complete graph on $D + 2$ nodes has degree greater than $D$, but $\ell$ nodes are needed for a non-$(D,\ell)$-bounded graph when $\ell > D + 2$).

    We first show that $H,C,n$ can be updated in constant time when a new node or edge is added to $G$. When a new node $v$ arrives, set $H(v) = 0$, $C(0) \pluseq 1$, and $n\pluseq 1$. When a new edge $\set{u,v}$ arrives, set $H(u) \pluseq 1$ and $H(v) \pluseq 1$, and then set $C(H(u)) \pluseq 1$ and $C(H(v)) \pluseq 1$. 

    We next show that the value of $k$ (i.e., the smallest value of $j$ such that (\ref{eq:chist-degree-distance}) is satisfied) can be updated in constant time, assuming $H$ and $C$ are up to date. Recall that $k$ is the value of $\dtg_{D,\ell}$ that we want to update, and that this function has node sensitivity 1.
    This means that adding a node can change $k$ by at most 1, and that adding an edge can change $k$ by at most 2 since every graph with an edge $e$ is node distance at most 2 from a graph without $e$ (an edge $e$ has a vertex cover of size 1, so to obtain an otherwise identical graph without that edge requires at most removing a vertex incident to $e$ and then re-adding the vertex without $e$).
    Moreover, the value of $k$ cannot increase as nodes and edges are added since the old graph is a subgraph of the updated graph, so the new graph can be no further from a non-$(D,\ell-1)$-bounded graph than the old graph. Therefore, it suffices to check, for $k' \in \set{k,k-1,k-2}$ whether $k' + C(D-k'+1) \geq \ell$ and $k' \geq \max\set{D - n + 2,0}$, and then set $k$ to the smallest value $k'$ for which these conditions hold.

    Since all of the updates to $H$, $C$, $n$, and $k$ in response to the arrival of a node or edge are constant-time operations, $\dtg_{D,\ell}(G)$ can be computed in time $O(|V| + |E|)$ for a graph $G = (V,E)$.\footnote{This linear-time algorithm assumes that the graph stream does not contain duplicate edges; to remove this assumption, a check for whether an edge is a duplicate can be performed in constant time (and that edge can then be ignored).}

    \textbf{Proof of item (3).} We know from the proof of item (2) that the shortest sequence of node-neighboring graphs, from $G$ to the closest graph with at least $\ell$ nodes of degree greater than $D$, consists of sequentially adding nodes to $G$ that have all possible edges.
    Adding one node $v^+$ to $G$ increases the degree of each node in $G$ by at most 1. Additionally, the added node $v^+$ may have degree greater than $D$. 
    Thus, if $G$ is $(D,\ell)$-bounded, then adding one node yields a graph $G^+$ that  is $(D+1,\ell+1)$-bounded. The claim follows by induction on the number of added nodes.
\end{proof}
\fi

\subsection{PTR in the Continual Release Setting}
\label{sec:ptr crt}

The high-level idea of our general transformation is to use a novel, online variant of the Propose-Test-Release framework (PTR) of \cite{DL09}.
To our knowledge, PTR has not been used previously for designing algorithms in the continual release setting. In this section, we show
that PTR can, in fact, be applied in the continual release setting when the algorithm checking the safety condition is itself private under continual observation, as is the sparse vector algorithm.

\begin{theorem}[Privacy of PTR under Continual Observation]
\label{thrm:ptr crt simple}
\label{thrm:priv of ptr under crt}
    Let $\base:\mX^T\to \mY^T$ be a streaming algorithm and $\mZ \subseteq \mX^T$  a set of streams such that, for all neighbors $\gs,\gs'\in\mZ$,
    $\base(\gs)\approx_{\eps_\base,\delta_\base}\base(\gs')$. 
    
    Let $\test:\mX^T\to \{\bot,\top\}$ be $(\eps_\test,\del_\test)$-DP under continual observation such that for every stream $\gs$, if there is a time $t$ such that $\gs_{[t]}$ does not equal the $t$-element prefix of any item in $\mZ$, then $\test(\gs)$ outputs 
    $\bot$ w.p. at least $1-\beta_\test$ at or before time $t$.

    If \cref{alg:ptr crt} is initialized with $\test$ and $\base$, it will satisfy $(\eps,\del)$-DP under continual observation on all input streams $x\in\mX^T$, for
    \[
        \eps = \eps_\base + \eps_\test
        \quad \text{and} \quad
    \]
    \[
        \delta =
        \del_\base +
        (1 + e^{\eps_\base + \eps_\test}) \del_\test + (1 + e^{\eps_\test})e^{\eps_\base + \eps_\test} \beta_\test
        .
    \]
\end{theorem}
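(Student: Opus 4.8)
The plan is to adapt the Propose--Test--Release template of \cite{DL09}, organizing the accounting so that the \emph{online} structure of \cref{alg:ptr crt}---the fact that it makes up to $T$ separate release decisions---costs only an additive term rather than a factor of $T$. First I would fix notation: for a stream $\gs$, let $\tau(\gs)\in\{1,\dots,T{+}1\}$ be the first time $\test(\gs)$ outputs $\bot$ (with $\tau(\gs)=T{+}1$ if it never does), and let $\sigma(\gs)\in\{1,\dots,T{+}1\}$ be the first time $t$ for which $\gs_{[t]}$ is not the $t$-element prefix of any member of $\mZ$ (with $\sigma(\gs)=T{+}1$ if there is no such $t$). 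The hypothesis on $\test$ then says precisely that $\Pr[\tau(\gs)>\sigma(\gs)]\le\beta_\test$ whenever $\sigma(\gs)\le T$. Since $\test$ and $\base$ use independent randomness and $\base$ is streaming, the output of \cref{alg:ptr crt} on $\gs$ is a fixed deterministic function of the pair $(\tau(\gs),\base(\gs))$, namely $(\base(\gs)_1,\dots,\base(\gs)_{\tau(\gs)-1},\bot,\dots,\bot)$. Hence by post-processing (\cref{lemma:post proc}) it suffices, for each pair of neighbors $x\ngen x'$ and each output set $S$, to bound $\Pr[\ptr(x)\in S]$ in terms of $\Pr[\ptr(x')\in S]$; and since $\sigma(\cdot)$ and the $\test$/$\base$ hypotheses are symmetric in $x,x'$, one direction suffices.

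For the core computation I would set $\sigma^*=\min\{\sigma(x),\sigma(x')\}$ and split $\Pr[\ptr(x)\in S]=\Pr[\ptr(x)\in S,\ \tau(x)\le\sigma^*]+\Pr[\ptr(x)\in S,\ \tau(x)>\sigma^*]$. The second summand is at most $\Pr[\tau(x)>\sigma^*]$: if $\sigma^*=\sigma(x)$ this is $\le\beta_\test$ by the $\test$ hypothesis, while if $\sigma^*=\sigma(x')<\sigma(x)$ then $\sigma(x')\le T$ and, because $\{\tau>\sigma(x')\}$ is a deterministic post-processing of $\test$'s output, DP of $\test$ gives $\Pr[\tau(x)>\sigma(x')]\le e^{\eps_\test}\Pr[\tau(x')>\sigma(x')]+\del_\test\le e^{\eps_\test}\beta_\test+\del_\test$; either way this summand goes entirely into the additive error (the edge cases $\sigma(\cdot)=T{+}1$ only make it vanish). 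For the first summand I condition on $\tau(x)=\tau_0\le\sigma^*$: then $\tau_0-1<\sigma(x)$ and $\tau_0-1<\sigma(x')$, so both $x_{[\tau_0-1]}$ and $x'_{[\tau_0-1]}$ are prefixes of members of $\mZ$, and---using that $\base(\cdot)_{[\tau_0-1]}$ depends only on the $(\tau_0-1)$-prefix, together with the $\base$ hypothesis---we get $\base(x)_{[\tau_0-1]}\approx_{\eps_\base,\del_\base}\base(x')_{[\tau_0-1]}$. Writing $S_{\tau_0}$ for the section of $S$ at halt time $\tau_0$ and using independence of the $\test$- and $\base$-coins, $\Pr[\ptr(x)\in S,\ \tau(x)\le\sigma^*]=\sum_{\tau_0\le\sigma^*}\Pr[\tau(x)=\tau_0]\cdot\Pr[\base(x)_{[\tau_0-1]}\in S_{\tau_0}]$. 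Applying the $\base$ bound \emph{inside each summand} (the induced $\del_\base$'s are weighted by $\Pr[\tau(x)=\tau_0]$, which sum to at most $1$, so they do not accumulate) gives $e^{\eps_\base}\sum_{\tau_0\le\sigma^*}\Pr[\tau(x)=\tau_0]\cdot\Pr[\base(x')_{[\tau_0-1]}\in S_{\tau_0}]+\del_\base$. The remaining sum equals $\Ex_{\tau_0\sim\tau(x)}[\mathbf{1}[\tau_0\le\sigma^*]\,\Pr[\base(x')_{[\tau_0-1]}\in S_{\tau_0}]]$, the expectation of a fixed $[0,1]$-valued function of $\tau(x)$ and hence a bounded post-processing of $\test(x)$; DP of $\test$ (applied \emph{once}) replaces $\tau(x)$ by $\tau(x')$ at the cost of one factor $e^{\eps_\test}$ and one additive $\del_\test$, and what is left is $\le\Pr[\ptr(x')\in S]$. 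Collecting the pieces yields $\Pr[\ptr(x)\in S]\le e^{\eps_\base+\eps_\test}\Pr[\ptr(x')\in S]+\del$.

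The main obstacle is exactly this online accounting: the naive approach---partition by the halt time $\tau_0$ and apply both $\test$-DP and $\base$-DP coordinatewise---loses an additive $\Theta(T)$ (one $\del_\test$ or $\del_\base$ per candidate halt time). The resolution is the asymmetry above: $\base$-DP is applied pointwise in $\tau_0$ (legitimate because the weights $\Pr[\tau(x)=\tau_0]$ form a sub-probability distribution), but $\test$-DP is applied only once, to the single aggregated bounded statistic, so that $\test$ and $\base$ each cost one multiplicative factor and the final privacy loss is $\eps=\eps_\base+\eps_\test$ as claimed. A secondary technical point is justifying $\base(x)_{[t]}\approx_{\eps_\base,\del_\base}\base(x')_{[t]}$ for $t<\sigma^*$: beyond streaming-ness of $\base$, this uses that $\mZ$ is rich enough that these neighboring $t$-prefixes are realized by neighboring members of $\mZ$ (which holds, with slack, for the $\mZ$ arising from \cref{thrm:combined-stab} in our applications). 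Finally, after verifying the symmetric direction identically, I would push the constants through the elementary $(\eps,\del)$-indistinguishability inequalities and the weak triangle inequality (\cref{lem:weak triangle ineq}) to land on $\del=\del_\base+(1+e^{\eps_\base+\eps_\test})\del_\test+(1+e^{\eps_\test})e^{\eps_\base+\eps_\test}\beta_\test$.
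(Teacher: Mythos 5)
Your proposal is correct and follows essentially the same route as the paper's proof: the paper also truncates at $t^*=\min\{\sigma(x),\sigma(x')\}$, charges the event that the base output is released at or after $t^*$ to $\beta_\test$ on one stream and to $e^{\eps_\test}\beta_\test+\del_\test$ on its neighbor via the DP of $\test$, and pays for $\test$ and $\base$ exactly once each. The only difference is organizational: where you unroll the composition by summing over halt times (applying $\base$-DP per $\tau_0$ and $\test$-DP once to the aggregated bounded statistic), the paper introduces hypothetical truncated mechanisms $\ttildeptr,\tildeptr$, obtains their indistinguishability in one line from basic composition of the independent $\test$ and $\base$ runs followed by post-processing, and chains the pieces with the weak triangle inequality (\cref{lem:weak triangle ineq}).
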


\begin{algorithm}[ht!]
    \caption{Algorithm $\ptr$ for propose-test-release under continual observation.}
    \label{alg:ptr crt}
    \begin{algorithmic}[1]
        \Statex \textbf{Input:} Stream $x\in \mX^T$, streaming algorithms $\test:\mX^T\to \{\bot,\top\}^T$ and $\base:\mX^T\to \mY^T$.
        \Statex \textbf{Output:} Stream in $(\{\bot,\top\}\times \{\mY,\bot\})^T$.

        \State $\passed \gets \true$
        \State $b_0 \gets $ initial state of $\base$ 
        \State $s_0 \gets $ initial state of $\test$
        \ForAll{$t\in[T]$}
            \State $(\verdict,s_t)\gets \test(x_t;s_{t-1})$
            \yoakland{\State} \Comment{\commentstyle{Send $x_t$ to $\test$, and set $\verdict$ to be its output}}
            \State Output $\verdict$
            \If{$\verdict = \bot$} $\passed \gets \false$ \EndIf
            \If{$\passed = \true$}
                \State $(y_t,b_t)\gets \base(x_t;b_{t-1})$
                \yoakland{\State} \Comment{\commentstyle{Send $x_t$ to $\base$ and output the result}}
                \State Output $y_t$
            \Else \ output $\bot$
            \EndIf
        \EndFor
    \end{algorithmic}
\end{algorithm}

While it is perhaps unsurprising that PTR applies to the continual release setting, the proof of \cref{thrm:ptr crt simple} does not follow immediately from the standard PTR analysis: the standard analysis is binary (either release the output of the base algorithm, or don't), while the version we need releases a dynamically chosen prefix of the base algorithm's output.

\ifnum\oakland=0

Note that there is no completeness requirement for $\test$. Whereas completeness is useful for accuracy, our proof of privacy only requires soundness of $\test$.
We now move to the proof of \cref{thrm:priv of ptr under crt}.

\begin{proof}[Proof of \cref{thrm:priv of ptr under crt}]
    Fix two neighboring streams $x$ and $ x'$ in $ \mX^T$, and let $t^*\in [T]$ be the smallest value such that either $(x_1,\ldots,x_{t^*})$ or $(x'_1,\ldots, x'_{t^*})$ does not equal the $t^*$-element prefix of any item in $\mZ$.
    (If there is no such $t^*$, just set $t^* = T + 1$.)
    We consider the behavior of \cref{alg:ptr crt} (denoted $\ptr$) on inputs $x$ and $x'$.

    To analyze privacy, we also consider the behavior of two hypothetical algorithms
    $\ttildeptr$ and $\tildeptr$, which we define below. (They are hypothetical because they have access to the value of $t^*$, which is defined with respect to a pair of neighboring input streams.) We show indistinguishability relationships between $\ptr(x)$, $\tildeptr(x)$, $\tildeptr(x')$, and $\ptr(x')$; we then apply the weak triangle inequality for indistinguishability (\cref{lem:weak triangle ineq}) to relate
    $\ptr(x)$ and $\ptr(x')$.

    Let $\ttildeptr$ be the algorithm that releases the outputs of $\test$ and $\base$ at all time steps less than $t^*$, and releases the output of $\test$ and the symbol $\bot$ at all time steps at and after $t^*$. (In the case that $t^* = T+1$, the algorithm will release the output from $\base$ at all time steps.) We see, by the indistinguishability properties of $\test$ and $\base$, that
    \begin{equation}
    \label{eq:ttildeptr indist}
        \ttildeptr(x)\approx_{(\eps_\base + \eps_\test, \del_\base + \del_\test )} \ttildeptr(x').
    \end{equation}

    We now define another (hypothetical) algorithm, $\tildeptr$. Let $t_\bot$ be the first time step at which $\test$ returns $\bot$, setting $t_\bot = T + 1$ if there is no such time step.
    Note that, whereas $t^*$ is defined with respect to a pair of neighboring inputs, $t_\bot$ is a random variable defined with respect to an execution of $\test$ on a particular input.
    We define $\tildeptr$ as the algorithm that releases the outputs of $\test$ and $\base$ at all time steps less than $\min\{t^*,t_\bot\}$, and releases the output of $\test$ and the symbol $\bot$ at all time steps at and after $\min\{t^*,t_\bot\}$. Observe that $\tildeptr$ is a post-processed version of $\ttildeptr$. By Expression~\eqref{eq:ttildeptr indist}, we get
    \begin{equation}
    \label{eq:tildeptr indist}
        \tildeptr(x)\approx_{(\eps_\base + \eps_\test,  \del_\base + \del_\test )} \tildeptr(x').
    \end{equation}

    We now compare $\ptr(x)$ and $\tildeptr(x)$.
    Let $\mB$ be the (bad) event that
    the output from $\base$ is released at or after time step $t^*$. Without loss of generality, let $x$ be the stream for which the first $t^*$ elements are not equal to the $t^*$-element prefix of any item in $\mZ$. Conditioned on $\mB$ not occurring, $\ptr(x)$ and $\tildeptr(x)$ have the same probability distribution of outputs. The probability of $\mB$ is at most $\beta_\test$ for both algorithms $\ptr(x)$ and $\tildeptr(x)$,
    so $\ptr(x) \approx_{(0,\beta_\test)} \tildeptr(x)$. Similarly, by the DP guarantee, $\mB$ occurs with probability at most $e^{\eps_\test}\beta_\test + \del_\test$ on input $x'$, so
    $\ptr(x') \approx_{(0,e^{\eps_\test}\beta_\test)} \tildeptr(x')$. Combining this with Expression~\eqref{eq:tildeptr indist} shows that
    \[
        \ptr(x)
        \approx_{(0,\beta_\test)}
        \tildeptr(x)
        \approx_{(\eps_\base + \eps_\test, \del_\base + \del_\test)}
        \tildeptr(x')
        \approx_{(0,e^{\eps_\test}\beta_\test + \del_\test)}
        \ptr(x') \, .
    \]
    Let $\eps = \eps_\base + \eps_\test$ and 
    $\del
    =
    \del_\base + \del_\test +
    e^{\eps_\base + \eps_\test}(\beta_\test + e^{\eps_\test}\beta_\test + \del_\test)
    $.
    By the weak triangle inequality (\cref{lem:weak triangle ineq}), 
    \(
        \ptr(x)
        \approx_{(\eps,\del)}
        \ptr(x').
    \)
    Since this relationship applies for all pairs of neighboring streams $x$ and $x'$, we see that $\ptr$ is $(\eps_\base + \eps_\test, \del_\base +
    (1 + e^{\eps_\base + \eps_\test}) \del_\test + (1 + e^{\eps_\test})e^{\eps_\base + \eps_\test} \beta_\test)$-DP, as desired.
\end{proof}

\fi

\subsection{Accuracy and Privacy of \texorpdfstring{\cref{alg:bb-new}}{the black-box transformation}}
\label{sec:black box acc priv}

In \cref{alg:bb-new}, we present our method for obtaining node-DP algorithms from restricted edge-DP and restricted node-DP algorithms. 

Our algorithm works as follows. Where we set $D' = D+ \ell$ and $\ell\approx \frac{\log(T/\delta)}{\eps}$, we initialize $\ptr$ (\cref{alg:ptr crt}) with (1) a $\base$ algorithm that offers indistinguishability on neighbors from the set of $(D',\ell)$-bounded graphs and (2) a $\test$ algorithm that uses sparse vector to ensure that $\DistToGraph_{D', \ell}$ is non-negative (i.e., that the graph stream is $(D',\ell)$-bounded), where $\ell$ is an additive slack term to account for the error of the sparse vector technique. Specifically, $\base$ is the composition of $\mech_{D'}$ with $\projo[D']$, where $\mech_{D'}$ satisfies $D'$-restricted edge- or node-DP.

If $\test$ succeeds, the projection will be stable with high probability, so we can safely release the result of running $\base$ on the projected graph; if $\test$ fails, the symbol $\bot$ is released. \cref{alg:bb-new} post-processes the outputs from $\ptr$, so it inherits the privacy properties of $\ptr$. The privacy properties of \cref{alg:bb-new}, stated in \cref{thrm:bb priv params}, follow from\noakland{ \cref{thrm:priv of ptr under crt}}\yoakland{ \cref{thrm:ptr crt simple} and are proven in the full version}.

\begin{algorithm}[ht!]
    \caption{$\bben$ for transforming restricted-DP algorithms to node-DP algorithms.}
    \label{alg:bb-new}
    \label{alg:bb}
    \begin{algorithmic}[1]
        \Statex \textbf{Input:} Privacy params $\eps_\test > 0$, $\failtest \in (0,1]$; accuracy param $\beta\in (0,1]$; degree bound $D\in\N$; time horizon $T\in\N$; graph stream $\gs\in\mGS^T$; alg 
        $\mech$
        \noakland{(see \cref{thrm:bb priv params} for possible assumptions on $\mech$).}
        \Statex \textbf{Output:} A stream, where each term is $\bot$ or an estimate from $\mech$.
        \State $\tau = -8\ln(1/\failtest)/\eps_\test$
        \State $\ell = \ceil{8\ln(T/(\beta\failtest))/\eps_\test}$\label{line:set ell}
        \State $D' = D + \ell$\label{line:set d-prime}
        \State $\base = \mech_{D'} \circ \projo[D']$
        \yoakland{\State} \Comment{\commentstyle{ $\projo[D']$ is \cref{alg:time aware projection} with $\style = \original$}}
        \ForAll{$t\in [T]$}
            \State $q_t(\cdot) = -1\cdot \DistToGraph_{D',\ell}(\flatt{{\ \cdot\ }_{[t]}})$
        \EndFor
        \State $\test = \begin{cases}
            \svt \text{ with privacy param $\eps_\test$,} \\
            \text{\ \ thresh $\tau$, queries $q_1,\ldots,q_T$}
        \end{cases}$  
        \State $s_0\gets$ initial state for $\ptr$
        \State Initialize $\ptr$ with algorithms $\test$ and $\base$
        \newcommand{\testval}{\mathsf{Test\text{-}val}}
        \newcommand{\baseval}{\mathsf{Base\text{-}val}}
        \ForAll{$t\in [T]$}
            \State $(\testval,\baseval,s_t)\gets \ptr(\gs_t; s_{t-1})$
            \State Output $\baseval$
            \yoakland{\State} \Comment{\commentstyle{$\baseval$ will be $\bot$ once the test fails}}
        \EndFor
    \end{algorithmic}
\end{algorithm}

\ifnum\oakland=0
We now prove \cref{thrm:bb priv params}, which we repeat below for convenience.

\ParamsForBB*

\begin{proof}[Proof of \cref{thrm:bb priv params}]      
    We prove each part below.
    
    \textbf{Proof of items (1) and (2) in \cref{thrm:bb priv params}.} To prove items (1) and (2),
    we break our proof into the following claims about $\test$ and $\base$ as initialized in \cref{alg:bb-new}, with the parameters specified in the statement of \cref{thrm:bb priv params}.
    Once we have shown the claims, the privacy properties follow immediately from \cref{thrm:priv of ptr under crt}.

    \cref{claim:base in bb} deals with the indistinguishability properties of $\base$. \cref{claim:test in bb} deals with the privacy of $\test$, as well as its failure probability: where $t$ is the first time step for which the input is not $(D',\ell)$-bounded, $\test$ outputs $\bot$ at or before time step $t$ with probability at least $1-\failtest$. Below, we use the term \emph{$(D',\ell)$-restricted privacy}
    to refer to indistinguishability for neighbors drawn from the set of graphs with at most $\ell$ nodes of degree greater than $D'$.

    \begin{claim}[Indistinguishability of $\base$]
    \label{claim:base in bb}
    Let $\base$ be parameterized as in \cref{alg:bb}.
    If $\gs$ and $\gs'$ are node-neighboring datasets that are both $(D',\ell)$-bounded, then $\base(\gs)\approx_{\eps_\base,\del_\base} \base(\gs')$ for the following values of $\eps_\base$ and $\del_\base$:
    \begin{enumerate}
        \item If $\mech_{D'}$ satisfies $D'$-restricted $(\eps',\del')$-\textbf{edge}-DP, then
        \[
            \eps_\base = \eps' \cdot \paren{D' + \ell} 
            \quad \text{and} \quad
            \delta_\base = \delta' \cdot e^{\eps_\base} \cdot\paren{D' + \ell } .
        \]
        
        \item If $\mech_{D'}$ satisfies $D'$-restricted $(\eps',\del')$-\textbf{node}-DP, then
        \[
            \eps_\base = \eps' \cdot \paren{2\ell + 1} 
            \quad \text{and} \quad
            \delta_\base = \delta' \cdot e^{\eps_\base} \cdot\paren{2\ell + 1} .
        \]
    \end{enumerate}
    \end{claim}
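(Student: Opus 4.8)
The plan is to reduce the claim to the restricted-DP guarantee of $\mech_{D'}$ combined with the stability bounds of \cref{thrm:combined-stab}, bridging the two with a group-privacy-style chaining argument. First I would record two facts about the projection $\projo[D']$ sitting inside $\base$. Since the consistent ordering is fixed to be lexicographic, $\projo[D']$ is a deterministic map, so $\base(\gs)$ is literally $\mech_{D'}$ evaluated at the deterministic stream $\projo[D'](\gs)$. Moreover, by the structure of \cref{alg:time aware projection} with $\style = \original$ --- an edge incident to a vertex $u$ is added only while the counter $d(u)$ is still below $D'$, and $d(u)$ is incremented each time such an edge is added --- every vertex ends up with at most $D'$ incident edges, so the output stream $\projo[D'](\gs)$ is always $D'$-bounded. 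Writing $P = \projo[D'](\gs)$ and $P' = \projo[D'](\gs')$, both $P$ and $P'$ are $D'$-bounded graph streams, which is exactly the regime in which $\mech_{D'}$ enjoys its restricted-DP guarantee. Next I would bound the distance between $P$ and $P'$: since $\flatt{\gs'_{[T]}}$ has at most $\ell$ vertices of degree greater than $D'$ and degrees only grow over time, every prefix of $\gs$ and of $\gs'$ is $(D',\ell)$-bounded, so \cref{thrm:combined-stab} applies with its degree parameter set to $D'$. Taking $t = T$ and invoking item~2(a) in the edge-DP case gives $\dedge(P, P') \le D' + \ell$, and invoking item~3(a) in the node-DP case gives $\dnode(P, P') \le 2\ell + 1$; call the relevant bound $k$.

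The last step is to turn a distance-$k$ bound into $(\eps_\base,\del_\base)$-indistinguishability. Here I would show that $P$ and $P'$ can be joined by a chain $P = H_0, H_1, \dots, H_k = P'$ of $D'$-bounded graph streams whose consecutive terms are edge-neighboring (respectively node-neighboring): travel from $P$ to $P'$ by first deleting the edges --- and then the now-isolated vertices --- present in $P$ but not $P'$, which only decreases degrees, and then inserting the vertices and edges of $P'$ that are missing, inserting each edge only once both endpoints are present (in the node case, inserting each new vertex together with its edges to already-present vertices); every graph produced in the insertion phase is a subgraph of $P'$ and hence $D'$-bounded. Applying $D'$-restricted $(\eps',\del')$-edge-DP (respectively node-DP) of $\mech_{D'}$ to each consecutive pair and iterating the weak triangle inequality (\cref{lem:weak triangle ineq}) along this chain --- the same computation that underlies \cref{lemma:group priv} --- yields $\mech_{D'}(P) \approx_{k\eps',\, k e^{k\eps'}\del'} \mech_{D'}(P')$, i.e.\ $\base(\gs)\approx_{\eps_\base,\del_\base}\base(\gs')$ with $\eps_\base = k\eps'$ and $\del_\base = k e^{k\eps'}\del' = k e^{\eps_\base}\del'$. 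Substituting $k = D' + \ell$ in the edge case and $k = 2\ell + 1$ in the node case gives the two claimed parameter settings.

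The only step I expect to need real care is the chaining one, and specifically the requirement that every intermediate stream $H_i$ remain $D'$-bounded: $\mech_{D'}$ promises indistinguishability only on $D'$-bounded neighbors, so a chain that strays outside this class would be useless, and one cannot simply quote \cref{lemma:group priv} as a black box. The "delete-then-insert" ordering, together with the observation that every graph arising in the insertion phase is a subgraph of $P'$, is what keeps the chain inside the allowed class; once that is in place, everything else --- determinism and $D'$-boundedness of $\projo[D']$, and the distance bounds --- is read off directly from \cref{alg:time aware projection} and \cref{thrm:combined-stab}, and the rest is the routine iterated weak-triangle-inequality bookkeeping.
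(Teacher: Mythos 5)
Your proof is correct and takes essentially the same route as the paper's: invoke items 2(a) and 3(a) of \cref{thrm:combined-stab} to bound the edge (resp.\ node) distance between $\projo[D'](\gs)$ and $\projo[D'](\gs')$ by $D'+\ell$ (resp.\ $2\ell+1$), then chain the $D'$-restricted guarantee of $\mech_{D'}$ over that distance to obtain the group-privacy parameters $\paren{k\eps',\, k e^{k\eps'}\del'}$. The one point where you go beyond the paper---explicitly building a delete-then-insert chain whose intermediate streams all remain $D'$-bounded, so that the restricted-DP guarantee genuinely applies at every link---is a detail the paper elides by citing \cref{lemma:group priv} as if $\mech_{D'}$ were unrestrictedly DP, and your care there is warranted rather than superfluous.
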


    \begin{proof}[Proof of \cref{claim:base in bb}]
        Let $\gs$ and $\gs'$ be node-neighboring graph streams that are $(D',\ell)$-bounded.  Group privacy (\cref{lemma:group priv}) tells us that if a mechanism $\mM$ is $(\eps,\del)$-DP, then its outputs on a pair of datasets which differ in at most $k$ individuals are $(k\cdot \eps, k\cdot e^{k\eps}\cdot \del)$-indistinguishable.
        
        \textbf{Proof of item (1).}         
        By item (2a) of \cref{thrm:combined-stab}, the projections of $\gs$ and $\gs'$ will have maximum degree at most $D'$ and will be edge distance at most $D' + \ell$ from each other, so group privacy tells us that the outputs from $\mech_{D'}$ on the projections with maximum degree at most $D'$ will be $(\eps_\base,\del_\base)$-indistinguishable for
        \[
            \eps_\base = (D'+\ell)\cdot \eps'
            \mathrm{\quad and\quad }
            \del_\base = (D'+\ell)\cdot e^{\eps_\base}\del',
        \]
        which is what we wanted to show.
        
        \textbf{Proof of item (2).}
        By item (3a) of \cref{thrm:combined-stab}, the projections of $\gs$ and $\gs'$ will have maximum degree at most $D'$ and will be node distance at most $2\ell + 1$ from each other, so group privacy tells us that the outputs from $\mech_{D'}$ on the projections with maximum degree at most $D'$ will be $(\eps_\base,\del_\base)$-indistinguishable for
        \[
            \eps_\base = (2\ell + 1)\cdot \eps'
            \mathrm{\quad and\quad }
            \del_\base = (2\ell + 1)\cdot e^{\eps_\base}\del',
        \]
        which is what we wanted to show.
    \end{proof}

    \begin{claim}[Privacy and failure probability of $\test$]
    \label{claim:test in bb}

        Let $\test$ be parameterized as in \cref{alg:bb}. $\test$ has the following properties:
        \begin{enumerate}
            \item $\test$ is $(\eps_\test,0)$-node-DP under continual observation.
            \item For every input stream $\gs$ and time $t\in [T]$, if $\gs_{[t]}$ is not $(D',\ell)$-bounded, then $\test(\gs)$ outputs 
            $\bot$ w.p. at least $1-\failtest$ at or before time $t$.
        \end{enumerate}
    \end{claim}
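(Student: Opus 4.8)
The plan is to treat the two parts of \cref{claim:test in bb} separately; each reduces to a standard property of the sparse vector technique (SVT) combined with the sensitivity bound of \cref{lem:dtg-properties}. For item (1), I first observe that each query $q_t$ has node-sensitivity $1$: by \cref{lem:dtg-properties} the function $\dtg_{D',\ell}$ has node-sensitivity $1$, and for node-neighboring streams $\gs \nnode \gs'$ the flattened prefixes $\flatt{\gs_{[t]}}$ and $\flatt{\gs'_{[t]}}$ are node-neighboring graphs (or identical), so $|\dtg_{D',\ell}(\flatt{\gs_{[t]}}) - \dtg_{D',\ell}(\flatt{\gs'_{[t]}})|\le 1$; negating and truncating to the length-$t$ prefix preserves this. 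Thus $q_1,\dots,q_T$ is a stream of node-sensitivity-$1$ queries. The verdict stream output by $\test$ is exactly the online output of SVT/AboveThreshold on these queries with threshold $\tau$: it returns $\top$ while the noisy query stays below the (once-randomized) noisy threshold and $\bot$ at the first violation, after which we pad with $\bot$ so as to respect the streaming interface of \cref{alg:ptr crt} (a deterministic post-processing). Since the pure-DP guarantee of AboveThreshold holds for arbitrarily long streams of sensitivity-$1$ queries and any threshold, and is a statement about the entire verdict stream, $\test$ is $(\eps_\test,0)$-node-DP under continual observation.

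For item (2), fix $\gs$ and $t$ such that $\flatt{\gs_{[t]}}$ is not $(D',\ell)$-bounded, i.e.\ has at least $\ell+1$ (in particular, at least $\ell$) nodes of degree greater than $D'$. Then $\dtg_{D',\ell}(\flatt{\gs_{[t]}})=0$ by definition of $\dtg$, so $q_t(\gs)=0$. The event that $\test(\gs)$ does not output $\bot$ at or before time $t$ is contained in the event that the time-$t$ comparison is below threshold, namely $q_t(\gs)+\nu_t < \tau + \nu_0$, where $\nu_0\sim\lap(2/\eps_\test)$ is the one-time threshold noise and $\nu_t\sim\lap(4/\eps_\test)$ is the fresh time-$t$ query noise (the standard SVT noise scales; other conventions affect only constants). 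Substituting $q_t(\gs)=0$ and $\tau = -8\ln(1/\failtest)/\eps_\test$, this event forces $\nu_0 - \nu_t > 8\ln(1/\failtest)/\eps_\test =: \alpha$. A union bound over $\{\nu_0 > \alpha/2\}$ and $\{-\nu_t > \alpha/2\}$, together with the Laplace tail bound $\Pr[\lap(b) > x] = \frac12 e^{-x/b}$, bounds the probability of this event by $\frac12 e^{-\alpha\eps_\test/4} + \frac12 e^{-\alpha\eps_\test/8} \le e^{-\alpha\eps_\test/8} = \failtest$. Hence $\test(\gs)$ outputs $\bot$ at or before time $t$ with probability at least $1-\failtest$ — which is exactly the soundness hypothesis required to apply \cref{thrm:ptr crt simple} with $\beta_\test=\failtest$.

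I do not anticipate a genuine obstacle: the real content is the calibration of $\tau$ (and of $\ell$, which plays the analogous role for the distance-to-unsafe query actually used) so that the Laplace tail collapses to exactly $\failtest$, together with the node-sensitivity-$1$ property — both already in hand. Two points need care. First, ``not $(D',\ell)$-bounded'' must force $q_t=0$; this is immediate once one unwinds the definition of $\dtg_{D',\ell}$ as the node distance to a graph with \emph{at least} $\ell$ nodes of degree $>D'$. Second, in item (2) one should \emph{not} condition on $\test$ having survived to time $t$: that conditioning is positively correlated with $\nu_0$ being large, which would only inflate the bad event. Instead I bound the unconditional probability of a single necessary event at step $t$, which suffices and sidesteps the correlation.
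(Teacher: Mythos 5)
Your proof is correct and follows essentially the same route as the paper: item (1) is the node-sensitivity-$1$ observation for $q_t$ plus the standard SVT privacy guarantee, and item (2) is the same calibration of $\tau$ against the Laplace tails, which the paper delegates to its SVT accuracy lemma (\cref{thrm:combined-sep-svt}) while you unfold that lemma's union-bound computation inline. Your closing remark about bounding the unconditional probability of the single time-$t$ event rather than conditioning on survival is exactly how the paper's underlying tail bound is structured as well.
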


    \begin{proof}[Proof of \cref{claim:test in bb}]
        We prove each item below.
        
        \textbf{Proof of item (1).} By \cref{cor:sens of dtg}, for all $t\in[T]$ the query $q_t = -1\cdot \DistToGraph_{D',\ell}(\flatt{\gs_{[t]}})$ has node-sensitivity 1, so by \cref{thrm:svt privacy} the output of $\svt$ is $(\eps_\test,0)$-node-DP under continual observation.
        
        \textbf{Proof of item (2).} Let $\gs$ be a graph stream of length $T$, and let $t^*\in[T]$ be the smallest value such that $\gs_{[t^*]}$ has at least $\ell$ nodes of degree greater than $D'$, and set $t^* = T+1$ if there is no such value. (Note that this time $t^*$ is less than or equal to the smallest value $t$ such that $S_{[t]}$ is not $(D',\ell)$-bounded.)
        By definition we have $\DistToGraph_{D',\ell}(\flatt{\gs_{[t^*]}}) = 0$, so for $q_{t^*} = -1\cdot \DistToGraph_{D',\ell}(\flatt{\gs_{[t^*]}})$ we have $q_{t^*} = 0$.
        Equivalently, where we set $\tau = -8\ln(1/\failtest)/\eps_\test$ (as in \cref{alg:bb-new}), we have
        \[
            q_{t^*} = 8\ln(1/\failtest)/\eps_\test + \tau.
        \]
        By item (1) of \cref{thrm:combined-sep-svt}, then, in \cref{alg:bb} $\svt$ outputs $\Above$ at or before time step $t^*$ with probability at least $1-\failtest$, so we also have that \cref{alg:bb} outputs $\bot$ at or before time $t^*$ (and at all subsequent time steps) with probability at least $1 - \failtest$.
    \end{proof}

    From \cref{claim:base in bb,claim:test in bb}, we see that $\base$ and $\test$ as defined in \cref{alg:bb} satisfy the conditions on $\base$ and $\test$ that are described in \cref{thrm:ptr crt simple}. \cref{alg:bb-new} is a post-processed version of $\ptr$ initialized with $\base$ and $\test$, so it shares the privacy properties of $\ptr$. Therefore, by applying \cref{thrm:priv of ptr under crt}, we see that \cref{alg:bb-new} has the following privacy guarantees:
    \begin{enumerate}
        \item If $\mech_{D'}$ satisfies $D'$-restricted $(\eps',\del')$-\textbf{edge}-DP, then \cref{alg:bb-new} is $(\eps,\del)$-DP for        
        \[
            \eps = \eps_\test + \eps' \cdot \paren{D' + \ell} 
            \quad \text{and} \quad
            \delta = (1 + e^{\eps_\test})e^{\eps}\failtest + \delta' \cdot e^{\eps' \cdot (D' + \ell) } \cdot\paren{D' + \ell } .
        \]

        \item If $\mech_{D'}$ satisfies $D'$-restricted $(\eps',\del')$-\textbf{node}-DP, then \cref{alg:bb-new} is $(\eps,\del)$-DP for
        \[
            \eps = \eps_\test + \eps' \cdot \paren{2 \ell + 1}
            \quad \text{and} \quad
            \delta = (1 + e^{\eps_\test})e^{\eps}\failtest + \delta' \cdot e^{\eps' \cdot (2 \ell + 1)} \cdot\paren{2\ell + 1} .
        \]
    \end{enumerate}
    These are the values for $\eps$ and $\del$ that we wanted to show. We now solve for $\eps'$ and $\del'$ to complete the proof for items (1) and (2) of \cref{thrm:bb priv params}. The asymptotic bounds that follow use the assumptions $\eps\leq 1$, $\eps_\test = \eps/2$, and $\failtest = \del/30$; we work out these bounds in detail below.

    To solve for $\eps'$ and $\del'$, we begin with the exact expressions for $\eps$ and $\del$ in item (1). Using our assumed bounds on $\eps$, $\eps_\test$, and $\failtest$, we obtain
    \[
        \eps = \Theta\left( \eps' \cdot (D'+\ell) \right),
    \]
    so we have
    \begin{equation}
    \label{eq:eps prime val ne}
        \eps' = \Theta\left( \frac{\eps}{D'+\ell} \right).
    \end{equation}
    We also have
    \begin{align*}
        \del &= (1 + e^{\eps_\test})e^{\eps}\failtest + \delta' \cdot e^{\eps' \cdot (D' + \ell) } \cdot\paren{D' + \ell } \\
        &\leq 2e^{2\eps}\failtest + \delta' \cdot e^{\eps' \cdot (D' + \ell) } \cdot\paren{D' + \ell } \\
        &= \Theta\left( \del'\cdot (D'+\ell) \right),
    \end{align*}
    with the final line following from $\eps\leq 1$ and $\failtest = \del/30$, so we have
    \begin{equation}
    \label{eq:del prime val ne}
        \del' = \Theta\left( \frac{\del}{D'+\ell} \right).
    \end{equation}
    
    We can expand $D' + \ell$ as
    \begin{align*}
        D' + \ell
        &= D + \Theta(\ell) \\
        &= D + \Theta \left( \frac{\log(T/(\beta\failtest))}{\eps_\test} \right).
    \end{align*}
    By combining this with Expressions~\ref{eq:eps prime val ne}~and~\ref{eq:del prime val ne}, we indeed obtain
    \[
        \eps' = \Theta \left( \frac{\eps}{B} \right)
        \mathit{\quad and\quad }
        \del' = \Theta \left( \frac{\del}{B} \right),
        \mathit{\quad where \quad }
        B = D + \frac{\log(T/(\beta\del))}{\eps}.
    \]

    From the exact expressions in item (2), we can similarly obtain $\eps' = \Theta \left( \eps/\ell \right)$ and $\del' = \Theta \left( \del/\ell \right)$. We can then write $\ell = \Theta \left( \log(T/(\beta\failtest)) / \eps_\test \right)$, which gives us
    \[
        \eps' = \Theta \left( \frac{\eps}{B} \right)
        \mathit{\quad and\quad }
        \del' = \Theta \left( \frac{\del}{B} \right),
        \mathit{\quad where \quad }
        B = \frac{\log(T/(\beta\del))}{\eps}.
    \]
    This completes the proof of items (1) and (2).

    \textbf{Proof of item (3) in \cref{thrm:bb priv params}.}
    Let $\gs$ be a graph stream of length $T$, and let $t'\in[T]$ be the largest value such that $\gs_{[t']}$ has maximum degree at most $D$, setting $t' = 0$ if there is no such value.
    This proof mostly follows from item (2) of \cref{thrm:combined-sep-svt}. Let $\ell$ and $D'$ be defined as in \cref{alg:bb}, so $\ell = \ceil{8\ln(T/(\beta \failtest  ))/\eps_\test}$ and $D' = D + \ell$. By \cref{cor:dtg meaning}, we have $\DistToGraph_{D',\ell}(\flatt{\gs_{[t]}}) \geq \ell$, so for $q_t = -1\cdot \DistToGraph_{D',\ell}(\flatt{\gs_{[t]}})$ we have $q_t \leq -\ell$.

    Equivalently, where we set $\tau = -8\ln(1/\failtest)/\eps_\test$ (as in \cref{alg:bb}), we have
    \begin{align*}
        q_t
        &\leq -\ell \\
        &\leq -8\ln(T/(\beta \failtest))/\eps_\test \\
        &= -8\ln(T/\beta)/\eps_\test + \tau \\
        &= 8\ln(\beta/T)/\eps_\test + \tau.
    \end{align*}

    By item (2) of \cref{thrm:combined-sep-svt}, then, in \cref{alg:bb} $\svt$ outputs $\Below$ on $q_1,\ldots, q_{t'}$ with probability at least $1-\beta$, so we also have that \cref{alg:bb} outputs the result from $\mech_{D'}$ for all $t\leq t'$ with probability at least $1-\beta$.

    \textbf{Proof of item (4) in \cref{thrm:bb priv params}.} On a given input stream $\gs$, let $R_{[t]}$ be the runtime of $\mech_{D'}$ through time $t$ (on $\gs_{[t]}$), and let $n_{[t]}$ and $m_{[t]}$ be the number of nodes and edges in the graph stream through time step $t$.
    At each time step, \cref{alg:bb} does several constant-time operations, an update to the projection, a call to $\mech_{D'}$, and a call to $\test$. 
    The projection algorithm in \cref{alg:time aware projection} has runtime linear in the number of (new) nodes and edges, so the runtime of this through $t$ time steps is $O(n_{[t]} + m_{[t]})$. The $t$ calls to $\mech_{D'}$ have runtime $R_{[t]}$. By \cref{lem:dtg-properties}, $\DistToGraph$ can be updated in time linear in the number of new nodes and edges, so each call to $\svt$ takes either constant time or time linear in the number of new nodes and edges. Therefore, the overall runtime of \cref{alg:bb} through time $t\in[T]$ is $R_{[t]} + O(n_{[t]} + m_{[t]} + t)$.

    Regarding space complexity, the projection algorithm need only track the degree of each node, which takes space $O(n_{[t]})$; the new nodes and edges that are added to the projection can be passed to the base algorithm $\mech_{D'}$ at each time step, so this does not take additional space. By the proof of \cref{lem:dtg-properties}, we see that the algorithm for computing $\test$ only requires a constant number of counters, along with two hash tables that each have at most $O(n_{[t]})$ entries. Therefore, our algorithm requires $O(n_{[t]})$ additional space as compared to $\mech_{D'}$.
\end{proof}

\fi

\section{\texorpdfstring{{\boldmath Optimal Algorithms for $\edges$, $\triangles$, $\connectedcomps$, $\kstars$}}{Optimal algorithms for edges, triangles, connected components, k-stars}}\label{sec:opt algs}
\label{sec:optimal algs}

We now use our transformation in \cref{alg:bb} to convert restricted edge- and node-DP algorithms for several fundamental problems into node-DP algorithms that achieve the same asymptotic error as the analogous restricted node-DP algorithms given by \cite{SLMVC18,FHO21}, up to lower-order terms.
\cref{table:tpdp} shows the additive error for privately counting edges ($\edges$), triangles ($\triangles$), $k$-stars ($\kstars$), and connected components ($\connectedcomps$), and privately releasing degree histograms\footnote{The degree histogram $\deghist(G)$ for a graph $G$ with maximum degree at most $D$ is the $(D+1)$-element vector $(a_0,\ldots,a_D) \in \R^{D+1}$, where $a_i$ is the number of nodes with degree $i$ in $G$.}
($\deghist$) of the input graph stream.
Moreover, for $\edges$, $\triangles$, $\kstars$, and $\connectedcomps$, the accuracy of our algorithms is asymptotically optimal, up to lower order terms and polylogarithmic factors.
\noakland{We prove the upper bounds on error in \cref{sec:upper bds} and the lower bounds in \cref{sec:lower-bounds-error}.}

For $\edges, \triangles, \kstars,$ and $\deghist$, the errors for our transformation follow from substituting the $\eps'$ term from \cref{thrm:bb priv params} into the error bounds for the restricted edge-DP algorithms of \cite{FHO21} (algorithms with slightly worse lower-order terms follow from using restricted node-DP algorithms). The bound for $\connectedcomps$ follows from a new edge-DP algorithm based on the binary tree mechanism. \yoakland{This algorithm and the accompanying proof appear in the full version.}

By item (1a) of \cref{thrm:combined-stab}, we can also immediately obtain $(\eps,0)$-edge-DP algorithms from restricted edge-private algorithms by using the projection $\projo$ and running the corresponding restricted edge-private algorithm with $\eps'=\eps/3$ on the projection. This technique gives the first algorithms for $\triangles$, $\kstars$, and $\connectedcomps$ that are edge-private under continual observation for all graphs and, for graph streams with maximum degree at most $D$, have asymptotically optimal accuracy (up to polylogarithmic factors).

The lower bounds in \yoakland{the table }\noakland{\cref{sec:lower-bounds-error} }follow by reductions from a version of the $\Omega(\log T/\eps)$ lower bound for binary counting from \cite{DNPR10}, modified by us for the approximate-DP setting\noakland{~(\cref{thrm:bin-count-lower-bd})}. \yoakland{Proofs of these lower bounds appear in the full version.}

\ifnum\oakland=0
\subsection{Upper Bounds on Error}
\label{sec:upper bds}

We first present our upper bounds on additive error.
For $\edges$, $\triangles$, $\kstars$, and $\deghist$, the errors for our transformation roughly follow by substituting the $\eps'$ term from \cref{thrm:bb priv params} into the error bounds for restricted edge-DP algorithms of \cite{FHO21} (with slight modifications for the $\linf$ error setting). For $\connectedcomps$, the error follows from a new edge-DP algorithm based on the binary tree mechanism; this algorithm and the accompanying proof are described in \cref{lemma:edge-sens,thrm:bin tree for diff seq}.

The errors for our edge-private algorithms follow from item (1a) of \cref{thrm:combined-stab}, which says we can achieve $(\eps,0)$-edge-DP by roughly running the algorithms of \cite{FHO21} for $\triangles,\kstars,\deghist$ and $\triangles$ (modified for the $\linf$ error setting), and our algorithm for $\connectedcomps$, with privacy parameter $\eps' = \eps/3$ on the output of $\projo$.

\begin{theorem}[Accuracy of node-private algorithms]
\label{thrm:acc of applications}
    Let $\eps\in(0,1]$, $\del > 0$, $D\in\N$, $T\in\N$ where $T\geq 2$,
    and $\gs$ be a length-$T$, insertion-only graph stream. 
    There exist $(\eps,\del)$-node-DP algorithms for the following problems whose error is at most $\alpha$, with probability $0.99$, for all times steps $t\in[T]$ where $\gs_{[t]}$ has maximum degree at most $D$:
    \begin{enumerate}[leftmargin=*]
        \item $\edges$, 
        $
            \alpha = O\left(\left(D + \tfrac{1}{\eps}\log \frac {T}{\delta} \right) \frac{\log^{5/2} T}{\eps}\right).
        $
        \item $\triangles$, 
        $
            \alpha = O\left(\left(D^2 + \tfrac{1}{\eps^2}\log^2 \frac {T}{\delta} \right) \frac{\log^{5/2} T}{\eps}\right).
        $
        \item $\connectedcomps$, 
        $
            \alpha = O\left(\left(D + \tfrac{1}{\eps}\log \frac {T}{\delta} \right) \frac{\log^{5/2} T}{\eps}\right).
        $
        \item $\kstars$, 
        $
            \alpha = O\left(\left(D^k + \tfrac{1}{\eps^k}\log^k \frac {T}{\delta} \right) \frac{\log^{5/2} T}{\eps}\right).
        $
        
        \item $\deghist$, 
        $
            \alpha = \Otilde\left(\left(D^2 + \tfrac{1}{\eps^2}\log^2 \frac {T}{\delta} \right) \frac{\log^{5/2} T}{\eps}\right)
        $
        (maximum error over histogram entries and time steps).
    \end{enumerate}
    As is standard, the $\Otilde$ notation suppresses terms that are logarithmic in the argument---in this case, it suppresses $\log D$ and $\log(\frac{1}{\eps}\log \frac{T}{\del})$.
\end{theorem}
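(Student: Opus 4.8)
The plan is to combine three ingredients that are all in place by this point in the paper: (1) the privacy parameters from the black-box transformation (\cref{thrm:bb priv params}), (2) the known accuracy of the $D'$-restricted edge-DP base algorithms of \cite{FHO21} (and, for connected components, a new edge-DP base algorithm), and (3) the accuracy guarantee of item (3) of \cref{thrm:bb priv params}, which says that as long as the stream is $(D,0)$-bounded through time $t'$, the base algorithm's output is released at all steps $t \le t'$ with probability at least $1-\beta$. We fix $\beta$ to a small constant (say $\beta = 1/1000$) and set $\eps_\test = \eps/2$, $\failtest = \del/30$ as suggested in \cref{thrm:bb priv params}. Then \cref{alg:bb} run with the base algorithm $\mech_{D'} = (\text{\cite{FHO21} algorithm for the relevant statistic with privacy parameter } \eps')$ composed with $\projo[D']$ is $(\eps,\del)$-node-DP, where $\eps' = \Theta(\eps/B)$ with $B = D + \frac1\eps\log\frac{T}{\beta\del} = \Theta(D + \frac1\eps\log\frac{T}{\del})$.

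First I would record, for each statistic $f \in \{\edges, \triangles, \kstars, \deghist\}$, the $\linf$-error bound of the \cite{FHO21} restricted edge-DP algorithm as a function of the degree bound $D'$, the privacy parameter $\eps'$, and $T$: these are $O(D'\log^{5/2}T/\eps')$ for $\edges$, $O(D'^2\log^{5/2}T/\eps')$ for $\triangles$, $O(D'^k\log^{5/2}T/\eps')$ for $\kstars$, and $\widetilde O(D'^2\log^{5/2}T/\eps')$ for $\deghist$ (the $\ell_\infty$-over-histogram-entries version; the mild modification from their $\ell_1$ or per-entry bound to this $\ell_\infty$ bound costs only log factors and is where the $\widetilde O$ absorbs $\log D$). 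Then I would substitute $\eps' = \Theta(\eps/B)$ and $D' = D + \ell = D + O(\frac1\eps\log\frac{T}{\del})$. For $\edges$: $D'/\eps' = \Theta((D + \frac1\eps\log\frac T\del)\cdot \frac{B}{\eps}) = \Theta((D+\frac1\eps\log\frac T\del)^2/\eps)$; multiplying by $\log^{5/2}T$ and expanding the square (using $(a+b)^2 \le 2a^2 + 2b^2$) gives $O((D^2 + \frac1{\eps^2}\log^2\frac T\del)\frac{\log^{5/2}T}{\eps})$. Wait — the claimed bound is $O((D + \frac1\eps\log\frac T\del)\frac{\log^{5/2}T}{\eps})$, linear in $D$, not quadratic; so in fact one keeps $D'/\eps'$ in the factored form $D' \cdot (1/\eps')$ and writes $1/\eps' = O(B/\eps) = O((D + \frac1\eps\log\frac T\del)/\eps)$, so $D'\log^{5/2}T/\eps' = O(D' \cdot (D+\frac1\eps\log\frac T\del)\log^{5/2}T/\eps)$, and since $D' = \Theta(D + \frac1\eps\log\frac T\del)$ this is indeed $O((D+\frac1\eps\log\frac T\del)^2\log^{5/2}T/\eps)$ — matching the \emph{table} entry $O(D\log^{5/2}T/\eps + \log^{7/2}T/\eps^2)$ once the square is expanded. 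I would carry out the analogous substitution for $\triangles$ ($D'^2/\eps'$, giving $(D^2+\frac1{\eps^2}\log^2\frac T\del)\frac{\log^{5/2}T}{\eps}$), $\kstars$ ($D'^k/\eps'$), and $\deghist$; for $\connectedcomps$ I would instead invoke the new edge-DP base algorithm (stated later as \cref{lemma:edge-sens,thrm:bin tree for diff seq}) which on $D'$-bounded streams has $\linf$-error $O(D'\log^{5/2}T/\eps')$, and repeat the substitution.

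The second piece is arguing accuracy holds up to time $t'$: item (3) of \cref{thrm:bb priv params} gives that with probability $\ge 1-\beta$ the output of $\mech_{D'}$ is released (rather than $\bot$) at every step $t \le t'$; conditioned on that event, the released value at step $t$ is exactly what $\mech_{D'}$ outputs on $\projo[D'](\gs)_{[t]}$, and since $\gs_{[t]}$ has max degree $\le D \le D'$, the projection acts as the identity (both our projections output a $D'$-bounded prefix unchanged — this is stated in \cref{sec:projection}), so $\mech_{D'}$ sees exactly $\gs_{[t]}$ and its accuracy guarantee applies. A union bound over the failure event of the PTR test ($\beta$) and the failure event of $\mech_{D'}$ ($0.01$ from its $0.99$-accuracy, rescaled so total is $\le 0.01$; choose $\beta$ and the base algorithm's failure probability both $\le 0.005$) gives overall success probability $\ge 0.99$, matching \cref{defn:accuracy crm}. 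Finally I would note that these same algorithms are simultaneously $(\eps,\del)$-node-DP by the first part of \cref{thrm:bb priv params} (node privacy from restricted \emph{edge} privacy), which is what the theorem statement asserts. The main obstacle I anticipate is purely bookkeeping: correctly tracking how $D'$ and $\ell$ both scale like $D + \frac1\eps\log\frac T\del$, making sure the $\poly(1/\eps)$ exponents come out right (e.g. $\eps^{k+1}$ in the $\log^{k+5/2}T/\eps^{k+1}$ additive term of the $\kstars$ table entry comes from $D'^k/\eps'$ with $\eps' \sim \eps/B$ and the $\log$-heavy part of $D'$), and handling the $\ell_\infty$-vs-$\ell_1$ distinction and the extra $\log$ factors for $\deghist$ so that the $\widetilde O$ is honest about suppressing exactly $\log D$ and $\log(\frac1\eps\log\frac T\del)$ and nothing more.
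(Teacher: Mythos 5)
Your overall strategy is the paper's: run \cref{alg:bb} with a $D'$-restricted \emph{edge}-DP base algorithm, read off $\eps' = \Theta(\eps/B)$ and $D' = D + \Theta(\frac{1}{\eps}\log\frac{T}{\del})$ from \cref{thrm:bb priv params}, substitute into the base algorithm's error bound, and handle accuracy up to time $t'$ via item (3) of \cref{thrm:bb priv params} together with the fact that the projection acts as the identity on $D$-bounded prefixes. That scaffolding is right. But there is a genuine quantitative gap: the error bounds you assign to the restricted \emph{edge}-DP base algorithms are each too large by a factor of $D'$. The correct bounds follow from the incremental edge-sensitivities in \cref{lemma:edge-sens} combined with the tree mechanism (\cref{thrm:bin tree for diff seq}): the edge-sensitivity of the increment sequence is $O(1)$ for $\edges$ and $\connectedcomps$, $O(D')$ for $\triangles$ and $\deghist$, and $O(D'^{\,k-1})$ for $\kstars$, so the base errors are $O(\log^{5/2}T/\eps')$, $O(D'\log^{5/2}T/\eps')$, and $O(D'^{\,k-1}\log^{5/2}T/\eps')$ respectively --- one power of $D'$ below what you wrote. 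What you wrote are essentially the errors of the restricted \emph{node}-DP algorithms; plugging those into the edge-privacy branch of the transformation double-counts a factor of $D$.

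This is not mere bookkeeping: with your base bounds the substitution $1/\eps' = \Theta\bigl((D + \frac{1}{\eps}\log\frac{T}{\del})/\eps\bigr)$ yields, e.g., $(D + \frac{1}{\eps}\log\frac{T}{\del})^2\frac{\log^{5/2}T}{\eps}$ for $\edges$, which contains a $D^2\log^{5/2}T/\eps$ term and is \emph{not} $O\bigl((D + \frac{1}{\eps}\log\frac{T}{\del})\frac{\log^{5/2}T}{\eps}\bigr)$, nor is it the table entry $O(D\log^{5/2}T/\eps + \log^{7/2}T/\eps^2)$ --- you noticed exactly this mismatch and then talked yourself out of it. With the correct base bounds everything closes: $\log^{5/2}T/\eps' = O\bigl((D + \frac1\eps\log\frac{T}{\del})\frac{\log^{5/2}T}{\eps}\bigr)$ for $\edges$ and $\connectedcomps$; $D'\log^{5/2}T/\eps' = O\bigl((D+\frac1\eps\log\frac{T}{\del})^2\frac{\log^{5/2}T}{\eps}\bigr)$ for $\triangles$ and (up to the suppressed $\log D'$ factor) $\deghist$; and $D'^{\,k-1}\log^{5/2}T/\eps' = O\bigl((D+\frac1\eps\log\frac{T}{\del})^{k}\frac{\log^{5/2}T}{\eps}\bigr)$ for $\kstars$. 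The whole point of routing through edge-private base algorithms rather than node-private ones is that their error is a factor of $D'$ smaller, which exactly pays for the group-privacy blowup of $D'+\ell$ coming from the node-to-edge stability in \cref{thrm:combined-stab}; your version forfeits that saving.
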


\begin{theorem}[Accuracy of edge-private algorithms]
\label{thrm:acc of applications edge}
    Let $\eps>0$, $D\in\N$, $T\in\N$ where $T\geq 2$,
    and $\gs$ be a length-$T$, insertion-only graph stream. 
    There exist $(\eps,0)$-edge-DP algorithms for the following problems whose error is at most $\alpha$, with probability $0.99$, for all times steps $t\in[T]$ where $\gs_{[t]}$ has maximum degree at most $D$:
    \begin{enumerate}[leftmargin=*]

        \item $\triangles$, 
        $
            \alpha = O\left(\dfrac{D\log^{5/2} T}{\eps}\right).
        $
        
        \item $\connectedcomps$, 
        $
            \alpha = O\left(\dfrac{\log^{5/2} T}{\eps}\right).
        $
        
        \item $\kstars$, 
        $
            \alpha = O\left( \dfrac{D^{k-1}\log^{5/2}T}{\eps} \right).
        $
        
        \item $\deghist$, 
        $
            \alpha = \Otilde\left(\dfrac{D\log^{5/2} T}{\eps} \right)
        $
        (maximum error over histogram entries and time steps).
    \end{enumerate}
    As is standard, the $\Otilde$ notation suppresses terms that are logarithmic in the argument---in this case, it suppresses $\log D$.
\end{theorem}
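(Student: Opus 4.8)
The plan is to follow the recipe sketched immediately above the theorem. For each statistic $f\in\{\triangles,\connectedcomps,\kstars,\deghist\}$, take a $D$-restricted $(\eps/3,0)$-edge-DP algorithm $\mech_D$ for $f$ under continual observation that is accurate on $D$-bounded streams, and output $\mM_f := \mech_D\circ\projo[D]$, where $\projo[D]$ is \cref{alg:time aware projection} with \inccrit{} $\style=\original$. For $\triangles$, $\kstars$, and $\deghist$ I would instantiate $\mech_D$ with the restricted edge-DP tree-mechanism algorithms of \cite{FHO21} (adapted to $\linf$ error); for $\connectedcomps$ I would use the new restricted edge-DP algorithm built from the binary-tree mechanism (\cref{lemma:edge-sens,thrm:bin tree for diff seq}).

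For privacy, fix edge-neighboring streams $\gs\nedge\gs'$. By item (1a) of \cref{thrm:combined-stab}, the projected streams satisfy $\dedge\big(\projo[D](\gs),\projo[D](\gs')\big)\le 3$, and both are $D$-bounded by construction of $\projo[D]$; crucially this stability bound holds \emph{unconditionally}, with no $(D,\ell)$-boundedness promise, so no test is needed. I would then exhibit a chain of edge-neighboring streams of length $\le 3$ from $\projo[D](\gs)$ to $\projo[D](\gs')$ all of whose members are $D$-bounded: delete the (at most three) differing edges and mismatched isolated nodes of the larger projected stream down to the common substream (each intermediate is a substream of a $D$-bounded stream, hence $D$-bounded), then add in those of the other. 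Each consecutive pair of this chain lies in the domain on which $\mech_D$ gives its restricted guarantee, so the weak triangle inequality (\cref{lem:weak triangle ineq}), applied three times, shows that $\mech_D$ run on $\projo[D](\gs)$ and on $\projo[D](\gs')$ are $(\eps,0)$-indistinguishable under continual observation. Since $\mM_f$ is a post-processing of $\mech_D\circ\projo[D]$ and post-processing preserves DP (\cref{lemma:post proc}), $\mM_f$ is $(\eps,0)$-edge-DP under continual observation on all graph streams.

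For accuracy, note that if $\gs_{[t]}$ has maximum degree at most $D$, then (by opportunism) no node accumulates a degree counter of value $D$ before all of its edges through time $t$ have been considered, so every such edge is included and $\projo[D](\gs)_{[t]}=\gs_{[t]}$. Hence $\mM_f(\gs)$ produces at time $t$ exactly what $\mech_D$ produces when run on the (globally $D$-bounded) stream $\projo[D](\gs)$, and the accuracy guarantee of $\mech_D$ applied to $\projo[D](\gs)$ gives error at most $\alpha'$ at all time steps simultaneously with probability $\ge 0.99$, in particular at every $t$ with $\gs_{[t]}$ $D$-bounded. It remains to plug in $\alpha'$ for each base algorithm run with privacy parameter $\eps/3$: for $\triangles$ one edge lies in $O(D)$ triangles of a $D$-bounded graph, giving $\alpha'=O(D\log^{5/2}T/\eps)$; for $\kstars$ one edge lies in $O(D^{k-1})$ $k$-stars, giving $\alpha'=O(D^{k-1}\log^{5/2}T/\eps)$; for $\deghist$, $\alpha'=\Otilde(D\log^{5/2}T/\eps)$, the maximum over histogram entries and time steps; and for $\connectedcomps$, the difference sequence of component counts has $\ell_1$-sensitivity $O(1)$ under a single edge change (writing $c_t,c'_t$ for the two component counts through time $t$, the difference $c_t-c'_t$ is monotone once the differing edge has arrived and can take only a bounded number of distinct values), so the binary-tree mechanism yields $\alpha'=O(\log^{5/2}T/\eps)$, with no dependence on $D$.

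The overall argument is short; the two points that need care are (i) verifying that the length-$\le 3$ edge-neighbor chain witnessing $\dedge\le 3$ can be realized entirely within $D$-bounded streams, so that the weak triangle inequality may legitimately be chained through $\mech_D$'s \emph{restricted} guarantee — this is exactly where it matters that item (1a) of \cref{thrm:combined-stab} has no safety precondition — and (ii) importing or proving the base-algorithm accuracy bounds, in particular the $O(1)$ $\ell_1$-sensitivity of the connected-component difference sequence and the resulting $O(\log^{5/2}T/\eps)$ bound via the binary-tree mechanism, which is the only genuinely new ingredient (handled in \cref{lemma:edge-sens,thrm:bin tree for diff seq}).
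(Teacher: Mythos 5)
Your proposal is correct and follows essentially the same route as the paper: run the $D$-restricted edge-private tree-mechanism algorithms with privacy parameter $\eps/3$ on the output of $\projo[D]$, invoke item (1a) of \cref{thrm:combined-stab} together with group privacy over the distance-$3$ chain for privacy, and use the fact that the projection is the identity on $D$-bounded prefixes for accuracy. The only cosmetic difference is that for $\connectedcomps$ the paper exploits the \emph{unrestricted} incremental sensitivity bound and runs the base algorithm directly with parameter $\eps$ (no projection, no factor of $3$), whereas you route it through the projection as well—both give the same asymptotic error.
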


\newcommand{\incseq}[1]{\text{\sf inc}_{#1}}
\newcommand{\incsens}{\text{\sf IncEdgeSens}}

As described above, these upper bounds rely on the accuracy of restricted edge-private algorithms for the associated problems, all of which are based on the binary tree mechanism of \cite{DNPR10,CSS11} and result from \cref{lemma:edge-sens,thrm:bin tree for diff seq}.

These binary tree-based algorithms all follow from bounding the $\ell_1$ sensitivity of the sequence of \emph{increments} of a given function $f$ over the stream $\gs$, first considered explicitly by \cite{SLMVC18}.
Let $x\in \mX^T$.
Where $f:\mX^T \to (\R^d)^T$ returns an output in $\R^d$ at each time step, let $\incseq{f}(\gs):\mX^T \to (\R^d)^T$ return, for each time step $t\in[T]$, the increment $f(x)_t - f(x)_{t-1}$, where we define $f(x)_0 = 0^d$.
We define $\incsens_{D}(f)$ as the $\ell_1$ Lipschitz constant of $\incseq{f}$ when restricted to $D$-bounded, edge-neighboring graph streams:
\[
    \incsens_{D}(f) = \sup_{\substack{\gs \nedge \gs' \, \text{with} \\ \text{max degree }\leq D}} \| \incseq{f}(\gs) - \incseq{f}(\gs')\|_1 \, .
\]
We let $\incsens(f)$ (without a value for $D$) denote the sensitivity of $\incseq{f}$ over all neighboring graphs streams.

The tree mechanism of \cite{DNPR10,CSS11} produces an edge-DP approximation to the sequence of  values of $f$  when the noise is scaled to $\incsens_{D}(f)$, and produces an algorithm with  $D$-restricted edge-privacy when its noise is scaled to $\incsens_{D}(f)$. The following lemma is folklore; our formulation is a variation on statements in \cite{DNPR10,CSS11,SLMVC18,FHO21}.

\begin{lemma}[following \cite{DNPR10,CSS11,SLMVC18,FHO21}]
\label{thrm:bin tree for diff seq}
    Let $\eps>0$, $D\in \N$, $T\in \N$ where $T\geq 2$, and $d\in \N$.
    Let $f:\mGS^T\to (\R^d)^T$ return a ($d$-dimensional) output in $\R^d$ at each time step $t\in [T]$.

    There exists a $D$-restricted $(\eps,0)$-edge-DP algorithm that, with probability at least $1-\beta$, has $\linf$ error 
    \[
        \incsens_D(f)
        \cdot
        O\paren{\frac{\log^{3/2} T}{\eps}
        \cdot \log(dT/\beta)}
    \]
    on every dimension of the output.
    
    The analogous statement holds for (unrestricted) differential privacy when its noise scaled to $\incsens(f)$.
    Additionally, analogous statements hold for restricted and unrestricted node sensitivity and node-DP.
\end{lemma}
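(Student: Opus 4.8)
\textbf{Proof plan for \cref{thrm:bin tree for diff seq}.}
The plan is to instantiate the standard binary (tree) mechanism of \cite{DNPR10,CSS11} on the \emph{increment stream} $\incseq{f}(\gs)$ rather than on $\gs$ directly, and to track how the sensitivity of the increment stream controls both the privacy calibration and the accumulated noise. First I would recall the mechanism itself: partition the time horizon $[T]$ into a balanced binary tree of dyadic intervals (of depth $O(\log T)$), and for each node of the tree maintain a noisy partial sum of the increments falling in that interval. At time $t$, the mechanism reconstructs $f(\gs)_t = \sum_{i\le t}(\incseq{f}(\gs))_i$ as a sum of at most $\lceil \log_2 T\rceil + 1$ of these noisy dyadic partial sums, one per bit set in the binary representation of $t$. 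Each noisy partial sum is the true partial sum plus fresh Laplace noise (per coordinate of $\R^d$) of scale chosen below.

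Next I would handle privacy. The key observation is that two $D$-bounded edge-neighboring streams $\gs\nedge\gs'$ induce increment streams that differ in $\ell_1$ norm by at most $\incsens_D(f)$, by definition of $\incsens_D$. A single dyadic-interval partial sum is a $1$-Lipschitz linear functional of the increment stream in the $\ell_1\to\ell_1$ sense (it adds up a disjoint subset of the increment vectors), and each increment vector affects at most $\lceil\log_2 T\rceil+1$ tree nodes (one per level). Hence the tree-of-partial-sums map has $\ell_1$ sensitivity at most $(\lceil\log_2 T\rceil+1)\cdot\incsens_D(f)$ over $D$-bounded edge-neighbors. Adding independent Laplace noise of scale $(\lceil\log_2 T\rceil+1)\cdot\incsens_D(f)/\eps$ to every coordinate of every tree node therefore yields $D$-restricted $(\eps,0)$-edge-DP by the standard Laplace-mechanism and composition argument; all subsequent reconstruction is post-processing (\cref{lemma:post proc}). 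For the unrestricted-DP statement one repeats the argument with $\incsens(f)$ in place of $\incsens_D(f)$; for the node versions one uses node-neighboring streams and the corresponding node-increment sensitivity, which is purely a change of the neighbor relation and the Lipschitz constant.

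Then I would bound the accuracy. At time $t$, the released value is the true value plus a sum of at most $\lceil\log_2 T\rceil+1$ independent Laplace random variables, each of scale $\sigma = (\lceil\log_2 T\rceil+1)\cdot\incsens_D(f)/\eps$, in each of the $d$ coordinates. A Laplace tail bound plus a union bound over the $d$ coordinates and $T$ time steps gives that, except with probability $\beta$, every coordinate at every time step deviates by at most $O(\sigma\sqrt{\log_2 T}\cdot\log(dT/\beta))$; substituting $\sigma$ yields $\incsens_D(f)\cdot O\big(\tfrac{\log^{3/2}T}{\eps}\log(dT/\beta)\big)$, as claimed. (One can get the $\log^{3/2}T$ either from this crude per-variable tail-and-union-bound argument or, more tightly, from a concentration inequality for sums of Laplaces; the crude bound already suffices for the stated form.)

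The main obstacle — really the only non-boilerplate point — is being careful that it is the sensitivity of the \emph{increment} sequence, not of $f$ itself, that governs everything: a single edge change can alter $f(\gs)_t$ at many time steps, so naively calibrating to the sensitivity of $f$ at one time step and composing over $T$ steps would be lossy, whereas calibrating to $\incsens_D(f)$ and exploiting that each increment touches only $O(\log T)$ tree nodes is what produces the $\polylog T$ dependence. I would also note the one subtlety in the node case: the node-increment sensitivity must be finite on the restricted class, which is exactly why $D$-restricted node-DP (and not unconditional node-DP) is what comes out — matching the hypotheses under which this lemma is invoked.
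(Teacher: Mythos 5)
Your proposal is essentially the paper's argument, just unrolled: the paper's proof is a one-liner that invokes \cite[Corollary~14]{FHO21} (which is exactly the binary tree mechanism run on the increment/difference sequence, calibrated to $\incsens_D(f)$) and then sets $\beta' = \beta/(dT)$ and union-bounds over the $d$ coordinates and $T$ time steps to convert the per-query guarantee into the uniform $\linf$ guarantee. Your privacy analysis (each increment touches $O(\log T)$ tree nodes, so calibrate Laplace noise to $O(\log T)\cdot\incsens_D(f)/\eps$ and post-process), your treatment of the restricted/unrestricted and edge/node variants as a change of neighbor relation and Lipschitz constant, and your final union bound all match what the cited corollary does under the hood.

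One quantitative slip: your parenthetical claim that ``the crude per-variable tail-and-union-bound argument'' already yields $\log^{3/2}T$ is wrong. At each time step the error is a sum of $k = O(\log T)$ independent Laplace variables of scale $\sigma = O(\log T)\cdot\incsens_D(f)/\eps$; bounding each summand separately and union-bounding gives $k\cdot\sigma\cdot O(\log(\cdot)) = \incsens_D(f)\cdot O\paren{\tfrac{\log^{2}T}{\eps}\log(\cdot)}$, i.e.\ an extra $\sqrt{\log T}$ factor. The $\log^{3/2}T$ in the statement genuinely requires the measure-concentration inequality for sums of independent Laplace random variables (as in \cite{CSS11}), which buys the $\sqrt{k}$ rather than $k$ dependence. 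Your main derivation does route through that concentration bound, so the proof goes through; just drop the claim that the crude bound suffices.
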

\begin{proof}[Proof of \cref{thrm:bin tree for diff seq}]
    By \cite[Corollary 14]{FHO21}, which uses the binary tree mechanism of \cite{CSS11,DNPR10} and the technique of difference sequences introduced by \cite{SLMVC18}, there exists a $D$-restricted $(\eps,0)$-edge-DP algorithm that, with probability at least $1-\beta'$, has error at most
    \(
        \incsens_D(f)
        \cdot
        O\paren{\frac{\log^{3/2} T}{\eps}
        \cdot \log(1/\beta')}
    \)
    per dimension at each time step.

    By substituting $\beta' = \frac{\beta}{dT}$ and taking a union bound over all $d$ coordinates and $T$ time steps, with probability at least $1-\beta$ the $\linf$ error over all $d$ coordinates and $T$ time steps is at most
    \(
        \incsens_D(f)
        \cdot
        O\paren{\frac{\log^{3/2} T}{\eps}
        \cdot \log(dT/\beta)}
    \),
    which is what we wanted to show (with analogous statements for $\incsens(f)$ and unrestricted $(\eps,0)$-edge-DP; and for node-DP).
\end{proof}

\begin{lemma}[Edge sensitivity bounds]
\label{lemma:edge-sens}
    For all $D\in \N$, the following bounds on incremental edge-sensitivity $\incsens_D$ and $\incsens$ hold (with some due to \cite{SLMVC18,FHO21}):
    \begin{enumerate}
        \item $\incsens(\edges) = 1 = O(1)$.
        \item $\incsens_D(\triangles) = D - 1 = O(D)$.
        \item $\incsens(\connectedcomps) = 2 = O(1)$.
        \item $\incsens_D(\kstars) = 2\binom{D-1}{k-1} = O\paren{D^{k-1}}$.
        \item $\incsens_D(\deghist) 
        = 8D - 4 = O(D)$.
    \end{enumerate}
    Except for $\deghist$, these equalities hold even if we restrict our attention to streams of length $T=1$. For $\deghist$, the lower bound requires $T \geq 2D$ (but the upper bound holds regardless of $T$).
\end{lemma}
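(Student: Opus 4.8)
The plan is to prove each of the five bounds by a direct, self-contained combinatorial argument on the increment sequence $\incseq{f}$, exploiting the fact that for insertion-only streams, the flattened graph only grows over time, so the increment at time $t$ is the change in the statistic caused by the arrival of $\gs_t$. For the upper bounds, I would fix an edge-neighboring pair $\gs \nedge \gs'$ where (WLOG) $\gs'$ has one extra edge $e^+ = \{u,v\}$ (or one extra isolated vertex / degree-1 vertex, which only makes things easier), and compare the two increment sequences term by term. The key observation is that adding $e^+$ to the stream changes the increment at exactly one time step: the time step $t^+$ at which $e^+$ arrives in $\gs'$. At all other time steps the arriving set $\gs_t = \gs'_t$ and the ``before'' and ``after'' flattened graphs differ only by the presence of $e^+$, so the \emph{increments} differ by the change in how much $e^+$ contributes — and for monotone-in-insertion quantities this telescopes cleanly. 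I would therefore compute, for each $f$, the maximum over time of the absolute change in the increment, which reduces to: (a) the one-shot change from inserting $e^+$ when it arrives, plus (b) the later changes to increments caused by $e^+$ already being present. For $\edges$: inserting $e^+$ changes the count by exactly $1$ at $t^+$ and nothing later, giving $\incsens(\edges)=1$; the matching lower bound is the length-1 stream $\emptyset$ vs. a single edge. For $\connectedcomps$: inserting $e^+ = \{u,v\}$ into a graph either merges two components ($-1$) or (if $u$, $v$ are new isolated endpoints) could momentarily change by more, but across the increment sequence the total $\ell_1$ change is bounded by $2$ — one unit at $t^+$ and at most one unit at the time a later edge connects the two ``sides''; I would make this precise by the standard argument that an edge insertion changes the connected-component count of \emph{any} graph by at most $1$ in absolute value, and that $e^+$ affects the increment at no more than two distinct time steps. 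The lower bound of $2$ comes from a length-2 stream where $e^+$ creates a component at time 1 and then an edge at time 2 merges it (or not) depending on $e^+$.

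For the degree-parametrized statistics $\triangles$, $\kstars$, $\deghist$ I would use the $D$-boundedness promise directly. For $\triangles$: when $e^+ = \{u,v\}$ arrives at $t^+$, the increment changes by the number of common neighbors of $u,v$ present at that time, which is at most $D-1$ (both $u$ and $v$ have degree at most $D$, so at most $D-1$ common neighbors). At any later time $t$, a newly arriving edge $\{w,x\}$ closes a triangle using $e^+$ iff $\{w,x\}$ together with $e^+$ and one more present edge forms a triangle; summing over all later time steps, the \emph{extra} triangles attributable to $e^+$ over the whole stream is exactly the number of triangles through $e^+$ in the final graph, which again is at most $D-1$. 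But these two contributions are not simply added: the one-shot jump at $t^+$ counts the triangles already closed, and the later increments count the rest, and together they sum to (number of triangles through $e^+$ in the final graph) $\le D-1$, distributed over possibly many increments but with total $\ell_1$ mass $\le D-1$. So $\incsens_D(\triangles) \le D-1$; the lower bound is the length-1 star-plus-edge construction. The same template handles $\kstars$: the number of $k$-stars through edge $e^+ = \{u,v\}$ is the number of $k$-stars centered at $u$ using $v$ as a leaf plus those centered at $v$ using $u$ as a leaf, which is $\binom{\deg(u)-1}{k-1} + \binom{\deg(v)-1}{k-1} \le 2\binom{D-1}{k-1}$, and the total $\ell_1$ change across the increment sequence equals exactly this quantity, so $\incsens_D(\kstars) = 2\binom{D-1}{k-1}$.

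The degree histogram $\deghist$ is the delicate case and I expect it to be the main obstacle, because the histogram changes at \emph{every} time step a new node or edge arrives, and the presence of $e^+$ shifts which bucket the endpoints $u,v$ (and no one else) land in. I would argue: adding $e^+$ increases $\deg(u)$ and $\deg(v)$ each by 1 in the final graph, so at each time step the ``with-$e^+$'' history has $u$ one bucket higher and $v$ one bucket higher than the ``without-$e^+$'' history. An edge arrival $\{w,x\}$ produces increment vector supported on buckets $\deg(w), \deg(w)+1, \deg(x), \deg(x)+1$ (each $\pm 1$), so the increment differs between the two histories only when $w$ or $x$ is $u$ or $v$, and then by moving a $\pm 1$ mass by one bucket — an $\ell_1$ change of at most $4$ per such event. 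But there are at most $\deg_{\gs'}(u) + \deg_{\gs'}(v) \le 2D$ edge-arrivals touching $u$ or $v$, which would naively give $8D$; tightening to $8D - 4$ requires accounting for the increment at $t^+$ itself and the fact that the very first and last touching events contribute less (the first increment involving $u$ has nothing to shift on the ``from'' side), which I would work out by careful bookkeeping of the telescoping sum $\sum_t \|\incseq{\deghist}(\gs')_t - \incseq{\deghist}(\gs)_t\|_1$. The claimed lower bound requiring $T \ge 2D$ is realized by spreading the $2D$ edges incident to $u$ and $v$ across distinct time steps so that each of the $\approx 2D$ bucket-shift events is forced and none cancels, giving the $8D - \Theta(1)$ matching bound; I would construct this stream explicitly. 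The main risk in the write-up is getting the exact constant $8D-4$ (versus $8D$ or $8D-c$ for another small $c$) right, which needs the boundary-term accounting to be done meticulously rather than asymptotically.
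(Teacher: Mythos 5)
Your proposal matches the paper's proof in both structure and substance: items (1)--(4) are handled exactly as in the paper, by counting the substructures through the added edge $e^+$ (with the same closed forms, e.g.\ $\binom{\deg(u)-1}{k-1}+\binom{\deg(v)-1}{k-1}\le 2\binom{D-1}{k-1}$ for $k$-stars), and the connected-components bound uses the same two-event case analysis (a unit difference in the increment at the arrival of $e^+$, and at most one more at the first time the endpoints become connected in the smaller stream). For $\deghist$, your per-edge-arrival accounting ($\ell_1$ change $4$ for each later edge touching $u$ or $v$, but only $2$ for the arrival of $e^+$ itself, per endpoint) is the transpose of the paper's per-bucket accounting and yields the same $2\cdot(4D-2)=8D-4$ once the boundary term you flag is carried out.
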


\begin{proof}
    With the exception of item (3), proofs of items similar to those listed above can be found in \cite[Lemma 15]{FHO21}, though we tighten some of the sensitivities
    (analyses for $D$-restricted node sensitivity first appear in \cite{SLMVC18}). 
    Specifically, for items (2) and (5) we tighten the exact sensitivities and provide strict equalities (the asymptotics are unchanged), and for item (4) we tighten the asymptotic expression by a factor of $D$ (the exact expression in fact remains unchanged). We prove item (3) last since its proof is new and the most involved.

    \textbf{Item (1).} The addition or removal of an edge changes the increment sequence of edge counts $\incseq{\edges}$ by at most one \cite{FHO21}.

    \textbf{Item (2).} Each edge is part of at most $D-1$ triangles (and can, in fact, be part of $D-1$ triangles), so adding or removing an edge will change the sequence of increments by at most $D-1$. 

    \textbf{Item (4).} Each edge is part of at most $2 \binom{D}{k}$ different $k$-stars
    (and, if the edge is between two nodes with degree $D$, will be part of exactly $2 \binom{D}{k}$ different $k$-stars)
    so removing the edge will instead result in $2 \binom{D-1}{k}$ different $k$-stars. Therefore, the sequence of increments will change by at most $2 \paren{ \binom{D}{k} - \binom{D-1}{k} }$ \cite{FHO21}. We additionally observe $2 \paren{ \binom{D}{k} - \binom{D-1}{k} } = 2\binom{D-1}{k-1} = O\paren{D^{k-1}}$.

    \textbf{Item (5).} We first consider how an additional edge $\eplus = \set{u,v}$ in $\gs'$ (as compared to an edge-neighboring stream $\gs$) affects the degree of $u$; we then apply a similar argument for the degree of $v$. At each time step, the degree of $u$ is higher by at most one in $\gs'$ as compared to $\gs$. In general, this means that the increment to each bucket due to a change in the degree of $u$ will be earlier, and the decrement to each bucket due to a change in the degree of $u$ will also be earlier.
    There are several exceptions to this idea: the increment to the bucket for degree 0 cannot be earlier, and there cannot be a decrement to the bucket for degree $D$.
    Additionally, there cannot be an increment in $\gs$ for the bucket for degree $D$ (since otherwise $u$ would have degree $D+1$ in $\gs'$); similarly, there cannot be a decrement for the bucket for degree $D-1$ in $\gs$.

    Therefore, considering buckets not equal to $0$, $D-1$, or $D$, there are $(D-2)$ differently positioned increments and decrements---that is, the $\lone$ sensitivity of the sequence of increments for these buckets is $4(D-2)$; and for $D$ there is $1$ change, for 0 there are $2$ changes (when $D-1 > 0$), and for $D-1$ there is one less change than it would otherwise have (i.e., $1$ change for $D-1 = 0$ and $3$ changes otherwise). Therefore, the $\lone$ sensitivity of the sequence of increments is $4D - 2$ due to the change in the degree of $u$. A similar argument applies for changes due to the degree of $v$, which gives us a sensitivity of $2 \cdot (4D - 2) = 8D-4$.
    
    Note that this argument is tight when $\eplus$ arrives at a time step that is both (1) after $u$ and $v$ and (2) before all other edges, and when at most one edge incident to $u$ arrives at each time step (likewise for $v$).

    \textbf{Item (3).} Let $\connectedcomps^+$ denote $\incseq{\connectedcomps}$.
    Without loss of generality, let $\gs'$ be a graph stream that, as compared to $\gs$, has one additional edge $\eplus = \{u,v\}$ that is inserted at time $t_1$.

    When bounding $|| \connectedcomps^+(\gs) - \connectedcomps^+(\gs') ||_1$, we have to consider the following three cases. In $\gs$:
    \begin{enumerate}
        \item \textbf{(Case 1)} there is never a path between $u$ and $v$.
        \item \textbf{(Case 2)} $t_0 \leq t_1$ is the first time step at which there exists a path between $u$ and $v$.
        \item \textbf{(Case 3)} $t_3 > t_1$ is the first time step at which there exists a path between $u$ and $v$.
    \end{enumerate}
    We prove each case below.
    
    \textbf{Case 2:} The outputs from $\connectedcomps(\gs)$ and $\connectedcomps(\gs')$ are identical, so the outputs from $\connectedcomps^+(\gs)$ and $\connectedcomps^+(\gs')$ are also identical, so we have $||\connectedcomps^+(\gs) - \connectedcomps^+(\gs') ||_1 = 0$. We now consider the remaining cases.

    \textbf{Case 1:} For all $t_0 < t_1$ (that is, prior to the time step at which the graph streams differ) we have
    $\connectedcomps^+(\gs)_{t_0} = \connectedcomps^+(\gs')_{t_0}$. When the difference in graph streams occurs at $t_1$, we have $\connectedcomps^+(\gs)_{t_1} = \connectedcomps^+(\gs')_{t_1} + 1$. Only the components containing $u$ and $v$ are affected by this edge. By the assumption that there is never a path between $u$ and $v$ in $\gs$, no future edges affect the components containing $u$ and $v$, so for all $t_2 > t_1$ we have $\connectedcomps^+(\gs)_{t_2} = \connectedcomps^+(\gs')_{t_2}$. Therefore, in case 1 we have
    \[
        || \connectedcomps^+(\gs) - \connectedcomps^+(\gs') ||_1 \leq 1.
    \]

    \textbf{Case 3:} As above, for all $t_0 < t_1$ we have
    $\connectedcomps^+(\gs)_{t_0} = \connectedcomps^+(\gs')_{t_0}$. When the difference in graph streams occurs at $t_1$, we have $\connectedcomps^+(\gs)_{t_1} = \connectedcomps^+(\gs')_{t_1} + 1$. Only the components containing $u$ and $v$ are affected by the inclusion of edge $\eplus$. By the assumption that, for all $t_2\in [t_1,t_3)$, there is never a path between $u$ and $v$ in $\gs$, no edges arriving at such times $t_2$ affect the components containing $u$ and $v$, so we have $\connectedcomps^+(\gs)_{t_2} = \connectedcomps^+(\gs')_{t_2}$.
    
    At time $t_3$, the edge forming a path between $u$ and $v$ arrives, which causes the number of connected components in $\gs$ to decrease by 1 as compared to $\gs'$, so $\connectedcomps^+(\gs)_{t_3} = \connectedcomps^+(\gs')_{t_3}$. For all $t_4 > t_3$, the changes in counts will be the same for both graph streams, so $\connectedcomps^+(\gs)_{t_4} = \connectedcomps^+(\gs')_{t_4}$. Therefore, in case 3 we have
    \[
        || \connectedcomps^+(\gs) - \connectedcomps^+(\gs') ||_1 \leq 2,
    \]
    which completes the proof.  
\end{proof}

We now move to the proofs of \cref{thrm:acc of applications,thrm:acc of applications edge}. \cref{thrm:acc of applications edge} quickly follows from running the restricted edge-private algorithms that result from the sensitivities in \cref{lemma:edge-sens} and the binary tree mechanism in \cref{thrm:bin tree for diff seq} on the projection returned by $\projo$, and either running the algorithm directly on the input for the problems with unrestricted sensitivity bounds, or running the algorithm with $\eps' = \eps/3$ on the projection of $\projo$ for problems with restricted sensitivity bounds. 
\cref{thrm:acc of applications} follows from running \cref{alg:bb} with access to the relevant restricted edge-private algorithm that results from \cref{lemma:edge-sens,thrm:bin tree for diff seq}, and then determining the associated accuracy of that algorithm given the value of $\eps'$ specified by \cref{thrm:bb priv params}.

\begin{proof}[Proof of \cref{thrm:acc of applications edge}]
    This proof follows immediately from combining item (1a) of \cref{thrm:combined-stab}, which describes the edge-to-edge stability of \cref{alg:time aware projection} that includes edges in the projection according to original degrees (we use $\projo$ to denote this projection), with the bounds on accuracy for edge-private algorithms that follow from the sensitivities in \cref{lemma:edge-sens} and the binary tree mechanism in \cref{thrm:bin tree for diff seq}. More specifically, to obtain $(\eps,0)$-edge-DP under continual observation for the problems with $D$-restricted edge sensitivity, we run the $D$-restricted edge-private algorithms yielding these accuracy bounds with $\eps/3$ on the graph stream output by the projection algorithm $\projo$; and to obtain $(\eps,0)$-edge-DP under continual observation for the problems with unrestricted edge sensitivity, we run the resulting algorithm with $\eps$.
\end{proof}

\begin{proof}[Proof of \cref{thrm:acc of applications}]
    We begin by calculating the value of $\eps'$ with which \cref{alg:bb} will run the (restricted) edge-private algorithm. For these calculations, set $\eps_\test = \eps/2$ and $\failtest = \del / 30$. We also assume constant $\beta$. To have $(\eps,\del)$-node-DP, \cref{thrm:bb priv params} tells us that \cref{alg:bb} will set
    \begin{align*}
        \eps'
        &= \Theta\left( \frac{\eps}{D + \frac{\log(T/(\beta \del))}{\eps}} \right) \\
        &= \Theta \left( \frac{\eps}{D} + \frac{\eps^2}{\log(T/(\beta\del))} \right).
    \end{align*}
    For constant $\beta$, we can simplify this further to
    \[
        \eps' = \Theta \left( \frac{\eps}{D} + \frac{\eps^2}{\log(T/\del)} \right).
    \]

    To compute accuracy, we will also need to know the value of $D'$. Using the conditions stated above on our variables, we can simplify $D'$ as follows:
    \begin{align*}
        D'
        &= D + \ell \\
        &= D + \Theta\paren{ \dfrac{\log (T/(\beta\failtest))}{\eps} } \\
        &= D + \Theta\paren{ \frac{\log (T/\del) }{\eps} }.
    \end{align*}

    \textbf{Proof of items (1) and (3).} We substitute the above value of $\eps'$ into the accuracy bounds that follow from \cref{lemma:edge-sens,thrm:bin tree for diff seq}.
    We note that the accuracy expressions for $\edges$ and $\connectedcomps$ are the same, so we begin by proving the upper bounds on accuracy for these functions.
    By \cref{lemma:edge-sens,thrm:bin tree for diff seq}, with probability $1-\beta$, the edge-private algorithms for $\edges$ and $\connectedcomps$ have additive error at most $O((\log^{3/2}T)\cdot (\log T + \log(1/\beta))/\eps')$.
    Since $\beta$ is constant we can ignore the $\log(1/\beta)$ term, so substituting for $\eps'$ gives us
    \begin{align*}
        \frac{\log^{5/2}T}{\eps'}
        &= \Theta\left(  \left( \frac{D}{\eps} + \frac{\log(T/\del)}{\eps^2} \right) \cdot \log^{5/2}T \right) \\
        &= \Theta\left(  \left( D + \frac{\log(T/\del)}{\eps} \right) \cdot \frac{\log^{5/2}T}{\eps} \right),
    \end{align*}
    which is the additive error we wanted to show.

    \textbf{Proof of item (2).} By \cref{lemma:edge-sens,thrm:bin tree for diff seq}, with probability $1-\beta$, the edge-private algorithm for $\triangles$ has additive error at most $O(D'\cdot(\log^{3/2}T)\cdot (\log T + \log(1/\beta))/\eps')$. 
    Since $\beta$ is constant we can ignore the $\log(1/\beta)$ term. Substituting for $\eps'$ and $D'$ gives us
    \begin{align*}
        \frac{D' \log^{5/2} T}{\eps'}
        &=
        \Theta\paren{  \paren{  D + \frac{\log(T/\del)}{\eps} }\cdot \log^{5/2} T \cdot \paren{\frac{D}{\eps} + \frac{\log T /\del}{\eps^2}}  } \\
        &=
        \Theta\paren{  \paren{  D + \frac{\log(T/\del)}{\eps} }^2 \cdot \frac{\log^{5/2} T}{\eps} } \\
        &=
        \Theta\paren{  \paren{  D^2 + \frac{\log^2(T/\del)}{\eps^2} } \cdot \frac{\log^{5/2} T}{\eps} },
    \end{align*}
    which is the additive error we wanted to show.

    \textbf{Proof of item (4).} By \cref{lemma:edge-sens,thrm:bin tree for diff seq}, with probability $1-\beta$, the edge-private algorithm for $\kstars$ has additive error at most $O(D'^{k-1}\cdot(\log^{3/2}T)\cdot (\log T + \log(1/\beta))/\eps')$. 
    Since $\beta$ is constant we can ignore the $\log(1/\beta)$ term. Substituting for $\eps'$ and $D'$ gives us
    \begin{align*}
        \frac{D'^{k-1} \log^{5/2} T}{\eps'}
        &=
        \Theta\paren{  \paren{  D + \frac{\log(T/\del)}{\eps} }^{k-1}\cdot \log^{5/2} T \cdot \paren{\frac{D}{\eps} + \frac{\log T /\del}{\eps^2}}  } \\
        &=
        \Theta\paren{  \paren{  D + \frac{\log(T/\del)}{\eps} }^k \cdot \frac{\log^{5/2} T}{\eps} } \\
        &=
        \Theta\paren{  \paren{  D^k + \frac{\log^k(T/\del)}{\eps^k} } \cdot \frac{\log^{5/2} T}{\eps} },
    \end{align*}
    which is the additive error we wanted to show.

    \textbf{Proof of item (5).} By \cref{lemma:edge-sens,thrm:bin tree for diff seq}, with probability $1-\beta$, the edge-private algorithm for $\deghist$ has additive error at most $O(D'\cdot(\log^{3/2}T)\cdot (\log T + \log(1/\beta) + \log D')/\eps')$ on every bin of the histogram. Since $\beta$ is constant we can ignore the $\log(1/\beta)$ term. Substituting for $\eps'$ and $D'$ gives us
    \begin{align*}
        \frac{D'\cdot(\log^{3/2}T)\cdot (\log T + \log D')}{\eps'}
        &= 
        \Thetatilde\paren{  \paren{  D + \frac{\log(T/\del)}{\eps} }\cdot \log^{5/2} T \cdot \paren{\frac{D}{\eps} + \frac{\log T /\del}{\eps^2}}  } \\
        &=
        \Thetatilde\paren{  \paren{  D + \frac{\log(T/\del)}{\eps} }^2 \cdot \frac{\log^{5/2} T}{\eps} } \\
        &=
        \Thetatilde\paren{  \paren{  D^2 + \frac{\log^2(T/\del)}{\eps^2} } \cdot \frac{\log^{5/2} T}{\eps} },
    \end{align*}
    where, as is standard, the $\Thetatilde$ notation suppresses terms that are logarithmic in the argument---in this case, it suppresses $\log D$ and $\log(\frac{1}{\eps}\log \frac{T}{\del})$. This is the additive error we wanted to show.
\end{proof}

\subsection{Lower Bounds on Error}
\label{sec:lower-bounds-error}

We now present several lower bounds on the error necessary for privately releasing graph statistics in the approximate DP setting, for $\del = O(1/T)$. The proofs of these bounds are based on a reduction from binary counting. For $\edges$ and $\deghist$, our reductions from binary counting are similar to the constructions used by \cite{FHO21}. Our reductions for $\triangles,\connectedcomps$, and $\kstars$ are different from those of \cite{FHO21} and yield stronger lower bounds.

\begin{theorem}[Lower bounds for node-private algorithms]
\label{thrm:lower bd node}
    For sufficiently large $T\in\N$, and all $\eps > 0,\del = O(1/T)$, and $D\in\N$ 
    (with $D \geq 2$ for $\triangles$),\footnote{A graph with maximum degree $D = 1$ contains no triangles.}
    consider a mechanism $\mM$ for each of the problems below that runs on length-$T$ graph streams with maximum degree at most $D$. If $\mM$ satisfies $(\eps,\del)$-node-DP and $(\alpha,T)$-accuracy for the specified task, then its additive error must be lower bounded in the following way:
    \begin{enumerate}[leftmargin=*]
        \item $\edges$,
        $
            \alpha = \Omega \paren{ \min \set{ \dfrac{D \log T}{\eps}, D T } }.
        $
        
        \item $\triangles$,
        $
            \alpha = \Omega \paren{ \min \set{ \dfrac{D^2 \log T}{\eps}, D^2 T } }.
        $
        
        \item $\connectedcomps$,
        $
            \alpha = \Omega \paren{ \min \set{ \dfrac{D \log T}{\eps}, D T } }.
        $

        \item $\kstars$,
        $
            \alpha = \Omega \paren{ \min \set{ \dfrac{D^k \log T}{\eps}, D^k T } }.
        $
        
        \item $\deghist$,
        $
            \alpha = \Omega \paren{ \min \set{ \dfrac{D \log T}{\eps}, D T } }
        $
        (maximum error over histogram entries and time steps).
    \end{enumerate}
\end{theorem}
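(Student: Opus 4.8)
I would prove each of the five bounds by a reduction from counting a binary stream under continual observation, for which \cref{thrm:bin-count-lower-bd} supplies the lower bound $\Omega(\min\{\log T/\eps,\,T\})$ on the additive error of any $(\eps,\del)$-DP, constant-probability-accurate mechanism when $\del = O(1/T)$. For each statistic $f$, associate a ``scaling'' $s = s(D)$, where $s = \Theta(D)$ for $\edges$, $\connectedcomps$, and $\deghist$; $s = \Theta(D^2)$ for $\triangles$; and $s = \Theta(D^k)$ for $\kstars$. The heart of the argument is a map $G(\cdot)$ taking a length-$T$ binary stream $\sigma = (\sigma_1,\dots,\sigma_T)$ to an insertion-only, $D$-bounded graph stream of length $T$ with two properties: (i) $f$ evaluated on $G(\sigma)$ through time $t$ is a \emph{publicly known affine function}, of slope $\Theta(s)$, of the prefix sum $\sum_{i\le t}\sigma_i$; and (ii) flipping a single bit of $\sigma$ changes $G(\sigma)$ by exactly one node, i.e.\ neighboring binary streams map to node-neighboring graph streams. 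Given such a $G$ and a node-private, $(\alpha,T)$-accurate mechanism $\mM$ for $f$, consider the mechanism $\mM'$ that runs $\mM$ on $G(\sigma)$ and post-processes each output through the inverse affine map. By (ii) and post-processing (\cref{lemma:post proc}), $\mM'$ is $(\eps,\del)$-DP for binary counting; since $G(\sigma)$ is $D$-bounded, the accuracy of $\mM$ applies, so $\mM'$ is $\Theta(\alpha/s)$-accurate for binary counting with the same constant probability. Plugging into \cref{thrm:bin-count-lower-bd} gives $\alpha = \Omega\!\paren{s\cdot\min\{\log T/\eps,\,T\}} = \Omega(\min\{s\log T/\eps,\, sT\})$, which is the claimed bound after substituting $s$.

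It then remains to construct the maps $G$. In every case the idea is the same: all vertices of a ``fixed part'' are inserted at time $1$ (isolated or low-degree), and a single ``toggle'' vertex $v_i$ of degree exactly $D$ is inserted at time $i$ iff $\sigma_i = 1$. For $\edges$: the fixed part contains, for each $i$, private slot vertices $s_i^1,\dots,s_i^D$; when present, $v_i$ is joined to $s_i^1,\dots,s_i^D$, adding exactly $D$ edges, so the edge count through time $t$ equals $D\sum_{i\le t}\sigma_i$. The $\kstars$ map is identical: $v_i$ is then the center of $\binom{D}{k}=\Theta(D^k)$ new $k$-stars, while the slot vertices (degree $\le 1$) contribute none for $k\ge 2$ (and for $k=1$ the count is still $\Theta(D)\sum_i\sigma_i$). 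For $\triangles$: the fixed part contains, for each $i$, a $D$-clique $C_i$ (with its $\binom{D}{2}$ edges), and $v_i$, when present, is joined to all of $C_i$, creating $\binom{D}{2}=\Theta(D^2)$ new triangles; the triangle count through $t$ is $T\binom{D}{3} + \binom{D}{2}\sum_{i\le t}\sigma_i$, affine in the prefix sum. For $\connectedcomps$: the fixed part contains $D$ isolated slot vertices $s_i^1,\dots,s_i^D$ per $i$, and $v_i$, when present, merges them into one component, decreasing the component count by $D-1$; the count through $t$ is $DT - (D-1)\sum_{i\le t}\sigma_i$. For $\deghist$: reuse the $\edges$ gadget and read off the bucket for degree $1$, which (for $D\ge 2$, so $v_i$ lands in a different bucket) equals $D\sum_{i\le t}\sigma_i$; since the $\linf$ error over all buckets and time steps is at least the error on this single bucket, the same bound follows. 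In each construction, $\deg(v_i)=D$ and every other vertex has degree at most $\max\{D-1,1\}\le D$, so $G(\sigma)$ is $D$-bounded and insertion-only, and flipping $\sigma_i$ adds or removes exactly $v_i$.

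The main obstacle is the three-way tension in designing $G$: each bit should move $f$ by as much as possible ($\Theta(D^{\mathrm{power}})$); a bit flip should be a \emph{single}-node change (otherwise node privacy of $\mM$ transfers to DP for binary counting only through group privacy, losing the factor we are trying to capture); and the stream must stay $D$-bounded (otherwise the restricted accuracy guarantee does not apply). A single vertex that can be added or removed while preserving the degree bound touches at most $D$ edges, which is precisely why the per-bit contribution is $\Theta(D)$ for edges/components and $\Theta(D^2),\Theta(D^k)$ for triangles/$k$-stars; the constructions realize these caps by routing the degree-$D$ toggle vertex into a pre-built dense gadget (clique, star) whose other endpoints remain low-degree. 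A secondary item to check is that \cref{thrm:bin-count-lower-bd} indeed holds in the approximate-DP regime $\del=O(1/T)$ and degrades to the trivial $\Omega(T)$ bound for small $\eps$, so that the stated $\min$-form is correct for all $\eps>0$ and sufficiently large $T$.
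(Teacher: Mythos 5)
Your proposal is correct and follows essentially the same route as the paper: a reduction from continual binary counting via the packing-based lower bound of \cref{thrm:bin-count-lower-bd}, using per-bit gadgets (isolated slot vertices plus a degree-$D$ toggle vertex, and a $D$-clique plus apex for triangles) so that a bit flip is a single-node change, the stream stays $D$-bounded, and the statistic is a public affine function of the prefix sum with slope $\Theta(D)$, $\Theta(D^2)$, or $\Theta(D^k)$ as appropriate. The only differences from the paper's proof (inserting the fixed part at time $1$ rather than at time $i$, and your explicit remarks on $k=1$ and the degree-$1$ bucket) are cosmetic.
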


\begin{theorem}[Lower bounds for edge-private algorithms]
\label{thrm:lower bd edge}
    For sufficiently large $T\in\N$, and all $\eps > 0,\del = O(1/T)$, and $D\in\N$
    (with $D \geq 2$ for $\triangles$),
    consider a mechanism $\mM$ for each of the problems below that runs on length-$T$ graph streams with maximum degree at most $D$. If $\mM$ satisfies $(\eps,\del)$-edge-DP and $(\alpha,T)$-accuracy for the specified task, then its additive error must be lower bounded in the following way:
    \begin{enumerate}[leftmargin=*]
        \item $\edges$,
        $
            \alpha = \Omega \paren{ \min \set{ \dfrac{\log T}{\eps}, T } }.
        $
        
        \item $\triangles$,
        $
            \alpha = \Omega \paren{ \min \set{ \dfrac{D\log T}{\eps}, D T } }.
        $
        
        \item $\connectedcomps$,
        $
            \alpha = \Omega \paren{ \min \set{ \dfrac{\log T}{\eps}, T } }.
        $

        \item $\kstars$,
        $
            \alpha = \Omega \paren{ \min \set{ \dfrac{D^{k-1} \log T}{\eps}, D^{k-1} T } }.
        $
        
        \item $\deghist$,
        $
            \alpha = \Omega \paren{ \min \set{ \dfrac{\log T}{\eps}, T } }
        $
        (maximum error over histogram entries and time steps).
    \end{enumerate}
\end{theorem}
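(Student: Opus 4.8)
The plan is to derive every item of \cref{thrm:lower bd edge} from the binary-counting lower bound (\cref{thrm:bin-count-lower-bd}): for $\delta = O(1/T)$, any mechanism that is $(\eps,\delta)$-DP under continual observation for counting the $1$'s in a stream $b\in\{0,1\}^T$ (where neighboring bit-streams differ in a single coordinate) must, for sufficiently large $T$, incur additive $\linf$ error $\Omega(\min\{\log T/\eps,\,T\})$ at some time step, with probability more than $0.01$. For each statistic $f\in\{\edges,\connectedcomps,\deghist,\triangles,\kstars\}$ I would exhibit a map $\Phi$ from bit-streams to graph streams of maximum degree at most $D$ with two properties: (i) for every $t$, $f(\flatten{\Phi(b)_{[t]}}) = c_f\cdot\sum_{i\le t}b_i + g(t)$ for a fixed shift $g(t)$ and a coefficient $c_f$ depending on $f$ and $D$; and (ii) if $b,b'$ differ in one bit, then $\Phi(b)$ and $\Phi(b')$ are edge-neighboring. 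Given any $(\eps,\delta)$-edge-DP, $(\alpha,T)$-accurate mechanism $\mM$ for $f$, post-composing $\mM(\Phi(\cdot))$ with ``subtract $g(t)$, divide by $c_f$'' yields (by property (ii), edge-DP of $\mM$, closure of DP under post-processing, and \cref{defn:dp crt}) an $(\eps,\delta)$-DP-under-continual-observation binary-counting mechanism that is $(\alpha/|c_f|,T)$-accurate; \cref{thrm:bin-count-lower-bd} then forces $\alpha = \Omega\!\left(|c_f|\cdot\min\{\log T/\eps,\,T\}\right)$.

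\textbf{The degree-free gadgets.} For $\edges$: at time $t$ introduce two fresh nodes and, iff $b_t=1$, the edge joining them; then $\edges(\flatten{\Phi(b)_{[t]}})=\sum_{i\le t}b_i$, all degrees are $\le 1\le D$, and a bit-flip toggles exactly one edge, so $c_f=1$ and item 1 follows. For $\connectedcomps$: at time $t$ \emph{always} introduce two fresh nodes $u_t,v_t$ and add the edge $\{u_t,v_t\}$ iff $b_t=1$; the component count at time $t$ is then $2t-\sum_{i\le t}b_i$, degrees are $\le 1$, and a bit-flip again toggles exactly one edge, so $c_f=-1$ and item 3 follows. For $\deghist$: the same construction makes the degree-$1$ bin equal $2\sum_{i\le t}b_i$ at every time, and $\linf$-accuracy of the histogram controls that bin, so $c_f=2$ up to a constant and item 5 follows. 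In all three cases the point of always adding the scaffolding nodes regardless of $b_t$ is that a bit-flip then corresponds to a \emph{single} edge change (edge distance $1$) rather than to a change of several nodes and edges.

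\textbf{The hub gadgets for $\triangles$ and $\kstars$.} For $\triangles$ (with $D\ge 2$): at time $t$ introduce fresh nodes $a_t,b_t,c^{(t)}_1,\dots,c^{(t)}_{D-1}$ with edges $\{a_t,c^{(t)}_j\}$ and $\{b_t,c^{(t)}_j\}$ for all $j$, and add $\{a_t,b_t\}$ iff $b_t=1$; inserting that one edge creates exactly the $D-1$ triangles $\{a_t,b_t,c^{(t)}_j\}$ and no others (distinct gadgets share no edges), the maximum degree is exactly $D$ (attained at $a_t,b_t$ when the switch edge is present), and a bit-flip toggles the single edge $\{a_t,b_t\}$, so $c_f=D-1=\Theta(D)$ and item 2 follows. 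For $\kstars$ (taking $D\ge k$, the only regime in which the bound is nonvacuous): at time $t$ introduce a hub $a_t$ with $D-1$ fresh pendant neighbors and a fresh node $b_t$, and add $\{a_t,b_t\}$ iff $b_t=1$; this raises $\deg(a_t)$ from $D-1$ to $D$ and creates exactly $\binom{D-1}{k-1}$ new $k$-stars, all centered at $a_t$ and using the new edge, with no new $k$-stars centered elsewhere since $b_t$ and the pendants have degree $1<k$; the maximum degree is $D$ and a bit-flip toggles one edge, so $c_f=\binom{D-1}{k-1}=\Theta(D^{k-1})$ and item 4 follows.

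\textbf{Expected main obstacle.} The conceptual steps---neighbor preservation and the post-processing/privacy argument---are immediate. The real care goes into the hub gadgets: verifying that the switch edge contributes \emph{exactly} $c_f$ fresh copies of the substructure (no double counting, and no copies centered at the wrong vertex), that the degree bound is met with equality rather than violated, that cross-gadget copies never arise, and that the boundary cases ($D=2$ for $\triangles$; $D$ versus $k$ for $\kstars$, where one should record $\binom{D-1}{k-1}\ge\big(\tfrac{D-1}{k-1}\big)^{k-1}=\Omega(D^{k-1})$ for constant $k$) are handled so that the $\Theta(\cdot)$ constants are legitimate. Once the gadgets are pinned down, each item is a one-line invocation of \cref{thrm:bin-count-lower-bd}.
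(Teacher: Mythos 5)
Your proposal is correct and follows essentially the same route as the paper: a reduction from the approximate-DP binary-counting lower bound (\cref{thrm:bin-count-lower-bd}) via per-time-step gadgets in which the scaffolding nodes are always present and a bit-flip toggles exactly one ``switch'' edge, with the same gadget choices (two fresh endpoints for $\edges$, $\connectedcomps$, and $\deghist$; two degree-$(D-1)$ hubs sharing $D-1$ common neighbors for $\triangles$; a $(D-1)$-star center gaining one pendant for $\kstars$) and the same coefficients $1$, $1$, $2$, $D-1$, and $\binom{D-1}{k-1}$. The only differences are cosmetic bookkeeping in how the post-processing shift $g(t)$ is written.
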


We now present a lower bound on the error needed to privately release a binary count under continual observation. Note that, although this lower bound has previously been stated only for pure DP \cite{DNPR10}, we show that it also holds for the approximate-DP setting where $\del = O(1/T)$.
To prove \cref{thrm:bin-count-lower-bd}, we use a packing argument, introduced by \cite{HT10,BeimelBKN14}. We use the version given by \cite{Vadhan17}.

\begin{theorem}[Approximate-DP lower bound for binary counting]
\label{thrm:bin-count-lower-bd}
    Let $f:\{0,1\}^T\to \R^T$ take a $T$-element binary stream $\gs = \gs_1,\ldots, \gs_T$ as input and release, at each time step $t\in[T]$, the sum $\sum_{i\in[t]} \gs_i$. Let $\mM$ be $(\alpha,T)$-accurate for $f$ and $(\eps,\del)$-DP under continual observation. If $\eps>0$, $\del = O(1/T)$, and $T$ is sufficiently large, then $\mM$ must have additive $\linf$ error at least
    \[
        \alpha = \Omega \paren{ \min \set{ \frac{\log T}{\eps}, T } }.
    \]
\end{theorem}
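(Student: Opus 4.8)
\textbf{Proof plan for \Cref{thrm:bin-count-lower-bd}.}
The plan is to use the standard packing argument (in the form given by \cite{Vadhan17}) adapted to the continual-observation setting, and to handle the approximate-DP case by first boosting the per-group failure probability $\delta$ down to something negligible via group privacy and then applying the packing bound. Set $k = \min\{\lfloor \log_2 T\rfloor, c/\eps\}$ for a small constant $c$ to be chosen; without loss of generality I will assume $\alpha < k/4$, since otherwise there is nothing to prove. The idea is to build a family of roughly $2^{k}$ input streams that an $(\alpha,T)$-accurate mechanism must distinguish, but which all lie within node/record distance $O(k)$ of each other, so that $(\eps,\delta)$-DP combined with accuracy forces a contradiction.

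\textbf{Step 1: construct the packing.} I would work at a single carefully chosen time scale. Pick a block length $L = 2^{k}$ (so $L \le T$). For each bit-string $a \in \{0,1\}^{k}$, define an input stream $\gs^{(a)}$ of length $T$ supported on the first $L$ time steps: interpret $a$ as encoding an integer $m_a \in \{0,\dots,L-1\}$ in binary, place $m_a$ ones among the first $L$ coordinates (say the first $m_a$ of them) and zeros everywhere else. Then the final count $f(\gs^{(a)})_L = m_a$, and distinct $a$ give counts that differ by at least $1$; more usefully, I will group the strings so that any two "target" counts differ by at least $2\alpha$. Concretely, restrict to the $\approx L/(2\alpha)$ values $m \in \{0, 2\alpha, 4\alpha, \dots\}$; call this index set $\mathcal{A}$, with $|\mathcal A| = \Theta(L/\alpha) = \Theta(2^{k}/\alpha)$. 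Any two streams in this family are at record-distance at most $L = 2^k$ — but that is too large, so I need the second, more refined observation.

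\textbf{Step 2: use a short distinguishing window and group privacy.} The better route (this is the DNPR trick) is to make the streams agree except on a window of controlled size. Reindex: for $j \in \{0,1,\dots,\lfloor T/(4\alpha)\rfloor\}$ let $\gs^{(j)}$ be the stream that outputs $1$ at time steps $1$ through $4\alpha j$ and $0$ afterward, so $f(\gs^{(j)})_t$ for $t \ge 4\alpha \cdot (\text{number under consideration})$ equals $4\alpha j$. Consecutive streams $\gs^{(j)}, \gs^{(j+1)}$ differ in exactly $4\alpha$ coordinates, hence are at distance $4\alpha$; by group privacy (\Cref{lemma:group priv}) an $(\eps,\delta)$-DP mechanism satisfies $\mM(\gs^{(j)}) \approx_{4\alpha\eps,\, 4\alpha e^{4\alpha\eps}\delta} \mM(\gs^{(j+1)})$, and chaining over $j' < j$ gives $\mM(\gs^{(0)}) \approx_{4\alpha j \eps,\, \delta_j} \mM(\gs^{(j)})$ with $\delta_j \le 4\alpha j \, e^{4\alpha j \eps}\delta$. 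Accuracy says that with probability $\ge 0.99$ the output at the last time step is within $\alpha$ of $4\alpha j$; since the target values $\{4\alpha j\}_j$ are $4\alpha$-separated, the events "output within $\alpha$ of $4\alpha j$" are disjoint across $j$. Now take $j$ ranging over $0,1,\dots, J$ with $J = \lfloor \min\{T, \, \tfrac{1}{\eps}\}/(16\alpha)\rfloor$ or so, chosen so that $4\alpha J \eps \le \tfrac14$; then $\mM(\gs^{(0)})$ assigns probability $\ge 0.99 e^{-1/4} - \delta_j \ge 0.9$ to each of the $J+1$ disjoint events (here I use $\delta = O(1/T)$ and $4\alpha j \le T$, $e^{4\alpha j\eps}\le 2$, so $\delta_j = O(1)$ — actually I need $\delta$ small enough, so I should instead choose $\delta = o(1/T)$ or more carefully bound: since $4\alpha j \le 1/(4\eps)$ is NOT bounded by $T$ necessarily... let me just require the separation window to also satisfy $4\alpha j \le T$, which holds, and use $\delta_j \le T e^{1/4}\delta = O(1)$, then pick the constant in $\delta = O(1/T)$ small enough that $\delta_j \le 0.01$). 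Summing, $1 \ge \sum_{j=0}^{J} \Pr[\mM(\gs^{(0)}) \in E_j] \ge 0.9 (J+1)$, forcing $J = 0$, i.e. $\lfloor \min\{T,1/\eps\}/(16\alpha)\rfloor = 0$, which gives $\alpha = \Omega(\min\{1/\eps, T\})$ — wait, this only yields $\alpha = \Omega(1/\eps)$, not the needed $\log T/\eps$. So the single-scale argument is too weak; I must combine scales geometrically.

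\textbf{Step 3 (the actual obstacle): the multi-scale / recursive packing.} To get the $\log T$ factor I would run the above argument not at one scale but summed over $\log T$ geometrically-growing scales, which is precisely the original DNPR lower bound structure. Fix $\alpha$; for each level $i = 0,1,\dots,\lceil \log_2 T\rceil$ partition $[T]$ into $2^i$ blocks of length $\approx T/2^i$ and, within each block, consider the family of streams that within that block put all ones in a prefix of length $0$ or (block length) — a "decision" per block — while being fixed outside. A mechanism accurate to within $\alpha$ at the block boundaries must, for each block at each level, correctly resolve this binary decision unless the block length is $\le 4\alpha$; since there are $\sum_i 2^i \min\{1, \lfloor (T/2^i)/(4\alpha)\rfloor\} = \Omega((T/\alpha)\cdot$something$)$... the clean statement is: iterating the single-block group-privacy bound down the binary tree of scales, one shows that distinguishing a leaf from the all-zero stream costs group-privacy parameter $\Theta(\alpha \log T)$ along the root-to-leaf path when $\alpha \gtrsim$ the per-level resolution, and there are $2^{\Omega(\log T)}$ leaves all at pairwise "tree-distance" supporting a packing of size $2^{\Omega(\log(T/\alpha))}$; DP then forces $e^{\Omega(\alpha\eps\log T)} \cdot (\text{stuff}) \ge 2^{\Omega(\log(T/\alpha))}$, i.e. $\alpha\eps\log T = \Omega(\log(T/\alpha))$, giving $\alpha = \Omega(\log T/\eps)$ after checking that the $\delta$-slack stays negligible (this is where $\delta = O(1/T)$ is used: the accumulated additive slack along any root-to-leaf chain is at most $\text{poly}(T)\cdot\delta \cdot e^{O(\alpha\eps\log T)}$, which we keep $\le 1/2$ by our standing assumption $\alpha \le$ the claimed bound and the freedom in the hidden constant in $\delta = O(1/T)$).

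\textbf{Where I expect the difficulty.} The routine parts are the packing/counting bookkeeping and the group-privacy chaining. The genuinely delicate point is controlling the approximate-DP slack: naively each group-privacy step multiplies $\delta$ by $e^{(\text{group size})\cdot\eps}$, and summed over a long chain with group size up to $\Theta(\alpha\log T)$ this blows up unless $\alpha\eps\log T = O(1)$ — but $O(1)$ is exactly the regime the theorem lives in, so the bound is self-consistent, and I would make this precise by: (i) assuming for contradiction $\alpha = o(\min\{\log T/\eps, T\})$; (ii) noting this makes every group size along the relevant chains $o(\log T/\eps)\cdot\eps = o(\log T)$ hence every accumulated $\delta$-factor at most $e^{o(\log T)} = T^{o(1)}$, so total slack $\le T^{o(1)}\cdot O(1/T) = o(1)$; (iii) then the pure-DP packing bound of \cite{Vadhan17} goes through with a $0.98$ success probability in place of $1$, yielding the contradiction $T^{o(1)} \gtrsim 2^{\Omega(\log(T/\alpha))}$. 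Closing this loop — confirming the contradiction is genuine and the constants compose — is the step that requires the most care, and it is also the only place the hypothesis $\del = O(1/T)$ (rather than, say, $\del = O(1/\text{polylog}\,T)$) is actually needed.
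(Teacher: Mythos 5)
There is a genuine gap here. Your Steps 1 and 2 are dead ends that you yourself abandon (Step 2 correctly yields only $\alpha = \Omega(\min\{1/\eps, T\})$), and Step 3, which is supposed to supply the missing $\log T$ factor via a multi-scale ``tree of scales'' recursion, is never carried out: its central claim (``iterating the single-block group-privacy bound down the binary tree of scales, one shows that\ldots'') is asserted rather than proved, and its bookkeeping is left as ellipses. As written, the proposal does not establish the theorem.

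The idea you are missing is that the packing lemma (\cref{lemma:pack-arg}) only requires every dataset in the family to be within Hamming distance $m$ of one \emph{common reference} $x_0$ --- not pairwise close --- and it does not require each hypothesis to retain constant probability under $\mM(x_0)$. The paper's proof is therefore a direct single-scale argument: take $x_0 = 0^T$, set $m = 2\alpha$, and let $\mC$ be the $\floor{T/m}$ streams obtained by placing a block of $m$ consecutive ones in each of the disjoint length-$m$ windows of $[T]$. Every member of $\mC$ is at distance exactly $m$ from $x_0$, and the prefix-sum sequences of distinct members are $\linf$-separated (at the time one block ends, the other has not begun), so the radius-$\alpha$ accuracy events are disjoint. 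The packing bound then gives $m/(T-m) \gtrsim 1/|\mC| \geq 0.99\, e^{-m\eps} - m\del$; since $\del = O(1/T)$ the subtracted term is $O(m/T)$ and is absorbed, leaving $e^{-2\alpha\eps} = O\paren{\alpha/(T-\alpha)}$ and hence $\alpha = \Omega(\min\set{\log T/\eps,\, T})$. Your Step 2 fails precisely because you insist that the total group-privacy cost along the chain stay $O(1)$ so that each disjoint event keeps probability $\geq 0.9$ under $\mM(\gs^{(0)})$, which caps the number of usable hypotheses at $O(1)$; the packing lemma instead lets each of the $\Theta(T/\alpha)$ disjoint events have probability as small as $e^{-2\alpha\eps}$ under the reference distribution and derives the contradiction from their sum exceeding $1$ --- that is where $\log(T/\alpha)$ enters. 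No recursion over scales is needed, and the approximate-DP slack is handled in one step by the $-m\del$ term of the lemma rather than by tracking $e^{O(\alpha\eps\log T)}$ blow-ups along root-to-leaf chains.
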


\begin{lemma}[Packing lower bound \cite{HT10,BeimelBKN14,Vadhan17}]
\label{lemma:pack-arg}
    Let $\mC\subseteq \mX^n$ be a collection of datasets all at Hamming distance at most $m$ from some fixed dataset $x_0\in\mX^n$, and let $\set{\mG_x}_{x\in\mC}$ be a collection of disjoint subsets of $\mY$. If there is an $(\eps,\del)$-DP mechanism $\mM:\mX^n\to \mY$ such that $\Pr[\mM(x)\in \mG_x]\geq p$ for every $x\in \mC$, then
    \[
        \frac{1}{|\mC|} \geq p\cdot e^{-m\cdot \eps} - m \del.
    \]
\end{lemma}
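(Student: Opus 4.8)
The plan is to prove \cref{lemma:pack-arg} by the classical packing (group-privacy) argument, combining the behavior of $\mM$ on Hamming chains from the fixed reference dataset $x_0$ with the disjointness of the sets $\set{\mG_x}$. The one point that needs care is the additive error accounting: a black-box appeal to \cref{lemma:group priv} would lose $m e^{m\eps}\del$, whereas we need only $m\del$, so I would run the group-privacy argument by hand in the ``contracting'' direction.

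First I would fix $x_0$ and, for each $x\in\mC$, pick a chain $x = y^{(0)}, y^{(1)}, \ldots, y^{(k)} = x_0$ of datasets in $\mX^n$ with $k = \dham(x,x_0)\le m$ and consecutive elements differing in exactly one coordinate (such a chain exists by the definition of Hamming distance). Writing $p_i = \pr{\mM(y^{(i)})\in\mG_x}$, the $(\eps,\del)$-DP property applied to the neighboring pair $\paren{y^{(i)},y^{(i+1)}}$ gives $p_i \le e^{\eps} p_{i+1} + \del$, hence $p_{i+1}\ge e^{-\eps}(p_i - \del)\ge e^{-\eps}p_i - \del$ since $e^{-\eps}\le 1$. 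Iterating this $k$ times and bounding the geometric sum $\sum_{j=0}^{k-1} e^{-j\eps}\le k\le m$ yields
\[
    \pr{\mM(x_0)\in\mG_x} = p_k \;\ge\; e^{-k\eps} p_0 - k\del \;\ge\; e^{-m\eps}\cdot\pr{\mM(x)\in\mG_x} - m\del \;\ge\; p\cdot e^{-m\eps} - m\del,
\]
where the middle step uses $p_0\ge 0$ and $k\le m$, and the last uses the hypothesis $\pr{\mM(x)\in\mG_x}\ge p$.

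Next I would sum over $x\in\mC$. Since the $\set{\mG_x}_{x\in\mC}$ are pairwise disjoint subsets of $\mY$, the events $\set{\mM(x_0)\in\mG_x}$ are disjoint, so $\sum_{x\in\mC}\pr{\mM(x_0)\in\mG_x} = \pr{\mM(x_0)\in \bigcup_{x\in\mC}\mG_x}\le 1$. Combining this with the previous display gives $1 \ge \sum_{x\in\mC}\paren{p\cdot e^{-m\eps} - m\del} = |\mC|\paren{p\cdot e^{-m\eps} - m\del}$, and dividing by $|\mC|$ gives the claimed bound $\tfrac{1}{|\mC|}\ge p\cdot e^{-m\eps} - m\del$.

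The only delicate point — and the one I would double-check — is the $\del$ bookkeeping in the iteration: one must lower-bound $\pr{\mM(x_0)\in\mG_x}$ in terms of $\pr{\mM(x)\in\mG_x}$ rather than the reverse, so that each chain step multiplies the accumulated $\del$ by $e^{-\eps}\le 1$ and the total additive loss stays $\sum_{j<k}e^{-j\eps}\del \le m\del$. Everything else is a routine union bound, so I do not expect a substantive obstacle.
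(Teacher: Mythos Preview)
The paper does not actually prove \cref{lemma:pack-arg}; it is stated as a known result from \cite{HT10,BeimelBKN14,Vadhan17} and simply invoked in the proof of \cref{thrm:bin-count-lower-bd}. Your argument is the standard packing proof and is correct, including the careful point about running the chain in the contracting direction so that the accumulated additive slack is $\sum_{j<k} e^{-j\eps}\del \le k\del \le m\del$ rather than $m e^{m\eps}\del$.
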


\begin{proof}[Proof of \cref{thrm:bin-count-lower-bd}]
    For the packing argument, we construct a collection of datasets similar to those used by \cite{DNPR10}. Let $f:\set{0,1}^T \to \R^T$ be the function for binary counting described in \cref{thrm:bin-count-lower-bd}.

    We first construct a collection $\mC$ of $k = \floor{T/m}$ datasets in $\set{0,1}^T$. We will construct all of these datasets to be at Hamming distance $m$ from the all 0s dataset $x_0 = 0^T$.

    The collection $\mC$ contains the following $k$ datasets. For each $i\in [k]$, construct the $\ord{i}$ dataset $x_i = 0^{m\cdot (i-1)}\circ 1^m \circ 0^{T-m\cdot i}$. That is, each dataset is a set of $k-1$ blocks of $m$ consecutive 0s,\footnote{The end of the stream may be padded with additional 0s to ensure it has length $T$.} and one block of $m$ consecutive 1s, where dataset $i$ has its $\ord{i}$ block contain the consecutive 1s.
    
    We see that all of these datasets are Hamming distance $m$ from the all 0s dataset $x_0$. We also see that, for all $x\neq x'\in \mC$,
    \begin{equation}
    \label{eq:disjoint-sets}
        || f(x) - f(x') ||_\infty > m/2.
    \end{equation}
    Let $\mG_{x_i}$ be the closed $\linf$ ball of radius $\alpha = m/2$ around $f(x_i)$. By (\ref{eq:disjoint-sets}) we see that the collection of sets $\set{\mG_{x_i}}_{x_i \in \mC}$ is disjoint.

    If $\mM$ is $(\eps,\del)$-DP and $(\alpha, T)$-accurate for the binary counting function $f$, then on input $x_i$ it must give an answer in $\mG_{x_i}$ with probability $p\geq 0.99$. We can now use a packing argument to solve for $\alpha$. \cref{lemma:pack-arg} tells us
    \begin{align*}
        \frac{1}{|\mC|} &\geq p \cdot e^{-m\eps} - m\del \\
        \frac{1}{k} &\geq \frac{e^{-m\eps}}{2} - m\del \tag{$|\mC| = k$, $p\geq 0.5$} \\
        \frac{e^{-m\eps}}{2} &\leq \frac{1}{k} + m\del. 
    \end{align*}
    Recall $k = \floor{T/m}$, so $k\geq (T-m)/m$. Additionally, recall $\alpha = m/2$ and by assumption $\del = O(1/T)$. This gives us
    \begin{align*}
        e^{-\alpha\eps/2}
        &= O\paren{ \frac{m}{T - m} + m\cdot \frac{1}{T} } \\
        &= O\paren{\frac{\alpha}{T - \alpha}}.
    \end{align*}
    Taking the reciprocal of each side gives us
    \begin{align*}
        e^{\alpha\eps/2}
        &= \Omega\paren{\frac{T - \alpha}{\alpha}},
    \end{align*}
    and taking the log gives us
    \[
        \alpha \eps = \Omega\paren{\log T - \log \alpha}.
    \]
    We want to solve for $\alpha$ such that $\alpha \eps + \log \alpha = \Omega(\log T)$.
    This leaves us with two cases: $\alpha \eps = \Omega(\log \alpha)$ and $\alpha \eps = O(\log \alpha)$. In the first case, $\alpha \eps$ dominates, so we need $\alpha = \Omega(\log T / \eps)$. In the second case, $\log \alpha$ dominates, so we need $\alpha = \Omega(T)$. Therefore,
    \[
        \alpha = \Omega \paren{ \min \set{ \frac{\log T}{\eps}, T } },
    \]
    which is what we wanted to show.
\end{proof}

We now prove \cref{thrm:lower bd node,thrm:lower bd edge} through reductions from binary counting.

\begin{proof}[Proof of \cref{thrm:lower bd node}]
    Let $x\in \set{0,1}^T$ be a binary stream. Below, we describe functions that take $x$ as input and return a graph stream where the count of some feature (e.g., triangles) at each time step $t$ can be used to compute the corresponding prefix sum of the binary stream through time step $t$. Crucially, the functions described below map neighboring binary streams to node-neighboring $(D,0)$-bounded graph streams (i.e., graph streams with maximum degree at most $D$).

    We use the following notation: let $x_i$ and $\gs_i$ denote the $\ord{i}$ index of the binary stream and graph stream, respectively. Additionally, the statements below assume $\del = O(1/T)$. Note that our mappings from binary streams to $\edges$ and $\deghist$ are similar to the mappings used by \cite{FHO21}, though our other mappings are different and yield stronger lower bounds. 
    \begin{enumerate}
        \item \textbf{(Edges.)} For each time step $t\in[T]$, let $\gs_t$ contain $D$ isolated nodes. If $x_t = 1$, also include a node $v_t$ with edges to all other nodes that arrived in that time step. Therefore, each time step contains either $0$ new edges or $D$ new edges.

        Assume for contradiction that there exists an $(\eps,\del)$-node-DP mechanism $\mM$ for solving $\edges$ with $(\alpha,T)$-accuracy where $\alpha = o \paren{ \min \set{ \frac{D \log T}{\eps}, D T } }$. Then applying the transformation described above to a binary stream, running $\mM$ on the resulting stream, and dividing the output at each time step by $D$ will solve binary counting with error $\alpha = o \paren{ \min \set{ \frac{\log T}{\eps}, T } }$, which contradicts \cref{thrm:bin-count-lower-bd}.

        \item \textbf{(Triangles.)} For each time step $t\in[T]$, let $\gs_t$ contain a complete graph on $D$ nodes. If $x_t = 1$, also include a node $v_t$ with edges to all other nodes. Therefore, each time step contains either $\binom{D}{3}$ new triangles or $\binom{D+1}{3} = \binom{D}{3} + \binom{D}{2}$ new triangles.

        Assume for contradiction that there exists an $(\eps,\del)$-node-DP mechanism $\mM$ for solving $\triangles$ with $(\alpha,T)$-accuracy where $\alpha = o \paren{ \min \set{ \frac{D^2 \log T}{\eps}, D^2 T } }$. Then applying the transformation described above to a binary stream, running $\mM$ on the resulting stream, subtracting $\binom{D}{3}$ from the output at each time step and dividing this result by $\binom{D}{2} = \Theta(D^2)$ will solve binary counting with error $\alpha = o \paren{ \min \set{ \frac{\log T}{\eps}, T } }$, which contradicts \cref{thrm:bin-count-lower-bd}.

        \item \textbf{(Connected components.)} For each time step $t\in[T]$, let $\gs_t$ contain $D$ isolated nodes. If $x_t = 1$, also include a node $v_t$ with edges to all other nodes that arrived in that time step. Therefore, each time step contains either $1$ new connected component or $D$ new connected components.

        Assume for contradiction that there exists an $(\eps,\del)$-node-DP mechanism $\mM$ for solving $\connectedcomps$ with $(\alpha,T)$-accuracy where $\alpha = o \paren{ \min \set{ \frac{D \log T}{\eps}, D T } }$. Then applying the transformation described above to a binary stream, running $\mM$ on the resulting stream, and dividing the output at each time step by $D-1$ will solve binary counting with error $\alpha = o \paren{ \min \set{ \frac{\log T}{\eps}, T } }$, which contradicts \cref{thrm:bin-count-lower-bd}.

        \item \textbf{($\bm{k}$-stars.)} For each time step $t\in[T]$, let $\gs_t$ contain $D$ isolated nodes. If $x_t = 1$, also include a node $v_t$ with edges to all other nodes that arrived in that time step. Therefore, each time step contains either $0$ new $k$-stars or $\binom{D}{k}$ new $k$-stars.

        Assume for contradiction that there exists an $(\eps,\del)$-node-DP mechanism $\mM$ for solving $\kstars$ with $(\alpha,T)$-accuracy where $\alpha = o \paren{ \min \set{ \frac{D^k \log T}{\eps}, D^k T } }$. Then applying the transformation described above to a binary stream, running $\mM$ on the resulting stream, and dividing the output at each time step by $\binom{D}{k} = \Theta(D^k)$ will solve binary counting with error $\alpha = o \paren{ \min \set{ \frac{\log T}{\eps}, T } }$, which contradicts \cref{thrm:bin-count-lower-bd}.

        \item \textbf{(Degree histograms.)} For each time step $t\in[T]$, let $\gs_t$ contain $D$ isolated nodes. If $x_t = 1$, also include a node $v_t$ with edges to all other nodes that arrived in that time step. Therefore, the histogram bin for nodes of degree $1$ increases each time step by either 0 new nodes or $D$ new nodes.

        Assume for contradiction that there exists an $(\eps,\del)$-node-DP mechanism $\mM$ for solving $\deghist$ that has $(\alpha,T)$-accuracy for some bin with $\alpha = o \paren{ \min \set{ \frac{D \log T}{\eps}, D T } }$. Without loss of generality, let it be the bin for nodes of degree $1$. Then applying the transformation described above to a binary stream, running $\mM$ on the resulting stream, and dividing the output for the bin for degree 1 at each time step by $D$ will solve binary counting with error $\alpha = o \paren{ \min \set{ \frac{\log T}{\eps}, T } }$, which contradicts \cref{thrm:bin-count-lower-bd}. \qedhere
    \end{enumerate}
\end{proof}

The reductions from binary counting that we use to prove \cref{thrm:lower bd edge} are similar to the reductions above for proving \cref{thrm:lower bd node}, though they instead map neighboring binary streams to edge-neighboring graph streams.

\begin{proof}[Proof of \cref{thrm:lower bd edge}]
    As above, let $x\in \set{0,1}^T$ be a binary stream. We describe functions that take $x$ as input and return a graph stream where the count of some feature (e.g., triangles) at each time step $t$ can be used to compute the corresponding prefix sum of the binary stream through time step $t$. Crucially, the functions described below map neighboring binary streams to edge-neighboring $(D,0)$-bounded graph streams (i.e., graph streams with maximum degree at most $D$).

    We use the following notation: let $x_i$ and $\gs_i$ denote the $\ord{i}$ index of the binary stream and graph stream, respectively. Additionally, the statements below assume $\del = O(1/T)$. Note that our mappings from binary streams to $\edges$ and $\deghist$ are similar to the mappings used by \cite{FHO21}, though our other mappings are different and yield stronger lower bounds. 
    \begin{enumerate}[leftmargin = *]
        \item \textbf{(Edges.)} For each time step $t\in[T]$, let $\gs_t$ contain two isolated nodes $u_t,v_t$. If $x_t = 1$, also include an edge $\set{u_t,v_t}$. Therefore, each time step contains either $0$ new edges or $1$ new edge.

        Assume for contradiction that there exists an $(\eps,\del)$-node-DP mechanism $\mM$ for solving $\edges$ with $(\alpha,T)$-accuracy on each bin where $\alpha = o \paren{ \min \set{ \frac{\log T}{\eps}, T } }$. Then applying the transformation described above to a binary stream, running $\mM$ on the resulting stream, and releasing the result will solve binary counting with error $\alpha = o \paren{ \min \set{ \frac{\log T}{\eps}, T } }$, which contradicts \cref{thrm:bin-count-lower-bd}.

        \item \textbf{(Triangles.)} For each time step $t\in[T]$, let $\gs_t$ contain the following structure: $D-1$ nodes $w^1_t,\ldots w^{D-1}_t$, and two nodes $u_t,v_t$, each with an edge to every node of the form $w^i_t$. If $x_t = 1$, also include an edge $\set{u_t,v_t}$. Therefore, each time step contains either $0$ new triangles or $D-1$ new triangles.

        Assume for contradiction that there exists an $(\eps,\del)$-edge-DP mechanism $\mM$ for solving $\triangles$ with $(\alpha,T)$-accuracy where $\alpha = o \paren{ \min \set{ \frac{D \log T}{\eps}, D T } }$. Then applying the transformation described above to a binary stream, running $\mM$ on the resulting stream, and dividing the output at each time step by $D-1$ will solve binary counting with error $\alpha = o \paren{ \min \set{ \frac{\log T}{\eps}, T } }$, which contradicts \cref{thrm:bin-count-lower-bd}.

        \item \textbf{(Connected components.)} For each time step $t\in[T]$, let $\gs_t$ contain two isolated nodes $u_t,v_t$. If $x_t = 1$, also include an edge $\set{u_t,v_t}$. Therefore, each time step contains either $1$ new connected component or $2$ new connected components.

        Assume for contradiction that there exists an $(\eps,\del)$-edge-DP mechanism $\mM$ for solving $\connectedcomps$ with $(\alpha,T)$-accuracy where $\alpha = o \paren{ \min \set{ \frac{ \log T}{\eps}, T } }$. Then applying the transformation described above to a binary stream, running $\mM$ on the resulting stream, and subtracting 1 from the output at each time step will solve binary counting with error $\alpha = o \paren{ \min \set{ \frac{\log T}{\eps}, T } }$, which contradicts \cref{thrm:bin-count-lower-bd}.

        \item \textbf{($\bm{k}$-stars.)} For each time step $t\in[T]$, let $\gs_t$ contain a $(D-1)$-star with center node $c_t$, plus one isolated node $v_t$. If $x_t = 1$, also include the edge $\set{c_t,v_t}$. Therefore, each time step contains either $\binom{D-1}{k}$ new $k$-stars or $\binom{D}{k} = \binom{D-1}{k} + \binom{D-1}{k-1}$ new $k$-stars.

        Assume for contradiction that there exists an $(\eps,\del)$-edge-DP mechanism $\mM$ for solving $\kstars$ with $(\alpha,T)$-accuracy where $\alpha = o \paren{ \min \set{ \frac{D^{k-1} \log T}{\eps}, D^{k-1} T } }$. Then applying the transformation described above to a binary stream, running $\mM$ on the resulting stream, subtracting $\binom{D-1}{k}$ from the output at each time step and dividing this result by $\binom{D-1}{k-1} = \Theta(D^{k-1})$ will solve binary counting with error $\alpha = o \paren{ \min \set{ \frac{\log T}{\eps}, T } }$, which contradicts \cref{thrm:bin-count-lower-bd}.

        \item \textbf{(Degree histograms.)} For each time step $t\in[T]$, let $\gs_t$ contain two isolated nodes $u_t,v_t$. If $x_t = 1$, also include an edge $\set{u_t,v_t}$. Therefore, the histogram bin for nodes of degree $1$ increases each time step by either $0$ new nodes or $2$ new nodes.

        Assume for contradiction that there exists an $(\eps,\del)$-node-DP mechanism $\mM$ for solving $\deghist$ that has $(\alpha,T)$-accuracy for some bin with $\alpha = o \paren{ \min \set{ \frac{\log T}{\eps}, T } }$. Without loss of generality, let it be the bin for nodes of degree $1$. Then applying the transformation described above to a binary stream, running $\mM$ on the resulting stream, and dividing the output for the bin for degree 1 at each time step by $2$ will solve binary counting with error $\alpha = o \paren{ \min \set{ \frac{\log T}{\eps}, T } }$, which contradicts \cref{thrm:bin-count-lower-bd}.\qedhere
    \end{enumerate}
\end{proof}

\fi

\section{Experiments}
\label{sec:experiments}

We implemented \cref{alg:bb} for the task of estimating the number of edges.
We ran our algorithm on several synthetic graph streams (which we describe below) and compared the accuracy of our algorithm for counting edges to the accuracy of a direct application  of batch model algorithms for counting edges with advanced composition. We used the implementation of the (standard) binary tree mechanism from \cite{AnderssonP23}.

The Python code for the algorithm and synthetic graph generation is  on GitHub: \\
\href{https://github.com/cwagaman/time-aware-proj}{https://github.com/cwagaman/time-aware-proj}.

\myparagraph{Parameters for node privacy.} All of our experiments use $\varepsilon =1$ and $\delta = 10^{-10}$ (about $2^{-33}$), and are $(\varepsilon,\delta)$-node-DP. 

\myparagraph{Input streams.} We look at (1) \emph{random graphs} with $n = 10^6$ nodes and $m = 2\cdot 10^8$ edges (with the edges drawn uniformly without replacement from the set of possible edges), and 
(2) \emph{two-block graphs} with $n = 10^6$ nodes and $m = 2\cdot 10^8$ edges (with the edges drawn uniformly without replacement from the set of possible edges), except for $5{,}000$ randomly selected nodes that have degree $10{,}000$ (with these edges drawn uniformly at random from the set of edges incident to the randomly chosen nodes). In both cases, the average degree in the graph is $\frac {2m}{n}=400$, but in the two-block cases there are nodes with 25 times the average degree.

In both cases, we consider a stream with $T=10^6$ time steps. A uniformly random subset of $\frac m T =200$ edges arrives at each time step.

\myparagraph{Degree cutoffs.} Recall that the algorithms we consider require an analyst-specific degree cutoff $D$. We conduct three experiments overall. For the random graphs (with maximum degree about 400), we conduct experiments with $D=400$ and $D=1{,}000$---these reflect settings where the  maximum degree estimate is tight or  conservatively large. For the two-block stream with maximum degree 10,000, we use $D=15{,}000$, which corresponds to a slight overestimate of the maximum degree.

One could get around the need to specify $D$ by running several parallel copies of the algorithm with different degree cutoffs (say, using powers of 2 up to some reasonable limit, and choosing the output corresponding to the smallest value of $D$ for which the PTR test has not yet rejected). We did not explore this approach in our experiments.

\newcommand{\mywidth}{0.30\textwidth}

\begin{figure*}[tb]
  \centering
    \includegraphics[width=\mywidth]{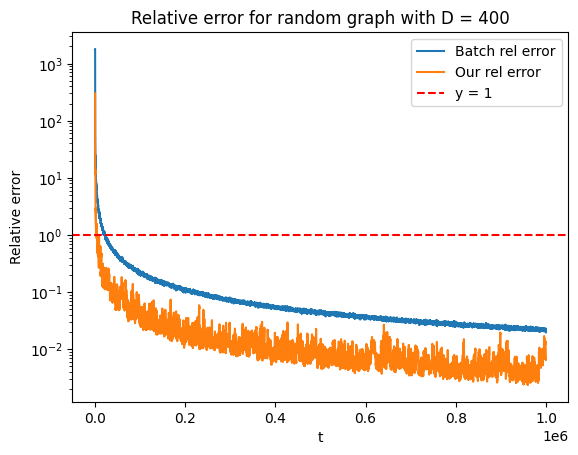}
    \hfill
    \includegraphics[width=\mywidth]{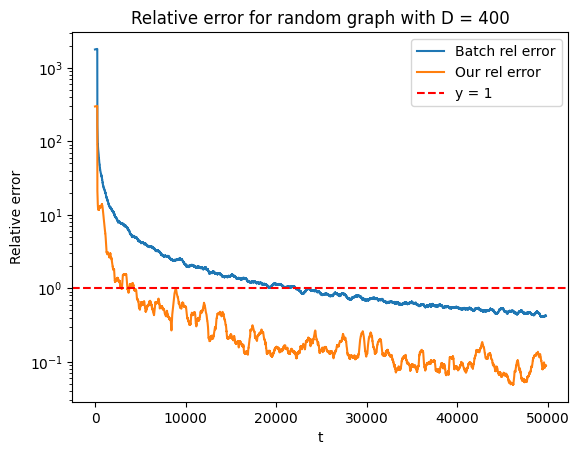}
    \hfill 
    \includegraphics[width=\mywidth]{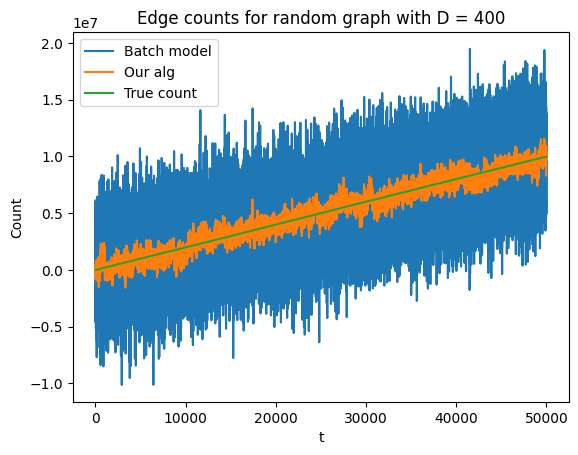}

    \includegraphics[width=\mywidth]{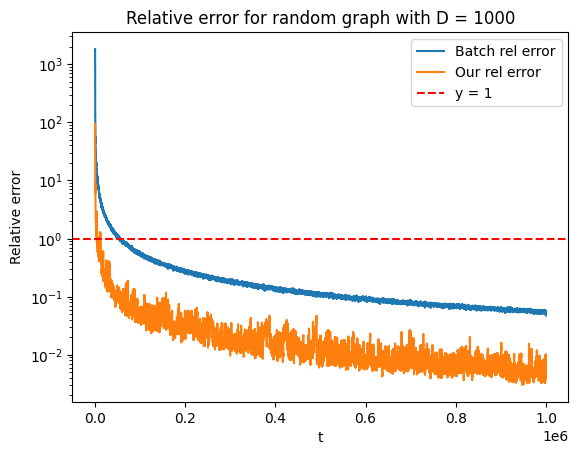}
    \hfill
    \includegraphics[width=\mywidth]{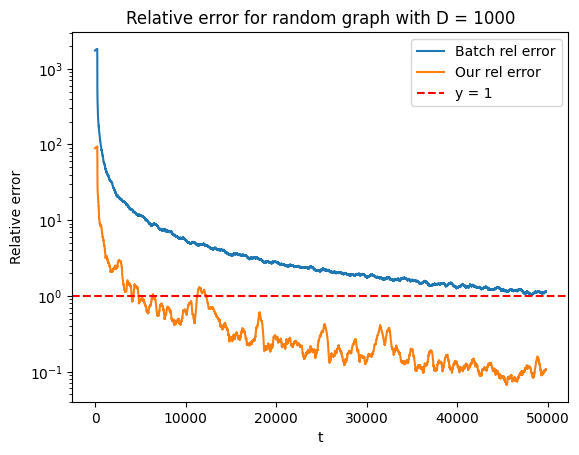}
    \hfill 
    \includegraphics[width=\mywidth]{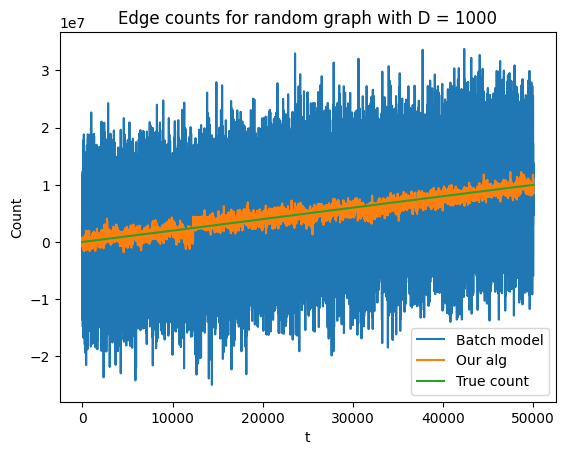}

    \includegraphics[width=\mywidth]{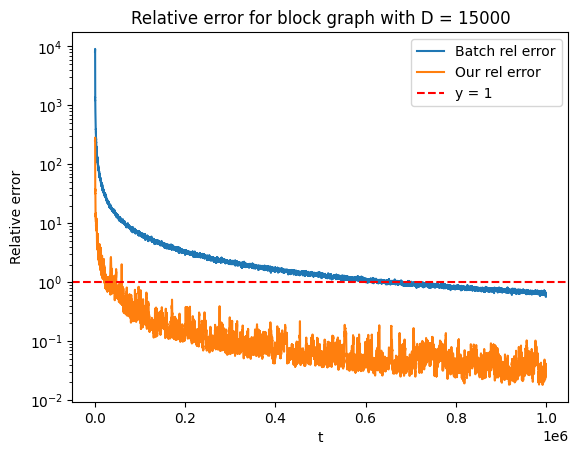}
    \hfill
    \includegraphics[width=\mywidth]{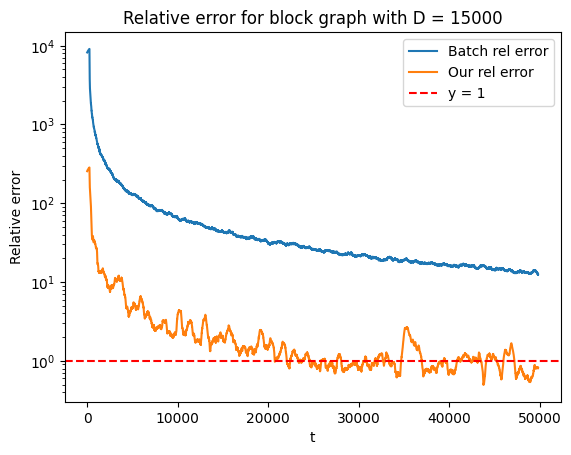}
    \hfill 
    \includegraphics[width=\mywidth]{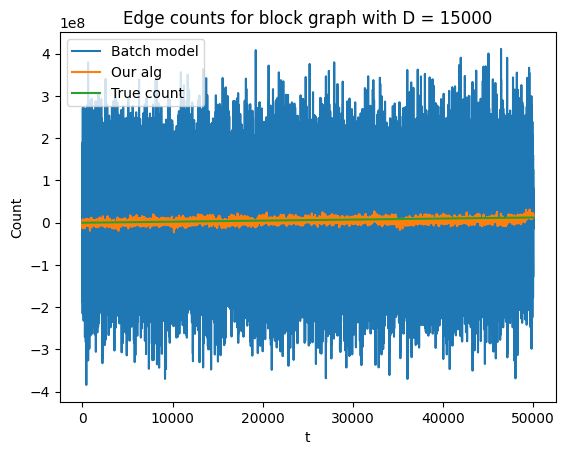}

    \caption{Results from one run of our algorithm (orange) and the batch-model baseline (blue) on three stream/cutoff pairs: a random graph stream  with $D=400$ (top row); a random graph stream with $D=1{,}000$; middle row); a two-block graph stream with $D=15{,}000$. For each graph, we provide three plots. 
    Left: Relative error across all $10^6$ time steps. Center: Relative error across first $50{,}000$ steps. 
    Right: True edge counts (green) together with reported counts from each algorithm across first $50{,}000$ steps. The left and center plots include a red dashed line at relative error 1 for visual reference. 
    }
    \label{fig:all-plots}
\end{figure*}

\myparagraph{Results.} We present results via two different types of plots (\cref{fig:all-plots}). The first type shows the actual edge count, the value reported by our algorithm, and the value reported by algorithms that follow from prior work. The second type shows the relative error of our algorithm and that of  prior work. The denominator in the relative error is the current edge count, which increases linearly over time.
We show these plots for all $10^6$ time steps, along with a ``zoomed-in'' version of these plots for the first 50,000 time steps. These latter plots allow one to see the variability of the error over time.

\myparagraph{Running time.} As stated in %
\cref{thrm:bb priv params}, our algorithm requires $O(n + m)$ additional time (total) and $O(n)$ additional space on a stream with $n$ nodes and $m$ edges. 
On a Microsoft Surface laptop with 8GB of RAM, our (unoptimized) implementation for edge counting runs in about $6\cdot 10^{-5}$ seconds per edge on the random graph described above with $n=10^6$ nodes and $m\approx 2 \cdot 10^8$ edges. For comparison, merely reading the graph and tracking the degree of each node (with the same machine and programming environment) takes $1.5\cdot 10^{-5}$ seconds per edge. Compared to this elementary processing, our algorithm takes more time by a factor of 4; optimized implementations would presumably see a similarly small run time increase.

\myparagraph{Discussion.} Our experimental results, displayed in \Cref{fig:all-plots}, show natural regimes where our new algorithm offers much better accuracy than the batch-model baseline. 
Furthermore, 
in all three experiments, the relative error of our algorithm drops below $1$ early in the stream ($10{,}000$ time steps for the random graphs, and $50{,}000$ for the two-block model). The error of our algorithm largely reflects the asymptotic error bounds, showing that the advantage of our algorithm holds even for modest settings of parameters. Additionally, implementing our algorithm was straightforward, and it is lightweight and runs quickly, even without optimizations.

Our algorithm's error is driven by the specified degree cutoff $D$ (assuming it produces any output at all, which requires roughly that $D$ be not much smaller than the actual maximum degree). The experiments reflect this: the relative error is lower in the random graph stream, where the maximum and average degree are basically the same, than in the two-block graph. Additionally, the relative error is lower when the degree cutoff is close to the maximum degree ($D=400$ as opposed to $D=1{,}000$, for the random graph stream). Choosing the value of $D$ automatically (as discussed under ``Degree cutoffs'', above) would help address overestimation of the maximum degree. However, variance among the degrees is a more fundamental obstacle, since the sensitivity of the edge count to the removal of any single vertex is equal to the maximum degree, but relative error compares to the total number of edges.

These experiments also reveal some limitations of our algorithm. Our algorithm's advantage over the baseline comes the fact that its error scales with $\polylog T$, instead of the $\sqrt{T}$ scaling that appears when composing the batch-model algorithms. As a result, $T$ must be large to obtain a significant accuracy advantage. Our algorithm's advantage over the baseline also improves for larger values of $D$, since then the fixed, additive slack term from the PTR framework becomes less significant. %
Designing an algorithm with low additive error in practice for all ranges of $D$ (including, say, with $D$ a small constant) remains an open question.

\section*{Acknowledgements}
\addcontentsline{toc}{section}{Acknowledgements} 
The authors would like to thank the anonymous Oakland reviewers as well as colleagues---Ephraim Linder, Sofya Raskhodnikova, Satchit Sivakumar, Theresa Steiner, and Marika Swanberg---for many helpful comments. The authors were supported in part by NSF awards CCF-1763786 and CNS-2120667, Faculty Awards from Google and Apple, and Cooperative Agreement CB16ADR0160001 with the Census Bureau. The views expressed in this paper are those of the authors and not those of the U.S. Census Bureau or any other sponsor.

\ifnum\oakland=1
    \begin{appendices}
    
        \section{}

        \subsection{Edge-DP under Continual Observation}

We can use the edge-to-edge stability of $\projo$ to transform algorithms that satisfy $D$-restricted edge-DP under continual observation that satisfy (pure or approximate) edge-DP under continual observation. Up to logarithmic factors, the algorithms we obtain are asymptotically optimal for $\triangles,\connectedcomps,\kstars$. 

The errors for our edge-private algorithms follow from item (1a) of \cref{thrm:combined-stab}, which says that we can achieve $(\eps,0)$-edge-DP by running the algorithms of \cite{FHO21} for $\triangles, \kstars, \deghist$, and our algorithm \noakland{described in \cref{thrm:priv alg for cc}} for $\connectedcomps$, with privacy parameter $\eps' = \eps/3$. (Because item (1a) of \cref{thrm:combined-stab} holds unconditionally, there is no need to use the PTR framework.) In contrast with previous work \cite{FHO21,SLMVC18}, this allows us to offer privacy on all graph streams, and we maintain the accuracy guarantees of the restricted private algorithm (up to constant factors) for graph streams with maximum degree at most $D$.

\begin{theorem}[Accuracy of edge-private algorithms]
\label{thrm:acc of applications edge}
    Let $\eps>0$, $D\in\N$, $T\in\N$,
    and $\gs$ be a length-$T$ graph stream. 
    There exist $(\eps,0)$-edge-DP algorithms for the following problems whose error is at most $\alpha$, with probability $0.99$, for all times steps $t\in[T]$ where $\gs_{[t]}$ has maximum degree at most $D$:
    \begin{enumerate}[leftmargin=*]

        \item $\triangles$, 
        $
            \alpha = O\left(\dfrac{D\log^{3/2} T}{\eps}\right).
        $
        
        \item $\connectedcomps$, 
        $
            \alpha = O\left(\dfrac{\log^{3/2} T}{\eps}\right).
        $
        
        \item $\kstars$, 
        $
            \alpha = O\left( \dfrac{D^{k-1}\log^{3/2}T}{\eps} \right).
        $
        
        \item $\deghist$, 
        $
            \alpha = \widetilde{O}\left(\dfrac{D\log^{3/2} T}{\eps} \right),
        $
        where $\widetilde{O}$ hides $\log D$ and $\log(\log T / \eps)$ factors.
        ~\connornote{uses edge priv}~\connornote{I think bounds from \cite{FHO21} should be increased by a factor of $\poly \log D$, so that would increase our bounds by $\poly\log \max\{D,\ell\}$. If they there were $D$ laplace r.v.s, then we'd increase things by a factor of $\log D$; since it's weird binary tree things, maybe $\poly \log D$ is sufficient?}
    \end{enumerate}
\end{theorem}

\begin{proof}[Proof of \cref{thrm:acc of applications edge}]
    This proof follows immediately from combining item (1a) of \cref{thrm:combined-stab}, which describes the edge-to-edge stability of \cref{alg:time aware projection} that includes edges in the projection according to original degrees (which we call $\projo$), with the edge-private algorithms of \cite{FHO21} and with our edge-private algorithm for $\connectedcomps$. In particular, for $\edges$ and $\triangles$ we can run the restricted edge-private algorithms of \cite{FHO21} with $\eps'=\eps/3$ on outputs from the projection $\projo$; and for $\connectedcomps$ we do the same, except with our algorithm\noakland{~described in \cref{thrm:priv alg for cc}}.
\end{proof}

\ifnum\oakland=0

\subsection{\texorpdfstring{\boldmath{$\connectedcomps$}}{Connected Components} with Restricted Edge-DP under Continual Observation}

We now present our algorithm, based on the binary tree mechanism, for $\connectedcomps$ that offers $D$-restricted edge-privacy under continual observation. Per \cref{thrm:acc of applications edge}, this algorithm can then be run on the output of $\projo$ to offer an algorithm for $\connectedcomps$ that is $(\eps,0)$-edge-DP under continual observation.

\begin{theorem}[\cite{FHO21}]
\label{thrm:fho edge acc}
    For all $\eps' > 0$, $\beta' \in (0,1]$, $D'\in \N$, there exists
    \begin{enumerate}
        \item an $(\eps',0)$-edge-DP algorithm that approximates $\edges$ and, with probability at least $1-\beta'$, has additive error at most $O(\log^{3/2}T \cdot \log(1/\beta')/\eps')$.
        \item a $D'$-restricted $(\eps',0)$-edge-DP algorithm that approximates $\triangles$ and, with probability at least $1-\beta'$, has additive error at most $O(D' \log^{3/2}T \cdot \log(1/\beta')/\eps')$.
    \end{enumerate}
\end{theorem}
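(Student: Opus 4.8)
The plan is to realize each item as an application of the binary-tree (continual counting) mechanism of \cite{DNPR10,CSS11} to the \emph{difference sequence} of the target statistic, following the difference-sequence framework of \cite{SLMVC18} (this recovers \cite[Cor.~14]{FHO21}). The black-box fact I would use is the following guarantee of the tree mechanism: if $g\colon\mGS^T\to\R^T$ is a stream-valued map whose per-coordinate \emph{increments} (the successive differences of its output) change by at most $\Delta$ in $\ell_1$ between any two neighboring input streams, then running the tree mechanism on the increment stream and reporting running sums yields an $(\eps',0)$-DP mechanism which, with probability at least $1-\beta'$, has additive error at most $\Delta\cdot O\paren{\log^{3/2}T\cdot\log(1/\beta')/\eps'}$ at every time step simultaneously. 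So it suffices, for each statistic $f$, to bound the $\ell_1$-sensitivity of its increment sequence over the appropriate class of edge-neighbors and then scale the tree mechanism's noise accordingly.

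First I would treat $\edges$. Writing $f=\edges$, the increment of $f$ at time $t$ is exactly the number of edges that arrive at time $t$ in the flattened stream. Passing between two edge-neighboring streams adds or removes exactly one edge, which perturbs this increment sequence in a single coordinate by $\pm 1$; hence its $\ell_1$-sensitivity is $1$, with \emph{no} degree assumption. Taking $\Delta=1$ gives item~(1).

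Next I would treat $\triangles$ under the $D'$-degree promise. Let $\gs,\gs'$ be edge-neighbors with $\gs'$ containing one additional edge $e=\{u,v\}$, arriving at time $t_1$. Each triangle of the flattened graph is counted exactly once by the increment sequence of $\triangles$, at the time step when the last of its three edges arrives. The only triangles affected by the presence of $e$ are those of the form $\{u,v,w\}$ with $w$ a common neighbor of $u$ and $v$; each such triangle contributes $+1$ to one coordinate of the increment sequence of $\gs'$ (namely coordinate $\max\{t_1,t_{uw},t_{vw}\}$) and $0$ to that of $\gs$, while all remaining coordinates coincide. Therefore the $\ell_1$ distance between the two increment sequences equals the number of common neighbors of $u$ and $v$, which under the $D'$-degree bound is at most $D'-1$ (since $u$ has at most $D'$ incident edges, one of them being $e$). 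Taking $\Delta=D'-1$ yields a $D'$-restricted $(\eps',0)$-edge-DP algorithm with the claimed $O\paren{D'\log^{3/2}T\cdot\log(1/\beta')/\eps'}$ error, establishing item~(2).

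The step I expect to be the main obstacle is making the increment-sensitivity computations airtight — in particular for $\triangles$, verifying that \emph{only} the coordinates identified above are perturbed and each by exactly the stated amount, so that the full $\ell_1$ (not merely the $\ell_\infty$) norm of the perturbation is controlled; this needs care with the corner cases where $t_{uw}$ or $t_{vw}$ coincides with $t_1$, and with the paper's convention that an edge-neighbor may instead differ by an isolated or degree-one node (which does not change triangle counts and so is harmless). The privacy direction, by contrast, is routine: it is exactly the standard $(\eps',0)$-DP guarantee of the tree mechanism for a stream of $\ell_1$-bounded increments, and it holds verbatim in the $D'$-restricted setting because the sensitivity bound is only invoked on $D'$-bounded neighboring streams.
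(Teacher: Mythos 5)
Your proposal is correct and takes essentially the same route as the paper: the paper states this result as a citation of \cite{FHO21} and, in its own general treatment, derives it from exactly the combination you describe --- the binary-tree mechanism run on the increment (difference) sequence, together with the $\ell_1$ increment-sensitivity bounds of $1$ for $\edges$ (unconditionally) and $D'-1$ for $\triangles$ under the degree promise. The one caveat is that the $\log(1/\beta')$ error bound delivered by the tree mechanism is a per-time-step guarantee rather than one holding simultaneously at all $T$ time steps; making it uniform over time costs a union bound, i.e., $\log(T/\beta')$, exactly as the paper notes when it upgrades \cite{FHO21} to an $\ell_\infty$-over-time statement.
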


\begin{theorem}[Private algorithm for $\connectedcomps$]
\label{thrm:priv alg for cc}
    For all $\eps' > 0$, $\beta' \in (0,1]$, there exists an $(\eps',0)$-edge-DP algorithm that approximates $\connectedcomps$ and, with probability at least $1-\beta'$, has additive error at most $O(\log^{3/2}T \cdot \log(1/\beta')/\eps')$.
\end{theorem}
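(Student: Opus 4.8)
The plan is to realize the algorithm as an instance of the binary‑tree (continual‑counting) mechanism applied to the \emph{increment sequence} of the connected‑components count, with Laplace noise calibrated to the $\ell_1$ sensitivity of that increment sequence under edge‑neighboring streams. Concretely, for a stream $\gs$ let $\connectedcomps^{+}(\gs)$ be the length‑$T$ sequence whose $t$‑th entry is $\connectedcomps(\gs_{[t]})-\connectedcomps(\gs_{[t-1]})$, with $\connectedcomps(\gs_{[0]})=0$. The difference‑sequence version of the tree mechanism of \cite{DNPR10,CSS11,SLMVC18,FHO21} (the same framework behind \cref{thrm:fho edge acc}) releases a running sum of $\connectedcomps^{+}$ — i.e.\ a running estimate of $\connectedcomps$ — and is $(\eps',0)$‑edge‑DP with $\ell_\infty$ error $O\!\big(\lambda\cdot\log^{3/2}T\cdot\log(1/\beta')/\eps'\big)$ whenever the noise is scaled to any bound $\lambda$ on $\sup_{\gs\nedge\gs'}\|\connectedcomps^{+}(\gs)-\connectedcomps^{+}(\gs')\|_1$. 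Thus the theorem reduces to proving $\lambda=O(1)$; in fact $\lambda\le 2$.

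For the sensitivity bound I would fix edge‑neighboring streams $\gs\nedge\gs'$ and assume without loss of generality that $\gs'$ has one extra edge $\eplus=\{u,v\}$ inserted at time $t_1$ (the isolated‑vertex and degree‑one cases of \cref{defn:nn graph streams} are identical). The key observations are that $\eplus$ influences $\connectedcomps$ only through the components containing $u$ and $v$, and that a single edge changes the component count by at most one, so the two running counts stay within $1$ of each other at every step. The argument then splits on whether — and when — a $u$–$v$ path first appears in the \emph{smaller} stream $\gs$: (i) if such a path exists at or before $t_1$, then $\eplus$ never merges two distinct components and $\connectedcomps(\gs_{[t]})=\connectedcomps(\gs'_{[t]})$ for all $t$, so the increment sequences coincide (contribution $0$); (ii) if no $u$–$v$ path ever appears in $\gs$, the increments agree before $t_1$, differ by exactly $1$ at $t_1$ (where $\eplus$ merges two components in $\gs'$ with no analogue in $\gs$), and agree afterwards (contribution $1$); (iii) if a $u$–$v$ path first appears at some $t_3>t_1$, the increments differ by $1$ at $t_1$, agree for $t_1<t<t_3$, and differ by $1$ at $t_3$ (the path‑completing edge decrements the count in $\gs$ but causes no decrement in $\gs'$, where those components were already joined), for a total of $2$. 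Hence $\lambda\le 2$.

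Combining the two pieces yields the theorem. I would also emphasize that, in contrast with the node‑privacy results, this argument needs no promise on the input and no Propose–Test–Release: the bound $\lambda\le 2$ holds for \emph{all} graph streams, which is precisely why composing this algorithm with the projection $\projo$ — whose edge‑to‑edge stability is $3$ by item~(1a) of \cref{thrm:combined-stab} — gives the unconditional $(\eps,0)$‑edge‑DP algorithm for $\connectedcomps$ invoked in \cref{thrm:acc of applications edge}.

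The main obstacle is the sensitivity case analysis, specifically steps~(ii)–(iii). The delicate points are: verifying that the two running counts never drift more than $1$ apart; checking that no edge other than $\eplus$ itself and the first $u$–$v$‑path‑completing edge of $\gs$ can create a discrepancy in the \emph{increment} (as opposed to the count); and confirming that in case~(iii) the two unit discrepancies have opposite sign, so they do not compound into something larger. Once $\lambda$ is pinned down, the tree‑mechanism half is entirely black‑box.
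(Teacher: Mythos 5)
Your proposal is correct and follows essentially the same route as the paper: the paper also reduces to the difference-sequence/binary-tree mechanism (its \cref{thrm:bin tree for diff seq}) and proves $\|\connectedcomps^{+}(\gs)-\connectedcomps^{+}(\gs')\|_1\le 2$ via exactly your three-case analysis on when, if ever, a $u$--$v$ path first appears in the smaller stream (\cref{lemma:cc edge sens}). No gaps; your cases (i)--(iii) match the paper's Cases 2, 1, and 3 respectively, including the observation that the two unit discrepancies in case (iii) occur at $t_1$ and $t_3$ and sum to $2$ in $\ell_1$.
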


We now provide some statements (\cref{thrm:bin tree for diff seq,lemma:cc edge sens}) that we will use in proving \cref{thrm:priv alg for cc}. To make an edge-private algorithm for releasing the number of connected components in a graph stream, we leverage \cite[Corollary 13]{FHO21}. Here, we present a more general version of that corollary.

\begin{theorem}
\label{thrm:bin tree for diff seq}
    Let $f:\mX^T\to \R^T$ return an output in $\R$ at each time step $t\in [T]$. For an input $x\in\mX^T$, let $f^+:\mX^T \to \R^T$ return, at each time step $t\in[T]$, the increment $f(x)_t - f(x)_{t-1}$, where we define $f(x)_0 = 0$.

    If for all neighbors $x\simeq x'\in \mX^T$, the $\ell_1$ distance between outputs is bounded by
    \[
        ||f^+(x) - f^+(x') ||_1 \leq \Gamma,
    \]
    there exists an $(\eps,0)$-DP algorithm that with probability at least $1-\beta$ has error $O(\Gamma \log^{3/2} T \cdot \log(1/\beta) / \eps)$.
\end{theorem}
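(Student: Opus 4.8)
The plan is to reduce the task to privately releasing all prefix sums of a real-valued stream whose neighboring instances are $\ell_1$-close, and then invoke the binary tree (hierarchical) mechanism of \cite{DNPR10,CSS11} with its Laplace noise rescaled by $\Gamma$. Write $g=f^+(x)=(g_1,\dots,g_T)$ for the increment stream. Since $f(x)_0=0$, telescoping gives $f(x)_t=\sum_{i\le t}g_i$ for every $t\in[T]$, so releasing the sequence $(f(x)_t)_{t\in[T]}$ under continual observation is exactly the problem of releasing all prefix sums of $g$, and by hypothesis neighboring inputs $x\simeq x'$ produce increment streams with $\|g-g'\|_1\le\Gamma$. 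It therefore suffices to give an $(\eps,0)$-DP algorithm for prefix sums of a stream $g$ under the promise that neighboring streams differ by at most $\Gamma$ in $\ell_1$.

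First I would build the tree. Assuming without loss of generality that $T$ is a power of two (pad $g$ with zeros, at most doubling $T$), form the complete binary tree with leaves $1,\dots,T$; each node $v$ corresponds to a dyadic interval $I_v\subseteq[T]$ and carries the partial sum $P_v=\sum_{i\in I_v}g_i$. The mechanism releases $\tilde P_v=P_v+\lap(\sigma)$ independently for each of the $2T-1$ nodes, with $\sigma=\Gamma(\lceil\log_2 T\rceil+1)/\eps$. For privacy, observe that the map $g\mapsto(P_v)_v$ has $\ell_1$ sensitivity at most $\Gamma(\lceil\log_2 T\rceil+1)$: at each of the $\lceil\log_2 T\rceil+1$ levels the nodes' intervals partition $[T]$, so the total $\ell_1$ change of that level's partial sums between $g$ and $g'$ is at most $\|g-g'\|_1\le\Gamma$, and summing over levels yields the bound. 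The Laplace mechanism with this $\ell_1$ sensitivity and per-coordinate scale $\sigma$ is then $(\eps,0)$-DP, and the final algorithm is a deterministic post-processing of these $\tilde P_v$, hence also $(\eps,0)$-DP by \cref{lemma:post proc}.

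Next, accuracy. To answer the query at time $t$, decompose $[1,t]$ into at most $O(\log T)$ disjoint dyadic intervals $I_{v_1},\dots,I_{v_m}$ (the canonical dyadic decomposition) and output $\hat a_t=\sum_{j=1}^m\tilde P_{v_j}$, so that $\hat a_t-f(x)_t=\sum_{j=1}^m Y_{v_j}$ is a sum of at most $m=O(\log T)$ independent $\lap(\sigma)$ variables. Applying the standard sub-exponential tail bound for sums of independent Laplace variables ($|\sum_{j=1}^m Y_{v_j}|=O(\sigma\sqrt{m\ln(1/\beta)})$ with probability at least $1-\beta$ when $\ln(1/\beta)\le m$, and $O(\sigma\ln(1/\beta))$ otherwise), together with $\sigma=O(\Gamma\log T/\eps)$, gives error $O(\Gamma\log^{3/2}T\cdot\log(1/\beta)/\eps)$ at a given time step; a bound holding simultaneously over all $t\in[T]$ follows by a union bound over the $T$ queries (replacing $\log(1/\beta)$ by $\log(T/\beta)$). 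The node-sensitivity and node-DP variants are obtained verbatim, with $\Gamma$ replaced by the corresponding node-sensitivity constant of $f^+$.

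I do not expect a genuine obstacle here — this is the classical tree mechanism — but the step requiring the most care is the $\log^{3/2}T$ bookkeeping: correctly arguing that the tree-node vector has $\ell_1$ sensitivity exactly $\Gamma(\lceil\log_2 T\rceil+1)$ via the per-level partition identity (so the noise scale is $O(\Gamma\log T/\eps)$ and not larger), and then combining this with the concentration bound for a sum of $O(\log T)$ Laplaces to land the exponent $3/2$ rather than $1$ or $2$.
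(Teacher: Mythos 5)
Your proposal is correct and follows the same route as the paper, which simply cites the binary tree mechanism applied to the increment (difference) sequence from \cite{FHO21,SLMVC18,DNPR10,CSS11}; you reconstruct that cited argument in full, with the per-level partition bound on the $\ell_1$ sensitivity of the tree-node vector and the concentration of a sum of $O(\log T)$ Laplace variables correctly yielding the $\log^{3/2} T$ factor. The only cosmetic point is that your tail bound actually gives the slightly stronger $\sqrt{\log(1/\beta)}$ dependence per time step, which of course implies the stated $\log(1/\beta)$ bound.
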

\begin{proof}[Proof of \cref{thrm:bin tree for diff seq}]
    This proof follows from the proof of \cite[Corollary 13]{FHO21}, which uses the binary tree mechanism of \cite{CSS11,DNPR10} and the technique of difference sequences introduced by \cite{SLMVC18}.
\end{proof}

In \cref{lemma:cc edge sens}, we show that for all pairs of neighboring graph streams, the $\ell_1$ distance between the sequences of increments for evaluating $\connectedcomps$ on these neighbors can be bounded in the following way. We can then substitute this result into \cref{thrm:bin tree for diff seq} to make an edge-private approximation for $\connectedcomps$.

\begin{lemma}[Edge-sensitivity of $\connectedcomps$ under continual observation]
\label{lemma:cc edge sens}
    Let $\connectedcomps:\mGS^T\to \R^T$ return, at each time step $t\in[T]$, the number of connected components in the graph stream so far. For an input graph stream $\gs$ of length $T$, let $\connectedcomps^+:\mGS^T \to \R^T$ output, for all $t\in[T]$, the increment $\connectedcomps(\gs)_t - \connectedcomps(\gs)_{t-1}$.
    Let $\gs, \gs'\in\mGS^T$ be edge-neighboring graph streams. The $\lone$ distance between $\connectedcomps^+(\gs)$ and $\connectedcomps^+(\gs')$ is bounded by the following:
    \[
        || \connectedcomps^+(\gs) - \connectedcomps^+(\gs') ||_1 \leq 2.
    \]
\end{lemma}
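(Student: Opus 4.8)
The plan is to pass to the \emph{difference function} $g\colon\{0,1,\dots,T\}\to\Z$ defined by $g(t)=\connectedcomps(\gs)_t-\connectedcomps(\gs')_t$, with the convention $\connectedcomps(\gs)_0=\connectedcomps(\gs')_0=0$ so that $g(0)=0$. Since by definition $\connectedcomps^+(\gs)_t-\connectedcomps^+(\gs')_t=\bigl(\connectedcomps(\gs)_t-\connectedcomps(\gs)_{t-1}\bigr)-\bigl(\connectedcomps(\gs')_t-\connectedcomps(\gs')_{t-1}\bigr)=g(t)-g(t-1)$, the quantity to be bounded telescopes:
\[
  \left\| \connectedcomps^+(\gs) - \connectedcomps^+(\gs') \right\|_1 = \sum_{t=1}^T \left| g(t) - g(t-1) \right|,
\]
the total variation of $g$. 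So it suffices to show that $g$ has total variation at most $2$.

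Next I would invoke \cref{defn:nn graph streams}: assuming without loss of generality that $\gs'$ is the larger stream, it is obtained from $\gs$ by one of three moves: (a) inserting a single edge $\eplus=\{u,v\}$ at some time $t_1$ (with $u,v$ already present, since endpoints arrive no later than the edge); (b) inserting a single isolated node at some time $t_1$; or (c) inserting a single degree-$1$ node $w$ at some time $t_1$ together with its unique incident edge $\{w,u\}$ at some time $t_2\ge t_1$. In every case the streams agree on the first $t_1-1$ terms, so $g(t)=0$ for all $t<t_1$, and adding a single edge or single node changes the component count by at most one, so $|g(t)|\le 1$ throughout.

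The three cases then go as follows. For (a): for $t\ge t_1$ the flattened graphs differ only in $\eplus$, so $g(t)=1$ if $u$ and $v$ lie in different components of $\flatt{\gs_{[t]}}$ and $g(t)=0$ otherwise; because the stream is insertion-only, ``$u$ and $v$ lie in the same component of $\flatt{\gs_{[t]}}$'' is monotone in $t$, hence $g$ restricted to $t\ge t_1$ is non-increasing and $\{0,1\}$-valued, so its total variation (together with the flat $0$ prefix) is at most $1+1=2$. For (b): $g(t)=-1$ for all $t\ge t_1$ and $0$ before, total variation $1$. For (c): $g(t)=-1$ for $t_1\le t<t_2$ (the extra isolated component $\{w\}$), and $g(t)=0$ for $t\ge t_2$, since the arrival of $\{w,u\}$ merges the singleton $\{w\}$ into $u$'s component and cancels the extra component; the total variation is $1+1=2$. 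Taking the maximum over the cases gives the claimed bound.

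\textbf{Anticipated obstacle.} The argument is essentially bookkeeping; the only steps needing care are the monotonicity claim in case (a)---which is exactly where the insertion-only assumption is used---and correctly tracking the two distinct arrival times $t_1\le t_2$ in case (c). This mirrors and slightly streamlines the case analysis already used to establish $\incsens(\connectedcomps)=2$ in \cref{lemma:edge-sens}, which handled only the single-edge move; the difference-function viewpoint makes the node-insertion moves immediate as well.
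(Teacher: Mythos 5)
Your proof is correct and follows essentially the same route as the paper's: both arguments track the (at most two) time steps at which the component counts of the two streams can change relative to one another---once when the extra edge arrives and once when an alternative path between its endpoints first forms---and your total-variation reformulation of the $\ell_1$ distance is just a cleaner packaging of the paper's three-case analysis. The one substantive difference is that you also treat the isolated-node and degree-$1$-node moves permitted by \cref{defn:nn graph streams}, which the paper's proof silently subsumes under ``one additional edge''; your handling of those cases is correct and makes the case coverage complete.
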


\begin{proof}[Proof of \cref{lemma:cc edge sens}]
    Without loss of generality, let $\gs'$ be a graph stream that, as compared to $\gs$, has one additional edge $e' = \{u,v\}$ that is inserted at time $t_1$.

    When bounding $|| \connectedcomps^+(\gs) - \connectedcomps^+(\gs') ||_1$, we have to consider the following three cases. In $\gs$:
    \begin{enumerate}
        \item \textbf{(Case 1)} there is never a path between $u$ and $v$.
        \item \textbf{(Case 2)} $t_0 \leq t_1$ is the first time step at which there exists a path between $u$ and $v$.
        \item \textbf{(Case 3)} $t_3 > t_1$ is the first time step at which there exists a path between $u$ and $v$.
    \end{enumerate}
    \textbf{Case 2}: The outputs from $\connectedcomps(\gs)$ and $\connectedcomps(\gs')$ are identical, so the outputs from $\connectedcomps^+(\gs)$ and $\connectedcomps^+(\gs')$ are also identical, so we have $||\connectedcomps^+(\gs) - \connectedcomps^+(\gs') ||_1 = 0$. We now consider the remaining cases.

    \textbf{Case 1:} For all $t_0 < t_1$ (that is, prior to the time step at which the graph streams differ) we have
    $\connectedcomps^+(\gs)_{t_0} = \connectedcomps^+(\gs')_{t_0}$. When the difference in graph streams occurs at $t_1$, we have $\connectedcomps^+(\gs)_{t_1} = \connectedcomps^+(\gs')_{t_1} + 1$. Only the components containing $u$ and $v$ are affected by this edge. By the assumption that there is never a path between $u$ and $v$ in $\gs$, no future edges affect the components containing $u$ and $v$, so for all $t_2 > t_1$ we have $\connectedcomps^+(\gs)_{t_2} = \connectedcomps^+(\gs')_{t_2}$. Therefore, in case 1 we have
    \[
        || \connectedcomps^+(\gs) - \connectedcomps^+(\gs') ||_1 \leq 1.
    \]

    \textbf{Case 3:} As above, for all $t_0 < t_1$ we have
    $\connectedcomps^+(\gs)_{t_0} = \connectedcomps^+(\gs')_{t_0}$. When the difference in graph streams occurs at $t_1$, we have $\connectedcomps^+(\gs)_{t_1} = \connectedcomps^+(\gs')_{t_1} + 1$. Only the components containing $u$ and $v$ are affected by the inclusion of edge $e'$. By the assumption that, for all $t_2\in [t_1,t_3)$, there is never a path between $u$ and $v$ in $\gs$, no edges arriving at such times $t_2$ affect the components containing $u$ and $v$, so we have $\connectedcomps^+(\gs)_{t_2} = \connectedcomps^+(\gs')_{t_2}$.
    
    At time $t_3$, the edge forming a path between $u$ and $v$ arrives, which causes the number of connected components in $\gs$ to decrease by 1 as compared to $\gs'$, so $\connectedcomps^+(\gs)_{t_3} = \connectedcomps^+(\gs')_{t_3}$. For all $t_4 > t_3$, the changes in counts will be the same for both graph streams, so $\connectedcomps^+(\gs)_{t_4} = \connectedcomps^+(\gs')_{t_4}$. Therefore, in case 3 we have
    \[
        || \connectedcomps^+(\gs) - \connectedcomps^+(\gs') ||_1 \leq 2,
    \]
    which completes the proof.    
\end{proof}

These results allow us to prove \cref{thrm:priv alg for cc}, which bounds the accuracy of our edge-private approximation for $\connectedcomps$.

\begin{proof}[Proof of \cref{thrm:priv alg for cc}]
    This proof follows immediately from \cref{thrm:bin tree for diff seq} and the sensitivity bound in \cref{lemma:cc edge sens}.
\end{proof}

\fi

        \section{The Sparse Vector Technique}

We use the sparse vector technique (SVT), introduced by \cite{DNRRV09} and refined by  \cite{RothR10,HardtR10,LSL17}, to continually check that the input graph satisfies the conditions of \cref{thrm:combined-stab}.
In \cref{alg:svt}, we provide a version of the sparse vector technique described in \cite[Algorithm 1]{LSL17}.

\begin{algorithm}[ht!]
    \caption{Mechanism $\svt$ for answering threshold queries with the sparse vector technique.}
    \label{alg:svt}
    \begin{algorithmic}[1]
        \Statex \textbf{Input:} Stream $\gs\in\mGS^T$; queries $q_1,q_2,\ldots$ of sensitivity 1; cutoff $c \in \N$; privacy parameter $\eps > 0$; threshold $\tau\in\R$.
        \Statex \textbf{Output:} Stream of answers in $\{\Above, \Below\}$.

        \State $\eps_1 = \eps_2 = \eps/2$
        \State $\countsf = 0$
        \State Draw $Z\sim \lap(1/\eps_1)$ 
        \For{each time $t\in 1,2,\dots,T$}
            \State Draw $Z_t\sim \lap(2c/\eps_2)$
            \If{$q_t(\gs) + Z_t \geq \tau + Z$ and $\countsf < c$}
                \State Output $\Above$
                \State $\countsf \mathrel{+}= 1$
            \Else \ output $\Below$
            \EndIf
        \EndFor
    \end{algorithmic}
\end{algorithm}

\begin{theorem}[Privacy of $\svt$ \cite{LSL17}]
\label{thrm:svt privacy}
    \cref{alg:svt} is $(\eps,0)$-DP under continual observation.
\end{theorem}

    \end{appendices}

    \newpage %

\section{Meta-Review}

The following meta-review was prepared by the program committee for the 2024
IEEE Symposium on Security and Privacy (S\&P) as part of the review process as
detailed in the call for papers.

\subsection{Summary}

The paper develops privacy preserving algorithms to continually release graph statistics (edge count, degree histogram, triangle count) as the graph is updated over time. There are two notions of differential privacy (DP) for graphs: edge privacy and node privacy. While node-DP provides a much more compelling privacy guarantee to individuals, it is quite challenging to develop node-DP algorithms as most graph statistics are highly sensitive to the addition or removal of a single node with high degree. Because of this challenge prior work had only achieved a weaker notion of ``degree restricted'' node-DP where the DP privacy guarantees only hold if one makes the (implausible) assumption that every node in the graph/stream has degree at most $D$. The paper uses graph projections to transform a ``degree restricted'' node-DP algorithm (or a ``degree restricted'' edge-DP algorithm) into an algorithm that satisfies node-DP unconditionally. The node-DP algorithms are shown to be accurate as long as the original graph/stream was ``degree restricted'' i.e., the maximum degree of any node in the graph is at most $D$ for some suitable parameter $D<n$.

\subsection{Scientific Contributions}
\begin{itemize}
    \item Creates a New Tool to Enable Future Science
    \item Provides a Valuable Step Forward in an Established Field
\end{itemize}

\subsection{Reasons for Acceptance}
\begin{enumerate}
    \item Creates a New Tool to Enable Future Science: The paper introduces the first node-DP algorithm to continually release fundamental graph statistics (edge count, degree histogram, triangle count) as the graph is updated over time.
    \item Provides a Valuable Step Forward in an Established Field: Prior work (e.g., \cite{BlockiBDS13}) used graph projections as a tool to preserve node-DP in the static setting. The paper shows how these graph projections can be extended to work in the continual release setting where the graph is updated over time.
\end{enumerate}

\subsection{Noteworthy Concerns} %
\begin{enumerate} %
    \item The node-DP algorithm in the paper is only shown to release accurate statistics if the underlying graph has maximum degree $D$. It is not clear whether the statistics will be accurate in the more plausible setting that the graph is $(D,\ell)$-bounded i.e., there are at most $\ell$ nodes whose degree is larger than $D$.
    \item The empirical evaluation only considers synthetic graph streams.
\end{enumerate}

\else
    \appendix
    
\section{The Sparse Vector Technique}
\label{sec:svt}

In this section, we describe the sparse vector technique, introduced by \cite{DNRRV09} and refined by  \cite{RothR10,HardtR10,LSL17}, and some of its standard properties. We use the sparse vector technique to continually check that the input graph satisfies the conditions of \cref{thrm:combined-stab}.
In \cref{alg:svt}, we provide a version of the sparse vector technique described in \cite[Algorithm 1]{LSL17}, and throughout the rest of \cref{sec:svt}, we prove some useful properties of this algorithm.
\cref{thrm:combined-sep-svt} presents accuracy properties of \cref{alg:svt} that we use in our construction of the general transformation described in \cref{sec:black box acc priv} from (restricted) private algorithms to node-private algorithms.
To prove this statement, we use \cref{thrm:output bad when bad,thrm:output good when good stream}, which present slight variations on standard theorems about accuracy guarantees for the sparse vector technique.

We now give some properties of \cref{alg:svt}. We first provide a theorem on the privacy of \cref{alg:svt}.

\begin{theorem}[Privacy of $\svt$ \cite{LSL17}]
\label{thrm:svt privacy}
    \cref{alg:svt} is $(\eps,0)$-DP under continual observation.
\end{theorem}

We next provide a statement on the accuracy of \cref{alg:svt}.

\begin{theorem}[Separation needed for accurate answers]
\label{thrm:combined-sep-svt}
    Consider \cref{alg:svt}, and let $\gs\in\mGS^T$, $c\in \N$, $\eps > 0$, $\tau\in\R$.
    \begin{enumerate}
        \item \textbf{(True answer is above the threshold.)} Fix $\del\in (0,1]$.
        To ensure that \cref{alg:svt} outputs $\Above$ at or before time step $t$ with probability at least $1-\delta$, it suffices to have
        \[
            q_t(\gs) \geq 8c \ln(1/\delta)/\eps + \tau.
        \]

        \item \textbf{(True answer is below the threshold.)} Fix $t'\in[T]$ and $\beta\in (0,1]$.
        To ensure that \cref{alg:svt} outputs $\Below$ on all queries $q_1,\ldots, q_{t'}$ with probability at least $1-\beta$, it suffices to have
        \[
            q_t(\gs)\leq 8c \ln(\beta / T)/\eps + \tau
        \]
        for all $t\leq t'$.
    \end{enumerate}
\end{theorem}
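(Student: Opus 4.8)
The plan is to reduce both parts of \cref{thrm:combined-sep-svt} to elementary Laplace tail bounds for the two noise sources of \cref{alg:svt}. Since $\eps_1=\eps_2=\eps/2$, we have $Z\sim\lap(2/\eps)$ and $Z_t\sim\lap(4c/\eps)$, so for every $a\ge 0$, $\Pr[Z>a]=\tfrac12 e^{-a\eps/2}$, $\Pr[Z_t>a]=\tfrac12 e^{-a\eps/(4c)}$, and the same holds for $-Z$ and $-Z_t$ by symmetry. The one structural fact I will use is that $\svt$ outputs $\Above$ at step $t$ exactly when $q_t(\gs)+Z_t\ge\tau+Z$ and the counter is still below $c$; because the counter is nondecreasing and advances only immediately after an $\Above$ is emitted, this lets me sandwich the events ``$\Above$ released at or before time $t$'' and ``$\Below$ released on all of $q_1,\dots,q_{t'}$'' between purely noise-dependent events that I can bound by hand.

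For part (1), I would first observe that whenever $q_t(\gs)+Z_t\ge\tau+Z$ holds, $\svt$ has already output $\Above$ at or before step $t$: either the counter is below $c$ when query $t$ is processed, so the if-branch fires at step $t$; or the counter has reached $c\ge 1$, which could only have happened by emitting $\Above$ at an earlier step. Hence it suffices to lower bound $\Pr[q_t(\gs)+Z_t\ge\tau+Z]=\Pr[Z-Z_t\le q_t(\gs)-\tau]$. Setting $s:=q_t(\gs)-\tau\ge 8c\ln(1/\del)/\eps$ and splitting $s=s_1+s_2$ with $s_1=s_2=4c\ln(1/\del)/\eps$, a union bound gives
\[
\Pr[Z-Z_t>s]\le \Pr[Z>s_1]+\Pr[-Z_t>s_2]\le \tfrac12 e^{-s_1\eps/2}+\tfrac12 e^{-s_2\eps/(4c)}=\tfrac12\del^{2c}+\tfrac12\del\le\del,
\]
using $c\ge 1$ and $\del\le 1$. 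So $\Above$ is released at or before step $t$ with probability at least $1-\del$.

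For part (2), dually, $\svt$ emits $\Above$ at some step $t\le t'$ only if $q_t(\gs)+Z_t\ge\tau+Z$ for that $t$, so by a union bound over $t\le t'\le T$ the probability that $\Below$ is not returned on every one of $q_1,\dots,q_{t'}$ is at most $\sum_{t\le t'}\Pr[Z_t-Z\ge \tau-q_t(\gs)]$. The hypothesis $q_t(\gs)-\tau\le 8c\ln(\beta/T)/\eps$ rearranges to $\tau-q_t(\gs)\ge s:=8c\ln(T/\beta)/\eps>0$; splitting $s$ into two equal halves and reusing the one-sided Laplace bounds yields $\Pr[Z_t-Z\ge s]\le \tfrac12 e^{-(s/2)\eps/(4c)}+\tfrac12 e^{-(s/2)\eps/2}=\tfrac12(\beta/T)+\tfrac12(\beta/T)^{2c}\le \beta/T$, and summing over the at most $T$ time steps gives the bound $\beta$. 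The single draw $Z$ being shared across time steps is harmless because only a union bound is used.

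I do not anticipate a real obstacle here; the only care needed is (i) tracking the Laplace scales through $\eps_1=\eps_2=\eps/2$ and the cutoff-induced factor $c$ in the scale of $Z_t$, and (ii) landing the constants on exactly $\del$ (resp.\ $\beta$) rather than $2\del$, which is why I keep the sharp one-sided tail $\Pr[X>a]=\tfrac12 e^{-a/b}$ (with its factor $\tfrac12$) and split each slack term into two equal halves. If one prefers to route the argument through \cref{thrm:output bad when bad} and \cref{thrm:output good when good stream}, the identical computation packages at that level by instantiating their separation parameters with $8c\ln(1/\del)/\eps$ and $8c\ln(T/\beta)/\eps$ respectively.
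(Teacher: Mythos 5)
Your proposal is correct and follows essentially the same route as the paper: the same structural observation that $q_t(\gs)+Z_t\ge\tau+Z$ forces an $\Above$ at or before step $t$ (via the counter), the same half-and-half split of the slack between the two Laplace noise terms with a union bound, and the same per-step $\beta/T$ bound unioned over $T$ steps for part (2). The paper merely packages the tail computation into intermediate lemmas (\cref{thrm:lap sum tails}, \cref{thrm:output bad when bad}, \cref{thrm:output good when good stream}) before instantiating the separation parameters exactly as you do.
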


To prove \cref{thrm:combined-sep-svt}, we use several lemmas presented below. The proof of \cref{thrm:combined-sep-svt} appears at the end of this section.

\begin{lemma}
\label{thrm:lap sum tails}
    Let $\gs\in\mGS^T$, $c\in\N$, $\eps > 0$, $\tau\in\R$, and $\eps_1 = \eps_2 = \eps/2$. Additionally, let $Z\sim \lap(1/\eps_1)$ and $Z_t\sim \lap(2c/\eps_2)$.
    For all $x\in\R$ and queries $q_t$,
    \begin{enumerate}
        \item if $x\geq 0$ and $q_t(\gs)\geq x + \tau$, then
        \[
            \pr{q_t(\gs) + Z_t \leq \tau + Z} \leq \exp\left(\frac{-|x|\cdot \eps}{8c} \right).
        \]
        \item if $x\leq 0$ and $q_t(\gs)\leq x + \tau$, then
        \[
            \pr{q_t(\gs) + Z_t \geq \tau + Z} \leq \exp\left(\frac{-|x|\cdot \eps}{8c} \right).
        \]

    \end{enumerate}
\end{lemma}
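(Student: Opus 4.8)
The plan is to reduce both parts to the standard Laplace tail bound $\pr{\lap(b)\geq s}=\tfrac12 e^{-s/b}$ (valid for $s\geq 0$), which also gives the matching lower-tail bound by symmetry of the Laplace distribution. The two parts are mirror images of each other, so I would prove part~1 in detail and obtain part~2 by the same argument with the roles of $Z$ and $Z_t$ exchanged.

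For part~1, assume $x\geq 0$ and $q_t(\gs)\geq x+\tau$. On the event $\{q_t(\gs)+Z_t\leq \tau+Z\}$ we have $x+\tau+Z_t\leq q_t(\gs)+Z_t\leq \tau+Z$, hence $Z-Z_t\geq x$. I would then split this event: if $Z-Z_t\geq x$ then at least one of $Z\geq x/2$ or $-Z_t\geq x/2$ must hold (otherwise their sum would be strictly less than $x$), so by a union bound
\[
    \pr{q_t(\gs)+Z_t\leq \tau+Z}\leq \pr{Z\geq x/2}+\pr{-Z_t\geq x/2}.
\]
Plugging in $\eps_1=\eps_2=\eps/2$, the scale $1/\eps_1$ for $Z$ and the scale $2c/\eps_2$ for $Z_t$ (and using that $-Z_t$ has the same distribution as $Z_t$), the Laplace tail bound gives $\pr{Z\geq x/2}=\tfrac12 e^{-x\eps/4}$ and $\pr{-Z_t\geq x/2}=\tfrac12 e^{-x\eps/(8c)}$. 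Since $c\geq 1$ we have $x\eps/4\geq x\eps/(8c)$, so the first term is at most $\tfrac12 e^{-x\eps/(8c)}$, and the sum is at most $e^{-x\eps/(8c)}=e^{-|x|\eps/(8c)}$, as claimed. (The edge case $x=0$ is immediate since the right-hand side is $1$.)

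For part~2, with $x\leq 0$ and $q_t(\gs)\leq x+\tau$, on the event $\{q_t(\gs)+Z_t\geq \tau+Z\}$ we get $x+\tau+Z_t\geq q_t(\gs)+Z_t\geq \tau+Z$, i.e.\ $Z_t-Z\geq -x=|x|\geq 0$; then the identical split and Laplace-tail computation (now with $Z_t\geq |x|/2$ or $-Z\geq |x|/2$) yields the same bound $e^{-|x|\eps/(8c)}$. I do not expect any real obstacle here — the only points requiring a moment's care are choosing the $50/50$ split at $x/2$ and verifying that the heavier-concentration term ($e^{-x\eps/4}$, coming from $Z$) is dominated by the target term, which is exactly where the hypothesis $c\geq 1$ (an integer cutoff) is used.
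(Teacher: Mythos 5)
Your proposal is correct and matches the paper's proof essentially step for step: the same reduction of the event to $Z-Z_t\geq x$, the same split at $x/2$ with a union bound, the same Laplace tail computation, and the same observation that the $Z$-term (decaying like $e^{-x\eps/4}$) is dominated by the $Z_t$-term because $c\geq 1$; the only cosmetic difference is that the paper invokes symmetry to rewrite $Z-Z_t$ as $Z+Z_t$ before splitting, while you work with $-Z_t$ directly. The paper likewise dispatches part~2 by the symmetric argument, as you do.
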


\begin{proof}[Proof of \cref{thrm:lap sum tails}]
    We begin by proving item (1). Assume that $x\geq 0$ and $q_t(\gs)\geq x + \tau$. We have the following expressions:
    \begin{align*}
        \pr{q_t(\gs) + Z_t \leq \tau + Z}
        &\leq \pr{x + \tau + Z_t \leq \tau + Z} \tag{given $q_t(\gs)\geq x + \tau$} \\
        &= \pr{x\leq Z - Z_t} \\
        &= \pr{x \leq Z + Z_t} \tag{$Z_t$ and $Z$ are independent, symmetric r.v.s} \\
        &\leq \pr{x/2 \leq Z \cup x/2 \leq Z_t} \tag{at least one must occur for the line above} \\
        &\leq \pr{x/2\leq Z} + \pr{x/2\leq Z_t} \tag{union bound} \\
        &= \frac{1}{2} \exp\left( \frac{-x}{2} \cdot \eps/2 \right) + \frac{1}{2} \exp\left( \frac{-x}{2} \cdot \frac{\eps/2}{2c} \right)  \tag{Laplace CDF, r.v.s are continuous} \\
        &\leq \exp\left( \frac{-x\eps}{8c} \right), \tag{$\exp\left( \dfrac{-x}{2} \cdot \dfrac{\eps/2}{2c} \right) \geq \exp\left( \dfrac{-x}{2} \cdot \eps/2 \right)$, for $x \geq 0$}
    \end{align*}
    which complete the proof of item (1).

    The proof of item (2) follows from item (1), and from the symmetry of Laplace r.v.s and the fact that $Z$ and $Z_t$ are independent.
\end{proof}

\cref{thrm:output bad when bad} uses \cref{thrm:lap sum tails} to show a lower bound on the probability that \cref{alg:svt} returns $\Above$ when the true answer to the query $q_t$ is (at least) some value $x\geq 0$ above the threshold $\tau$.
Similarly, \cref{thrm:output good when good stream} uses \cref{thrm:lap sum tails} to show a lower bound on the probability that \cref{alg:svt} exclusively outputs values of $\Below$ when the  threshold $\tau$ is (at least) some value $-x\geq 0$ above the true answer to each query $q_1,\ldots, q_{t'}$.

\begin{lemma}[Probability of $\Above$ for $q_t(\gs) \geq x + \tau$]
\label{thrm:output bad when bad}
    Consider \cref{alg:svt}, and let $\gs\in\mGS^T$, $c\in \N$, $\eps > 0$, $\tau\in\R$.
    For all $x \geq 0$ and queries $q_1,q_2,\ldots$, if $q_t(\gs) \geq x + \tau$, then \cref{alg:svt} outputs $\Above$ at or before time step $t$ with probability at least
    \[
        1 - \exp\left( \frac{-x\eps}{8c} \right).
    \]
\end{lemma}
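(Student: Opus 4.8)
The plan is to reduce the event that $\svt$ outputs $\Above$ at some time step $\le t$ — call it $E$ — to the single-step event $A := \{q_t(\gs) + Z_t \ge \tau + Z\}$, and then invoke item (1) of \cref{thrm:lap sum tails}. Concretely, I would show the containment $A \subseteq E$ and then lower bound $\Pr[A]$.

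For the containment, fix any execution of \cref{alg:svt} on which $A$ occurs and consider the value of the counter $\countsf$ at the moment query $q_t$ is processed. There are two cases. If $\countsf < c$ at that moment, then the test "$q_t(\gs) + Z_t \ge \tau + Z$ and $\countsf < c$" is satisfied, so $\svt$ outputs $\Above$ at time $t$ and $E$ holds. If instead $\countsf = c$ at that moment, then the counter was incremented $c$ times during the processing of queries $q_1,\dots,q_{t-1}$, and each such increment is paired (in the same iteration) with an $\Above$ output at a strictly earlier time step; since $c \in \N$ gives $c \ge 1$, at least one $\Above$ was output before time $t$, so $E$ holds again. Hence $A \subseteq E$ and $\Pr[E] \ge \Pr[A]$.

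For the lower bound on $\Pr[A]$, write $\Pr[A] = 1 - \Pr[q_t(\gs) + Z_t < \tau + Z] \ge 1 - \Pr[q_t(\gs) + Z_t \le \tau + Z]$. Since $x \ge 0$ and, by hypothesis, $q_t(\gs) \ge x + \tau$, item (1) of \cref{thrm:lap sum tails} applies and gives $\Pr[q_t(\gs) + Z_t \le \tau + Z] \le \exp(-x\eps/(8c))$. Combining the two inequalities yields $\Pr[E] \ge 1 - \exp(-x\eps/(8c))$, which is exactly the claimed bound. (Continuity of the Laplace distribution makes the distinction between $<$ and $\le$ immaterial, but the monotonicity step above means we do not even need to invoke it.)

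The only genuinely subtle point — and the step I would expect a careless reader to miss — is the second case in the containment argument: even when the guard at time $t$ fails because the cutoff $c$ has already been reached, the desired conclusion still holds, precisely because reaching the cutoff required an earlier $\Above$ output. Everything else is bookkeeping plus a direct application of \cref{thrm:lap sum tails}.
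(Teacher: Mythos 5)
Your proof is correct and follows essentially the same route as the paper's: reduce to the single-step event $q_t(\gs) + Z_t \geq \tau + Z$ (noting that when this holds, either $\Above$ is output at time $t$ or the cutoff $c$ was already reached, which itself requires an earlier $\Above$), and then apply item (1) of \cref{thrm:lap sum tails}. Your version simply makes the counter case analysis explicit where the paper states it in one sentence.
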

\begin{proof}[Proof of \cref{thrm:output bad when bad}]
    Consider the event that we have $q_t(\gs) + Z_t \geq \tau + Z$. If this event occurs, either $\Above$ will be output, or $\Above$ was output at some earlier time step. When the conditions in the theorem statement above hold, by item (1) of \cref{thrm:lap sum tails} this event occurs with probability at least
    \[
        1 - \exp\left( \frac{-x\eps}{8c} \right),
    \]
    which completes the proof.
\end{proof}

\begin{lemma}[Probability of $\Below$ for stream $q_1(\gs),\ldots, q_{t'}(\gs) \leq x + \tau$]
\label{thrm:output good when good stream}
    Consider \cref{alg:svt}, and let $\gs\in\mGS^T$, $c\in \N$, $\eps > 0$, $\tau \in \R$. Fix $t'\in [T]$.
    For all $x \leq 0$ and query streams $q_1,q_2,\ldots$, if for all $t\leq t'$ we have $q_t(\gs) \leq x + \tau$, then \cref{alg:svt} outputs $\Below$ for all time steps $t\leq t'$ with probability at least
    \[
        1 - T \cdot \exp\left( \frac{x\eps}{8c} \right).
    \]
\end{lemma}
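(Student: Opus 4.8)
The statement to prove is \cref{thrm:output good when good stream}: if $q_t(\gs) \leq x + \tau$ for all $t \leq t'$ with $x \leq 0$, then \cref{alg:svt} outputs $\Below$ at every time step $t \leq t'$ with probability at least $1 - T \exp(x\eps/(8c))$.

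My plan is to apply a union bound over the time steps $t \in \{1, \ldots, t'\}$, bounding the probability that \cref{alg:svt} outputs $\Above$ at any individual step $t$. At step $t$, the algorithm outputs $\Above$ only if $q_t(\gs) + Z_t \geq \tau + Z$ (the counter condition can only make $\Above$ less likely, so I can ignore it for the upper bound). Since $x \leq 0$ and $q_t(\gs) \leq x + \tau$, item (2) of \cref{thrm:lap sum tails} applies directly and gives
\[
    \pr{q_t(\gs) + Z_t \geq \tau + Z} \leq \exp\left( \frac{-|x|\eps}{8c} \right) = \exp\left( \frac{x\eps}{8c} \right),
\]
where the last equality uses $x \leq 0$. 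First I would note that the event "$\svt$ outputs $\Above$ at some step $t \leq t'$" is contained in the union over $t \leq t'$ of the events "$q_t(\gs) + Z_t \geq \tau + Z$", so by a union bound its probability is at most $t' \cdot \exp(x\eps/(8c)) \leq T \cdot \exp(x\eps/(8c))$. Taking the complement gives the claimed lower bound on the probability that $\Below$ is output at every step $t \leq t'$.

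There is essentially no obstacle here: the lemma is a routine union-bound argument on top of the already-established tail bound in \cref{thrm:lap sum tails}. The only minor subtlety worth a sentence is that the events being unioned all reference the \emph{same} random variable $Z$ (drawn once at the start), while the $Z_t$'s are fresh each step — but this is harmless for a union bound, which needs no independence. The counter cutoff $c$ is likewise irrelevant for this direction since it only suppresses $\Above$ outputs. I expect the write-up to be three or four sentences long.

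\begin{proof}[Proof of \cref{thrm:output good when good stream}]
    Fix $x \leq 0$ and suppose $q_t(\gs) \leq x + \tau$ for all $t \leq t'$. At each time step $t$, \cref{alg:svt} outputs $\Above$ only if $q_t(\gs) + Z_t \geq \tau + Z$ (the condition $\countsf < c$ can only make the output $\Below$ more likely). Hence the event that \cref{alg:svt} outputs $\Above$ at some time step $t \leq t'$ is contained in the union, over $t \leq t'$, of the events $\{q_t(\gs) + Z_t \geq \tau + Z\}$. Since $x \leq 0$ and $q_t(\gs) \leq x + \tau$, item (2) of \cref{thrm:lap sum tails} gives, for each such $t$,
    \[
        \pr{q_t(\gs) + Z_t \geq \tau + Z} \leq \exp\left( \frac{-|x|\eps}{8c} \right) = \exp\left( \frac{x\eps}{8c} \right).
    \]
    By a union bound over the at most $t' \leq T$ time steps, the probability that \cref{alg:svt} outputs $\Above$ at any time step $t \leq t'$ is at most $T \exp\left( \frac{x\eps}{8c} \right)$. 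Taking the complement, \cref{alg:svt} outputs $\Below$ at all time steps $t \leq t'$ with probability at least $1 - T \exp\left( \frac{x\eps}{8c} \right)$, as claimed.
\end{proof}
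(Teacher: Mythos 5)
Your proof is correct and follows essentially the same route as the paper's: bound the probability of each event $\{q_t(\gs) + Z_t \geq \tau + Z\}$ via item (2) of \cref{thrm:lap sum tails} and apply a union bound over the at most $T$ time steps. Your added remarks about the counter condition and the shared variable $Z$ are accurate but not needed beyond what the paper already does implicitly.
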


\begin{proof}[Proof of \cref{thrm:output good when good stream}]
    Consider the event that, for all $t\leq t'$, we have $q_t(\gs) + Z_t < \tau + Z$. If this event occurs, the output at each time step $t\leq t'$ is $\Below$. We now consider the complement of this event. When the conditions in the theorem statement above hold, by item (2) of \cref{thrm:lap sum tails} and the union bound, this complement occurs with probability at most
    \[
        T \cdot \exp\left( \frac{x\eps}{8c} \right).
    \]
    Therefore, the event in which we're interested occurs with probability at least
    \[
        1 - T \cdot \exp\left( \frac{x\eps}{8c} \right),
    \]
    which completes the proof.
\end{proof}

Parts (1) and (2) of \cref{thrm:combined-sep-svt} follow from \cref{thrm:output bad when bad,thrm:output good when good stream}, respectively, and show how far above or below the threshold $\tau$ it suffices to have the true query answers to ensure that \cref{alg:svt} outputs $\Above$ or $\Below$ with some user-specified probability.

\begin{proof}[Proof of \cref{thrm:combined-sep-svt}]
We prove each item below.

    \textbf{(Item 1.)}
    Let $A$ be the event that \cref{alg:svt} outputs $\Above$ at or before time step $t$. We observe that we want $\pr{A} \geq 1-\delta$.
    Consider the case where we have $q_t(\gs)\geq 8c\ln(1/\delta)/\eps + \tau$. By \cref{thrm:output bad when bad}, where we set $x = 8c\ln(1/\delta)/\eps$, we have 
    \begin{equation}
    \label{eq:dist for bad}
        \pr{A} \geq 1 - \exp\left( \frac{-x\eps}{8c} \right).
    \end{equation}
    Substituting $x = 8c\ln(1/\delta)/\eps$ into Expression~\ref{eq:dist for bad} gives us
    \begin{align*}
        \pr{A}
        &\geq 1- \exp\left( \frac{-\eps\cdot 8c\ln(1/\delta)/\eps}{8c} \right) \\
        &= 1 - \exp(-\ln(1/\delta)) \\
        &= 1 - \delta.
    \end{align*}
    Therefore, we have $\pr{A} \geq 1-\delta$ when we have $q_t(\gs)\geq 8c\ln(1/\delta)/\eps + \tau$, which is what we wanted to show.

    \textbf{(Item 2.)}
    Let $B$ be the event that \cref{alg:svt} outputs $\Below$ on all queries $q_1,\ldots, q_{t'}$. We observe that we want $\pr{B} \geq 1-\beta$.
    Consider the case where, for all $t\leq t'$, we have $q_t(\gs)\leq 8c\ln(\beta/T)/\eps + \tau$. By \cref{thrm:output good when good stream},
    \begin{equation}
    \label{eq:dist for good}
        \pr{B} \geq 1 - T \cdot \exp\left( \frac{x\eps}{8c} \right).
    \end{equation}
    Substituting $x = 8c\ln(\beta/T)/\eps$ into Expression~\ref{eq:dist for good} gives us
    \begin{align*}
        \pr{B}
        &\geq 1 - T \cdot \exp\left( \frac{\eps\cdot 8c\ln(\beta/T)/\eps}{8c} \right) \\
        &= 1 - T \cdot \exp(\ln(\beta/T) \\
        &= 1 - T\cdot \beta/T \\
        &= 1 - \beta.
    \end{align*}
    Therefore, we have $\pr{B} \geq 1-\beta$ when we have $q_t(\gs)\leq 8c\ln(\beta/T)/\eps + \tau$ for all $t\in[t']$, which is what we wanted to show.
\end{proof}

\section{Tightness of Stabilities \texorpdfstring{in \cref{thrm:combined-stab}}{}}
\label{sec:stab-tight}

In \cref{thrm:tightness-stab}, we show that the upper bounds on stability in \cref{thrm:combined-stab} are tight: the worst-case lower bounds on stability are tight up to small additive constants for $\projo$, and are tight up to constant multiplicative factors for $\projp$. 
The lower bounds given in \cref{thrm:tightness-stab} are for input graphs. Recall from \cref{rmk:tap-graph-input} that \cref{alg:time aware projection} treats an input graph as a length-$1$ graph stream, so graphs can be viewed as a special case of graph streams.

\begin{lemma}[Tightness of \cref{thrm:combined-stab}]
\label{thrm:tightness-stab} 
    Let $D\in \N$, $\ell\in \N\cup\{0\}$, and let $\projo,\projp$ be \cref{alg:time aware projection} with \inccrit{} $\style = \original$ and $\style=\projection$, respectively.
    \begin{enumerate}[leftmargin=*]
        \item \textbf{(Edge-to-edge stability.)} There exist
        \begin{enumerate}
            \item edge-neighboring graphs $G,G'$ such that
            \( 
                \dedge\Big(\projo(G)\ ,\ 
                \projo(G') \Big)
                = 3.
            \)
        
            \item edge-neighboring, $(D,\ell)$-bounded graphs $G,G'$ such that
            \(
                \dedge\Big(\projp(G)\ ,\ 
                \projp(G') \Big) = \ell + 1.
            \)
        \end{enumerate}
        \item \textbf{(Node-to-edge stability.)} There exist
        \begin{enumerate}
            \item node-neighboring, $(D,\ell)$-bounded graphs $G,G'$ such that
            \( 
                \dedge\Big(\projo(G)\ ,\ 
                \projo(G') \Big)
                =
                D + \ell - 1.
            \)
                
            \item node-neighboring, $(D,\ell)$-bounded graphs $G,G'$ such that, for $D$ and $\ell$ sufficiently large,
            
            \(
                \dedge\Big(\projp(G)\ ,\ 
                \projp(G') \Big)
                = 
                \begin{cases}
                    D + \Omega(\ell^{3/2}) & \text{if } D\geq \ell,\ \text{and}\\
                    D + \Omega(\ell \sqrt{D}) & \text{if } D< \ell.\\
                \end{cases}
            \)
            \end{enumerate}

        \item \textbf{(Node-to-node stability.)} There exist
        \begin{enumerate}
            \item node-neighboring, $(D,\ell)$-bounded graphs $G,G'$ such that
            \( 
                \dnode\Big(\projo(G)\ ,\ 
                \projo(G') \Big)
                =
                2 \ell - 1.
            \)
            
            \item node-neighboring, $(D,\ell)$-bounded graphs $G,G'$ such that
            \(
                \dnode\Big(\projp(G)\ ,\ 
                \projp(G') \Big)
                =
                \ell + 1.
            \)
        \end{enumerate}
    \end{enumerate}
\end{lemma}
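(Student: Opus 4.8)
The plan is to exhibit six separate gadget graphs (one per stated bound), using \cref{rmk:tap-graph-input} to present everything as length-$1$ streams. Two handles are used throughout: we are free to fix the consistent ordering by choosing vertex names conveniently (neighboring streams automatically agree), and in $\projo$ the counter $d(\cdot)$ tracks the \emph{original} degree (so an edge bumps it whether or not it is kept) whereas in $\projp$ it tracks the \emph{projected} degree. For the $\projo$ bounds and the $\projp$ edge-to-edge/node-to-node bounds it suffices, by \cref{lem:greedy-flatten}, to reason about the flattened projected graphs; for the $\projp$ node-to-edge bound we also use the difference-graph machinery of \cref{sec:node edge projected}.

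\textbf{The $\projo$ constructions (items 1(a), 2(a), 3(a)).} For 1(a), let $u,v$ each be the center of a $D$-star with fresh degree-$1$ leaves, order so that one leaf-edge at each of $u,v$ is considered last, and in $G'$ insert $e^+=\{u,v\}$ just before those two edges; then $e^+$ is kept and drives $d(u),d(v)$ to $D$, so the two last leaf-edges (kept in $\projo(G)$) are dropped in $\projo(G')$ and exactly three edges differ --- this is the construction behind \cite[Claim 13]{BlockiBDS13}. For 2(a), in $G$ take ``loaded'' vertices $u_1,\dots,u_{\ell-1}$, each carrying $D-1$ early leaf-edges and one late leaf-edge $f_i$; in $G'$ add a node $\vplus$ joined to $D$ fresh degree-$1$ vertices $w_1,\dots,w_D$ (considered first, all kept --- $D$ differing edges) and to each $u_i$ (considered after $u_i$'s early leaf-edges but before $f_i$). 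In $\projo$ the edge $\{\vplus,u_i\}$ bumps $d(u_i)$ to $D$ regardless of being dropped, so $f_i$ is dropped in $\projo(G')$ but kept in $\projo(G)$: $\ell-1$ further differing edges, hence $\dedge=D+\ell-1$; $G'$ has high-degree set $\{\vplus,u_1,\dots,u_{\ell-1}\}$, so both graphs are $(D,\ell)$-bounded. For 3(a), reuse this pair: the differences live on the pairwise-disjoint vertex sets $\{\vplus\}$ and $\{u_i,\text{leaf of }f_i\}$; since every non-$\vplus$ vertex persists in both projections while its neighborhood changes, any chain of node-neighbor steps must delete $\vplus$ once and spend two steps on each $\{u_i,\text{leaf of }f_i\}$, giving $\dnode\ge 1+2(\ell-1)=2\ell-1$, which is also achievable.

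\textbf{The $\projp$ cascade constructions (items 1(b), 3(b)).} Build a caterpillar with spine $c_0,c_1,\dots,c_{\ell+1}$ where $c_1,\dots,c_{\ell+1}$ each carry $D-1$ early leaf-edges and the spine edges $e_i=\{c_i,c_{i+1}\}$ are all considered last, in the order $e_0<e_1<\dots<e_\ell$. In $G'$ the edge $e_0$ (for 1(b)) --- equivalently, the degree-$1$ node $c_0$ (for 3(b)) --- is present; in $G$ it is absent. Keeping $e_0$ raises the projected degree of $c_1$ to $D$, so $e_1$ is dropped; that lowers $c_2$'s projected degree below $D$, so $e_2$ is added; and so on, an alternating cascade that flips exactly $e_0,\dots,e_\ell$ and nothing else (early leaf-edges survive everywhere). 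Thus $\dedge(\projp(G),\projp(G'))=\ell+1$ for 1(b). For 3(b): after removing $\vplus=c_0$, repair the remaining $\ell$ spine-edge changes by removing and re-adding each vertex of a minimum vertex cover of the spine path with its corrected spine neighbor; choosing the gadget so the count is exact gives $\dnode=\ell+1$, with the matching lower bound from the same disjointness/persistence argument as in 3(a). Each $c_i$ ($1\le i\le\ell$) has degree $D+1$ in $G'$, so both graphs are $(D,\ell)$-bounded.

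\textbf{The main obstacle: item 2(b).} Here we must realize, as the (pruned) difference graph of a node-neighboring pair, a scaled copy of the extremal DAG behind \cref{lem:dag ell 32}: on vertices $v_0=\vplus,v_1,\dots,v_\ell$ listed in saturation order, include the spine edge $(v_i,v_j)$ for every $0\le i<j\le i+s$ with $s=\lfloor\sqrt{\min\{D,\ell\}}\rfloor$, a DAG with $\Theta(\ell s)=\Theta(\ell\sqrt{\min\{D,\ell\}})$ edges whose order-induced cut sets all have $\Theta(s^2)=\Theta(\min\{D,\ell\})$ edges. The construction joins $\vplus$ to the first $s$ high-degree vertices and to $D-s$ fresh degree-$1$ fillers, and equips each high-degree $u_i$ with a tuned bundle of early leaf-edges and edges to its $\pm s$ neighbors in the order, so that (i) $u_i$ saturates at stage $\approx i$ in whichever of $G,G'$ saturates it first, and (ii) precisely the intended $\{u_i,u_j\}$ with $|i-j|\le s$ flip. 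Then \cref{lem:dg has all} identifies $\dedge(\projp(G),\projp(G'))$ with the number of difference-graph edges, namely $\Theta(\ell\sqrt{\min\{D,\ell\}})$, and the $D-s$ filler edges at $\vplus$ (present in $\projp(G')$, absent from $\projp(G)$) push the total to $D+\Omega(\ell\sqrt{\min\{D,\ell\}})$, matching item (2b) of \cref{thrm:combined-stab} up to a constant. The hard part --- and the bulk of the work --- is verifying the cascade behaves as designed: tracking, at every projection stage and every high-degree vertex, the exact counter value in both graphs, confirming the $s$-wide front of changes is sustained across all $\ell$ layers, and ruling out unintended flips; this is exactly where the $\sqrt{\cdot}$ enters, mirroring the short-edge/long-edge split in the proof of \cref{lem:dag ell 32}.
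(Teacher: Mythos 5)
Your gadgets for items 1(a), 1(b), 2(a), 3(a), and 3(b) are correct and essentially the same as the paper's (the paper places $e^+$ first rather than last in the ordering for 1(a), and uses $D$-stars and $(D-1)$-stars where you use early-leaf bundles, but the mechanics and the counts are identical). The problem is item 2(b) --- exactly the step where you defer ``the bulk of the work.''

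The target difference graph you propose, the interval DAG containing $(v_i,v_j)$ for all $0\le i<j\le i+s$ with $s=\lfloor\sqrt{\min\{D,\ell\}}\rfloor$, cannot be realized as a difference graph of $\projp$, so no amount of counter-tracking will make the cascade ``behave as designed.'' The obstruction is \cref{claim:out edges at most in edges}: at every node $v\neq\vplus$, the out-degree in $\gdiff$ is at most the in-degree in $\gdiff^*$. This is a local conservation law --- a vertex can only have $j$ of its later edges flip out of one projection if $j$ of its earlier edges flipped into it. In your DAG, $v_1$ has in-degree $1$ (only the edge from $v_0=\vplus$) but out-degree $s$, and more generally $v_j$ for $j<s$ has in-degree $j<s$ and out-degree $s$. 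A second obstruction is the color structure: by \cref{prop:outgoing same color} and Line~\ref{line:remove same color edge} of \cref{alg:pruned diff graph}, only in-edges of color opposite to $\outcolor(v)$ survive pruning and feed the in-degree budget, and colors alternate along directed paths; your even-hop edges $(v_i,v_{i+2})$ join vertices of equal out-color and are pruned. This parity issue is precisely why the paper's construction uses only \emph{odd} hop lengths.

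There is also a quantitative shortfall independent of realizability: you give $\vplus$ only $s=\sqrt{k}$ edges into the high-degree part (the filler edges are pruned by Line~\ref{line:remove low deg}), so by \cref{claim:cut set cards} every order-induced cut of $\gdiff^*$ has at most $s$ edges, and \cref{lem:dag ell 32} then caps $\gdiff^*$ at $2\ell\sqrt{s}=O\bigl(\ell\,k^{1/4}\bigr)$ edges and the total at $D+O\bigl(\ell\,k^{1/4}\bigr)$ --- short of the claimed $D+\Omega(\ell\sqrt{k})$. The paper's construction is genuinely different: it routes $m=\lfloor k/4\rfloor$ edge-disjoint directed paths out of $\vplus$, with $(h+1)/2$ paths of each odd hop length $h$ up to $\Theta(\sqrt{k})$, so that every intermediate node has in-degree equal to out-degree along its path (satisfying the conservation law), every cut has at most $m$ edges, and each hop class contributes $\Omega(\ell)$ edges, for $\Omega(\ell\sqrt{k})$ in total. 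Any fix of your 2(b) must use a path-decomposable target of this kind rather than the interval DAG.
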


For the proof below, we construct pairs of graphs (each of which can be viewed as a length-$1$ graph stream) with the above-specified distances between their projections. Note that, for all $T\in\N$, the examples can be expanded to length-$T$ graph streams by having no nodes or edges arrive in the remaining $T-1$ time steps.

\begin{proof}[Proof of \cref{thrm:tightness-stab}]
    We prove each item below. We begin by proving the statements for $\projo$ (i.e., the ``a'' items) and then prove the statements for $\projp$ (i.e., the ``b'' items, with (2b) proved last since its proof is the most involved).
    
    \textbf{(Item 1a.)} Consider a pair of edge-neighboring graphs $G,G'$, where $G$ consists of two separate $D$-stars with centers $u$ and $v$, and where $G'$ is the same plus an additional edge $\eplus$ between $u$ and $v$. Additionally, let $\eplus$ be the first edge in the consistent ordering. All edges in $\gs$ are in $\projo(G)$. However, $\eplus$ is in $\projo(G')$ but is not in $\projo(G)$. Additionally, one edge incident to $u$ is not in $\projo(G')$, and one edge incident to $v$ is not in $\projo(G')$---otherwise,
    the projection would have added new edges to $u$ and $v$ despite having counter values $d(u) \geq D$ and $d(v)\geq D$.

    \textbf{(Item 2a.)} We construct a pair of node-neighboring graphs $G,G'$. Graph $G$ consists of a set of $D$ nodes with no edges, and $\ell - 1$ separate $D$-stars. $G'$ is the same, plus an additional node $\vplus$ with an edge to each of the $D$ empty nodes and each center of the $D$-stars. Additionally, let all of the edges incident to $\vplus$ appear before all other edges in the consistent ordering, and let the $D$ edges incident to empty nodes appear before all the other edges incident to $\vplus$.
    
    All edges in $G$ are in $\projo(G)$. However, all of the first $D$ edges incident to $\vplus$ are in $\projo(G')$ and are not in $\projo(G)$. Additionally, one edge incident to the center of each $D$-star is not in the projection because their counters are each at $D$ prior to the projection stage of the final edge incident to each $D$-star's center. There are $D$ edges incident to $\vplus$ that appear in $\projo(G')$ and not $\projo(G)$; and there are $\ell - 1$ centers of $D$-stars, each of which is missing an edge in $\projo(G')$ as compared to $\projo(G)$. We see, then, that the projections differ on $D + \ell - 1$ edges.

    \textbf{(Item 3a.)} Consider the same example used for the proof of item (2a). To obtain $\projo(G')$ from $\projo(G)$, we see that it is necessary to add $\vplus$. We also see that it is necessary to change the edges incident to one node in each of the $D$-stars. To change a node, it must be removed and then re-added with different edges. Since there are $\ell - 1$ separate $D$-stars, there are $2\ell - 2$ additions and removals of nodes in the $D$-stars, plus the addition of $\vplus$. Therefore, we see that $\projo(G)$ and $\projo(G')$ are at node distance $2\ell - 1$.

    \textbf{(Item 1b.)} Consider the following pair of edge-neighboring graphs $G,G'$. Graph $G$ is constructed by taking two empty nodes $v$ and $w$, and a set of $\ell$ separate $(D-1)$-stars, enumerating their centers $u_1,\ldots,u_\ell$, and then adding edges between $u_i$ and $u_{i+1}$ for all $i\in[\ell-1]$. Additionally, these edges should appear last in the consistent ordering, and they should appear such that, for all $i\in[\ell-2]$, edge $\{u_i,u_{i+1}\}$ precedes $\{u_{i+1},u_{i+2}\}$. There should also be an edge $\{u_\ell, w\}$ that appears last in the consistent ordering. Graph $G'$ is constructed in the same way, except there is also an edge $\{v,u_1\}$ that appears first in the consistent ordering.

    In $\projo(G)$, the edge $\{u_1,u_2\}$ appears because $d(u_1) = D-1$ and $d(u_2) = D-1$ at the projection stage of the edge. By contrast, $\{u_2,u_3\}$ does not appear because $d(u_2) = D$ at the projection stage of the edge. By similar logic $\{u_3,u_4\}$ appears, $\{u_4,u_5\}$ does not, and so on in this alternating fashion, with $\{u_\ell,w\}$ appearing if and only if $\ell$ is odd. On the other hand, for $\projo(G')$ the edge $\{v,u_1\}$ appears since $d(v) = D-1$ and $d(u_1) = D-1$ at the projection stage of this edge. By contrast, $\{u_1,u_2\}$ does not appear since $d(u_1) = D$ at the projection stage of this edge. Similarly, $\{u_2,u_3\}$ appears, $\{u_3,u_4\}$ does not, and so on, with $\{u_\ell,w\}$ appearing if and only if $\ell$ is even. We see that $\ell + 1$ edges differ between $\projo(G)$ and $\projo(G')$, which is what we wanted to show.

    \textbf{(Item 3b.)} Consider the same example used for the proof of item (1b), with the modification that $v$ does not appear in $G$. We see that to obtain $\projo(G')$ from $\projo(G)$ we need to---starting with $v$---remove every other node in the path $v,u_1,\ldots, u_\ell,w$, and then add a modified version of each of these nodes (except for $v$). There are $\ell + 2$ nodes in the path, so we need to remove $\lceil (\ell + 2)/2 \rceil \geq \ell/2 + 1$ nodes and add $\lceil (\ell + 1)/2 \rceil \geq \ell/2$ nodes, so the node distance between $\projo(G)$ and $\projo(G')$ is at least $\ell + 1$.

    \textbf{(Item 2b.)} Let $k = \min\{D,\ell\}$.
    Before describing $G$ and $G'$, we construct a graph $P$ that is a collection of paths and contains $\Omega(\ell\sqrt{k})$ edges, and we then show how to construct $G$ and $G'$ such that all edges in $P$ differ between the projections of $G$ and $G'$.
    The high-level idea for $P$ is to construct $m = \lfloor k/4 \rfloor$ simple paths on $\ell$ nodes, such that this collection of paths contains $\Omega(\ell\sqrt{k})$ edges.

    For the sake of exposition, we use directed edges to construct these paths; they can be replaced with undirected edges. Let $(u_1,\ldots,u_\ell)$ be some left-to-right ordering of an arbitrary set of $\ell$ nodes, and let $\vplus$ be some additional node that is ordered to the left of $u_1$. All of the edges in our constructed paths will go from left to right. Let the \emph{hop length} of an edge $(u_i,u_{i'})$, where $i'>i$, refer to the value $i'-i$.
    
    We build $m = \lfloor k/4 \rfloor$ simple paths on these nodes as follows; we denote by $P$ the resulting graph of simple paths. Let $p\in \set{1,\ldots,m}$ enumerate these paths. For all paths $p$, the first edge is from $\vplus$ to some node $u_{s_p}$, where we define $s_p = 2p-1$. We now describe the remaining edges in each path. For each odd integer $h\in\set{1,3,5,7,\ldots}$, construct $(h+1)/2$ paths, where we start with path $p = 1$, then path $p = 2$, and so on, up to and including path $p = m$. For a path $p$ being made with value $h$, the first edge is from $\vplus$ to $u_{2p-1}$; and the remaining edges are edges of hop length $h$, except for the final edge which goes to a special node $v_p$. That is, where $s_p = 2p-1$, path $p$ with hop length $h$ is the path $\vplus \to u_{s_p} \to u_{s_p + h} \to u_{s_p + 2h} \to \cdots \to u_{s_p + h\cdot \lfloor (\ell - s_p)/h \rfloor} \to v_p$.

    We now color the edges of $P$ in the following way. On each path, make the edges alternate between red and blue edges: color all edges leaving $\vplus$ with blue, color the next edge on each path red, color the following edge blue, and so on. Because hops and start indices are odd, we maintain the following invariants:
    \begin{enumerate}
        \item all blue edges $(u_i, u_{i'})$ have even $i$ and odd $i'$, and
        \item all red edges $(u_i, u_{i'})$ have odd $i$ and even $i'$.
    \end{enumerate}
    These invariants mean that, for a given node $u_i$, all of its in-edges have the same color, and all of its out-edges have the same color (which is the opposite of the in-edges' color). We also note that, for all nodes $u_i$, the number of in-edges to $u_i$ is equal to the number of out-edges from $u_i$. Additionally, all nodes $v_p$ have one in-edge and zero out-edges.

    We now show that $G$ and $G'$ can be constructed such that all edges in $P$ will differ between projections of $G$ and $G'$. More specifically, we describe a construction where all blue edges and no red edges will be in $G'$; and all red edges and no blue edges will be in $G$. Below, we describe how to construct $G'$; the graph $G$ is constructed identically, except $\vplus$ and all edges from $\vplus$ are removed from the stream.
    
    We construct $G'$ by taking $P$ and adding some additional nodes and edges, and imposing a consistent ordering on these edges. We first describe the additional nodes and edges to add. Let $\deg_{u}(P)$ denote the degree of $u$ in $P$. Add a set of $D - \deg_{\vplus}(P)$ isolated nodes, and add an edge between $\vplus$ and each of these nodes. For all $i\in[\ell]$, add a set of $D - \deg_{u_i}(P)/2$ isolated nodes, and add an edge between $u_i$ and each node in this set. Let all of these edges appear before all other edges in the consistent ordering.
    
    We now describe how to order the remaining edges. For all $i\in[\ell]$, let all in-edges to $u_i$ appear in the consistent ordering appear before all in-edges to $u_{i'}$ for $i' > i$, and after all in-edges to $u_{i''}$ for $i'' < i$. Additionally, let all in-edges to nodes of the form $v_p$ appear in the consistent ordering after all of the edges already described. 

    We next use the following claim; we prove it at the end of this proof.

    \begin{claim}
    \label{claim:edges-in-proj}
        For all $u_i$ in $P$, $\projp(G')$ contains the following edges in $P$:
        \begin{enumerate}
            \item if $i$ is odd, then the projection contains all in-edges to $u_i$ in $P$, and no out-edges from $u_i$ in $P$.
            \item if $i$ is even, then the projection contains no in-edges to $u_i$ in $P$, and all out-edges from $u_i$ in $P$.
        \end{enumerate}
    \end{claim}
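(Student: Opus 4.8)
\textbf{Proof plan for Claim~\ref{claim:edges-in-proj}.}
The plan is to track the counter $d(\cdot)$ of $\projp$ (Line~\ref{line:add edge condition} of \cref{alg:time aware projection}) on the stream $G'$, processing edges in the order dictated by the consistent ordering we imposed. First I would observe that, by construction, every node $u_i$ has $D - \deg_{u_i}(P)/2$ ``padding'' edges to fresh isolated nodes, and these padding edges all come first in the consistent ordering; since $\projp$ adds an edge exactly when both endpoints have counter below $D$, and fresh nodes always have counter $0$, all padding edges are added, bringing $d(u_i)$ up to exactly $D - \deg_{u_i}(P)/2$ before any edge of $P$ incident to $u_i$ is considered. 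The same argument applied to $\vplus$ shows $d(\vplus) = D - \deg_{\vplus}(P)$ after its padding edges, so $\vplus$ has exactly $\deg_{\vplus}(P)$ units of ``room'' left---enough to accommodate all its out-edges in $P$ (which are blue).

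Next I would process the $P$-edges in the prescribed order (all in-edges to $u_1$, then all in-edges to $u_2$, etc., then all in-edges to the $v_p$'s) and argue by induction on $i$. Consider $u_i$ with $i$ odd. Its in-edges in $P$ are all blue (by the coloring invariant), and by our ordering they are considered before any out-edge of $u_i$; at the moment each such in-edge $(u_j, u_i)$ is considered, $u_j$ has $j < i$ even, so by the inductive hypothesis $u_j$'s out-edges in $P$ have already all been added but it has no room consumed by $P$-in-edges, leaving $d(u_j) < D$; meanwhile $d(u_i) = D - \deg_{u_i}(P)/2$ plus the number of $P$-in-edges to $u_i$ already added, which stays strictly below $D$ since $u_i$ has $\deg_{u_i}(P)/2$ in-edges in $P$. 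Hence all in-edges to $u_i$ are added, each incrementing $d(u_i)$, so that after all $\deg_{u_i}(P)/2$ of them $d(u_i) = D$. Then when an out-edge $(u_i, u_{i'})$ of $u_i$ is later considered, $d(u_i) = D$ already, so the edge is \emph{not} added---giving item (1). The symmetric bookkeeping handles $i$ even (item (2)): $u_i$'s out-edges in $P$ are blue and, by the ordering, are considered as in-edges to the odd-indexed endpoint $u_{i'}$ only after $u_i$ has reached its padded level; at that point $u_i$ still has room (no $P$-in-edge to $u_i$ gets added because the other endpoint of each such edge is already saturated), so all out-edges from $u_i$ are added while no in-edge is; one also checks the final edges to the $v_p$'s, whose $v_p$ endpoint has counter $0$ and whose $u$-endpoint's inclusion is governed by the same parity argument.

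The main obstacle I expect is making the induction airtight with respect to \emph{order of consideration within a single node's incident edges}---in particular verifying that, at the precise projection stage when a $P$-edge $(u_j,u_i)$ is examined, the counter $d(u_j)$ is in exactly the state the inductive hypothesis predicts (all padding done, all $P$-out-edges of $u_j$ done if $j$ even, none of $u_j$'s $P$-in-edges added if $j$ even), and that no edge of $P$ incident to nodes indexed $> i$ or to the $v_p$'s has been touched yet. This requires carefully pinning down the consistent ordering so that ``all in-edges to $u_i$'' really do form a contiguous block coming after everything incident to lower-indexed $u$'s and before everything incident to higher-indexed $u$'s, and checking the edge-counting identity $\#\{\text{in-edges to }u_i\text{ in }P\} = \#\{\text{out-edges from }u_i\text{ in }P\} = \deg_{u_i}(P)/2$ holds for every $u_i$ (which follows from the parity invariants, since each $u_i$'s incident $P$-edges split evenly by direction). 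Once the claim is established, item (2b) follows: exactly the blue edges of $P$ appear in $\projp(G')$ and exactly the red edges appear in $\projp(G)$ (by the identical argument with $\vplus$ deleted, the roles of ``room'' shift by one so the red/blue alternation flips), so all $\Omega(\ell\sqrt{k})$ edges of $P$ differ, plus the $D - \deg_{\vplus}(P) = D - \Theta(\sqrt k)$ padding edges from $\vplus$, giving the claimed $D + \Omega(\ell\sqrt{\min\{D,\ell\}})$ bound after absorbing lower-order terms into the $\Omega(\cdot)$.
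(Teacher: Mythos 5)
Your proposal matches the paper's proof of this claim: both proceed by induction on $i$, first observing that the padding edges (processed first in the consistent ordering) bring each counter $d(u_i)$ to exactly $D - \deg_{u_i}(P)/2$, and then using the coloring/parity invariants together with the prescribed ordering of in-edges to show that odd-indexed nodes absorb all of their $P$-in-edges and thereby saturate (blocking their out-edges), while even-indexed nodes receive none of their $P$-in-edges (their sources being already saturated) and so retain exactly enough room to accept all of their out-edges. (One peripheral slip outside the claim itself: $\deg_{\vplus}(P) = m = \lfloor k/4\rfloor = \Theta(k)$, not $\Theta(\sqrt{k})$, though this does not affect the final bound since the paper separately verifies $m = o(\ell\sqrt{k})$ in both cases.)
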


    By \cref{claim:edges-in-proj}, we see that all edges from $u_i$ with even $i$, and all edges to $u_i$ with odd $i$ are in $\projp(G')$, and that no other edges in $P$ are in $\projp(G')$. This means that, by the two invariants provided above, all blue edges are in $\projp(G')$ and no red edges are in $\projp(G')$. A symmetric claim and argument can be used to show that all red edges are in $\projp(G)$ and no blue edges are in $\projp(G)$. This means that all edges in $P$ differ between the projections.

    To prove the theorem, we use the following claim about the number of edges in $P$; we prove it at the end of this proof.

    \begin{claim}
    \label{claim:num-edges-p}
        The graph $P$ contains $\Omega(\ell\sqrt{k})$ edges.
    \end{claim}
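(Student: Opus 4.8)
\textbf{Proof proposal for \cref{claim:num-edges-p}.}
The plan is to count, for each odd hop length $h$, how many edges are contributed by the $(h+1)/2$ paths built with that hop length, and sum over the odd values of $h$ used before the budget of $m = \lfloor k/4 \rfloor$ paths is exhausted. A single path with hop length $h$ and start index $s_p \le 2m-1 \le k/2$ has roughly $\lfloor (\ell - s_p)/h \rfloor + 1$ edges, so as long as $h \le \ell/4$ (say) and $\ell$ is large relative to $k$, such a path contributes $\Theta(\ell/h)$ edges. The number of paths built with a fixed odd $h$ is $(h+1)/2 = \Theta(h)$, so the total contribution from hop length $h$ is $\Theta(h)\cdot\Theta(\ell/h) = \Theta(\ell)$ edges.

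Next I would determine how many distinct odd values of $h$ are consumed. We assign $(h+1)/2$ paths to hop length $h = 1$, then $(h+1)/2$ to $h=3$, etc., so after using odd hop lengths $1,3,\dots,2r-1$ we have allocated $\sum_{j=1}^{r} j = r(r+1)/2$ paths. Setting this equal to $m = \lfloor k/4 \rfloor$ gives $r = \Theta(\sqrt{k})$, i.e. we use $\Theta(\sqrt{k})$ distinct hop lengths, the largest being $h = \Theta(\sqrt{k})$. For $D,\ell$ sufficiently large we have $\Theta(\sqrt{k}) \le \ell/4$, so every path built actually realizes the $\Theta(\ell/h)$ edge count claimed above (each path has at least, say, $3$ edges, so the $+1$ and boundary effects only help). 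Summing the $\Theta(\ell)$ contribution over the $\Theta(\sqrt{k})$ hop lengths yields $|E(P)| = \Theta(\ell\sqrt{k})$, and in particular $\Omega(\ell\sqrt{k})$, which is what is claimed.

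To finish the overall proof of item (2b): with $\projp(G)$ and $\projp(G')$ differing on exactly the edge set of $P$ (established via \cref{claim:edges-in-proj} and the colour invariants), we get $\dedge(\projp(G),\projp(G')) \ge |E(P)| - O(D)$, where the $O(D)$ accounts only for the (at most $D$) edges incident to $\vplus$, which we have not included in $P$; adding those back in only increases the distance, so in fact $\dedge(\projp(G),\projp(G')) = D + \Omega(\ell\sqrt{k})$. Substituting $k = \min\{D,\ell\}$ gives $D + \Omega(\ell^{3/2})$ when $D \ge \ell$ and $D + \Omega(\ell\sqrt{D})$ when $D < \ell$, matching the upper bound of \cref{thrm:combined-stab} up to constant factors and completing the proof of \cref{thrm:tightness-stab}.

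\textbf{Main obstacle.} The delicate part is the bookkeeping in \cref{claim:edges-in-proj}: one must verify that the chosen consistent ordering (all "padding" edges first, then in-edges to $u_i$ in increasing order of $i$, then edges into the $v_p$'s) makes the greedy counter $d(u_i)$ in $\projp(G')$ hit exactly $D$ at the right moment, so that precisely the out-edges of even-indexed $u_i$ and the in-edges of odd-indexed $u_i$ survive. This requires carefully tracking that each $u_i$ has exactly $D - \deg_{u_i}(P)/2$ padding edges plus $\deg_{u_i}(P)/2$ in-edges plus $\deg_{u_i}(P)/2$ out-edges, and that the padding-then-in-edges ordering saturates $u_i$ after consuming its in-edges but before its out-edges (for odd $i$) or the reverse parity accounting. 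The edge-counting in \cref{claim:num-edges-p} itself is then a routine double sum, modulo keeping the "$\ell$ and $D$ sufficiently large" hypotheses explicit so that no path degenerates.
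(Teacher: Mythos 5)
Your proof is correct and follows essentially the same route as the paper's: both arguments show that each odd hop length $h$ contributes $\Theta(h)$ paths of $\Theta(\ell/h)$ edges each, hence $\Omega(\ell)$ edges per hop-length class, and then use $\sum_{i=1}^{r} i = r(r+1)/2 = m$ to conclude that $\Theta(\sqrt{m}) = \Theta(\sqrt{k})$ distinct hop lengths are used. The only cosmetic difference is that the paper pins down the explicit constant $\ell/4$ per class while you keep things asymptotic; no gap either way.
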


    Recall from above that all edges in $P$ differ between the projections. Note that, in addition to the edges in $P$, all of the additional $D - \deg_{\vplus}(P)$ edges from $\vplus$ appear in the projection of $G'$ and do not appear in the projection of $G$. By this observation and \cref{claim:num-edges-p}, $G$ and $G'$ project to graph streams which differ in at least $D - \deg_{\vplus}(P) + \Omega(\ell\sqrt{k})$ edges.
    
    We now show $D - \deg_{\vplus}(P) + \Omega(\ell\sqrt{k}) = D + \Omega(\ell\sqrt{k})$.
    Recall that $\deg_{\vplus}(P) = m$, where $m = \floor{k/4}$. We have two cases: $D \geq \ell$ and $D < \ell$. For $D < \ell$, we have $k = D$, so $m = \lfloor D/4 \rfloor = o(\ell\sqrt{D})$, which means $D - \deg_{\vplus}(P) + \Omega(\ell\sqrt{k}) = D + \Omega(\ell\sqrt{k})$. For $D \geq \ell$, we have $m = \lfloor \ell/4 \rfloor$, so $m = o(\ell^{3/2})$, which means $D - \deg_{\vplus}(P) + \Omega(\ell\sqrt{k}) = D + \Omega(\ell\sqrt{k})$. 
    
    Therefore, the projections of the graph streams differ in at least $D + \Omega(\ell\sqrt{k}) = D + \Omega(\ell\sqrt{\min\{D,\ell\}})$ edges, which is what we wanted to show and completes the proof.

    We now provide proofs of the two claims used in the proof.

    \begin{proof}[Proof of \cref{claim:edges-in-proj}]
    To prove the claim, we induct on $i$. We prove the base case for $i = 1$ and $i = 2$. For the base case, we see that the in-edge to $u_1$, namely $(\vplus,u_1)$, is in the projection. Additionally, we see that the out-edge from $u_1$, namely $(u_1,u_2)$, is not in the projection since $u_1$ already has degree $D$ prior to considering $(u_1,u_2)$ for inclusion in the projection. For $u_2$, we see that the in-edge to $u_2$, namely $(u_1,u_2)$, is not in the projection. Additionally, we see that the out-edge from $u_2$, namely $(u_2,u_3)$, is in the projection since $u_2$ has degree $D-1$ prior to considering $(u_2,u_3)$ for inclusion in the projection and $u_3$ has at most $D-1$ edges that are considered for addition prior to $(u_2,u_3)$.
    
    Assume the claim is true for all $i < j$. We now show the claim is true for $j$. One useful fact from the construction of $G'$ is that, for all $u_j$, the number of in-edges to $u_j$ in $P$ is equal to the number of out-edges from $u_j$ in $P$, and there are $D - \deg_{u_j}(P)/2$ additional edges from $u_j$ in $G'$.
    
    We first consider the case where $j$ is odd. The only in-edges to $u_j$ are from $\vplus$ and from nodes $u_i$ where $i<j$ and $i$ is even. By assumption, all out-edges from nodes of the form $u_i$ for $i<j$ where $i$ is even are in the projection. Additionally, $\vplus$ has degree at most $D$ by construction, so all of its edges can be in the projection; since there are $D - \deg_{u_j}(P)/2$ edges from $u_j$ to isolated nodes, and there are $\deg_{u_j}(P)/2-1$ in-edges to $u_j$ that are already included in the projection, the edge from $\vplus$ can also be included in the projection without exceeding the degree bound. However, once these edges are included in the projection, adding any more edges would cause the degree of $u_j$ to $D$ in the projection, so none of the remaining out-edges are included in the projection.

    We next consider the case where $j$ is even. The only in-edges to $u_j$ are from nodes $u_i$ where $i<j$ and $i$ is odd. By assumption, all out-edges from nodes of the form $u_i$ for $i<j$ where $i$ is odd are \emph{not} in the projection. Therefore, none of the in-edges to $u_j$ are in the projection. We next show that all of the out-edges from $u_j$ are in the projection. First, note that the $D - \deg_{u_j}(P)/2$ edges from $u_j$ to isolated nodes are all in the projection. Next, note that there are $\deg_{u_j}(P)/2$ remaining out-edges from $u_j$, so they can all be included in the projection without exceeding the degree bound. We now consider the nodes to which they are in-edges. There are two cases for these edges. If the edge is an in-edge to some node of the form $v_p$, including this edge will not cause $v_p$ to exceed its degree bound, so it will be included in the projection. Otherwise, it is an in-edge to a node of the form $u_{i'}$ where $i' > i$. At most $D - \deg_{u_{i'}}(P)/2$ edges are edges to the isolated nodes (all of these edges appear in the consistent ordering prior to the edge we are considering). At most $\deg_{u_{i'}}(P)/2 - 1$ other edges (i.e., the other in-edges to $u_{i'}$) appear in the consistent ordering prior to this edge. Therefore, including this edge will not cause $u_{i'}$ to exceed its degree bound, so it will be included in the projection.
    \end{proof}

    \begin{proof}[Proof of \cref{claim:num-edges-p}]
    We now show that there are $\Omega(\ell\sqrt{k})$ edges in $P$. To do this, we show the set of paths can be divided into $\Omega(\sqrt{m})$ disjoint sets of $\Omega(\ell)$ edges. We first show that, for all values of $h$ where we make $(h+1)/2$ paths, there are at least $\ell/4$ edges in the set of paths with length $h$. There are $m = \lfloor k/4 \rfloor$ paths, so for all $p$ we have $s_p = 2p - 1 \leq \ell/2$. This means a path $p$ with hop length $h$ contains at least $\ell/(2h)$ edges, since there are at least $\ell/(2h) - 2$ edges between nodes of the form $u_i,u_j$ for $i,j\geq s_p$, and there is one edge $\vplus \to u_{s_p}$ and one edge $u_{s_p + h\cdot \lfloor (\ell - s_p)/h \rfloor} \to v_p$. Since there are $(h+1)/2$ paths with hop length $h$, there are at least $\ell/4$ edges in the set of paths with length $h$. We next show, roughly, that we create $(h+1)/2$ paths for all odd natural numbers $h \leq \sqrt{m}$. In other words, we determine the biggest value of $h$ that is used in the procedure for making $m = \floor{k/4}$ paths. Since there are at most $m$ paths, we solve for $x$ in
    \[
         \sum_{h\in\set{1,3,5,\ldots}} \frac{h+1}{2} =  \sum_{i\in[x]} i = m
    \]
    and, using the identity $\sum_{i\in[n]} i = n(n+1)/2$, find $x \approx \sqrt{2 m}$. Therefore, the largest value of $h$ is (roughly) $\sqrt{8 m}$. Since there are $h = \Omega(\sqrt{m})$ disjoint sets of $\Omega(\ell)$ edges, we see that there are $\Omega(\ell\sqrt{m}) = \Omega(\ell\sqrt{k})$ edges in the graph. (In this case, there are roughly $\ell/4 \cdot \sqrt{2k}$ edges in the described graph.)
    \end{proof}
\end{proof}

\fi 

\printbibliography[heading=bibintoc]

\end{document}